\newcommand{\calL}{\mathcal{L}}
\newcommand{\calM}{\mathcal{M}}
\newcommand{\rnote}[1]{\footnote{\color{blue}Ryan: {#1}}}
\newcommand{\jnote}[1]{\footnote{\color{red}John: {#1}}}
\newcommand{\unif}[1]{\mathsf{Unif}_{#1}}
\newcommand{\pan}[2]{\mathsf{P}^{#1}_{#2}}
\newcommand{\schurrat}[1]{\mathrm{Ratio}_{#1}}
\newcommand{\lds}{\mathrm{LDS}}
\newcommand{\trace}{\mathrm{tr}}
\newcommand{\unitary}[1]{U_{#1}}
\newcommand{\irrunitary}[1]{\widehat{U}_{#1}}
\newcommand{\symm}[1]{{\mathfrak S}_{#1}}
\newcommand{\irrsymm}[1]{\widehat{\mathfrak S}_{#1}}
\newcommand{\symmset}[1]{{\mathfrak S}(#1)}
\newcommand{\SWdens}[2]{\mathrm{SW}^{#1}_{#2}}
\newcommand{\SWdensProb}[3]{\mathrm{SW}^{#1}_{#2}(#3)}
\newcommand{\SWunif}[2]{\mathrm{SW}^{#1}_{#2}}
\newcommand{\SWunifProb}[3]{\mathrm{SW}^{#1}_{#2}(#3)}
\newcommand{\Planch}[1]{\mathrm{Planch}_{#1}}
\newcommand{\srep}{{\mathtt P}}                   
\newcommand{\urep}{{\mathtt Q}}
\newcommand{\sirrep}{{\mathtt p}}
\newcommand{\uirrep}{{\mathtt q}}
\newcommand{\schurbasis}{U_{\mathrm{Schur}}}
\newcommand{\rsk}{\mathrm{RSK}}
\newcommand{\word}[1]{\mathcal{W}_{#1}}
\newcommand{\allpartitions}{\textnormal{Par}}
\newcommand{\four}[1]{\wt{#1}}
\renewcommand{\r}{r}
\newcommand{\weight}{\mathrm{wt}}
\newcommand{\df}[1]{#1!!}       
\newcommand{\rising}[2]{#1^{\uparrow #2}}
\newcommand{\falling}[2]{#1^{\downarrow #2}}
\newcommand{\evencontents}[1]{[#1]_{\textnormal{even}}}
\newcommand{\oddcontents}[1]{[#1]_{\textnormal{odd}}}
\newcommand{\evenhooks}[1]{\calH^{#1}_{\textnormal{even}}}
\newcommand{\evenhandhooks}[2]{\calH^{#2}_{#1}}
\newcommand{\hand}{\mathrm{hand}}
\newcommand{\foot}{\mathrm{foot}}
\newcommand{\pstar}[1]{p^{*}_{#1}}
\newcommand{\pstarc}[2]{p^{*}_{#1,#2}}
\newcommand{\psharp}[1]{p^{\sharp}_{#1}}
\newcommand{\qsharp}[1]{q^{\sharp}_{#1}}
\newcommand{\cyctype}[1]{\rho(#1)}     
\newcommand{\dcurve}[2]{d_{1}(#1,#2)}
\theoremstyle{definition}
\newtheorem*{nameddefinition}{\theoremname}
\newenvironment{nameddef}[1]{ \renewcommand{\theoremname}{#1} \begin{nameddefinition}} {\end{nameddefinition}}
\begin{document}

\title{Quantum Spectrum Testing}

\author{Ryan O'Donnell$^*$
\and
John Wright\thanks{Department of Computer Science, Carnegie Mellon University.  Some of this work performed while the first-named author was at the Bo\u{g}azi\c{c}i University Computer Engineering Department, supported by Marie Curie International Incoming Fellowship project number 626373.  Supported also by NSF grants CCF-0747250 and CCF-1116594.
The second-named author is also supported by a Simons Fellowship in Theoretical Computer Science and completed some of this work while visiting Columbia University.
\texttt{\{odonnell,jswright\}@cs.cmu.edu}}}

\maketitle

\begin{abstract}
In this work, we study the problem of testing properties of the spectrum of a mixed quantum state.
Here one is given $n$ copies of a mixed state $\rho\in \C^{d\times d}$
and the goal is to distinguish (with high probability) whether $\rho$'s spectrum satisfies some property $\calP$
or whether it is at least $\eps$-far in $\ell_1$-distance from satisfying~$\calP$.
This problem was promoted in the survey of Montanaro and de Wolf~\cite{MdW13}
under the name of testing unitarily invariant properties of mixed states.  It is the natural quantum analogue of the classical problem of testing symmetric properties of probability distributions.

Unlike property testing probability distributions---where one generally hopes for algorithms with sample complexity that is sublinear in the domain size---here the hope is for algorithms with \emph{subquadratic} copy complexity in the dimension~$d$.  This is because the (frequently rediscovered) ``empirical Young diagram (EYD) algorithm''~\cite{ARS88,KW01,HM02,CM06} can estimate the spectrum of any mixed state up to $\eps$-accuracy using only $\wt{O}(d^2/\eps^2)$ copies.
\ignore{On the other hand,  the work of Childs et al.~\cite{CHW07}
 has shown that given a mixed state $\rho$,
$\Theta(r_1)$ copies are necessary and sufficient to distinguish between the cases
when $\rho$'s spectrum is uniform on either $r_1$
or $r_2$ eigenvalues, for any $r_2 \geq 2r_1$.
This can be viewed as a quantum version of the classical birthday paradox and provides linear lower bounds for a variety of testing problems.
}In this work, we show that given a mixed state $\rho \in \C^{d \times d}$:
\begin{itemize}
\item $\Theta(d/\eps^2)$ copies are necessary and sufficient to test whether $\rho$ is the maximally mixed state, i.e., has spectrum $(\frac1d, \dots, \frac1d)$.  This can be viewed as the quantum analogue of Paninski~\cite{Pan08}'s sharp bounds for classical uniformity-testing.
\item $\Theta(r^2/\eps)$ copies are necessary and sufficient to test with one-sided error whether $\rho$ has rank~$r$, i.e., has at most $r$ nonzero eigenvalues.  For two-sided error, a lower bound of $\Omega(r/\eps)$ copies holds.
\item $\wt{\Theta}(r^2)$ copies are necessary and sufficient to distinguish whether $\rho$ is maximally mixed on an $r$-dimensional or an $(r+1)$-dimensional subspace.  More generally, for $r$ vs.\ $r+\Delta$ (with $1 \leq \Delta \leq r$), $\wt{\Theta}(r^2/\Delta)$ copies are necessary and sufficient. \ignore{$O(r^2/\ell)$ copies are sufficient to distinguish between the cases when $\rho$'s spectrum is uniform on either $r-\ell$ or $r$ eigenvalues, and a nearly matching lower bound of $\Omega((r^2/\ell)^{1-\eps})$ copies holds for any  $\eps >0$.\rnote{John, how would you feel about: a)~calling the lower- and upper-bounds just $\wt{\Theta}(r^2/\ell)$?  Probably we can legitimately get this if we work a bit; a quick quick back-of-the-envelope suggested $r^{2 - 1/\sqrt{\log r}}$ or so when $\ell = 1$. b)~stating the $\ell = 1$ case first? At least in the main text, after pointing out to the reader that taking $\ell = r/2$ recovers Childs et al., I kind of want to tell the reader, ``okay, from now I we recommend you mentally think of $\ell$ as 1.}}
\item The EYD algorithm requires $\Omega(d^2/\eps^2)$ copies to estimate the spectrum of~$\rho$ up to $\eps$-accuracy, nearly matching the known upper bound. In addition, we simplify part of the proof of the $\widetilde{O}(d^2/\eps^2)$ upper bound.
\end{itemize}
Our techniques involve the asymptotic representation theory of the symmetric group; in particular Kerov's algebra of polynomial functions on Young diagrams.
\end{abstract}

\newpage

\setcounter{tocdepth}{2}
\tableofcontents

\newpage

\section{Introduction}

A common scenario in quantum mechanics involves an experimental apparatus which outputs a particle whose state is a random variable.
For example, in a version of the the famous Stern--Gerlach experiment by Phipps and Taylor~\cite{PT27}, the experimental apparatus produced a hydrogen atom whose electron was either in state~$\left|+\tfrac{1}{2}\right\rangle$ or~$\left|-\tfrac{1}{2}\right\rangle$, each with probability $\frac12$.
More generally, one can describe the output of such an apparatus as falling in an orthonormal set of states $|\Psi_1\rangle, \ldots, |\Psi_d\rangle \in \C^d$, distributed according to a probability distribution $\mathcal{D}=(p_1, \ldots, p_d)$.
Such an object is called a \emph{mixed state} and is often conveniently represented using the \emph{density matrix} $\rho = \sum p_i\cdot |\Psi_i\rangle\langle \Psi_i|$.  The numbers $p_1, \dots, p_d$ are called the \emph{spectrum} of~$\rho$.

Given such an apparatus, a fundamental task---known as \emph{quantum state tomography}---is to produce an estimate~$\widetilde{\rho}\in\C^{d\times d}$ which well-approximates~$\rho$ according to some distance measure (typically, the trace distance).
To do this, one repeatedly runs the apparatus to produce many (say, $n$) independent copies of~$\rho$ and then one processes some measurement of $\rho^{\otimes n}$ to produce an estimate~$\widetilde{\rho}$.
It is known~\cite[Footnote 2]{FGLE12} that $O(d^4 \log(d)/\eps^2)$ copies of~$\rho$ are sufficient to output an estimate which is $\eps$-close to~$\rho$ in the trace distance. Unfortunately, the quartic dependence on~$d$ can be prohibitively large, even for quite reasonable values of~$d$;
further exacerbating this is the fact that many quantum systems are formed as the tensor product of many smaller subsystems, in which case~$d$ is exponential in the number of subsystems.

One potential way around this problem is to note that if our actual goal in producing~$\widetilde{\rho}$ is to determine whether $\rho$ satisfies some property (e.g.,  is maximally mixed, has low rank, etc.), then our estimate~$\widetilde{\rho}$ may be giving us far more information than we need.
Thus, we can possibly test whether $\rho$ has the property in question using a much smaller number of copies.
This is the motivation behind the model of \emph{property testing of mixed states}, as promoted in the recent survey of Montanaro and de~Wolf~\cite{MdW13}.  Formally, we have following definition:
\begin{definition}\label{def:the-general-model}
A property of mixed states $\calP$ is testable with $f(d, \eps)$ copies if for every $d \geq 2, \eps >0$ there is an algorithm $\calT$ which, when given $f(d, \eps)$ copies of a mixed state $\rho \in \C^{d\times d}$, behaves as follows:
\begin{itemize}
\item If $\rho$ satisfies $\calP$, then $\Pr[\text{$\calT$ accepts}] \geq 2/3$. (``Completeness'')
\item If $\rho$ is $\eps$-far in trace distance from all $\rho'$ satisfying $\calP$, then $\Pr[\text{$\calT$ rejects}] \geq 2/3$. (``Soundness'')
\end{itemize}
\end{definition}
\noindent The choice of probability $2/3$ here is essentially arbitrary, and it can be amplified to $1-\delta$ at the expense of increasing the number of copies by a factor of~$O(\log(1/\delta))$.

As mixed states are the quantum analogue of probability distributions, this model can be seen as the quantum analogue of the model of testing properties of probability distributions.  We note that the problem of testing properties of mixed states has also appeared in the area of quantum algorithms. For example, the work of~\cite{CHW07} considers Graph Isomorphism algorithms which output a mixed state $\rho$ satisfying a certain property if and only if the input graphs are isomorphic.

In this work, we focus on the problem of testing so-called \emph{unitarily invariant} properties.  These are properties $\calP$ for which $\rho$ satisfies $\calP$ if and only if $U \rho U^\dagger$ satisfies $\calP$ for every unitary matrix~$U$.  It is easy to see that whether a mixed state~$\rho$ has  such a property depends only on $\rho$'s spectrum (hence the name \emph{quantum spectrum testing}).  Many natural properties of mixed states are unitarily invariant, such as being the maximally mixed state, having low rank, or having low von~Neumann entropy. (An example of a natural property which is \emph{not} unitarily invariant is the property of being equal to a fixed mixed state~$\sigma$, so long as $\sigma$ is not the maximally mixed state.)
Though it is not immediately apparent from the definitions (we will show this in Section~\ref{sec:testing}), the model of testing properties of mixed states from Definition~\ref{def:the-general-model} is equivalent to the following definition in the case that the property in question is unitarily invariant.

\begin{definition}\label{def:the-model}
A property of spectra $\calP$ is testable with $f(d, \eps)$ copies if for every $d \geq 2, \eps >0$ there is an algorithm $\calT$ which, when given $f(d, \eps)$ copies of a mixed state $\rho \in \C^{d\times d}$ with spectrum $\eta=(\eta_1, \ldots, \eta_d)$, behaves as follows:
\begin{itemize}
\item If $\eta$ satisfies $\calP$, then $\Pr[\text{$\calT$ accepts}] \geq 2/3$.
\item If $\eta$ is $\eps$-far in total variation distance from every $\eta'$ satisfying $\calP$, then $\Pr[\text{$\calT$ rejects}] \geq 2/3$.
\end{itemize}
\end{definition}
The main gain in using Definition~\ref{def:the-model} over Definition~\ref{def:the-general-model} is that we only have to reason about a total variation distance involving $\eta$ rather than a trace distance involving~$\rho$, which is in general a more complicated distance measure.
We note that the spectrum of a matrix is more properly thought of as an unordered multiset of eigenvalues rather than an ordered tuple, and therefore any property of spectra~$\calP$ by necessity depends only on the multiset of values $\{\eta_1, \ldots, \eta_d\}$ and not on their ordering.
Hence, quantum spectrum testing corresponds in the classical world to the model of testing \emph{symmetric} properties of probability distributions.
As we will soon see, Definition~\ref{def:the-model} allows us to show a formal correspondence between these two models.

\subsection{Classical property testing of probability distributions}

The topic of property testing was introduced by Rubinfeld and Sudan in~\cite{RS92, RS96} in the context of testing algebraic properties of polynomials over finite fields.
Since then, it has found applications in a wide variety of areas, including testing properties of graphs and of Boolean functions.
 Over the past fifteen years, an extremely successful branch of property testing, first explicitly defined in~\cite{BFR+00, BFR+13}, has focused on testing properties of discrete probability distributions.
In the model of testing properties of probability distributions, there is an unknown distribution~$\calD$ on the set~$\{1, \ldots, d\}$, and the tester may draw a \emph{random word} of length~$n$ from~$\calD^{\otimes n}$; i.e., obtain a sequence of $n$ i.i.d.\ samples from~$\calD$.  Its goal is to decide whether~$\calD$ has some property~$\calP$ or is $\eps$-far from~$\calP$ in total variation distance, while minimizing~$n$.

\ignore{
\begin{definition}\label{def:prob-dist-model}
A property of probability distributions $\calP$ is testable with $f(d, \eps)$ samples if for every $d \geq 2, \eps >0$ there is an algorithm $\calT$ which, when given $f(d, \eps)$ samples of a probability distribution $\calD$ supported on $\{1, \ldots, d\}$, behaves as follows:
\begin{itemize}
\item If $\calD \in \calP$, then $\Pr[\text{$\calT$ accepts}] \geq 2/3$.
\item If $\calD$ is $\eps$-far from $\calP$ in total variation distance, then $\Pr[\text{$\calT$ rejects}] \geq 2/3$.
\end{itemize}
\end{definition}
}

It is well known~\cite[pages 10 and 31]{DL01} (cf.~\cite[Slide 6]{Dia14})
that after taking $n = \Theta(d/\eps^2)$ samples from $\calD$, the empirical distribution is $\eps$-close to $\calD$ with high probability.  As a result, any property of probability distributions is testable with a linear (in~$d$) number of samples; thus research in this area is directed at finding algorithms of \emph{sublinear} sample complexity for various properties.  That such algorithms could exist is suggested by the following Birthday Paradox-based fact:
\begin{fact}\label{fact:b-day-p-dox}
$\Theta(\sqrt{r})$ samples are necessary and sufficient to distinguish between the cases when the distribution is uniform on either~$r$ or~$2r$ values.
(The bound also holds for~$r$ vs.\ $r'$ when $r' > 2r$.)
\end{fact}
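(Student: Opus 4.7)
The plan is to prove both bounds via analysis of the collision statistic $X = |\{(i,j) : 1 \leq i < j \leq n,\ s_i = s_j\}|$, the canonical random variable for birthday-paradox style arguments.

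For the upper bound of $O(\sqrt{r})$, I would take $n = C\sqrt{r}$ samples from the unknown distribution and compute $X$. Under the uniform distribution on $k$ elements, $E[X] = \binom{n}{2}/k$, giving $\approx C^2/2$ when $k=r$ and at most $\approx C^2/4$ when $k \geq 2r$. The variance calculation is clean: indicators for disjoint pairs are independent, and for two pairs sharing exactly one index, say $\{i,j\}$ and $\{i,l\}$, the covariance vanishes since $\Pr[s_i = s_j = s_l] = 1/k^2 = \Pr[s_i=s_j] \cdot \Pr[s_i = s_l]$ by independence of the three samples. Hence $\mathrm{Var}[X] = \binom{n}{2} \cdot \tfrac{1}{k}(1 - \tfrac{1}{k}) \leq E[X]$. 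The gap between the two possible means is $\Omega(C^2)$ while the standard deviation is $O(C)$, so for $C$ a large enough absolute constant, Chebyshev's inequality lets us threshold $X$ at $3C^2/8$ and succeed with probability at least $2/3$ in both cases.

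For the lower bound of $\Omega(\sqrt{r})$, I would first invoke the standard symmetrization argument in distribution testing: any tester may be assumed without loss of generality to be invariant under relabeling of the domain, since uniformity over an $r$-element subset is itself a permutation-invariant property. The tester therefore depends on the sample only through its ``fingerprint'' (multiset of multiplicities), whose distribution is determined by $k$ alone. By the birthday bound, when $n = c\sqrt{r}$ the probability of observing any collision is at most $\binom{n}{2}/r \leq c^2/2$ under either hypothesis, and conditioned on no collision the fingerprint equals the constant $(1,1,\ldots,1)$. It follows that the total variation distance between the two fingerprint distributions is at most $c^2/2$, which can be made less than $1/3$ by choosing $c$ a sufficiently small constant, precluding any successful tester.

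For the extension to $r$ vs.\ $r' > 2r$, both arguments go through unchanged. The upper bound uses the gap $\binom{n}{2}(1/r - 1/r') \geq \binom{n}{2}/(2r) = \Omega(C^2)$, and the lower bound's collision-probability bound depends only on the smaller $r$. The chief delicate point is the variance computation in the upper bound, but the covariance-vanishing observation renders it routine; the symmetrization step in the lower bound is the other subtle ingredient, but is by now a standard technique in the symmetric-property-testing literature.
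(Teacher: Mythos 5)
Your proof is correct, and there is nothing in the paper to compare it against: Fact~\ref{fact:b-day-p-dox} is stated as standard background (the classical Birthday Paradox bound) and is never proved in the paper, which only uses it as motivation and as a black box for lower bounds. Your collision-statistic argument is exactly the folklore proof. Two small remarks, neither a gap: (i) the vanishing of the covariance for pairs sharing an index is a special feature of the uniform distribution (in general that covariance is $\sum_a p_a^3 - (\sum_a p_a^2)^2 \geq 0$), but uniformity is all you need here, so your variance bound $\Var[X] \leq \E[X]$ is fine; (ii) after symmetrizing over relabelings of the domain, the invariant data of the sample is really the equality pattern on the $n$ positions (a set partition of $[n]$) rather than just the multiset of multiplicities, but this does not matter for your argument, since under a uniform distribution on $k$ values the law of that pattern depends only on $k$, and conditioned on seeing no collision it is the same (all-distinct) point mass under both hypotheses, which is what gives the total variation bound of at most $\binom{n}{2}/r \leq c^2/2$.
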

Setting $r = \frac{d}{2}$, we see that this fact gives a sublinear algorithm for distinguishing between the uniform distribution and a distribution that is uniform on exactly half of the elements of $\{1, \ldots, d\}$.
This fact is also important as it immediately gives a lower bound of~$\Omega(\sqrt{d})$ for testing a variety of natural problems, those for which Fact~\ref{fact:b-day-p-dox} appears as a special case.

Perhaps the most basic property of probability distributions one can test for is the property of being equal to the uniform distribution, $\unif{d}$.
A $\Omega(\sqrt{d})$ lower bound follows directly from Fact~\ref{fact:b-day-p-dox}.
On the other hand, a $O(\sqrt{d}/\eps^4)$ upper bound was shown in the early work of~\cite{BFR+00, BFR+13} using techniques of~\cite{GR11}. The correct sample complexity was finally pinned down by Paninski in~\cite{Pan08}, who showed matching upper and lower bounds:
\begin{theorem}[\cite{Pan08}]\label{thm:pan-classical}
$\Theta(\sqrt{d}/\eps^2)$ samples are necessary and sufficient to test whether~$\calD$ is the uniform distribution $\unif{d}$.
\end{theorem}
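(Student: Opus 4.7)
The plan is to prove the matching upper and lower bounds separately. For the upper bound, I would analyze the collision-based tester: given $n$ i.i.d.\ samples $X_1, \ldots, X_n$ from $\mathcal{D} = (p_1, \ldots, p_d)$, compute
$$Z \;=\; \sum_{1 \leq i < j \leq n} \mathbf{1}[X_i = X_j].$$
Its expectation is $\binom{n}{2}\|p\|_2^2$, which equals $\binom{n}{2}/d$ under the uniform distribution. By Cauchy--Schwarz, $\|p - u\|_1 \geq 2\eps$ forces $\|p - u\|_2^2 \geq 4\eps^2/d$, and since $u$ and $p - u$ are orthogonal this upgrades to $\|p\|_2^2 \geq 1/d + 4\eps^2/d$, creating a gap of $\Omega(n^2 \eps^2/d)$ in $\mathbb{E}[Z]$ between the two cases. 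A standard calculation (partitioning ordered $4$-tuples $(i,j,k,\ell)$ by the pattern of coincidences between $\{i,j\}$ and $\{k,\ell\}$) bounds the variance by $O(n^3 \|p\|_3^3 + n^2 \|p\|_2^2)$. For distributions close to uniform this is $O(n^3/d^2 + n^2/d)$, and Chebyshev's inequality then yields a reliable test whenever $n = \Omega(\sqrt{d}/\eps^2)$.

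For the lower bound I would use Paninski's two-point prior. Assuming $d$ is even, draw independent uniform signs $Z_1, \ldots, Z_{d/2} \in \{-1, +1\}$ and set
$$q_{2i-1} \;=\; \frac{1 + \eps Z_i}{d}, \qquad q_{2i} \;=\; \frac{1 - \eps Z_i}{d}.$$
Each realization $\mathcal{Q}$ is a probability distribution exactly $\eps$-far from $\unif{d}$ in total variation, so it suffices to show that the mixture $\mathbb{E}_Z[\mathcal{Q}^{\otimes n}]$ is within $1/3$ in total variation of $\unif{d}^{\otimes n}$ when $n = c \sqrt{d}/\eps^2$ for small $c$. Working with the $\chi^2$-divergence and introducing a second independent sign vector $Z'$, a direct calculation (expanding the likelihood ratio, using independence of the $Z_i$ across pairs, and summing over length-$n$ words) shows
$$\chi^2 + 1 \;=\; \mathbb{E}_{Z,Z'}\!\left[\left(1 + \tfrac{2\eps^2}{d}\sum_{i=1}^{d/2} Z_i Z_i'\right)^{\!n}\right] \;\leq\; \cosh\!\left(\tfrac{2 n \eps^2}{d}\right)^{\!d/2} \;\leq\; \exp\!\left(\tfrac{n^2 \eps^4}{d}\right),$$
using $(1+x)^n \leq e^{nx}$ and $\cosh(x) \leq e^{x^2/2}$. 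For $n \leq c\sqrt{d}/\eps^2$ this is at most $e^{c^2}$, and Pinsker's inequality converts this to the desired total-variation bound.

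The main obstacle is the variance bound in the upper bound. The estimate $\|p\|_3^3 = O(1/d^2)$ can fail if $p$ has a few heavy coordinates, so the tester must separately handle this case---either by first detecting a large $\|p\|_\infty$ with a short preliminary sample, or by observing that any such $p$ is already detectably non-uniform via a different (simpler) statistic. The lower bound, by contrast, is algebraically clean throughout: the across-pair independence of the $Z_i$'s causes the $\chi^2$ computation to factorize over the $d/2$ pairs, and the Bernoulli structure reduces everything to a single $\cosh$.
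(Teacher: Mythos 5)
You should first note that the paper does not prove this theorem at all: it is quoted from Paninski~\cite{Pan08} as classical background, and the closest in-paper material is the \emph{quantum} analogue (Theorem~\ref{thm:main-pan-lower} and Corollary~\ref{cor:pan-lower}), which is built on exactly the prior you use. Your lower bound is correct and is essentially Paninski's argument, which the paper mirrors: the identity $1+\chi^2 = \E_{Z,Z'}\bigl[\bigl(1+\tfrac{2\eps^2}{d}\sum_i Z_iZ_i'\bigr)^n\bigr]$, the factorization over pairs, and the $\cosh$ bound are all right. Three small points: with the perturbation $(1\pm\eps)/d$ your distributions are only $\eps/2$-far in total variation (this is why the paper's $\pan{\eps}{d}$ uses $\pm 2\eps$), which only costs a constant; the final conversion is not Pinsker but the Cauchy--Schwarz bound $d_{\mathrm{TV}} \le \tfrac12\sqrt{\chi^2}$ (or go via $\mathrm{KL}\le\log(1+\chi^2)$ and then Pinsker); and odd $d$ should be handled as in the paper, by leaving one coordinate untouched.

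The genuine gap is in the upper bound, which is also where you diverge from~\cite{Pan08}: Paninski analyzes a statistic based on the number of distinct/singleton values in the sparse regime, not the collision count. Your variance bound $O(n^3\|p\|_3^3 + n^2\|p\|_2^2)$ with ``$\|p\|_3^3=O(1/d^2)$ for distributions close to uniform'' only covers completeness; in the soundness case $p$ is an arbitrary $\eps$-far distribution, and this is precisely the crude analysis that historically yields only $O(\sqrt{d}/\eps^4)$ (the~\cite{BFR+00,GR11} bound the paper mentions); establishing that the collision tester really achieves $\eps^{-2}$ is a later and more delicate result, and your proposed patch (a preliminary test for heavy coordinates) is left entirely unspecified even though it is the crux. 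One way to close the gap without any case analysis: only \emph{centered} covariances enter, so $\Var[Z] \le \binom{n}{2}\|p\|_2^2 + 6\binom{n}{3}\bigl(\|p\|_3^3-\|p\|_2^4\bigr)$, and writing $p = u + e$ with $\sum_i e_i = 0$ gives $\|p\|_3^3-\|p\|_2^4 = \|e\|_2^2/d + \sum_i e_i^3 - \|e\|_2^4 \le \|e\|_2^2/d + \|e\|_2^3$, while the expectation gap is $\binom{n}{2}\|e\|_2^2$ with $\|e\|_2^2 \ge 4\eps^2/d$; Chebyshev then only requires $n \gtrsim \max\{\sqrt{d}/\eps^2,\ 1/\|e\|_2,\ 1/(d\|e\|_2^2)\}$, and the last two are dominated by $\sqrt{d}/\eps^2$. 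As written, though, the upper-bound half of your proposal is incomplete, and it is not the route of~\cite{Pan08}.
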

This result was recently extended~\cite{VV14} to an $O(\sqrt{d}/\eps^2)$ upper bound for testing equality to \emph{any} fixed distribution, improving on the previously known~\cite{BFF+01} upper bound of $\widetilde{O}(\sqrt{d}/\eps^4)$.  More precisely, \cite{VV14} upper-bounds the sample complexity of testing equality to a fixed distribution~$\calD$ by $O(f(\calD)/\eps^2)$, where $f(\calD)$ is a certain norm which is maximized when $\calD$ is the uniform distribution.  Thus the uniform distribution is the hardest fixed distribution to test equality to.

The property of being the uniform distribution falls within the class of \emph{symmetric} properties of probability distributions.
These are the properties~$\calP$ for which $\calD=(p_1, \ldots, p_d) \in \calP$ if and only if $(p_{\pi(1)}, \ldots, p_{\pi(d)}) \in \calP$ for every permutation $\pi$.  Other interesting symmetric properties include having small entropy or small support size.  
Testing for small support size does not appear to have been precisely addressed in the literature; however the following is easy to derive from known results (in particular, the lower bound follows from the work of~\cite{VV11a}):
\begin{theorem}\label{thm:small-support-size}
To test (with $\eps$ a constant) whether a probability distribution has support size~$r$, $O(r)$ samples are sufficient and~$\Omega(r/\log(r))$ samples are necessary.
\end{theorem}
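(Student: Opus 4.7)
My plan is to handle the upper and lower bounds separately, drawing on standard techniques for each.

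For the $O(r)$ upper bound with constant $\eps$, I would use a ``top-$r$ empirical mass'' algorithm: draw $n = O(r/\eps^2)$ samples from $\calD$, let $T$ be the set of $r$ elements with largest empirical counts, and accept iff the empirical probability $\widehat{\calD}(T) \geq 1 - \eps/2$. Completeness is immediate: if $\calD$ has support size at most~$r$, every sample lies in that support, so $T$ already contains the support and $\widehat{\calD}(T) = 1$. For soundness, being $\eps$-far from the property means $\calD(S) \leq 1 - \eps$ for every size-$r$ set $S$, so it suffices to show $\widehat{\calD}(T) \leq \calD(T) + \eps/2$ uniformly in $T$. Since $T$ is necessarily a subset of the at most~$n$ elements actually observed, a union bound over the $\binom{n}{r} \leq (en/r)^{r}$ candidate subsets, combined with Hoeffding's inequality applied to the Bernoulli indicators $\mathbf{1}[X_j \in S]$, yields exactly this uniform concentration when $n = \Omega(r/\eps^2)$, which is $O(r)$ for constant~$\eps$.

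For the $\Omega(r/\log r)$ lower bound, I would invoke the hard instance of Valiant and Valiant~\cite{VV11a}. They construct a pair of distributions---one of support size at most~$r$ and one of support size roughly $2r$, both with all nonzero probabilities on the order of $1/r$---whose low-order moment profiles agree sufficiently well that the two are statistically indistinguishable from $o(r/\log r)$ samples. Since the first distribution lies in the property while the second is $\Omega(1)$-far in total variation distance from every support-$\leq r$ distribution, any support-size tester must distinguish the two cases and therefore requires $\Omega(r/\log r)$ samples.

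The principal obstacle, as I see it, is the soundness analysis of the upper-bound algorithm, because $T$ is data-dependent and a naive union bound over all $\binom{d}{r}$ subsets would be catastrophic when $d$ is large or even infinite. The key observation that circumvents this is that $T$ must be contained in the empirical support, whose size is at most $n$; this reduces the effective union-bound cost to $(en/r)^{r}$ and introduces only a logarithmic overhead that is comfortably absorbed by the $O(r/\eps^2)$ sample budget.
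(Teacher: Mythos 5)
The paper itself never writes out a proof of this theorem---it is stated as ``easy to derive from known results,'' with the lower bound attributed to~\cite{VV11a}---so your proposal is really being measured against that intended derivation. Your lower bound is exactly that route, and you supply the one point in the reduction that genuinely needs checking: in the Valiant--Valiant hard pair the larger-support distribution has its excess (beyond~$r$) elements each carrying mass $\Omega(1/r)$, so it sits at constant total-variation distance from \emph{every} distribution of support size at most~$r$; combined with indistinguishability at $o(r/\log r)$ samples this yields the testing lower bound for some constant~$\eps$. That half is fine, and matches what the paper intends.

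The gap is in the soundness analysis of your upper bound. The step ``$T$ is contained in the at most~$n$ observed elements, so union bound over the $\binom{n}{r}$ candidate subsets and apply Hoeffding to each'' is not a valid argument: Hoeffding applies to a \emph{fixed} set~$S$, and a union bound needs a \emph{fixed} family, whereas your family (the $r$-subsets of the observed elements) is random; conditioning on ``$S$ is fully observed'' biases $\widehat{\calD}(S)$ upward, which is precisely the direction you need to control. (The reasoning pattern already fails in miniature: with probability~$1$ there is an observed element whose empirical mass is at least $1/n$, yet ``number of observed elements times $\max_i \Pr[\widehat{p}_i \geq 1/n]$'' can be made arbitrarily small when all $p_i$ are tiny.) The statement you need is true---with $n = O_{\eps}(r)$ samples the top-$r$ empirical set $T$ satisfies $\widehat{\calD}(T) \leq \calD(T) + \eps/2$ with high probability---but establishing it requires a different argument: for instance, a union bound over all \emph{fixed} $r$-subsets of the domain in which each term is $\Pr[S \text{ fully observed and deviates}]$, so that low-mass sets are damped by the observation requirement; or bounding $\E\bigl[\max_{|S| = r}(\widehat{\calD}(S) - \calD(S))\bigr]$ by the expected sum of the $r$ largest positive deviations $(\widehat{p}_i - p_i)_{+}$ via a truncation/Bernstein argument; or a double-sampling/symmetrization argument. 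A minor further point: even granting the union bound, $n = \Theta(r/\eps^2)$ is slightly short of beating the $\binom{n}{r}$ factor (an extra $\log(1/\eps)$ is needed), though this is harmless at constant~$\eps$, which is all the theorem claims.
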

\ignore{For the latter of these, the following bounds on the sample complexity can easily be derived from the literature:
\begin{theorem}\label{thm:small-support-size}
To test whether a probability distribution has support size~$r$, XXX~samples are sufficient and~$\Omega(r/\log(r))$ samples are necessary.
\end{theorem}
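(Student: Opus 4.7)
The plan has two components: an upper bound by a simple learn-and-test algorithm, and a lower bound by reduction to the support-size estimation bound of Valiant--Valiant.

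For the $O(r)$ upper bound, my plan is a direct learn-then-verify strategy. Draw $m = Cr/\eps^{2} = O(r)$ samples (treating $\eps$ as a constant) and let $\hat{p}$ denote the empirical distribution.  Let $T \subseteq [d]$ be the set of $r$ elements with the largest empirical weights (breaking ties arbitrarily), and let $\hat{\tau} = \sum_{i \notin T} \hat{p}_i$ be the empirical tail mass outside $T$; accept if $\hat{\tau} \leq \eps/2$ and reject otherwise.  Completeness: if the true support size is at most $r$, then the true tail beyond any superset of the true support is $0$, so in particular $\E[\hat{\tau}] = 0$ modulo a negligible $O(1/m)$ from noise, and $\hat{\tau}$ concentrates around this expectation by a standard Chernoff/DKW-style bound.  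Soundness: if the distribution is $\eps$-far from every support-$r$ distribution, then the true mass outside the top-$r$ heaviest true elements is $\geq \eps$, so in particular the true mass outside $T$ (which, by empirical optimality of $T$, differs from the top-$r$ true set only on rare ``mis-ranked'' pairs of elements of comparable weight) is still $\geq \eps - o(1)$, and again $\hat{\tau}$ concentrates.

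For the $\Omega(r/\log r)$ lower bound, my plan is simply to invoke the Valiant--Valiant support-size estimation lower bound~\cite{VV11a}.  Their construction produces a pair of distributions $\calD_1, \calD_2$ on a common (large) domain such that $\calD_1$ has support size $r$, $\calD_2$ has support size $r' \geq (1+c)r$ (so $\calD_2$ is $\Omega(1)$-far from every support-$r$ distribution in total variation), but the Poissonized fingerprints of $\calD_1^{\otimes n}$ and $\calD_2^{\otimes n}$ are statistically indistinguishable for $n = o(r/\log r)$.  The indistinguishability is proved by matching the low-order moments of the two profile distributions using a Chebyshev polynomial construction, and the logarithmic loss $\log r$ is exactly the factor one saves via Chebyshev extrapolation.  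Since any support-size tester in particular distinguishes $\calD_1$ from $\calD_2$, this transfers directly to a tester lower bound.

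The main obstacle I expect is the careful concentration analysis in the soundness case of the upper bound: because $T$ is chosen as a function of the samples, one cannot naively apply concentration of $\sum_{i \notin T} \hat p_i$ for a fixed $T$.  The cleanest fix is to argue via the dual characterization that $\sum_{i \notin T} \hat p_i = 1 - \max_{|S|=r} \sum_{i \in S} \hat p_i$ is a $1$-Lipschitz function of the sample, so McDiarmid's inequality yields concentration within $O(1/\sqrt{m})$ uniformly; choosing $C$ large enough that this is $\ll \eps$ completes the proof.  The lower bound is essentially a black-box citation and should require no new ideas.
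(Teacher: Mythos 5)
Your lower bound is exactly the paper's route: the theorem is stated without proof, with the $\Omega(r/\log r)$ bound attributed to the Valiant--Valiant construction, so invoking~\cite{VV11a} is the intended argument (and the paper leaves the $O(r)$ upper bound, for constant $\eps$, to ``known results''). One point you should make explicit in the reduction: support size $\geq (1+c)r$ by itself does \emph{not} imply being $\Omega(1)$-far in total variation from support-$r$ distributions, since the extra elements could carry negligible mass. You need the feature of the VV construction that every nonzero probability is $\Omega(1/r)$, so that the mass outside \emph{any} $r$ elements is at least $cr\cdot\Omega(1/r)=\Omega(1)$. With that noted, the transfer to a testing lower bound is sound.

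The upper bound has a genuine gap. McDiarmid's inequality only shows that $\hat{\tau}=1-\max_{|S|=r}\hat{p}(S)$ concentrates around its expectation; it says nothing about that expectation, and bounding the expectation is the entire content of soundness. Concretely, let $\calD$ be uniform on $N\gg m$ elements, which is $(1-r/N)$-far from every support-$r$ distribution: if $m\leq r$ then all samples are distinct, the empirical top-$r$ set contains every sampled element, and $\hat{\tau}=0$ deterministically, so your tester accepts a nearly maximally far distribution---in perfect agreement with McDiarmid, since the mean is~$0$. So ``$T$ differs from the true top-$r$ only on rare mis-ranked pairs'' is not an argument; what you actually need is a \emph{uniform} bound on the selection bias, namely $\sup_{|S|=r}\bigl(\hat{p}(S)-p(S)\bigr)\leq\eps/2$ simultaneously over all $r$-subsets, since then $\hat{\tau}\geq p(T^{c})-\eps/2\geq\eps/2$ (using that $p(S^{c})\geq\eps$ for every $|S|=r$ when $\calD$ is $\eps$-far). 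That supremum is a uniform-convergence statement over the class of sets of size $r$, which has VC dimension $r$; standard bounds give $\sup_{|S|=r}|\hat{p}(S)-p(S)|=O\bigl(\sqrt{(r/m)\log(m/r)}\bigr)$ with high probability, so $m=O\bigl((r/\eps^{2})\log(1/\eps)\bigr)=O(r)$ for constant $\eps$, recovering the theorem. (Completeness is in fact exact: with support size at most $r$, at most $r$ distinct letters appear, so $\hat{\tau}=0$ with probability $1$, not merely up to $O(1/m)$ noise.) The skeleton of your tester is fine; the missing piece is this expectation/uniform-convergence bound, which Lipschitzness plus McDiarmid cannot supply.
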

The upper bound is shown in~XXX and the lower bound follows from the work of~\cite{VV11a}\jnote{Do the Valiants get a dependence on~$\eps$?  They mention one in their Corollary 10}.}

Let us now relate this section back to the main topic of this paper.
As we saw earlier, the spectrum of a mixed state can be thought of as a probability distribution on the numbers $\{1, \ldots, d\}$ (indexing the associated eigenvectors); thus any property of mixed state spectra is simply a symmetric property of probability distributions.
This correspondence allows us to directly compare the difficulty of testing properties of mixed state spectra and of probability distributions.  In fact, the quantum case is always at least as difficult as the classical case; the reason is that the classical problem is equivalent to the quantum problem under the promise that the $n$ ``samples'' provided are known orthogonal pure states, $\left| 1 \right\rangle, \dots, \left| d \right\rangle$.  Alternatively, in Sections~\ref{sec:RSK} and Section~\ref{sec:weak-schur} we will observe the following purely classical characterization of quantum spectrum testing:
\begin{fact}\label{fact:quantum-classically}
    Let $\calP$ be a symmetric property of probability distributions on $\{1, \dots, d\}$.
    Testing whether the spectrum of a $d$-dimensional quantum mixed state satisfies~$\calP$ is equivalent to the following classical testing problem:  Test whether a probability distribution~$\calD$ satisfies~$\calP$ when one is not allowed to see the whole random word $\bw \sim \calD^{\otimes n}$, but only the following $d$ statistics: the length of the longest $k$-increasing subsequence of~$\bw$, for each $1 \leq k \leq d$.  Here a \emph{$k$-increasing subsequence} means a disjoint union of~$k$ weakly increasing subsequences.
\end{fact}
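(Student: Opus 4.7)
The plan is to identify, on both sides of the equivalence, a single random sufficient statistic with the same distribution, so that the two testing problems become literally the same decision problem with the same input. On the quantum side this statistic will be the Young diagram produced by \emph{weak Schur sampling}; on the classical side it will be the common shape of the pair of tableaux produced by the RSK correspondence applied to $\bw$. The non-obvious fact that makes the equivalence work is that these two distributions coincide, and the observation that lets us restate the classical statistic in terms of longest $k$-increasing subsequences is Greene's theorem.

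\textbf{Quantum direction.} Given any tester for a unitarily invariant property $\calP$ using $n$ copies of $\rho$, I would first \emph{symmetrize} it over the unitary group: replace its two-outcome POVM $\{M, I - M\}$ by its Haar-average $\E_U (U^{\otimes n})^\dagger M (U^{\otimes n})$. Since $\calP$ is unitarily invariant and trace distance is unitarily invariant, both completeness and soundness are preserved. By Schur--Weyl duality, a symmetrized POVM element commutes with $U^{\otimes n}$ for all $U$, hence is block-diagonal in the Schur basis with scalar blocks indexed by partitions $\lambda \vdash n$ with at most $d$ rows. Consequently the symmetrized tester can be implemented by first performing weak Schur sampling, obtaining a random $\lambda$ distributed as $\SWdens{n}{\eta}$ where $\eta = \text{spec}(\rho)$, and then applying classical post-processing. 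Thus every quantum tester for $\calP$ may, without loss of generality, be taken to be a function of a single sample from $\SWdens{n}{\eta}$.

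\textbf{Classical direction.} Given a random word $\bw \sim \calD^{\otimes n}$, apply $\rsk$ to produce a pair of tableaux $(P(\bw), Q(\bw))$ of common shape $\lambda(\bw)$. A classical result (essentially due to the identification of Schur polynomials with characters) says that $\lambda(\bw)$ is distributed as $\SWdens{n}{\calD}$. Next, by Greene's theorem, for every $1 \leq k \leq d$ the partial sum $\lambda_1(\bw) + \cdots + \lambda_k(\bw)$ equals exactly $\lds_k(\bw)$, the length of the longest $k$-increasing subsequence of $\bw$ in the sense of the Fact. Hence knowing $(\lds_1(\bw), \dots, \lds_d(\bw))$ is equivalent to knowing $\lambda(\bw)$, and a classical tester with access only to these $d$ statistics is precisely a function of a single sample from $\SWdens{n}{\calD}$.

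Combining the two directions, both problems reduce to the same decision task: decide whether $\calD \in \calP$ from one draw $\lambda \sim \SWdens{n}{\calD}$. The main obstacle is the symmetrization step on the quantum side; in particular one has to be careful that after averaging the POVM the soundness condition is still in force against every $\rho$ whose spectrum is $\eps$-far from $\calP$, which relies on the equivalence between Definition~\ref{def:the-general-model} and Definition~\ref{def:the-model} and on the unitary invariance of the trace distance. The remaining ingredients---the coincidence of the RSK shape distribution with $\SWdens{n}{\calD}$ and Greene's theorem---are classical and will simply be invoked in Sections~\ref{sec:RSK} and~\ref{sec:weak-schur}.
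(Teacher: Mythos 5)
Your proposal is correct and follows essentially the same route as the paper, which obtains the quantum reduction from Lemma~\ref{lem:wolog} (weak Schur sampling is without loss of generality) and the classical side from Propositions~\ref{prop:schur-probability} and~\ref{prop:rsk-correspondence} together with Greene's Theorem~\ref{thm:greene} (see Remark~\ref{rem:quantum-free}). One small caveat: averaging the POVM over $U^{\otimes n}$ alone only forces blocks of the form $A_\lambda \otimes I$ in the Schur basis rather than scalar blocks, so to conclude that only the label $\blambda$ matters you should also use the permutation invariance of $\rho^{\otimes n}$ (the trivial $\symm{n}$ register of Fact~\ref{fact:simple-decomp}) or additionally average over $\symm{n}$ --- exactly what the paper's appeal to Lemma~\ref{lem:wolog} encapsulates.
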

In light of the above remarks we record the following fact:
\begin{fact}\label{fact:classical-vs-quantum}
Let $\calP$ be a symmetric property of probability distributions which requires $f(d, \eps)$ samples to test classically.
Then testing whether a mixed state's spectrum satisfies $\calP$ also requires at least $f(d, \eps)$ copies of the mixed state.
\end{fact}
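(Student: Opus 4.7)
\medskip
\noindent\textbf{Proof plan.} The plan is to exhibit a black-box reduction from the classical testing problem to the quantum one: given any quantum spectrum tester~$\calT$ for~$\calP$ that uses $m$ copies, we will build a classical tester~$\calT'$ for~$\calP$ that uses~$m$ samples. This immediately forces $m \geq f(d,\eps)$, proving the claim. The reduction itself is essentially a ``classical embedding'': interpret each classical sample as a standard-basis pure state and feed these pure states into~$\calT$.

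More concretely, suppose we are given~$m$ i.i.d.\ samples $\bx_1,\dots,\bx_m \sim \calD$ from an unknown distribution $\calD=(p_1,\dots,p_d)$ on $\{1,\dots,d\}$. From these we prepare the $m$ pure states $\ket{\bx_1}\!\bra{\bx_1},\dots,\ket{\bx_m}\!\bra{\bx_m}$ in~$\C^{d\times d}$, run~$\calT$ on their tensor product, and output whatever~$\calT$ outputs. The key observation is that, averaged over the randomness of the samples, each~$\ket{\bx_i}\!\bra{\bx_i}$ is distributed exactly as the mixed state
\[
\rho \;=\; \sum_{i=1}^{d} p_i \, \ket{i}\!\bra{i},
\]
and the tuple is distributed exactly as $\rho^{\otimes m}$. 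Therefore the quantum tester sees precisely $m$ copies of the mixed state~$\rho$ whose spectrum, as an unordered multiset, equals $\{p_1,\dots,p_d\}$.

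It remains to verify that the completeness and soundness of~$\calT'$ follow from those of~$\calT$. Because $\calP$ is a symmetric property of distributions, $\calD$ satisfies~$\calP$ if and only if the multiset $\{p_1,\dots,p_d\}$ does, which is exactly the condition that $\rho$'s spectrum satisfies~$\calP$. Similarly, total variation distance between two distributions depends only on their multisets of probabilities (once~$\calP$ is symmetric, the closest permutation realigns the coordinates), so if $\calD$ is $\eps$-far in total variation from every $\calD'\in\calP$, then $\rho$'s spectrum is $\eps$-far in total variation from every spectrum satisfying~$\calP$. Invoking Definition~\ref{def:the-model}, the guarantees of~$\calT$ translate verbatim into the guarantees of~$\calT'$.

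There is really no obstacle here beyond making the above observations carefully; the one small point worth noting is that the reduction uses Definition~\ref{def:the-model} (total variation on spectra) rather than Definition~\ref{def:the-general-model} (trace distance on density matrices). Since the two definitions are equivalent for unitarily invariant properties (as promised in the text and proved in Section~\ref{sec:testing}), this is harmless. The reduction also shows, incidentally, that the $\eps$-parameter is preserved exactly, so any classical lower bound of the form $f(d,\eps)$ transfers to the quantum setting with the same $\eps$-dependence.
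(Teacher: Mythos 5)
Your reduction is correct and is essentially the paper's own justification: the paper argues exactly that the classical problem is the quantum problem under the promise that the samples are the known orthogonal pure states $\ket{1},\dots,\ket{d}$, so that a quantum spectrum tester (whose acceptance probability is linear in the density matrix, hence equals its value on $\rho^{\otimes m}$ with $\rho = \sum_i p_i \ket{i}\bra{i}$) yields a classical tester with the same sample count and the same $\eps$. Your write-up just makes this embedding and the symmetric-property/TV-distance bookkeeping explicit, which matches the intended argument.
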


Although quantum spectrum testing is at least as hard as testing symmetric properties of probability distributions, there are some interesting nontrivial properties which have the same complexity in both models (up to constant factors).
For example, if $\calP$ is the property of having support size~$1$, then $\Theta(1/\eps)$ samples/copies are necessary and sufficient to test $\calP$ in both models (see~\cite{MdW13} for the $O(1/\eps)$ quantum spectrum testing upper bound).
In general, however, it is known that spectrum testing can require an asymptotically higher complexity (at least in terms of the parameter~$d$).

We end this section by pointing out that a large portion of the property testing literature concerning entropy and support size actually considers the problems of either computing these values~\cite{Pan04,BDKR05, VV11a, VV11b} (within some tolerance) or distinguishing between the cases when these values are either large or small~\cite{Val08} (often these problems have some added guarantee on the probability distribution, such as all of its nonzero probabilities being sufficiently large).
These problems, strictly speaking, do not fit within the above property testing framework.
In this work, when we consider the problem of testing a mixed state's rank (the quantum analogue of support size) we will be doing so explicitly within the property testing framework.

\subsection{Related work}

Returning to quantum spectrum testing, we would like to mention two prior lines of research that are directly relevant.  The first is an algorithm---which we call the \emph{empirical Young diagram} (EYD) algorithm---for learning the spectrum of an unknown mixed state.  This algorithm is naturally suggested by the early work of Alicki, Rudnicki, and Sadowski~\cite{ARS88} and was explicitly proposed by Keyl and Werner~\cite{KW01}. Regarding its performance guarantee, Hayashi and Matsumoto~\cite{HM02} gave explicit error bounds and a short proof, but their work contained some small calculational errors, subsequently corrected by Christandl and Mitchison~\cite{CM06}.  From the last of these it is easy to deduce the following:
\begin{theorem}\label{thm:kw}
The empirical Young diagram algorithm, when given $O(d^2/\eps^2 \cdot \ln(d/\eps))$ copies of a mixed state~$\rho$ with spectrum~$\eta$, outputs with high probability an estimate of~$\eta$ that is $\eps$-close in total variation distance.
\end{theorem}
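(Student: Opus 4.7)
The plan is to follow the standard Keyl--Werner style analysis of the EYD algorithm, using the pointwise concentration bound made explicit by Hayashi--Matsumoto and corrected by Christandl--Mitchison. On input $\rho^{\otimes n}$, the EYD algorithm performs \emph{weak Schur sampling}: it projectively measures the $\symm{n}$-isotypic component of $(\C^d)^{\otimes n}$ under the Schur--Weyl decomposition, producing a random partition $\lambda \vdash n$ with at most $d$ parts, distributed according to
\[
\Pr[\lambda] \;=\; s_\lambda(\eta)\cdot \dim\bigl(V^{\symm{n}}_\lambda\bigr),
\]
where $s_\lambda$ is the Schur polynomial evaluated at the spectrum of~$\rho$ and $V^{\symm{n}}_\lambda$ is the irreducible $\symm{n}$-module indexed by~$\lambda$. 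The algorithm then outputs $\lambda/n$, viewed as a nonincreasing probability vector, as its estimate of the sorted spectrum $\eta^{\downarrow}$.

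The core of the proof is the pointwise estimate
\[
\Pr[\lambda] \;\leq\; (n+1)^{d(d-1)/2}\cdot \exp\bigl(-n\cdot D(\lambda/n \,\|\, \eta^{\downarrow})\bigr),
\]
where $D$ denotes Kullback--Leibler divergence. I would derive this by separately controlling the two factors: (i)~bound the Schur polynomial by a polynomial-in-$n$ prefactor times the ``leading'' monomial $\prod_i (\eta_i^{\downarrow})^{\lambda_i}$, which is where the cross-entropy $\sum_i (\lambda_i/n)\log(1/\eta_i^{\downarrow})$ appears; and (ii)~bound the representation dimension via the hook length formula together with the multinomial entropy inequality $\binom{n}{\lambda_1,\dots,\lambda_d}\leq 2^{n H(\lambda/n)}$, which contributes the entropy term. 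The difference of the two exponents is exactly $D(\lambda/n\|\eta^{\downarrow})$, and the polynomial slack absorbs hook-length and Weyl-dimension corrections.

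Once the pointwise bound is established, Pinsker's inequality gives $D(\lambda/n\,\|\,\eta^{\downarrow})\geq 2\, d_{\mathrm{TV}}(\lambda/n,\eta^{\downarrow})^2$, so any $\lambda$ for which the EYD estimate is $\eps$-far from $\eta$ in total variation satisfies $\Pr[\lambda]\leq (n+1)^{d(d-1)/2}\exp(-2n\eps^2)$. The number of partitions of~$n$ with at most~$d$ parts is at most $(n+1)^d$, so a union bound gives
\[
\Pr\bigl[d_{\mathrm{TV}}(\lambda/n,\eta) > \eps\bigr] \;\leq\; (n+1)^{d^2}\cdot \exp(-2n\eps^2),
\]
which is forced below $1/3$ by choosing $n = C\cdot (d^2/\eps^2)\ln(d/\eps)$ for a sufficiently large absolute constant $C$, as required.

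The main obstacle is step~(ii), the pointwise KL bound. Obtaining the correct divergence exponent (rather than a weaker $\ell_2$ or $\ell_\infty$ exponent that would cost an extra factor of $d$ after the union bound) requires the Schur-polynomial and representation-dimension contributions to match up precisely, with only polynomial-in-$n$ slack. This is the delicate calculation in the Hayashi--Matsumoto / Christandl--Mitchison arguments, and is presumably where the simplification advertised in the abstract lives; once it is in hand, the Pinsker-and-union-bound finish is routine.
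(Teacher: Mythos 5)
Your proposal is correct, and its skeleton---a pointwise bound $\Pr[\blambda = \lambda] \leq (n+1)^{d(d-1)/2}\exp\bigl(-n\,\dkl{\underline{\lambda}}{\eta}\bigr)$, followed by Pinsker's inequality and a union bound over the at most $(n+1)^d$ partitions of~$n$ with at most $d$ parts---is exactly the structure of the paper's Theorem~\ref{thm:kw-prelim} and Corollary~\ref{cor:kw-bound}. Where you differ is in how the pointwise bound is obtained. You take the classical Hayashi--Matsumoto/Christandl--Mitchison route: bound $s_\lambda(\eta) \leq \dim(\uirrep_\lambda^d)\prod_i \eta_i^{\lambda_i} \leq (n+1)^{d(d-1)/2}\prod_i \eta_i^{\lambda_i}$, and bound $\dim(\sirrep_\lambda) \leq \binom{n}{\lambda_1,\dots,\lambda_d} \leq e^{nH(\underline{\lambda})}$, so that the exponents combine into the KL divergence. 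The one step you should state explicitly is why each monomial of $s_\lambda$ evaluated at the sorted spectrum is dominated by the leading monomial $\prod_i \eta_i^{\lambda_i}$: this is the fact that the weight of any semistandard tableau of shape $\lambda$ is majorized by $\lambda$ (equivalently, Kostka positivity only for dominated weights); it is standard, but it is precisely the ``delicate'' matching you flag. The paper proves the same inequality by a different mechanism: by Remark~\ref{rem:quantum-free} the probability of $\lambda$ equals $\Pr_{\ba \sim \calD^{\otimes n}}[\rsk(\ba) = \lambda]$; there are exactly $\dim(\sirrep_\lambda)\dim(\uirrep_\lambda^d)$ words $a$ with $\rsk(a)=\lambda$, and each has probability at most $\prod_i \eta_i^{\lambda_i}$ because $\rsk(a)=\lambda$ forces $\lambda$ to majorize the histogram of~$a$ (Greene's theorem, Proposition~\ref{prop:majorizer}). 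Both routes land on the identical estimate $(n+1)^{d(d-1)/2}\cdot\frac{n!}{\prod_i \lambda_i!}\cdot\prod_i \eta_i^{\lambda_i} \leq (n+1)^{d(d-1)/2}\exp(-n\,\dkl{\underline{\lambda}}{\eta})$; yours leans on the dominance-order fact for Schur polynomials, the paper's on the RSK/Greene reinterpretation, which is exactly the advertised simplification. Your final accounting (total failure probability at most $(n+1)^{d(d+1)/2}e^{-2n\eps^2} \leq (n+1)^{d^2}e^{-2n\eps^2}$, driven below $1/3$ by $n = C(d^2/\eps^2)\ln(d/\eps)$) matches the paper's.
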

We will give a description of this algorithm later in the paper; for now, suffice it to say that it can be viewed as the  quantum version of the natural classical algorithm for learning an unknown distribution, viz., outputting the empirical distribution.  The EYD algorithm gives a near-quadratic improvement over known quantum state tomography algorithms for the problem of estimating a mixed state's spectrum.\footnote{One may note that the dependence on~$\eps$ in Theorem~\ref{thm:kw} is slightly \emph{worse} than that for full tomography; however, we speculate that this is an artifact of the analysis and that $O(d^2/\eps^2)$ copies suffice for the EYD algorithm.} As a result, testing properties of quantum spectra is easy with a quadratic number of copies, and so we hope for \emph{subquadratic} algorithms.

The second result comes from the work of Childs et al.~\cite{CHW07}. It can be thought of as a quantum analogue of Fact~\ref{fact:b-day-p-dox}:
\begin{theorem}\label{thm:q-bday}
$\Theta(r)$ copies of a state $\rho$ are necessary and sufficient to distinguish between the cases when
$\rho$'s spectrum is uniform on either~$r$ or~$2r$ values.
(The bound also holds for~$r$ vs.\ $cr$ when $c > 2$ is an integer.)
\end{theorem}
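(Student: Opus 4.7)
The plan is to reduce the quantum testing problem to a purely classical question about the \emph{Schur--Weyl distribution} $\SWdens{\eta}{n}$ on partitions $\lambda \vdash n$, which assigns $\lambda$ the probability $\dim(\lambda)\,s_\lambda(\eta)$. Since the property in question is unitarily invariant, one can assume without loss of generality that the tester is itself unitary-invariant; by Schur--Weyl duality this means the tester's action factors through weak Schur sampling, so it sees only a random $\lambda$ drawn from $\SWdens{\eta}{n}$ where $\eta$ is the spectrum of $\rho$. Moreover, when $\eta$ is the flat spectrum $(1/d)^d$, the state $\rho^{\otimes n}$ block-decomposes as $\sum_\lambda \Pr[\lambda]\,\Pi_\lambda/\trace(\Pi_\lambda)$, so the trace distance $\|\rho_r^{\otimes n} - \rho_{2r}^{\otimes n}\|_1$ equals the total variation distance between $\SWdens{(1/r)^r}{n}$ and $\SWdens{(1/(2r))^{2r}}{n}$. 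Thus the theorem becomes the statement that these two Schur--Weyl measures become separated in total variation precisely at the scale $n = \Theta(r)$.

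For the upper bound, with $n = Cr$ copies and $C$ a sufficiently large constant, the plan is to exhibit a symmetric-function statistic $T(\lambda)$ whose distribution differs noticeably between the two cases. A natural choice is a low-degree Kerov-style moment whose expectation under $\SWdens{\eta}{n}$ reproduces $p_2(\eta) = \sum_i \eta_i^2$, equal to $1/r$ in the rank-$r$ case and $1/(2r)$ in the rank-$2r$ case, a gap of $\Theta(1/r)$. Standard variance estimates for such statistics (expressible cleanly in Kerov's algebra of polynomial functions on Young diagrams) place the standard deviation below this gap as soon as $n = \Theta(r)$, so a simple threshold on $T$ distinguishes the two cases with constant success probability.

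For the matching lower bound, the task is to bound the chi-squared divergence between $\SWdens{(1/r)^r}{n}$ and $\SWdens{(1/(2r))^{2r}}{n}$ by a small constant whenever $n = c_0 r$ for a sufficiently small absolute constant $c_0 > 0$. Using the hook-content formula $s_\lambda((1/d)^d) = d^{-n}\prod_{(i,j) \in \lambda}(d + c(i,j))/h(i,j)$, each term of the chi-squared sum factors as a product over the boxes of $\lambda$; the sum over $\lambda \vdash n$ can then be evaluated via a dual-Cauchy-type symmetric function identity or an explicit contour-integral representation of the resulting generating function, yielding a constant-order bound in the regime $n/r = O(1)$.

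The main obstacle is the chi-squared evaluation in the lower bound: although the hook-content formula gives a clean per-$\lambda$ expression, summing over all $\lambda \vdash n$ and extracting a sharp constant requires either a delicate character-theoretic identity or careful asymptotic analysis of the generating function. By comparison, the upper bound is comparatively routine once a suitable Kerov-style statistic is identified, and the initial reduction to weak Schur sampling is a standard consequence of unitary invariance.
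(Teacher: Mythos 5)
Your reduction to weak Schur sampling and your upper bound follow essentially the paper's (and Childs et al.'s) route: the statistic you describe is $\psharp{2}(\blambda)$ (equivalently twice the content sum $c_1(\blambda)$), whose expectation is $\falling{n}{2}\,p_2(\eta)$, and the paper's Uniform Distribution Distinguisher with $\Delta = r$ gives exactly the $O(r)$-copy threshold test, with the mean/variance computation you sketch being~\eqref{eq:thanks-maple-avg} and~\eqref{eq:thanks-maple-var}. One correction to the reduction: for the rank-$r$ maximally mixed state embedded in $\C^{2r}$, the state $\rho_r^{\otimes n}$ is \emph{not} of the form $\sum_\lambda \Pr[\blambda = \lambda]\,\Pi_\lambda/\trace(\Pi_\lambda)$, and the trace distance between the two fixed product states does \emph{not} equal $\dtv{\SWunif{n}{r}}{\SWunif{n}{2r}}$ (already at $n=1$ the trace distance is $\tfrac12$ while the two Schur--Weyl measures coincide). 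The correct statement is the one in your first sentence: by unitary invariance and Lemma~\ref{lem:wolog}, an optimal tester may be assumed to act only on the sampled $\blambda$, so the \emph{testing problem} (not the pair of fixed states) reduces to distinguishing the two Schur--Weyl measures.

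The genuine gap is the lower bound. You correctly reduce to showing $\dchi{\SWunif{n}{r}}{\SWunif{n}{2r}}$ is small when $n = c_0 r$, but the assertion that the sum over $\lambda \vdash n$ ``can be evaluated via a dual-Cauchy-type symmetric function identity or a contour-integral representation'' is precisely the missing content: the hook-content formula gives a product over boxes for each individual $\lambda$, but the chi-squared sum $\sum_\lambda \SWunif{n}{r}[\lambda]^2/\SWunif{n}{2r}[\lambda]$ does not factor, and no standard Cauchy-type identity evaluates it. In the paper this quantity is exactly the $\eps = \tfrac12$ case of Theorem~\ref{thm:main-pan-lower} (since $\pan{1/2}{2r}$ is uniform on $r$ values), and bounding it takes most of the quantum Paninski section: one expands the likelihood ratio via the binomial formula in shifted Schur polynomials (Theorem~\ref{thm:binomial-formula}), proves the cancellation result Theorem~\ref{thm:paninski-var}/Theorem~\ref{thm:paninski-internal2} showing that cross terms which naively contribute at order $n^{2k}$ drop to order $n^{k}$ (identified in the paper as the main difficult step), and evaluates $s_\mu(+1,-1,\dots,+1,-1)$ exactly via $2$-quotients (Theorem~\ref{thm:plus-minus-one}). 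Without a substitute for these steps, your claimed constant-order bound in the regime $n/r = O(1)$ is unsupported. Note also that Childs et al.'s original argument takes a genuinely different and easier path which you could have adopted instead: show that both $\dtv{\SWunif{n}{r}}{\Planch{n}}$ and $\dtv{\SWunif{n}{2r}}{\Planch{n}}$ are $o(1)$ when $n = o(r)$, and conclude by the triangle inequality.
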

\noindent Setting $r = \frac{d}{2}$, Theorem~\ref{thm:q-bday} gives a  linear lower bound of $\Omega(d)$ for various properties of spectra.
This is in contrast with property testing of probability distributions, in which sublinear algorithms are the main goal, with the  Birthday Paradox typically precluding sub-$O(\sqrt{d})$-sample algorithms.

Finally, we mention that we may also obtain relevant results by applying Fact~\ref{fact:classical-vs-quantum} to known lower bounds for classical property testing of probability distributions. Though in general these lower bounds are not tight, prior to our work this was (to our knowledge) the only way to produce lower bounds for testing spectra with a dependence on~$\eps$.

\subsection{Our results}

We have four main results.
The first concerns the property that Montanaro and de Wolf refer to as \textbf{Mixedness}:
\begin{theorem}\label{thm:paninski-intro}
$\Theta(d/\eps^2)$ copies are necessary and sufficient to test whether $\rho \in \C^{d \times d}$ is the maximally mixed state; i.e., whether its spectrum is $\eta = (1/d, \dots, 1/d)$.
\end{theorem}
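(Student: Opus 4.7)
The plan is to prove the upper and lower bounds separately; in both cases the starting point is \emph{weak Schur sampling}. Because ``being maximally mixed'' is a unitarily invariant property, an optimal tester may (by conjugating its input by a random unitary, averaging, and appealing to Schur--Weyl duality) be assumed to first measure $\rho^{\otimes n}$ by projecting onto the isotypic subspaces of the $\symm{n}$-action, producing a random Young diagram $\boldsymbol{\lambda} \vdash n$ (with at most $d$ rows) distributed according to $\SWdens{n}{\eta}$; the tester's decision depends on $\boldsymbol{\lambda}$ alone. The task is then to distinguish $\SWdens{n}{\unif{d}}$ from $\SWdens{n}{\eta}$ for every $\eta$ with $\|\eta - \unif{d}\|_1 \geq \eps$.

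For the upper bound, I would use the Kerov polynomial function $\pstar{2}$ as the test statistic. Its defining property is $\E_{\SWdens{n}{\eta}}[\pstar{2}(\boldsymbol{\lambda})] = n(n-1) \cdot \sum_i \eta_i^2$, so $\pstar{2}$ is essentially an unbiased estimator of $\|\eta\|_2^2$. Since $\|\eta - \unif{d}\|_1 \geq \eps$ forces $\|\eta\|_2^2 \geq \tfrac{1}{d} + \Omega(\tfrac{\eps^2}{d})$ by Cauchy--Schwarz, the expected gap between the two cases is $\Omega(n^2 \eps^2 / d)$. Thresholding $\pstar{2}(\boldsymbol{\lambda})$ between these two expectations works once its standard deviation under $\SWdens{n}{\unif{d}}$ is much smaller than the gap. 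Because $\pstar{2}(\lambda) = \sum_i \lambda_i(\lambda_i - 2i+1)$ is (up to scale) the value of the normalized character $\widehat{\chi}^\lambda$ at a transposition, a direct second-moment computation inside Kerov's algebra of polynomial functions should yield a variance bound allowing Chebyshev's inequality to finish whenever $n \gg d/\eps^2$.

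For the lower bound, I would quantize Paninski's classical hard distribution. Taking $d$ even, let $\boldsymbol{\eta}$ be the random spectrum that independently perturbs each pair $(\eta_{2i-1}, \eta_{2i})$ to $(\tfrac{1+\eps}{d}, \tfrac{1-\eps}{d})$ or $(\tfrac{1-\eps}{d}, \tfrac{1+\eps}{d})$ with equal probability; this ensures $\|\boldsymbol{\eta} - \unif{d}\|_1 = \eps$ deterministically. By the weak Schur sampling reduction it suffices to establish
\[
\bigl\| \SWdens{n}{\unif{d}} \;-\; \E_{\boldsymbol{\eta}} \SWdens{n}{\boldsymbol{\eta}} \bigr\|_{\mathrm{TV}} \;=\; o(1) \qquad \text{whenever } n = o(d/\eps^2),
\]
which I would bound via the chi-squared divergence against the reference measure $\SWdens{n}{\unif{d}}$. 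Expanding $\SWdensProb{n}{\eta}{\lambda} = s_\lambda(\eta)\dim(\lambda)$ and averaging over the random signs eliminates every odd power of $\eps$, leaving a power series in $\eps^2$ whose coefficients are expectations of explicit polynomial functions of $\boldsymbol{\lambda}$ under the uniform Schur--Weyl measure---again the territory of Kerov's algebra.

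The principal obstacle is the lower-bound chi-squared estimate: one needs moment bounds for polynomial functions on Young diagrams under $\SWdens{n}{\unif{d}}$ that are sharp enough to show the surviving $\eps^2$-series is $o(1)$ precisely at the threshold $n \asymp d/\eps^2$. The underlying combinatorics will mirror Paninski's classical collision-based lower bound, but with an extra factor of $\sqrt{d}$ reflecting the loss of coordinate-level information in the unitarily invariant setting---this factor is what upgrades the $\Omega(\sqrt{d}/\eps^2)$ one gets by invoking Fact~\ref{fact:classical-vs-quantum} to the desired $\Omega(d/\eps^2)$. The upper-bound variance calculation is then a much simpler special case of the same moment estimates, so pinning down the right Kerov-algebra bound effectively resolves the theorem.
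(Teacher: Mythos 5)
Your skeleton matches the paper's: the reduction to weak Schur sampling, the upper bound via thresholding $\pstar{2}=\psharp{2}$ (whose expectation under $\SWdens{n}{\rho}$ is $n(n-1)\sum_i\eta_i^2$, analyzed by a second-moment computation in Kerov's algebra plus Chebyshev), and a lower bound via the chi-squared divergence of $\SWdens{n}{\pan{\eps}{d}}$ against $\SWunif{n}{d}$. (One small caution on the upper bound: you must also control the variance of the statistic under the \emph{far} spectra, not only under $\unif{d}$; the paper does this, and the variance there picks up terms involving $\sum_i\Delta_i^2$ that have to be beaten by the squared gap.)

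The genuine gap is in the lower bound, where your proposal stops exactly at the two theorems that constitute the proof. First, after expanding $\SWdensProb{n}{\pan{\eps}{d}}{\blambda}/\SWunifProb{n}{d}{\blambda}-1$ via the Okounkov--Olshanski Binomial Formula as $\sum_{\mu\neq\emptyset} \frac{s_\mu(x)}{\rising{d}{\mu}}s^*_\mu(\blambda)$ with $x=(+2\eps,-2\eps,\dots)$, squaring produces cross terms $\E_{\blambda\sim\SWunif{n}{d}}[s^*_\mu(\blambda)s^*_\nu(\blambda)]$, which are naively of order $n^{|\mu|+|\nu|}$; at that rate the series is only $o(1)$ for $n\ll d^{1/2}/\eps$-type thresholds, far from $d/\eps^2$. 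The heart of the paper's argument (Theorems~\ref{thm:paninski-var} and~\ref{thm:paninski-internal}) is the cancellation, proved by expanding $s^*_\mu$ in the central characters $\psharp{\tau}$, multiplying via the structure constants, and using Fourier analysis on $\symm{k}$ together with $p_1(x)=0$, which shows that after summing against $s_\nu(x)$ only the diagonal $|\nu|=|\mu|$ survives with weight $\falling{n}{|\mu|}$ rather than $n^{2|\mu|}$. Your ``moment bounds sharp enough at the threshold'' are precisely this statement, and nothing in the proposal indicates how to obtain it. Second, the factor you describe as ``an extra $\sqrt{d}$'' comes from the exact evaluation of $s_\mu$ at the alternating-sign point (Theorem~\ref{thm:plus-minus-one}): $s_\mu(+1,-1,\dots,+1,-1)$ vanishes unless $\mu$ is balanced and otherwise scales like $d^{|\mu|/2}$ (via the $2$-quotient and $\chi_\mu(2,\dots,2)$), not like $d^{|\mu|}$; without this the degree-$2k$ coefficient is too large by a factor $d^{k}$ and the bound again falls short of $n=\Omega(d/\eps^2)$. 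Finally, the Paninski-style random-sign mixture buys you nothing here and your stated mechanism is off: $\SWdens{n}{\eta}$ depends only on the unordered spectrum (by symmetry of the Schur polynomials), so every sign pattern induces the \emph{same} distribution on $\blambda$ and the mixture is just the single alternative $\SWdens{n}{\pan{\eps}{d}}$; the odd powers of $\eps$ disappear not by averaging but because $s_\mu$ at the alternating point is zero for unbalanced (in particular odd-size) $\mu$, and the $\mu=(1)$ term dies since $x_1+\cdots+x_d=0$.
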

\noindent
This is the quantum analogue of Paninski's Theorem~\ref{thm:pan-classical}.  We also remark that given the way we prove Theorem~\ref{thm:paninski-intro}, Childs et al.'s Theorem~\ref{thm:q-bday} can be obtained as a very special case.

Our second result gives new bounds for testing whether a state has low rank.
\begin{theorem}\label{thm:rank-intro}
$\Theta(r^2/\eps)$ copies are necessary and sufficient to test whether $\rho \in \C^{d \times d}$ has rank~$r$ with one-sided error.
With two-sided error, a lower bound of $\Omega(r/\eps)$ holds.
\end{theorem}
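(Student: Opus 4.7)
The theorem has three parts: a one-sided $O(r^2/\eps)$ upper bound, a matching one-sided lower bound, and a two-sided lower bound of $\Omega(r/\eps)$. The natural one-sided tester is the \emph{Young-diagram rank test}: perform weak Schur sampling on $\rho^{\otimes n}$ to obtain a random partition $\lambda$, and accept iff $\lambda$ has at most $r$ rows. Completeness is immediate from Schur--Weyl duality: if $\rho$ has rank at most $r$, then $\rho^{\otimes n}$ is supported on the $n$-th tensor power of an $r$-dimensional subspace, which decomposes only into $\symm{n}$-irreps $V_\lambda$ with $\lambda$ having at most $r$ parts, so $\lambda$ has at most $r$ rows with probability~$1$.

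For soundness I invoke Fact~\ref{fact:quantum-classically}: $\lambda$ is distributed as the RSK shape of a word $\bw \sim \eta^{\otimes n}$, and by Greene's theorem the number of rows of $\lambda$ equals $\lds(\bw)$, the length of the longest strictly decreasing subsequence. So it suffices to show $\Pr[\lds(\bw) \geq r+1] \geq 2/3$ whenever $\sum_{i > r} \eta_i \geq \eps$ and $n = Cr^2/\eps$ for an absolute constant $C$. The plan is to case-split on $\eta_{r+1}$: in the \emph{concentrated} case $\eta_{r+1} = \Omega(\eps/r)$, a second-moment argument on the number of embeddings of the pattern $(r+1, r, \dots, 1)$ as a strictly decreasing subsequence of $\bw$ delivers the bound; in the \emph{diffuse} case $\eta_{r+1} = o(\eps/r)$, there are many small eigenvalues with total mass $\geq \eps$, so the sub-word of $\bw$ formed by the small-value samples has length $\Theta(n\eps)$ with essentially-all-distinct letters, and classical Ulam--Hammersley estimates give it an LDS of $\Theta(\sqrt{n\eps}) \geq r+1$.

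For the matching one-sided lower bound, any one-sided tester must accept every $\lambda$ with $\leq r$ rows (otherwise it fails on some rank-$r$ $\rho$), so the Young-diagram rank test is optimal among one-sided testers; hence it suffices to exhibit an $\eps$-far $\eta^{\mathrm{no}}$ under which $\Pr[\lds(\bw) \geq r+1] < 2/3$ for $n = o(r^2/\eps)$. A natural candidate is $\eta^{\mathrm{no}} = ((1-\eps)/r, \dots, (1-\eps)/r, \eps/k, \dots, \eps/k)$ with $k$ large; the key identity is $\E[\#\{\text{length-}(r+1)\text{ strictly decreasing embeddings in }\bw\}] = \binom{n}{r+1}\, e_{r+1}(\eta^{\mathrm{no}})$, and combining it with a Poissonian / Janson-type second-moment estimate (to convert first-moment information into a bound on $\Pr[\lds \geq r+1]$ that captures the extra $1/\eps$ factor) yields the desired lower bound. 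For the two-sided $\Omega(r/\eps)$ bound I would apply Le Cam's two-point method to $\eta^{\mathrm{yes}} = (1/r, \dots, 1/r, 0, \dots)$ versus the single-atom perturbation $\eta^{\mathrm{no}} = ((1-\eps)/r, \dots, (1-\eps)/r, \eps, 0, \dots)$, bounding $d_{\mathrm{TV}}(\SWunif{n}{\eta^{\mathrm{yes}}}, \SWunif{n}{\eta^{\mathrm{no}}})$ via Kerov-polynomial expansions of Schur functions near the uniform spectrum, as advertised in the paper's summary of techniques.

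The main obstacle is the soundness case analysis---especially the diffuse case, where one must argue that the $\Theta(\sqrt{n\eps})$-long LDS of the small-value sub-word actually embeds as a strictly decreasing subsequence of the \emph{full} word $\bw$ rather than being disrupted by intervening large letters---together with the quantitative tightening of the one-sided lower bound: the naive first-moment / union bound gives only $\Omega(r^2)$, and recovering the extra $1/\eps$ factor requires a careful second-moment or representation-theoretic refinement.
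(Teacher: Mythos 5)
Your tester, its completeness via Schur--Weyl duality, and the observation that it is optimal among one-sided testers all coincide with the paper's Rank Tester and Proposition~\ref{prop:rank-tester-is-optimal}; the soundness analysis is also in the same spirit (the paper splits on whether $\eta_{r+1} \geq \eps/4r$), but the execution differs. The paper handles the concentrated case by a waiting-time argument (the time to see $r+1, r, \dots, 1$ as a subsequence is a sum of geometrics, each of mean $O(r/\eps)$, concentrated by Proposition~\ref{prop:geom-conc}), which is simpler than a second-moment count of embeddings. Your diffuse case is shaky as stated: $\eta_{r+1} = o(\eps/r)$ is perfectly compatible with the tail consisting of only $\Theta(r)$ letters of mass $\Theta(\eps/r)$ each, so the small-letter sub-word of length $\Theta(n\eps) = \Theta(r^2)$ has massive letter repetition and the ``essentially distinct letters, $\lds = \Theta(\sqrt{n\eps})$'' heuristic does not apply; the conclusion $\lds \geq r+1$ is still true, but you need something like the paper's bucketing of the tail into $\geq 2r$ groups of mass $\geq \eps/4r$ followed by the same waiting-time argument. (Your worry about intervening large letters is vacuous: a decreasing subsequence of a sub-word is automatically a decreasing subsequence of the full word.)

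The genuine gaps are in both lower bounds, and in both cases the hard instance is wrong. For the one-sided $\Omega(r^2/\eps)$ bound, $\eta^{\mathrm{no}} = ((1-\eps)/r,\dots,(1-\eps)/r,\eps/k,\dots,\eps/k)$ is not hard: the top $r$ letters each have mass $\approx 1/r$, so the pattern $r, r-1,\dots,1$ appears as a subsequence after $O(r^2)$ samples, and prepending a single small letter (waiting time $O(1/\eps)$) gives $\lds \geq r+1$; hence the Rank Tester itself rejects this instance with high probability after $O(r^2 + 1/\eps) = o(r^2/\eps)$ copies, so no second-moment refinement can extract the missing $1/\eps$ factor from it. The paper instead uses a \emph{single} heavy atom, $(1-2\eps, \tfrac{2\eps}{d-1},\dots,\tfrac{2\eps}{d-1})$ with $d \gg r$: the heavy letter can contribute at most one element to a strictly decreasing subsequence, so $\lds(\bw) \leq \lds(\wt{\bw})+1$ where $\wt{\bw}$ is the sub-word of rare letters, which has length $\approx 2\eps n$ and is essentially a uniform word over $d-1$ letters, whence $\lds(\wt{\bw}) = O(\sqrt{\eps n})$ by Proposition~\ref{prop:increasing-seq}; this is what forces $n = \Omega(r^2/\eps)$. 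The same defect breaks your two-sided pair: uniform-on-$r$ versus $((1-\eps)/r,\dots,(1-\eps)/r,\eps)$ is distinguishable (even with one-sided error) in $O(r^2 + 1/\eps)$ copies by the same waiting-time argument, which is $o(r/\eps)$ once $\eps \ll 1/r$, so this two-point pair cannot certify $\Omega(r/\eps)$ and the proposed Kerov-expansion TV bound would have to fail. The paper's construction instead emits a fixed pure state $|d\rangle$ with probability $1-2\eps$ and, with probability $2\eps$, a sample from a maximally mixed state on either $r-1$ or $d-1 \gg r$ dimensions; conditioning on the set of informative copies (of size $O(\eps n)$ with high probability) reduces the task to distinguishing maximally mixed states of widely different ranks, and Theorem~\ref{thm:q-bday} then gives $\eps n = \Omega(r)$, i.e.\ $n = \Omega(r/\eps)$, with no new distance computation needed.
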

\noindent
We note that the copy complexity is independent of the ambient dimension~$d$.
Knowing that a state is low rank can often make solving a given problem much simpler.
For example, quantum state tomography can be made more efficient when the state is known to be low-rank~\cite{FGLE12}.
Compare this to Theorem~\ref{thm:small-support-size}.

Next, we extend Childs et al.'s Theorem~\ref{thm:q-bday} to $r$ vs.\ $r'$ for \emph{any} $r+1 \leq r' \leq 2r$.  A qualitative difference is seen when $r' = r+1$; namely, nearly quadratically many copies are necessary.
\begin{theorem}\label{thm:unif-dist}
Let $1 \leq \Delta \leq r$.  Then $O(r^2/\Delta)$ copies are sufficient to distinguish between the cases
when $\rho$'s spectrum is uniform on either $r$
or $r+\Delta$ eigenvalues; further, a nearly matching lower bound of $\wt{\Omega}(r^2/\Delta)$ copies holds.
\end{theorem}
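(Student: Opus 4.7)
The plan is to analyze this via Schur--Weyl sampling: from $n$ copies of $\rho$ we obtain a random partition $\boldsymbol{\lambda} \vdash n$ distributed as $\SWdens{n}{\eta}$, and by the forthcoming reduction in Sections~\ref{sec:RSK}--\ref{sec:weak-schur} this captures the full information available to any optimal tester of a spectrum property.

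For the upper bound, I would use the single statistic $\psharp{2}(\boldsymbol{\lambda})$ from Kerov's algebra of polynomial functions on Young diagrams. The defining property is that $\E[\psharp{2}(\boldsymbol{\lambda})] = \falling{n}{2} \cdot p_2(\eta)$, where $p_2(\eta) = \sum_i \eta_i^2 = \trace(\rho^2)$; for uniform $\eta$ on $r$ (resp.\ $r+\Delta$) eigenvalues this yields mean $\falling{n}{2}/r$ (resp.\ $\falling{n}{2}/(r+\Delta)$), so the signal gap is $\Theta(n^2\Delta/r^2)$. Expanding $(\psharp{2})^2$ in the Kerov basis gives an identity of the form $(\psharp{2})^2 = \psharp{2,2} + c_3\,\psharp{3} + c_1\,\psharp{1,1} + (\text{lower})$ for explicit constants, whence
\[
\mathrm{Var}[\psharp{2}(\boldsymbol{\lambda})] = C \cdot \falling{n}{3} \cdot \bigl(p_3(\eta) - p_2(\eta)^2\bigr) + O(n^2).
\]
The crucial point is that for \emph{any} uniform distribution $p_k = r^{1-k}$, so $p_3 = p_2^2$ and the leading $n^3$ contribution cancels identically; the variance collapses to $O(n^2)$ under both hypotheses. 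A Chebyshev-type threshold test on $\psharp{2}(\boldsymbol{\lambda})$ therefore succeeds whenever $(n^2\Delta/r^2)^2 \gg n^2$, i.e., $n = O(r^2/\Delta)$.

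For the lower bound, the same reduction means it suffices to prove $d_{\mathrm{TV}}\bigl(\SWdens{n}{\unif{r}},\SWdens{n}{\unif{r+\Delta}}\bigr) = o(1)$ when $n = \widetilde{o}(r^2/\Delta)$. The plan is to apply Le~Cam's two-point method via a chi-squared bound: since $\SWdens{n}{\eta}(\lambda) = \dim(\lambda) \cdot s_\lambda(\eta)$, at a uniform argument $s_\lambda(\unif{r})$ has a clean hook-content-type closed form, so the likelihood ratio $\SWdens{n}{\unif{r+\Delta}}(\lambda)/\SWdens{n}{\unif{r}}(\lambda)$ becomes a product over the boxes of $\lambda$ of explicit rational factors in the contents. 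Expanding this ratio in the $\psharp{k}$ basis and integrating against $\SWdens{n}{\unif{r}}$ produces polynomials in the power sums $p_k(\unif{r}) = r^{1-k}$, and the same uniform-distribution cancellations that controlled the variance above should force the chi-squared distance to be small.

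The main obstacle is that partitions $\lambda$ with more than $r$ parts lie outside the support of $\SWdens{n}{\unif{r}}$ (since $s_\lambda$ vanishes there), so the naive chi-squared expansion is ill-defined on exactly the region where the two measures differ most. I expect the remedy to be a truncation argument: show, using tail estimates on the row lengths of a Schur--Weyl-distributed $\boldsymbol{\lambda}$ (the same kind of Kerov-algebra moment bounds used above), that $\SWdens{n}{\unif{r+\Delta}}$ places only $o(1)$ mass on partitions with more than $r$ rows when $n \ll r^2/\Delta$, and then bound the chi-squared distance on the common support. Controlling this boundary contribution is the delicate step, and is the source of the unavoidable polylogarithmic slack in the $\widetilde{\Omega}$ lower bound.
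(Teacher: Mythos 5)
Your upper bound is essentially the paper's own argument: the paper's Uniform Distribution Distinguisher thresholds $\psharp{2}(\blambda)$ at the midpoint of the two means, and the variance collapse you describe (the $n^3$ term cancels because $p_3 = p_2^2$ for uniform spectra) is exactly equation~\eqref{eq:thanks-maple-var}, followed by the same Chebyshev computation. No issues there.

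The lower bound proposal, however, has a genuine gap, and it is not where you think it is. The support mismatch you flag is a non-issue: the paper simply works with $\dkl{\SWunif{n}{\r}}{\SWunif{n}{\r+1}}$ in the direction where supports are nested (every $\lambda$ with at most $r$ rows has at most $r+1$ rows), so no truncation of the sample space is ever needed; moreover the polylogarithmic slack does \emph{not} come from any boundary effect but from the truncation level $\c$ of the series $\sum_{k\ge 2} (-1)^k\pstar{k}(\blambda)/(k\r^k)$, traded off against the $\exp(O(\c^2\log\c))$ growth of the structure-constant norms in the error term (Lemmas~\ref{lem:errL}--\ref{lem:structure-bound}). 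The real difficulty, which your plan does not address, is that the ``uniform-distribution cancellation'' you invoke is the wrong cancellation. Expanding the log-likelihood ratio in the Kerov algebra, the leading $k=2$ term has expectation $\falling{n}{2}/(2\r^3)$, which at $n = \r^2/\omega^2$ is of order $\r/\omega^4$ --- enormous, not small. Term by term this expansion only certifies indistinguishability for $n \ll \r^{3/2}$. The paper's key discovery is an exact \emph{cross-degree} cancellation: after converting $\pstar{k}$ to the $\psharp{\mu}$ basis and grouping contributions by $|\mu|$, the inner sums vanish identically via the identity $\sum_{h=0}^{s} \frac{(-1)^h}{h+1}\binom{s+h}{h}\binom{s}{h} = 0$, leaving only tail terms of size $\r(3/\omega)^{\c}$. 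This has nothing to do with $p_3 = p_2^2$, and without it neither your chi-squared expansion nor a KL computation gets anywhere near $r^2/\Delta$. Separately, you would still need to justify the $\Delta$-dependence: the paper handles $\Delta=1$ and telescopes, and to avoid losing a factor of $\Delta$ it must prove that Pinsker is not sharp here (Theorem~\ref{thm:pinsker-not-sharp}, via a variance bound on the truncated series). Your single-shot $r$ vs.\ $r+\Delta$ chi-squared comparison could in principle deliver the right $\Delta$-scaling directly, which would be a genuinely different (and arguably cleaner) route --- but only if you can carry out the second-moment analogue of the paper's cancellation, controlling all the large individual terms in $\E[(\text{likelihood ratio}-1)^2]$, which is precisely the machinery your proposal leaves unspecified.
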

\noindent
As above, we note that these bounds are independent of the ambient dimension~$d$.

Our final results concern the EYD algorithm from Theorem~\ref{thm:kw}.
First, we give an arguably simpler proof of Theorem~\ref{thm:kw}.
Next, we complement this with a lower bound showing that the analysis of the EYD algorithm from Theorem~\ref{thm:kw} is tight up to logarithmic factors.
\begin{theorem}
If $\rho \in \C^{d \times d}$ is the maximally mixed state, the algorithm from Theorem~\ref{thm:kw} \emph{fails} to give an $\eps$-accurate estimate (with high probability) unless $\Omega(d^2/\eps^2)$ copies are used.
\end{theorem}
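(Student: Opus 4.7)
The plan is to exploit the explicit distributional description of the EYD algorithm's output. Under $\rho = I/d$, weak Schur sampling on $\rho^{\otimes n}$ yields a random Young diagram $\lambda \vdash n$ drawn from the Schur--Weyl distribution $\SWdens{d}{n}(\lambda) = \dim(\lambda)\,D_\lambda(d)/d^n$, after which the algorithm outputs $\widetilde{\eta} = \lambda/n$. Writing $\eta = (1/d,\ldots,1/d)$, the task is to show $\|\widetilde{\eta} - \eta\|_1 > \eps$ with high probability whenever $n = o(d^2/\eps^2)$.

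First I would dispose of the trivial regime $n \leq d/3$: the random $\lambda$ has at most $n$ nonzero parts, so $\widetilde{\eta}$ vanishes on at least $2d/3$ of the coordinates where $\eta_i = 1/d$, immediately giving $\|\widetilde{\eta} - \eta\|_1 \geq 2/3$. For the main regime $n \geq d/3$, the strategy is a moment-based argument: establish the second-moment lower bound $\mathbb{E}[\sum_i(\lambda_i - n/d)^2] = \Omega(nd)$ and the matching fourth-moment upper bound $\mathbb{E}[\sum_i(\lambda_i - n/d)^4] = O(n^2 d)$, then combine them via the elementary Cauchy--Schwarz inequality $\sum_i |x_i| \geq (\sum_i x_i^2)^{3/2}/\sqrt{\sum_i x_i^4}$ (itself obtained from $\sum_i x_i^2 = \sum_i |x_i|^{1/2} \cdot |x_i|^{3/2}$) to conclude $\sum_i |\lambda_i - n/d| = \Omega(d\sqrt{n})$, i.e., $\|\widetilde{\eta} - \eta\|_1 = \Omega(d/\sqrt{n})$, with high probability. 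Setting this $\geq \eps$ rearranges to $n = \Omega(d^2/\eps^2)$.

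Both moments come from Kerov's polynomial functions on Young diagrams and the master formula $\mathbb{E}_{\lambda \sim \SWdens{d}{n}}[\psharp{k}(\lambda)] = \falling{n}{k} \cdot p_k(\eta)$, a direct consequence of Schur--Weyl duality applied to a permutation of cycle type $(k, 1^{n-k})$. With $\eta = (1/d)^d$ this becomes $\mathbb{E}[\psharp{k}(\lambda)] = \falling{n}{k}/d^{k-1}$. For $k=2$, combined with the expansion $\psharp{2}(\lambda) = \sum_i \lambda_i^2 - 2\sum_i i\lambda_i + n$ and an estimate of the linear statistic $\mathbb{E}[\sum_i i \lambda_i]$ (which to leading order is $n(d+1)/2$ since $\lambda$ concentrates near a rectangular shape), this yields the second-moment bound. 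The analogous calculation for $k=3$ and $k=4$ is needed because the binomial expansion of $(\lambda_i - n/d)^4$ involves $\sum_i \lambda_i^a$ for $a \in \{1,2,3,4\}$; together these yield the fourth-moment bound. Finally, a standard second-moment concentration argument for $\sum_i x_i^2$ and $\sum_i x_i^4$ upgrades the resulting expected $\ell_1$ bound to a high-probability statement.

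The main obstacle is the fourth-moment upper bound. Conceptually it encodes the ``flatness'' of the fluctuations $\lambda_i - n/d$: they must be spread roughly evenly across all $d$ rows rather than concentrated in a few, which is exactly what separates the EYD-specific bound $d^2/\eps^2$ from the generic mixedness bound $d/\eps^2$ (Theorem~\ref{thm:paninski-intro}) that the second moment alone would yield via the weaker inequality $\sum|x_i| \geq \sqrt{\sum x_i^2}$. Technically, the $O(n^2 d)$ estimate is delicate because the leading $n^4/d^3$ contributions from the binomial expansion cancel to several orders, so one must carefully account for the subleading corrections arising both from the Kerov-to-ordinary-power-sum conversion and from the cross-statistics $\mathbb{E}[\sum_i i^a \lambda_i^b]$ with $a+b \leq 4$.
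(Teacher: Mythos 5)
Your plan works only in the regime $n = \Omega(d^2)$, and the claimed moment bounds are false in the intermediate range $d \ll n \ll d^2$, which your argument therefore does not cover. When $\sqrt{n} \ll d$, a typical $\blambda \sim \SWunif{n}{d}$ has only $O(\sqrt{n})$ nonzero rows, each of length $O(\sqrt{n})$: the zero rows contribute about $d\cdot(n/d)^2 = n^2/d$ to $\sum_i(\blambda_i - n/d)^2$ and the nonzero rows at most $O(n^{3/2})$, so $\E\bigl[\sum_i(\blambda_i - n/d)^2\bigr] = O(n^{3/2} + n^2/d) = o(nd)$; your stated lower bound $\Omega(nd)$ fails there (e.g.\ at $n = d^{3/2}$ it is off by a polynomial factor). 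Your heuristic $\E[\sum_i i\blambda_i] \approx n(d+1)/2$ likewise presupposes the near-rectangular shape, which only holds for $n \gg d^2$. Your ``trivial regime'' $n \le d/3$ does not rescue this: the whole window $d \lesssim n \lesssim d^2$ is left open. This is not an accident of bookkeeping: the paper's own fourth-moment argument (Theorem~\ref{thm:large-n}, which is essentially your computation but with the shifted coordinates $\blambda_i - i - n/d$ and a truncation/counting step in place of your Cauchy--Schwarz chain) is proved only for $n \ge 10^{10} d^2$, and the case $n < Cd^2$ is handled by a separate, genuinely different and quite technical argument (Theorem~\ref{thm:small-n}, Section~\ref{sec:eyd-cont}), which establishes a uniform-in-$\theta$ rate of convergence to Biane's limit shapes $\Omega_\theta$ and uses that each $\Omega_\theta$ is boundedly far from the rectangle. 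If you want an elementary patch, note that for $n \le d^2/400$ one can simply count empty rows: with high probability $\blambda_1' \le 10\sqrt{n} \le d/2$ (Proposition~\ref{prop:increasing-seq}), so at least $d/2$ coordinates of $\underline{\blambda}$ are $0$ versus $1/d$, giving constant total variation distance. But that still leaves the window $n \asymp d^2$ (between, say, $d^2/400$ and $10^{10}d^2$), where the number of nonzero rows, the fluctuation scale $\sqrt{n}$, and the row-index shift $i \le d$ are all comparable; your proposal supplies no argument there, and this is exactly the regime the paper's Section~\ref{sec:eyd-cont} exists for.

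A secondary issue: even in the regime $n \ge Cd^2$ where your moment bounds are correct, the final step ``a standard second-moment concentration argument upgrades the expected $\ell_1$ bound to a high-probability statement'' is not justified by the two moments you compute. A lower tail bound for $\sum_i x_i^2$ (needed before applying your pointwise inequality $\sum_i|x_i| \ge (\sum_i x_i^2)^{3/2}/(\sum_i x_i^4)^{1/2}$) requires control of $\E[(\sum_i x_i^2)^2]$, i.e.\ the cross terms $\E[x_i^2 x_j^2]$, which is an additional computation in the observables algebra (products such as $\psharp{2}\psharp{2}$ via the structure constants). The paper's proof is arranged precisely to avoid this: it uses only $\E[\sum_i \bL_i^2]$ and $\E[\sum_i \bL_i^4]$, truncating at deviations $5\sqrt{n}$ and arguing by contradiction that with probability at least $1/100$ a $\Theta(1)$ fraction of rows deviate by $\Omega(\sqrt{n})$. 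You should either adopt that truncation device or explicitly carry out the extra variance computation.
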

\noindent
To our knowledge, no such lower bound was known previously.  We remark that it is an interesting open question whether some \emph{other} algorithm can estimate an unknown state's spectrum from a subquadratic number of copies.

\subsection{Overview of our techniques}

Following~\cite{ARS88,Har05,CM06,CHW07}, we use techniques from representation theory of the symmetric group~$\symm{n}$. A basic tool is \emph{Schur--Weyl duality}, which decomposes the space $(\C^d)^{\otimes n}$ as
\begin{equation}
(\C^d)^{\otimes n} \stackrel{\symm{n} \times \unitary{d}}{\cong} \bigoplus_{\lambda \vdash n} \srep_\lambda \otimes \urep_\lambda^d,
\label{eq:schur-weyl}
\end{equation}
where the subspace $\srep_\lambda$ corresponds to the symmetric group, the subspace $\urep_\lambda^d$ corresponds to the unitary group, and $\lambda$ is a partition of~$n$, thought of as a Young diagram. (Recall that a partition of~$n$ is a tuple $\lambda = (\lambda_1, \ldots, \lambda_\ell)$ satisfying $\lambda_1 \geq \ldots \geq \lambda_\ell \geq 0$ and $\lambda_1 + \ldots + \lambda_\ell = n$.) In our testing problem, the tester is provided with $\rho^{\otimes n}$, which is invariant under any permutation of the~$n$ coordinates, and whether the tester accepts or rejects should be invariant under any unitary transformation of~$\rho$.
This means that if we measure $\rho^{\otimes n}$ in the \emph{Schur basis} described in equation~\eqref{eq:schur-weyl} below, we can throw away the information from the permutation and unitary registers without losing any relevant information.  What is left is only the ``irrep'' label~$\lambda$.

The end result is this: there is a sampling algorithm---referred to in~\cite{CHW07} as \emph{weak Schur sampling}---which, on input a mixed state $\rho^{\otimes n}$, outputs a random partition~$\blambda$ whose distribution depends only on the spectrum of $\rho$.  We will denote this distribution by $\SWdens{n}{\rho}$.  Furthermore, an argument which is essentially from~\cite{CHW07} (though see~\cite[Lemma~$19$]{MdW13} for a full statement)
shows that for any spectrum property~$\calP$, there is an \emph{optimal} tester in the model of Definition~\ref{def:the-model} whose operation is as follows: 1.~Sample $\blambda\sim\SWdens{n}{\rho}$. 2.~Accept or reject based only on~$\blambda$.
We may therefore proceed without loss of generality by analyzing only algorithms of this form.  In particular, this means we need not study study quantum measurements or algorithms per se; in principle it suffices simply to understand the distribution $\SWdens{n}{\rho}$ (which is equivalent to the distribution on $k$-increasing subsequence lengths described in Fact~\ref{fact:quantum-classically}).

In case $\rho$ is the maximally mixed state, the distribution $\SWdens{n}{\rho}$ has been fairly well studied, starting with the works~\cite{TW01,Joh01,Bia01,Kup02} (see~\cite{Mel10a} for a recent, comprehensive treatment). It is known as the \emph{Schur--Weyl} distribution, and we denote it by $\SWunif{n}{d}$.  (In the limit as $d \rightarrow \infty$, it approaches the well-known \emph{Plancherel} distribution.)
The exact distribution on partitions given by $\SWunif{n}{d}$ is somewhat complicated and difficult to work with, and so various works have instead sought to describe large-scale features of a ``typical'' $\blambda \sim \SWunif{n}{d}$.
For example, Biane~\cite{Bia01} showed that, up to small fluctuations,
the ``shape'' of the random Young diagram $\blambda \sim \SWunif{n}{d}$ tends toward a certain limiting shape $\Omega$
which depends only on the ratio~$\frac{\sqrt{n}}{d}$.
Furthermore, Meliot~\cite{Mel10a} has characterized these small fluctuations as being distributed according to a certain Gaussian process.
The second of these results borrows heavily from a proof of the analogous result by Kerov (see~\cite{IO02}) for the Plancherel distribution, and we will give an overview his techniques below.

Kerov's approach involves studying a certain space of symmetric polynomial functions on Young diagrams.
For example, if one is interested in showing that a random $\blambda \sim \SWunif{n}{d}$ tends to have some coordinates which are much larger than the rest, then it would be natural to study ``moments'' of the form $\sum_i \blambda_i^k$.  However, the approach of Kerov would suggest studying the following ``moments'' instead:
\begin{equation*}
\pstar{k}(\blambda) \coloneqq \sum_{i=1}^\infty [(\blambda_i - i +\tfrac{1}{2})^k - (-i + \tfrac{1}{2})^k],\text{ for $k \geq 1$}.
\end{equation*}
The polynomial family $(\pstar{k})$ inhabits (in fact, generates) the so-called \emph{algebra of polynomial functions on the set of Young diagrams} $\Lambda^*$ (also known as Kerov's \emph{algebra of observables}).
There are other important polynomial families within $\Lambda^*$---in addition to the $\pstar{k}$ polynomials, our work involves the $\widetilde{p}_k$, $c_k$,
 $\psharp{\mu}$, and $s^*_\mu$ polynomials---and each of these families sheds light on a different aspect of the input partition~$\lambda$.
For example, though the $\psharp{\mu}(\lambda)$ polynomials don't give any obvious information regarding the ``shape'' of $\lambda$, they are unique in that we can easily compute the expectation $\E_{\blambda \sim \SWdens{n}{\rho}} [\psharp{\mu}(\blambda)]$ for any mixed state $\rho$.
There exist some methods for passing from one polynomial family to another, and it is often the case that a problem most easily stated in terms of one polynomial family is most easily solved in terms of another.

The main component of our work is lower bounds for quantum spectrum testing, and these lower bounds generally have the following outline: 1.~Reduce the problem to showing that a certain expression within the algebra of observables is small with high probability. 2.~Use various polynomial-estimation techniques developed by Kerov and others for proving concentration of said expression.  For example, roughly speaking the key component of the lower bound in Theorem~\ref{thm:unif-dist} is showing that for $n \ll r^2$, the expression
\[
    \sum_{k=2}^\infty\frac{(-1)^k \pstar{k}(\blambda)}{k(\r+\tfrac12)^k}
\]
is typically very close to~$0$ when $\blambda \sim \SWunif{n}{r}$.  As another example, proving the lower bound in Theorem~\ref{thm:paninski-intro} reduces to showing that when $n \ll d/\eps^2$, the expression
\[
    \sum_{\substack{\text{partitions $\mu$ of~$n$} \\ \text{with at most $d$ nonzeros}}} \frac{s^*_\mu(\blambda) s_\mu(+2\eps, -2\eps, \ldots, +2\eps, -2\eps)}{\prod_{i=1}^d \prod_{j=1}^{\mu_i} (d+(j-i))}
\]
is typically very close to~$1$ when $\blambda \sim \SWunif{n}{d}$. Our upper bounds generally involve analyzing algorithms which accept or reject based on simple statistics of the sampled $\blambda \sim \SWunif{n}{d}$.
For example, the rank tester of Theorem~\ref{thm:rank-intro} accepts if and only if the sampled $\blambda$ has at most~$r$ nonzero parts, and the uniformity tester of Theorem~\ref{thm:paninski-intro} accepts if and only if the ``content polynomial'' $c_1(\blambda)$ is sufficiently small.
As in the lower bounds, analyzing these algorithms uses techniques from the algebra of observables, and we sometimes also require certain combinatorial interpretations of the weak Schur sampling algorithm; e.g., its relationship with the Robinson--Schensted--Knuth ``bumping'' algorithm.

\subsection{Acknowledgments}
We thank Ilias Diakonikolas, Rocco Servedio, Greg Valiant, and Paul Valiant for helpful discussions regarding classical testing and learning of probability distributions.  We thank Ashley Montanaro for helpful discussions regarding quantum property testing and for suggesting the proof of Proposition~\ref{prop:eq-model}.

\section{Preliminaries}

\subsection{Probabilistic distances}
Given two discrete probability distributions $\calD_1$ and $\calD_2$ on a finite set~$\Omega$, the \emph{total variation distance} between them is
\begin{equation*}
\dtv{\calD_1}{\calD_2} \coloneqq \frac{1}{2}\cdot \sum_{\omega \in \Omega} \left|\calD_1(\omega)-\calD_2(\omega)\right|.
\end{equation*}
We will also require some nonsymmetric ``distances'' between probability distributions.  The \emph{chi-squared distance} is
\begin{equation*}
\dchi{\calD_1}{\calD_2} \coloneqq \E_{\bomega \sim \calD_2}\left[\left(\frac{\calD_1(\bomega)}{\calD_2(\bomega)}-1\right)^2\right].
\end{equation*}
Further, if $\supp(\calD_1)\subseteq \supp(\calD_2)$, then the \emph{Kullback--Leibler divergence} is
\begin{equation*}
\dkl{\calD_1}{\calD_2} \coloneqq \E_{\bomega \sim \calD_1}\left[\ln\left(\frac{\calD_1(\bomega)}{\calD_2(\bomega)}\right)\right].
\end{equation*}
To relate these quantities, Cauchy--Schwarz implies that $\dtv{\calD_1}{\calD_2} \leq \frac{1}{2}\sqrt{\dchi{\calD_1}{\calD_2}}$, and Pinsker's inequality states that $\dtv{\calD_1}{\calD_2} \leq \frac{1}{\sqrt{2}}\sqrt{\dkl{\calD_1}{\calD_2}}$.

We would also like to introduce a ``permutation-invariant'' notion of total variation distance.  Suppose that the set~$\Omega$ is naturally ordered; say, $\Omega = [d] \coloneqq \{1, 2, \dots, d\}$.  We define
\[
    \dtvsymm{\calD_1}{\calD_2} \coloneqq \dtv{\calD_1^\downarrow}{\calD_2^\downarrow} = \min_{\pi \in \symm{d}}\{\dtv{\calD_1}{\calD_2 \circ \pi}\}.
\]
Here $\calD_i^\downarrow$ denotes the probability distribution on~$[d]$ given by rearranging $\calD_i$'s probabilities in nonincreasing order, so $\calD_i^\downarrow(1) \geq \cdots \geq \calD_i^\downarrow(d)$.  By virtue of the permutation-invariance, we may also naturally extend this notation to the case when $\calD_1$ and $\calD_2$ are simply unordered multisets of nonnegative numbers summing to~$1$.

A $d$-dimensional mixed quantum state is represented as a \emph{density matrix} $\rho \in \C^{d \times d}$; i.e., a positive semidefinite matrix with trace~$1$.
We may write $\rho$ using its spectral decomposition as
\begin{equation*}
\rho = \sum_{i = 1}^d \eta_i \cdot\vert \Psi_i \rangle \langle \Psi_i |,
\end{equation*}
where the $|\Psi_i\rangle$'s are orthornormal  and the $\eta_i$'s are nonnegative reals satisfying $\eta_1 + \dots + \eta_d = 1$.
Equivalently, $\rho$ describes a probability distribution on pure states in which $\left|\Psi_i\right\rangle$ has probability~$\eta_i$.  If $\sigma$ is another $d$-dimensional mixed state with eigenvalues $\{\lambda_1, \dots, \lambda_d\}$ (thought of as a multiset), we will use the notation
\[
    \dtvsymm{\rho}{\sigma} \coloneqq \dtvsymm{\{\eta_1, \dots, \eta_d\}}{\{\lambda_1, \dots, \lambda_d\}}.
\]

We will now define \emph{trace distance}, which is the standard notion of distance between two density matrices.  (The above nonstandard notion of distance will be related to the trace distance in Proposition~\ref{prop:eq-model} below.)  If $M \in \C^{d \times d}$ is any Hermitian matrix with eigenvalues $\eta_1, \ldots, \eta_d$, the \emph{trace norm} of $M$ is
\begin{equation*}
\Vert M \Vert_{\mathrm{tr}}\coloneqq  \trace\mleft(\sqrt{M^{\dagger}M}\mright) = \sum_{i=1}^d |\eta_i|.
\end{equation*}
Given two density matrices~$\rho$ and~$\sigma$, the \emph{trace distance} between them is
$\dtr{\rho}{\sigma} \coloneqq  \frac{1}{2} \Vert \rho - \sigma \Vert_{\mathrm{tr}}$. The trace distance is the standard generalization of the total variation distance to mixed states; for example, it represents the maximum probability with which two mixed states can be distinguished by a measurement~\cite[equation~(9.22)]{NC10}.  This property makes it the natural choice of distance for property testing of quantum states.  We also have the following simple fact:
\begin{fact}\label{fact:trace-tv}
Suppose $\rho$ and $\sigma$ are diagonal density matrices with diagonal entries $\eta = (\eta_1, \ldots, \eta_d)$ and $\lambda = (\lambda_1, \ldots, \lambda_d)$, respectively.
Then $\dtr{\rho}{\sigma} = \dtv{\eta}{\lambda}$.
\end{fact}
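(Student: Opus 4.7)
The proof is essentially a one-line computation, so my plan is just to unwind the two definitions and observe they coincide for diagonal matrices.

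First I would observe that since $\rho$ and $\sigma$ are both diagonal, their difference $\rho - \sigma$ is also diagonal, with diagonal entries $\eta_i - \lambda_i$. For a diagonal matrix the diagonal entries are exactly the eigenvalues, so the eigenvalues of the Hermitian matrix $\rho - \sigma$ are precisely $\eta_1 - \lambda_1, \ldots, \eta_d - \lambda_d$.

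Next I would apply the definition of the trace norm recalled just above the fact: for a Hermitian matrix $M$ with eigenvalues $\mu_1, \ldots, \mu_d$, one has $\|M\|_{\mathrm{tr}} = \sum_i |\mu_i|$. Plugging in $M = \rho - \sigma$ yields
\[
\dtr{\rho}{\sigma} = \tfrac{1}{2} \|\rho - \sigma\|_{\mathrm{tr}} = \tfrac{1}{2} \sum_{i=1}^d |\eta_i - \lambda_i|,
\]
which is exactly the definition of $\dtv{\eta}{\lambda}$ given at the start of Section~2.1 (viewing $\eta$ and $\lambda$ as probability distributions on $[d]$, which is valid since each $\eta_i, \lambda_i \geq 0$ and they each sum to~$1$ by the trace~$1$ condition on density matrices).

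There is no obstacle here — the only thing to be slightly careful about is that the trace norm formula requires the matrix to be Hermitian so that the absolute values of the eigenvalues appear (rather than singular values of a general matrix), but $\rho - \sigma$ is a difference of two Hermitian matrices and hence Hermitian, so this is automatic.
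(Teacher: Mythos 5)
Your proof is correct, and it is exactly the routine unwinding of definitions that the paper intends (the paper states Fact~\ref{fact:trace-tv} without proof, as the argument is precisely this one-liner): $\rho-\sigma$ is diagonal Hermitian with eigenvalues $\eta_i-\lambda_i$, so $\tfrac12\Vert\rho-\sigma\Vert_{\mathrm{tr}}=\tfrac12\sum_i|\eta_i-\lambda_i|=\dtv{\eta}{\lambda}$. No gaps.
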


\subsection{Property testing}\label{sec:testing}

In the model of property testing, there is a set of objects $\calO$ along with a distance measure $\dist :\calO\times\calO \rightarrow \R$.
A property $\calP$ is a subset of $\calO$, and for an object $o \in \calO$, we define the distance of $o$~to~$\calP$ to be\footnote{Formally, our sets $\calO$ will always lie within some $\R^N$ or $\C^N$, and we always require that $\calP$ be a \emph{closed} set.  Thus the ``$\min$'' here is well-defined.}
\begin{equation*}
\dist(o, \calP) \coloneqq  \min_{o' \in \calP} \{\dist(o, o')\}.
\end{equation*}
If $\dist(o, \calP) \geq \eps$, then we say that~$o$ is $\eps$-far from $\calP$.
A testing algorithm $\calT$ tests~$\calP$ if, given some sort of ``access'' to $o \in \calO$ (e.g., independent samples or queries), $\calT$ accepts if $o \in \calP$ and rejects if~$o$ is $\eps$-far from $\calP$.
Generally, the aim is for $\calT$ to be efficient according some measure, most typically the number of accesses made to~$o$.
(On the other hand, $\calT$ is generally allowed unlimited computational power.  Nevertheless, as we will see, all of the testers considered in this paper can be implemented efficiently.)

We will instantiate property testing in the following natural settings:
\begin{enumerate}[label=(\roman*)]
\item \textbf{Properties of mixed states:} $\calO$ is the set of $d$-dimensional mixed states~$\rho$, the tester gets access to (unentangled) copies of~$\rho$, and $\dist = \dtrbare$.\label{item:general-model}
\item \textbf{Unitarily invariant properties of mixed states:} As above, but $\calP$ must be unitarily invariant; equivalently, whether or not $\rho \in \calP$ only depends on the multiset of $\rho$'s eigenvalues.\label{item:unitarily}
\item \textbf{Quantum spectrum testing:} $\calO$ is the set of $d$-dimensional mixed states, $\calP$ must be unitarily invariant, and $\dist(\rho, \sigma) = \dtvsymm{\rho}{\sigma}$.\label{item:spectrum}
\item \textbf{Symmetric properties of probability distributions:} $\calO$ is the set of probability distributions $\calD$ on~$[d]$, the tester gets i.i.d.\ draws from~$\calD$, $\calP$~is any symmetric property, and $\dist = \dtvbare$.\label{item:symm}
\end{enumerate}

Let us now establish some basic facts about these models.  The simplest fact is that Model~\ref{item:unitarily} is a special case of Model~\ref{item:general-model}.  Next, in Model~\ref{item:symm} it would be equivalent if we had chosen $\dist = \dtvsymmbare$; this is by virtue of the assumption that $\calP$ is a symmetric (permutation-invariant) property of distributions on~$[d]$.  Finally, we have the following important simplifying fact, whose proof is not trivial:
\begin{proposition}\label{prop:eq-model}
Models~\ref{item:unitarily} and~\ref{item:spectrum} are equivalent.
\end{proposition}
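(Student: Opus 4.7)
The plan is to show that for any unitarily invariant property $\calP$ and any mixed state $\rho$, one has $\dtr{\rho}{\calP} = \dtvsymm{\rho}{\calP}$. Once this identity is established, the completeness and soundness conditions of Models \ref{item:unitarily} and \ref{item:spectrum} become identical, so any tester for one model is automatically a tester for the other with the same copy complexity.

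The first step is to prove the inequality $\dtr{\rho}{\sigma} \geq \dtvsymm{\rho}{\sigma}$ for arbitrary $d$-dimensional mixed states~$\rho$ and~$\sigma$. This is a standard consequence of the Lidskii--Mirsky--Wielandt inequality for Hermitian matrices: if $\eta_1 \geq \cdots \geq \eta_d$ and $\lambda_1 \geq \cdots \geq \lambda_d$ are the sorted eigenvalues of $\rho$ and $\sigma$ respectively, then
\[
\|\rho - \sigma\|_{\mathrm{tr}} \;\geq\; \sum_{i=1}^d |\eta_i - \lambda_i|.
\]
Dividing by two gives $\dtr{\rho}{\sigma} \geq \dtv{\eta^\downarrow}{\lambda^\downarrow} = \dtvsymm{\rho}{\sigma}$. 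Taking a minimum over $\sigma \in \calP$ on both sides then yields $\dtr{\rho}{\calP} \geq \dtvsymm{\rho}{\calP}$.

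The second step is the matching inequality $\dtr{\rho}{\calP} \leq \dtvsymm{\rho}{\calP}$, and this is where unitary invariance is crucial. Let $\sigma^\star \in \calP$ be a minimizer of $\dtvsymm{\rho}{\sigma^\star}$ (which exists because $\calP$ is closed and the distance depends only on the spectrum). Write $\rho = \sum_i \eta_i^\downarrow \ket{\Psi_i}\bra{\Psi_i}$ in its spectral decomposition with sorted eigenvalues, and let $\lambda^\downarrow$ be the sorted spectrum of $\sigma^\star$. Define
\[
\sigma \;\coloneqq\; \sum_{i=1}^d \lambda_i^\downarrow \ket{\Psi_i}\bra{\Psi_i}.
\]
Then $\sigma$ is unitarily equivalent to $\sigma^\star$, so $\sigma \in \calP$ by unitary invariance. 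Moreover $\rho$ and $\sigma$ are simultaneously diagonal in the $\ket{\Psi_i}$ basis, so Fact~\ref{fact:trace-tv} gives
\[
\dtr{\rho}{\sigma} \;=\; \dtv{\eta^\downarrow}{\lambda^\downarrow} \;=\; \dtvsymm{\rho}{\sigma^\star} \;=\; \dtvsymm{\rho}{\calP}.
\]
Since $\dtr{\rho}{\calP} \leq \dtr{\rho}{\sigma}$, this is the desired bound.

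Combining the two steps yields $\dtr{\rho}{\calP} = \dtvsymm{\rho}{\calP}$, and the equivalence of the two models follows immediately: a set of $n$ copies of $\rho$ is ``$\eps$-far in trace distance from~$\calP$'' if and only if it is ``$\eps$-far in $\dtvsymmbare$ from~$\calP$'', so any tester in one model works verbatim in the other. The main obstacle is really just invoking Lidskii--Mirsky--Wielandt at the right point; the rest is bookkeeping about what unitary invariance lets us do to a would-be nearest neighbor in $\calP$.
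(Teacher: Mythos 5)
Your proof is correct and follows essentially the same route as the paper: the easy direction uses unitary invariance to rotate a nearest spectrum-neighbor into $\rho$'s eigenbasis and then applies Fact~\ref{fact:trace-tv}, and the harder direction invokes Mirsky's eigenvalue-perturbation inequality for the trace norm (your Lidskii--Mirsky--Wielandt citation is the same result the paper quotes from Horn--Johnson). The only cosmetic differences are that you state the Mirsky bound pointwise for all pairs rather than just for the trace-distance minimizer, and you skip the initial ``assume $\rho$ diagonal'' normalization, neither of which changes the substance.
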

\begin{proof}
We need to show that if $\calP$ is a unitarily invariant property of $d$-dimensional mixed states then $\dtr{\rho}{\calP} = \dtvsymm{\rho}{\calP}$ holds for all mixed states~$\rho$.  By performing a unitary transformation, we may assume without loss of generality that~$\rho$ is a diagonal matrix with nonincreasing diagonal entries (spectrum).

    The easy direction of the proof is showing that $\dtr{\rho}{\calP} \leq \dtvsymm{\rho}{\calP}$.  To see this, suppose $\sigma \in \calP$ achieves $\dtvsymm{\rho}{\sigma} = \eps$.  Let $\sigma'$ denote the diagonal density matrix whose diagonal entries are the eigenvalues of~$\sigma$ arranged in nonincreasing order.  Now $\sigma'$ is unitarily equivalent to~$\sigma$, and hence $\sigma' \in \calP$ as well.  But $\dtr{\rho}{\sigma'} = \eps$ by Fact~\ref{fact:trace-tv} and we therefore conclude $\dtr{\rho}{\calP} \leq \eps$, as needed.

    The more interesting direction is showing  that $\dtvsymm{\rho}{\calP} \leq \dtr{\rho}{\calP}$.  The authors learned the proof of this fact from Ashley Montanaro~\cite{AM14}.  Suppose that $\sigma \in \calP$ achieves $\dtr{\rho}{\sigma} = \eps$.  Since $\| \cdot \|_{\mathrm{tr}}$ is a unitarily invariant norm, a theorem of Mirsky (see~\cite[Corollary~7.4.9.3]{HJ13}) states that
    \begin{equation}    \label{eqn:mirsky}
        \|\rho - \sigma\|_{\mathrm{tr}} \geq \|\rho' - \sigma'\|_{\mathrm{tr}},
    \end{equation}
    where $\sigma'$ (respectively, $\rho'$) denotes the diagonal density matrix whose entries are the eigenvalues of $\sigma$ (respectively, $\rho$) arranged in nonincreasing order.  We have $\rho' = \rho$, and $\sigma'$ is again unitarily equivalent to~$\sigma$, implying $\sigma' \in \calP$.  But the left-hand side of~\eqref{eqn:mirsky} is~$2\eps$, and the right-hand side is $2\dtv{\rho'}{\sigma'}$ (by Fact~\ref{fact:trace-tv}), which in turn equals $2\dtvsymm{\rho}{\sigma'}$.   Thus $\dtvsymm{\rho}{\calP} \leq \eps$, as needed. \ignore{

The authors learned this proof from Ashley Montanaro~\cite{AM14}.
Let $\calP$ be any unitarily invariant property of mixed states.
Given two states $\rho$ and $\sigma$ with spectra $\eta = (\eta_1, \ldots, \eta_d)$ and $\lambda = (\lambda_1, \ldots, \lambda_d)$, respectively, write
$
\dtv{\rho}{\sigma}\coloneqq  \dtv{\eta}{\lambda}.
$
Our goal now is to show that for any mixed state $\rho$, $\dtr{\rho}{\calP} = \dtv{\rho}{\calP}$, from which the proposition will follow as models~\ref{item:unitarily} and~\ref{item:spectrum} differ only in the distance measure used.  Write $\eta= (\eta_1, \ldots, \eta_d)$ for $\rho$'s spectrum.

By performing a unitary change of basis, we may assume that~$\rho$ is a diagonal matrix with decreasing diagonal elements (i.e.\ $\rho_{1, 1} \geq \ldots\geq \rho_{d, d}$).
To show that $\dtr{\rho}{\calP} \leq \dtv{\rho}{\calP}$, let $\sigma \in \calP$ satisfy $\dtv{\rho}{\sigma} = \dtv{\rho}{\calP}$\jnote{Do we have to worry about existence issues here? Such a $\sigma$ should always exist if $\calP$ is closed, but if $\calP$ is open it might not.  On the other hand, this should still work if we take a sequence of mixed states $\sigma_1, \sigma_2, \ldots$ where $\dtv{\rho}{\sigma_i}$ approaches $\dtv{\rho}{\calP}$ in the limit, but is this worth going into?}.  Because $\calP$ is unitarily invariant and $d_{\mathrm{TV}}$ only depends on the spectra of the states, we may assume that $\sigma$ is a diagonal matrix.  Write $\lambda=(\lambda_1, \ldots, \lambda_d)$ for $\sigma$'s spectrum.  Then because $\rho$ and $\sigma$ are simultaneously diagonalizable, $\dtr{\rho}{\sigma} = \dtv{\eta}{\lambda}$ by Fact~\ref{fact:trace-tv}.  This concludes this case, as $\dtr{\rho}{\calP} \leq \dtr{\rho}{\sigma}$.

The more interesting case is showing that $\dtr{\rho}{\calP} \geq \dtv{\rho}{\calP}$.  Write $\sigma \in \calP$ for the state that satisfies $\dtr{\rho}{\sigma} = \dtr{\rho}{\calP}$, and let $\lambda = (\lambda_1, \ldots, \lambda_d)$ be $\sigma$'s spectrum, arranged in decreasing order.  By definition, $\dtr{\rho}{\sigma} = \frac{1}{2} \Vert \rho - \sigma \Vert_{\mathrm{tr}}$. Since $\Vert \cdot \Vert_{\mathrm{tr}}$ is a unitarily invariant norm, a result of Mirsky (proved in \cite[Corollary 7.4.9.3]{HJ13}) says that
\begin{equation*}
\Vert \rho - \sigma \Vert_{\mathrm{tr}} \geq \Vert \mathrm{diag}^{\downarrow}(\rho) - \mathrm{diag}^{\downarrow}(\sigma) \Vert_{\mathrm{tr}},
\end{equation*}
where for a matrix $A$,  $\mathrm{diag}^{\downarrow}(A)$ is the diagonal matrix containing $A$'s eigenvalues arranged in descending order.
By assumption, $\rho = \mathrm{diag}^{\downarrow}(\rho)$, and because $\calP$ is unitarily invariant, $\mathrm{diag}^{\downarrow}(\sigma) \in \calP$.
Thus, $\dtr{\rho}{\sigma} \geq \dtr{\rho}{\mathrm{diag}^{\downarrow}(\sigma)} = \dtv{\rho}{\sigma}$, where the last equality follows from Fact~\ref{fact:trace-tv}.
As $\dtv{\rho}{\calP} \leq \dtv{\rho}{\sigma}$, we are done.}
\end{proof}

Finally, we remind the reader of Fact~\ref{fact:classical-vs-quantum}, which says that any quantum spectrum testing problem (in either of the equivalent Models~\ref{item:unitarily} and~\ref{item:spectrum}) is at least as hard as the corresponding classical problem in Model~\ref{item:symm}.\ignore{
\rnote{The following is basically identical to Fact~\ref{fact:classical-vs-quantum}, so I'm not sure why we're repeating it here. Is it not enough to point back to Fact~\ref{fact:classical-vs-quantum}?}
\begin{proposition}
Let $\calP$ be any symmetric property of probability distributions, and let $\calP_Q$ be the set of mixed states whose spectra fall in~$\calP$.
If $f(d, \eps)$ is a lower bound on the number of samples needed to test~$\calP$ in Model~\ref{item:symm}, then $f(d, \eps)$ is also a lower bound on the number of copies needed to test $\calP_Q$ in models~\ref{item:unitarily} and~\ref{item:spectrum}.
\end{proposition}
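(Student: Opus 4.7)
The plan is to argue by contrapositive: I will exhibit a classical simulation that converts any $n$-copy quantum tester for $\calP_Q$ into an $n$-sample classical tester for $\calP$, so that the existence of a sub-$f(d,\eps)$ quantum tester would contradict the assumed classical lower bound.

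Given a classical distribution $\calD = (p_1,\ldots,p_d)$ on $[d]$, let $\rho_\calD \coloneqq \sum_{i=1}^d p_i \ket{i}\bra{i}$ be the diagonal mixed state whose spectrum is exactly $\calD$. The crucial identity is
\[
    \rho_\calD^{\otimes n} = \sum_{i_1,\ldots,i_n \in [d]} p_{i_1}\cdots p_{i_n}\, \ket{i_1,\ldots,i_n}\bra{i_1,\ldots,i_n},
\]
which says that preparing the computational-basis pure state $\ket{i_1,\ldots,i_n}$ from i.i.d.\ samples $i_1,\ldots,i_n \sim \calD$ is an \emph{exact} simulation of $\rho_\calD^{\otimes n}$.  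So given any quantum tester $\calT_Q$ for $\calP_Q$ using $n$ copies, I would define a classical tester $\calT_C$ for $\calP$ as follows: draw $n$ samples from $\calD$, prepare the above product state (the classical tester is computationally unconstrained), and return $\calT_Q$'s verdict.  Since a quantum tester's acceptance probability is linear in its input density matrix, $\calT_C$'s acceptance probability on $\calD$ equals $\calT_Q$'s acceptance probability on $\rho_\calD^{\otimes n}$.

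Completeness is immediate: if $\calD \in \calP$, then $\rho_\calD \in \calP_Q$ by definition, so $\calT_Q$ accepts with probability at least $2/3$.  The one step requiring real care is soundness, since classical farness is measured in total variation distance while quantum farness is measured in trace distance.  To bridge them I would invoke Proposition~\ref{prop:eq-model}, which yields $\dtr{\rho_\calD}{\calP_Q} = \dtvsymm{\rho_\calD}{\calP_Q}$; by definition of $\calP_Q$ this equals $\min_{\calD' \in \calP} \dtvsymm{\calD}{\calD'}$, and because $\calP$ is symmetric this in turn equals $\dtv{\calD}{\calP}$.  Hence if $\calD$ is $\eps$-far from $\calP$ in total variation distance, then $\rho_\calD$ is $\eps$-far from $\calP_Q$ in trace distance, so $\calT_Q$ (and therefore $\calT_C$) rejects with probability at least $2/3$.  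The main obstacle is really just this distance-translation step; the simulation itself is exact and introduces no quantitative overhead, which is why the classical sample lower bound $f(d,\eps)$ transfers verbatim to the quantum copy complexity.
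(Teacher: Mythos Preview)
Your proposal is correct and follows essentially the same approach as the paper: the paper's justification of this fact (stated just before Fact~\ref{fact:classical-vs-quantum}) is simply that classical samples $i_1,\dots,i_n$ can be embedded as the computational-basis state $\ket{i_1}\otimes\cdots\otimes\ket{i_n}$, exactly simulating $\rho_\calD^{\otimes n}$ for diagonal $\rho_\calD$. Your write-up is more detailed than the paper's one-line sketch, in particular being explicit about the soundness distance translation via Proposition~\ref{prop:eq-model} and the symmetry of $\calP$.
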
}

\subsection{Partitions and Young diagrams}\label{sec:partitions}

A \emph{partition} of $n \geq 1$, denoted $\lambda\vdash n$, is a list of nonnegative integers $\lambda = (\lambda_1, \lambda_2, \ldots, \lambda_k)$ satisfying $\lambda_1 \geq \lambda_2 \geq \ldots \geq \lambda_k$ and $\lambda_1 + \lambda_2 + \ldots + \lambda_k = n$. The \emph{length} of the partition, denoted $\ell(\lambda)$, is the number of nonzero $\lambda_i$'s in $\lambda$. The partition's \emph{size} is~$n$, and is also written as $|\lambda|$. Two partitions are considered to be equivalent if they only differ in trailing zeros. For example, $(4,2)$ and $(4,2,0,0)$ are equivalent.  We write $\allpartitions$ to denote the set of all partitions, of any size.  For $w \in \N^+$ we will use the notation $m_w(\lambda)$ to denote the number of parts~$i$ with $\lambda_i = w$.  Finally, at one point we will require the fairly elementary fact (see e.g.~\cite[(1.15)]{Rom14}) that the number of partitions of~$n$ is~$2^{O(\sqrt{n})}$ (much more precise asymptotics are known~\cite{HR18}).

One way in which partitions arise is as \emph{cycle types} of permutations $\pi \in \symm{n}$.  We say that $\pi$ has cycle type $\lambda = (\lambda_1, \dots, \lambda_k) \vdash n$ if $\pi$ is the product of disjoint cycles of size $\lambda_1, \lambda_2, \ldots, \lambda_k$.  (Note that $\pi$'s length-$1$ cycles are included.)  The standard notation for this is $\cyctype{\pi} = \lambda$. However we will use this notation extremely sparingly (and with warning) so as to preserve the symbol ``$\rho$'' for density matrices.  In aid of this, we adopt the following convention: \emph{whenever a permutation $\pi$ appears in a place where a partition $\lambda$ is expected, the meaning is that $\lambda$ should be the cycle type of~$\pi$}. We also use the following standard notation:
\begin{equation*}
    z_\lambda \coloneqq  \prod_{w \geq 1} (w^{m_w(\lambda)}\cdot m_w(\lambda)!).
\end{equation*}
When $\lambda \vdash n$, the quantity $n!/z_\lambda$ is the number of permutations in $\symm{n}$ of cycle type~$\lambda$,
so $z_\lambda^{-1}$ represents the probability that a uniformly random permutation in $\symm{n}$ has cycle type~$\lambda$.\\

It is standard to represent a partition $\lambda \vdash n$ pictorially with a \emph{Young diagram}; i.e., a certain arrangement of~$n$ squares, called \emph{cells} or \emph{boxes}. There are several conventions for how to draw Young diagrams:  we will define the \emph{French notation}, the \emph{Russian notation}, and the \emph{Maya notation}. \footnote{We will not require the \emph{English notation}, which is the reflection of the French notation across the horizontal axis.}

\begin{figure}
\centering
\begin{subfigure}[t]{.4\textwidth}
	\centering
	\includegraphics[height=100pt]{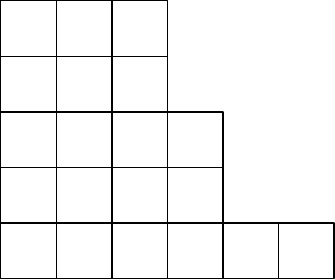}
	\caption{French notation.}
	\label{fig:french-notation}
\end{subfigure}%
\begin{subfigure}[t]{.6\textwidth}
	\centering
	\includegraphics[height=150pt]{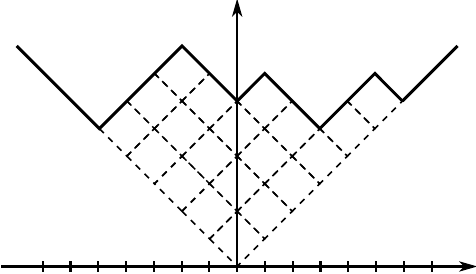}
	\caption{Russian notation (in dashed lines).  The marks on the horizontal axis are integral $x$-values,
			and the heavy black line is the curve $\lambda(x)$.}
	\label{fig:russian-notation}
\end{subfigure}
\caption{Two ways of drawing the partition $\lambda = (6,4,4,3,3)$.}
\end{figure}

In the \emph{French notation}, the Young diagram for $\lambda = (\lambda_1, \ldots, \lambda_k)$ is drawn with left-justified rows of cells: $\lambda_1$ cells in the bottom row, $\lambda_2$ cells on top of this, $\lambda_3$ cells on top of this, etc. As an example, the French notation for $(6,4,4,3,3)$ is pictured in Figure~\ref{fig:french-notation}.  We think of the French diagram as consisting of unit squares sitting in~$\R_+^2$, with bottom-left corner at the origin.

Given the French diagram, it's natural to define the \emph{width} of $\lambda$ as $\lambda_1$, and to refer to $\ell(\lambda)$ as its \emph{height}.  We can also define the \emph{conjugate partition} of $\lambda$ to be the partition $\lambda' \vdash n$ obtained by reflecting the French diagram through the line $y = x$; i.e., exchanging rows and columns.  For example, the conjugate of $\lambda = (6,4,4,3,3)$ is $\lambda' = (5,5,5,3,1,1)$. Note that the height of~$\lambda$ is the width of $\lambda'$, and vice versa; in particular, we sometimes prefer the notation $\lambda'_1$ to  $\ell(\lambda)$.

We now define the \emph{Russian notation} for $\lambda$.  This is obtained from the French notation by first rotating the diagram $45^\circ$ counterclockwise about the origin, and then dilating by a factor of~$\sqrt{2}$; see Figure~\ref{fig:russian-notation}.  The purpose of the dilation is so that the corners of the boxes will have integer $x$- and $y$-coordinates.  The purpose of the rotation is so that conjugation corresponds to reflection in the $y$-axis and so that the boundary of the diagram forms the graph of a function:
\begin{definition}                              \label{def:russian-function}
Given a partition~$\lambda$ drawn in Russian notation, its upper boundary forms the graph of a function with domain $[-\lambda'_1, \lambda_1] \subseteq \R$.  We extend this function to have domain all of $\R$ according to the function $x \mapsto |x|$.  We will use the notation $\lambda : \R \to \R_+$ for this function, which we remark is a continuous and piecewise linear curve. Any time we write $\lambda(x)$, where $\lambda$ is a partition and $x \in \R$, we are referring to this curve. See Figure~\ref{fig:russian-notation} for an example.
\end{definition}

Finally, we define the \emph{Maya notation}. It contains no boxes; just a sequence of black and white pebbles.  However the Maya notation is typically drawn in conjunction with the Russian notation, with the pebbles being located on the half-integer points $\Z + \frac12$ of the $x$-axis.  In the Maya notation, a black pebble is placed at all points directly below a ``downward-sloping'' segment in $\lambda$'s graph, and a white pebble is placed at all points directly below an ``upward-sloping'' segment.  (Thus all sufficiently negative half-integer points have a black pebble and all sufficiently positive half-integer points have a white pebble.)  The notation also includes a vertical tick mark to denote the location of the origin.  A picture of the Russian and Maya notation for $\lambda = (6,4,4,3,3)$ appears later in Figure~\ref{fig:partition-diagram-example} (the reader consulting it now should ignore the red and green coloring, the dashed lines, and the box labeled ``$d$'').  One can check that the sequence of pebbles uniquely identifies the partition~$\lambda$.  It also uniquely determines the position of the origin mark, in that the number of black pebbles to the right of the origin mark always equals the number of white pebbles to the left of the origin mark.  These numbers are both equal to $d(\lambda)$, defined to be the number of cells touching the $y$-axis in the Russian diagram.  We make one more definition:
\begin{definition}                                  \label{def:mod-frob-coords}
    Given the Maya diagram of a partition $\lambda$, we may define its \emph{modified Frobenius coordinates} to be the half-integer values $a^*_1 > a^*_2 > \cdots a^*_d > 0$ and $b^*_1 > b^*_2 > \cdots > b^*_d > 0$ (for $d = d(\lambda)$), where $a^*_i$ is the position of the $i$th rightmost black pebble and $b^*_i$ is the negative of the position of the $i$th leftmost white pebble.  One may check that, equivalently, $a^*_i = \lambda_i - i + \frac12$ and $b^*_i = \lambda'_i - i + \frac12$.  For example, if $\lambda = (6,4,4,3,3)$, then $a^* = (\frac{11}{2}, \frac{5}{2}, \frac{3}{2})$ and $b^* = (\frac92, \frac72, \frac52)$.  The coordinates have the property that $\sum_i (a^*_i + b^*_i) = |\lambda|$.
\end{definition}

For a partition $\lambda$ (drawn either in the French or Russian notation), we often use the symbol ``$\square$'' to denote a box in $\lambda$'s Young diagram.  We write $[\lambda]$ for the set of all boxes in the diagram.  Each box $\square \in [\lambda]$ is indexed by an ordered pair $(i, j)$, where~$i$ is $\square$'s row and~$j$ is $\square$'s column.  Note that this indexing is slightly peculiar vis-a-vis the French notation, in which the center of $\square$ has Cartesian coordinates $(j - \frac12,i - \frac12)$. We define the \emph{content} of cell $\square$ to be $c(\square)\coloneqq j-i$.  Note that in the Russian diagram, the content of~$\square$ is the $x$-coordinate of its center.  We also define the \emph{hook length} $h(\square)$ of~$\square$ via the French notation: it is the number of cells directly to the right or above~$\square$, including~$\square$ itself; equivalently, it is $(\lambda_i - j) + (\lambda'_j - i) + 1$.

Having defined ``content'' for cells in a Young diagram, we may introduce some convenient notation (essentially from~\cite{OO98b}) that generalizes the standard notions of ``falling factorial power'' and ``rising factorial power''.  First, for $z \in \R$ and $m \in \N$, recall the \emph{falling factorial power}\footnote{Or Pochhammer symbol, sometimes denoted $(z)_m$ or $z^{\underline{m}}$.}
\[
    \falling{z}{m} \coloneqq z(z-1)(z-2)\cdots(z-m+1)
\]
and \emph{rising factorial power}
\[
    \rising{z}{m} \coloneqq z(z+1)(z+2)\cdots(z+m-1).
\]
We generalize this notation to the case of an arbitrary partition $\lambda \vdash m$:
\begin{equation*}
    \falling{z}{\lambda} \coloneqq  \prod_{\square \in [\lambda]} (z - c(\square))
\quad \text{and} \quad
    \rising{z}{\lambda} \coloneqq  \prod_{\square \in [\lambda]} (z + c(\square)).
\end{equation*}

\subsubsection{Random words and Young diagrams, and symmetric polynomials}
\begin{definition}
    Let $\calA$ be an \emph{alphabet}; i.e., a totally ordered set.  Most often we consider $\calA = [d]$.  A \emph{word} is a finite sequence $(a_1, \dots, a_n)$ of elements from~$\calA$.  It is \emph{weakly increasing} if $a_1 \leq a_2 \leq \cdots \leq a_n$ and \emph{strongly (or strictly) increasing} if $a_1 < a_2 < \cdots < a_n$.  If $\calD$ is a probability distribution on~$\calA$ we write $\calD^{\otimes n}$ to denote the probability distribution on words of length~$n$ given by drawing the letters independently from~$\calD$.
\end{definition}
\begin{definition}                                  \label{def:sorted-type}
    Given a word $a \in [d]^n$, there is an associated partition $\lambda \vdash n$ of length at most~$d$ called the \emph{sorted type (or histogram)}. It is defined as follows: $\lambda_i$ is the frequency of the $i$th-most frequent letter in~$a$, for $1 \leq i \leq d$.  In other words, $\lambda$ is the histogram of letter frequencies, sorted into nonincreasing order.  For example, the sorted type of $(4,1,3,4,4,4,1,4) \in [4]^8$ is $(5,2,1,0) \vdash 8$.
\end{definition}
\begin{definition}                                  \label{def:power-sum}
    Let $x_1, \dots, x_d$ be indeterminates, typically standing for real numbers.  For $m \in \N$, the \emph{$m$th power sum symmetric polynomial} is $p_m(x) = \sum_{i = 1}^d x_i^m$.  More generally, for a partition~$\lambda$ we define $p_\lambda(x) = \prod_{i=1}^{\ell(\lambda)} p_{\lambda_i}(x)$.  By our conventions, if $\pi \in \symm{n}$ then $p_\pi(x)$ denotes $p_\lambda(x)$, where $\lambda$ is the cycle type of~$\pi$.  If $\calD = (\eta_1, \dots, \eta_d)$ is a probability distribution on~$[d]$, there is a natural interpretation of $p_\pi(\eta_d, \dots, \eta_d)$: it is the probability that a random word $\ba \sim \calD^{\otimes n}$ is invariant under the permutation~$\pi$.
\end{definition}

\begin{figure}
\centering
	\includegraphics[height=100pt]{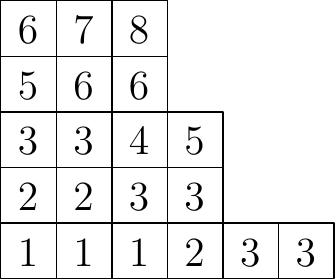}
	\caption{A semistandard tableau of shape $\lambda = (6,4,4,3,3)$ with alphabet $[8]$.}
	\label{fig:semistandard}
\end{figure}
\begin{definition}
    Let $\lambda \vdash n$, and think of its Young diagram in the French notation.  If each cell is filled with an element from some alphabet~$\calA$, we call the result a \emph{Young tableau of shape~$\lambda$}.  The Young tableau is said to be \emph{semistandard} if its entries are weakly increasing from left-to-right along rows and are strongly increasing from bottom-to-top along columns.   Figure~\ref{fig:semistandard} gives an example semistandard tableau of shape $(6,4,4,3,3)$. If the rows are in fact strongly increasing, the Young tableau is called \emph{standard}.
\end{definition}
\begin{definition}                                      \label{def:hook-formula}
    For reasons we will see later, the number of standard Young tableaus\footnote{Often spelled ``tableaux''.} of shape $\lambda \vdash n$ over alphabet~$[n]$ is denoted $\dim(\lambda)$.  It can be computed via the \emph{Hook-Length Formula} of Frame, Robinson, and Thrall~\cite{FRT54} (see also~\cite[Corollary~7.21.6]{Sta99}):
    \[
        \dim(\lambda) = \frac{n!}{\prod_{\square \in [\lambda]} h(\square)}.
    \]
\end{definition}
We will also consider counting semistandard tableaus,  via the following definition:
\begin{definition}
    Let $x_1, \dots, x_d$ be indeterminates, typically standing for real numbers.  Given $\lambda \vdash n$, the \emph{Schur polynomial} $s_\lambda(x_1, \dots, x_d)$ is the degree-$n$ homogeneous polynomial defined by  $\sum_{T} x^T$, where the sum is over all semistandard tableaus of shape~$\lambda$ over alphabet~$[d]$, and where
    \[
        x^T \coloneqq \prod_{i=1}^d x_i^{\text{(\# of occurrences of letter~$i$ in~$T$)}}.
    \]
    The following formula from~\cite[Corollary~7.21.4]{Sta99} thereby lets us count the number of such tableaus:
    \[
        s_\lambda(\underbrace{1, 1, \dots, 1}_{d \text{ entries}}) = \frac{\rising{d}{\lambda}}{\prod_{\square \in [\lambda]} h(\square)}.
    \]
\end{definition}
We record here a consequence of the above two formulas:
\begin{proposition}                 \label{prop:s111}
    Let $\lambda$ be a partition and let $d \in \Z^+$.  Then
    $\displaystyle
        s_\lambda(\underbrace{1, \dots, 1}_{d\ \textnormal{entries}}) = \frac{(\dim \lambda)\rising{d}{\lambda}}{|\lambda|!}.
    $
\end{proposition}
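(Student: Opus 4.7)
The plan is to prove this proposition by directly combining the two formulas stated immediately before it, namely the Hook-Length Formula of Frame--Robinson--Thrall (Definition~\ref{def:hook-formula}) and the explicit evaluation of the Schur polynomial at the all-ones point.

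First, I would rearrange the Hook-Length Formula to solve for the hook product: from
\[
\dim(\lambda) = \frac{|\lambda|!}{\prod_{\square \in [\lambda]} h(\square)}
\]
we immediately obtain
\[
\prod_{\square \in [\lambda]} h(\square) = \frac{|\lambda|!}{\dim(\lambda)}.
\]
Then I would substitute this expression into the denominator of the formula
\[
s_\lambda(\underbrace{1, 1, \dots, 1}_{d \text{ entries}}) = \frac{\rising{d}{\lambda}}{\prod_{\square \in [\lambda]} h(\square)},
\]
which after a single line of algebra yields the stated identity. There is no real obstacle; the proposition is essentially a definitional corollary flagged for later convenience, so the only ``step'' is the substitution itself. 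Note that both input formulas are cited rather than proved here (the Hook-Length Formula is attributed to~\cite{FRT54} and the Schur evaluation to~\cite[Corollary~7.21.4]{Sta99}), so no further combinatorial work is needed.
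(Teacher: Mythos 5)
Your proof is correct and is exactly the argument the paper intends: the proposition is stated there as "a consequence of the above two formulas," and your substitution of the hook product $\prod_{\square \in [\lambda]} h(\square) = |\lambda|!/\dim(\lambda)$ into the evaluation $s_\lambda(1,\dots,1) = \rising{d}{\lambda}/\prod_{\square \in [\lambda]} h(\square)$ is precisely that consequence. Nothing further is needed.
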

When $\ell(\lambda) > d$, there are no semistandard tableaus of shape~$\lambda$ over alphabet~$[d]$.
Thus, the sum $\sum_{T} x^T$ is the empty sum.
This gives us the following fact about Schur polynomials:
\begin{proposition}\label{prop:identically-zero}
Consider the Schur polynomial $s_\lambda(x_1, \ldots, x_d)$.
If $\ell(\lambda) > d$ then $s_\lambda \equiv 0$.
\end{proposition}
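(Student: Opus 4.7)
The plan is to argue directly from the definition of $s_\lambda$ as a generating function over semistandard Young tableaus. Recall that $s_\lambda(x_1, \dots, x_d) = \sum_T x^T$, where the sum ranges over all semistandard tableaus~$T$ of shape~$\lambda$ with entries drawn from the alphabet~$[d]$. Since this is a sum of monomials, the polynomial vanishes identically as soon as the set of such tableaus is empty, so it suffices to show that when $\ell(\lambda) > d$ there does not exist a single semistandard tableau of shape~$\lambda$ over~$[d]$.

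The key step is to focus on the leftmost column of~$\lambda$, which consists of $\lambda'_1 = \ell(\lambda)$ cells. In a semistandard tableau, the entries in any column must be strongly (i.e.\ strictly) increasing from bottom to top. Therefore, filling this column with values from~$[d]$ would require choosing $\ell(\lambda)$ distinct elements of~$[d]$. Under the hypothesis $\ell(\lambda) > d$, this is impossible by the pigeonhole principle, so no semistandard filling of~$\lambda$ over the alphabet~$[d]$ exists.

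Since the defining sum $\sum_T x^T$ is an empty sum, we conclude $s_\lambda(x_1, \dots, x_d) \equiv 0$, as required. There is no substantive obstacle here: the argument is essentially a one-line pigeonhole consequence of the column-strict condition in the definition of \emph{semistandard}.
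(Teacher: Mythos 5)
Your proposal is correct and matches the paper's argument: the paper likewise observes that when $\ell(\lambda) > d$ there are no semistandard tableaus of shape~$\lambda$ over~$[d]$, so the defining sum $\sum_T x^T$ is empty and $s_\lambda \equiv 0$. Your column-strictness/pigeonhole remark simply spells out the (implicit) reason no such tableau exists, so the two proofs are essentially identical.
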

Though it is not at all obvious from the definition, the Schur polynomials are symmetric.  This can be inferred from the following classical fact (see e.g.~\cite[Theorem~7.15.1]{Sta99}), which expresses them as the ratio of a skew-symmetric polynomial and the Vandermonde determinant:
\begin{theorem}                                     \label{thm:schur-determinant}
$\displaystyle s_\lambda(x_1, \ldots, x_d) = \frac{\det\Bigl(x_i^{d + \lambda_j -j}\Bigr)_{ij}}{\det\Bigl(x_i^{d-j}\Bigr)_{ij}}.$
\end{theorem}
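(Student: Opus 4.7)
The plan is to first identify the denominator, then show the ratio is a well-defined symmetric polynomial of the right degree, and finally prove it agrees with the combinatorial definition of $s_\lambda$ via a lattice-path argument.

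First, I would note that the denominator $\det(x_i^{d-j})_{ij}$ is the classical Vandermonde determinant, equal to $\prod_{1 \leq i < j \leq d}(x_i - x_j)$; this follows by induction on $d$, subtracting column multiples to reduce to a determinant of smaller size. Next, the numerator $N(x) \coloneqq \det(x_i^{d+\lambda_j-j})_{ij}$ is antisymmetric under transposition of any two variables $x_a, x_b$: such a swap exchanges two rows of the matrix and flips the sign of the determinant. An antisymmetric polynomial vanishes on each hyperplane $\{x_a = x_b\}$, so each factor $(x_a - x_b)$ divides $N(x)$; since these linear factors are pairwise coprime in the UFD $\mathbb{Z}[x_1,\ldots,x_d]$, their product divides $N(x)$. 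Thus the ratio in the theorem is a genuine polynomial, and being the quotient of two antisymmetric polynomials, it is symmetric. A direct degree count shows it is homogeneous of degree $|\lambda|$.

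The substantive step is identifying this symmetric polynomial with $s_\lambda$. My preferred approach is the Lindstr\"om--Gessel--Viennot (LGV) lemma. Consider lattice paths in $\mathbb{Z}^2$ taking unit steps east or north, where each east-step executed at height $k$ contributes a weight of $x_k$. Choose source points $A_i = (1-i, 1)$ and target points $B_j = (\lambda_j + d - j, d)$ for $1 \leq i, j \leq d$; with this geometry, any collection of pairwise non-crossing paths must match $A_i$ to $B_i$ (the identity permutation is the only one compatible with non-crossing). LGV then writes $N(x)$ as a signed sum over path systems, which collapses to the unsigned sum over non-crossing systems. I would then exhibit the standard bijection between such non-crossing systems and semistandard Young tableaus of shape $\lambda$ on alphabet $[d]$: the east-steps of the $i$-th path at height $k$ record the columns of row $i$ containing the entry $k$. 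The total weight of a path system equals $x^T$ for the corresponding tableau $T$, so dividing $N(x)$ by the Vandermonde recovers $\sum_T x^T = s_\lambda(x)$.

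The main obstacle will be verifying the LGV bijection cleanly, in particular checking that the ``strict increase along columns'' condition of semistandard tableaus corresponds exactly to the non-crossing condition on paths, and that the weight bookkeeping matches $x^T$ with the correct indexing. An alternative route, if I wanted to avoid LGV, would be (a) induction on $|\lambda|$ using the Pieri rule, though this requires first establishing Pieri independently, or (b) an algebraic route through the Jacobi--Trudi identity, expanding both determinants in the complete-homogeneous basis and matching coefficients. The LGV proof is the most self-contained and transparent of these.
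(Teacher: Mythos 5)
The paper itself does not prove this theorem; it quotes it as a classical fact (citing Stanley, Theorem 7.15.1) and explicitly never uses it, so your proposal is measured against the standard proofs. Your preliminary steps are fine: the denominator is the Vandermonde $\prod_{i<j}(x_i-x_j)$, the numerator $N(x)=\det\bigl(x_i^{d+\lambda_j-j}\bigr)$ is antisymmetric and hence divisible by it, and the quotient is symmetric and homogeneous of degree $|\lambda|$. The gap is in the central LGV step. The Lindstr\"om--Gessel--Viennot lemma applies to the matrix whose $(i,j)$ entry is the total weight of \emph{all} paths from $A_i$ to $B_j$; with your sources $A_i=(1-i,1)$, sinks $B_j=(\lambda_j+d-j,d)$, and east steps at height $k$ weighted $x_k$, that entry is the complete homogeneous polynomial $h_{\lambda_j+d-j+i-1}(x_1,\dots,x_d)$ (the east steps occur at weakly increasing heights in $\{1,\dots,d\}$), not the monomial $x_i^{d+\lambda_j-j}$. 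So LGV does not write $N(x)$ as a signed sum over path systems; what your setup actually computes is (up to transpose) the Jacobi--Trudi determinant of $\lambda+((d-1)^d)$, i.e.\ $(x_1\cdots x_d)^{d-1}s_\lambda(x)$. The plan is also internally inconsistent: if the non-crossing families carried weights $x^T$ and summed to $N(x)$, then $N(x)$ would equal the \emph{symmetric} polynomial $\sum_T x^T$, contradicting the antisymmetry and Vandermonde-divisibility you established two sentences earlier---in your bookkeeping the Vandermonde factor is never produced, so there is nothing to divide out.

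Two standard repairs are available. (i) Keep the lattice paths but accept that they prove the Jacobi--Trudi identity $s_\lambda=\det\bigl(h_{\lambda_j-j+i}\bigr)_{1\le i,j\le d}$ (take sinks $B_j=(\lambda_j-j+1,d)$), and then pass to the bialternant algebraically: writing $e_k^{(i)}$ for the elementary symmetric polynomial in the variables with $x_i$ omitted, the identity $\prod_{l\ne i}(1-x_lt)\cdot\prod_{l}(1-x_lt)^{-1}=(1-x_it)^{-1}$ gives $\sum_{k\ge 0}(-1)^k e_k^{(i)}\,h_{m-k}=x_i^m$, which factors the matrix $\bigl(x_i^{\lambda_j+d-j}\bigr)$ as a product of an $e$-matrix and the $h$-matrix; taking $\lambda=\emptyset$ identifies the determinant of the $e$-matrix with the Vandermonde, and hence $\det\bigl(x_i^{\lambda_j+d-j}\bigr)=\det\bigl(h_{\lambda_j-j+i}\bigr)\cdot\prod_{i<j}(x_i-x_j)$, which is the theorem. (ii) Alternatively, drop paths and prove $\bigl(\sum_{\sigma}\sgn(\sigma)x^{\sigma(\delta)}\bigr)\cdot\sum_T x^T=\det\bigl(x_i^{\lambda_j+d-j}\bigr)$ directly by a sign-reversing involution cancelling the terms that do not correspond to semistandard tableaus; this is essentially the proof in the reference the paper cites. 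Either route closes the gap; as written, the LGV argument does not.
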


We will actually not need this formula.  Instead, we will next describe a combinatorial algorithm which gives an interpretation for $s_\lambda(\eta_1, \dots, \eta_d)$ when $\calD = (\eta_1, \dots, \eta_d)$ is a probability distribution.

\subsubsection{The RSK algorithm}                       \label{sec:RSK}
We now describe the \emph{Robinson--Schensted--Knuth (RSK)} algorithm $\rsk(\cdot)$, which takes as input a word $a \in \calA^n$ and outputs a partition $\lambda = \rsk(a) \vdash n$. The relevance of RSK to quantum spectrum testing is described at the end of this section.  As there are many descriptions of the RSK algorithm in the literature (see, e.g.,~\cite{Knu70,Bay02,Dor05,Rom14}), we will be brief.
\begin{nameddef}{The RSK algorithm}
Given as input a word $a = (a_1, \ldots, a_n)$ over (ordered) alphabet~$\calA$, the RSK algorithm produces a sequence $T_0, \dots, T_n$ of semistandard tableaus over~$\calA$, with $T_i$ having size~$i$ (and being thought of in French notation).  Tableau~$T_{i+1}$ is produced from tableau~$T_{i}$ via the ``insertion'' of letter~$a_i$ into the $1$st row.  The insertion algorithm for letter~$b$ into row~$j$ of tableau~$T$ is as follows: Find the rightmost position in the $j$th row such that if $b$~were placed there, weak-increasingness along row~$j$ would be maintained.  If this position is at the end of the row, the insertion of~$b$ is complete.  If instead it is at a cell that already contains some letter~$c$ (which will in fact be the least~$c$ in row~$j$ with $c > b$) then~$c$ is ``bumped up''.  By this we mean that the insertion algorithm is recursively applied to letter~$c$ and row $j+1$ of~$T$ (which may be a newly created row, in which the insertion will immediately terminate with~$c$ in its own row at the top of~$T$). In the end, the output of the RSK algorithm is the Young diagram $\lambda \vdash n$ given by the \emph{shape} of~$T_n$; i.e., $\rsk(a)$ is $T_n$ with its cell entries erased.
\end{nameddef}

To get some feel for this algorithm, note that if the inserted word~$a$ is weakly increasing then $\rsk(a) = (n) \vdash n$.  On the other hand, if $a$ is strongly decreasing, the output will be $\rsk(a) = (1, 1, \dots, 1) \vdash n$.  More generally, it is not hard to show that when $\rsk(a) = \lambda$, the value $\lambda_1$ is the length of the longest weakly increasing subsequence of~$a$, and $\ell(\lambda) = \lambda'_1$ is the length of the longest strongly decreasing subsequence of $a$.  Even more generally, we have the following theorem of Greene~\cite{Gre74}, completely characterizing the partition $\rsk(a)$ in terms of increasing subsequences:
\begin{theorem}                                     \label{thm:greene}
    Let $\rsk(a) = \lambda$.  Then for each $k \geq 1$, the value $\lambda_1 + \ldots + \lambda_k$ is the length of the longest $k$-increasing subsequence in~$a$ (as defined in Fact~\ref{fact:quantum-classically}).
\end{theorem}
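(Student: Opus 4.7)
I would prove the two inequalities separately:
(a) every $k$-increasing subsequence of $a$ has length at most $\lambda_1 + \cdots + \lambda_k$, and
(b) there is a $k$-increasing subsequence of $a$ achieving length $\lambda_1 + \cdots + \lambda_k$.

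For (b), the natural approach is Viennot's geometric ``light and shadow'' construction.  Plot the word as the set of points $\{(i,a_i) : 1 \leq i \leq n\}$ in $\Z^2$, and imagine light shining from the southwest.  The ``first shadow line'' — the lower-left envelope — visits a collection of points which, read left-to-right, form a weakly increasing subsequence of length exactly $\lambda_1$; this is the classical $k=1$ case (Schensted's theorem).  Remove those points and iterate: the $j$th iteration yields a weakly increasing subsequence $W_j$, and after $k$ iterations the disjoint union $W_1 \sqcup \cdots \sqcup W_k$ is a $k$-increasing subsequence of total size $\lambda_1 + \cdots + \lambda_k$.  The required lemma — that the $j$th shadow line always has length $\lambda_j$ — can be proved by induction on $n$, checking that adjoining the point $(n+1, a_{n+1})$ modifies the shadow layers in exact parallel with the RSK bumping process (the point ``bumps'' a shadow point from layer $j$ into layer $j+1$ whenever the corresponding cell is bumped from row $j$ to row $j+1$ by $\rsk$).

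For (a), I would argue by induction on $n$, maintaining the invariant that for every $k \geq 1$ the length of the longest $k$-increasing subsequence of $(a_1, \ldots, a_n)$ is at most $\mu_1 + \cdots + \mu_k$, where $\mu = \mathrm{shape}(T_n)$.  Inserting $a_{n+1}$ appends one box in some row $r$ of the tableau, so $\mu_1 + \cdots + \mu_k$ increases by exactly $1$ when $k \geq r$ and is unchanged when $k < r$; the inductive step must therefore show that no $k$-increasing subsequence ending with $a_{n+1}$ can grow the bound by more than this.  The main obstacle is that a single RSK insertion can cascade bumps through many rows, so tracking the interaction between the bumping path and an arbitrary $k$-increasing subsequence is delicate.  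The cleanest way around this is to invoke \emph{Knuth equivalence}: Knuth's theorem states that certain local swaps of adjacent letters in $a$ leave $\mathrm{shape}(T_n)$ invariant, and a short direct check shows these same swaps preserve the longest $k$-increasing subsequence length.  After reducing $a$ to a canonical Knuth representative (for instance, the row-reading word of $T_n$), both quantities become manifestly equal, yielding (a) and completing the proof in conjunction with (b).
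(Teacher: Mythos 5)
You should note at the outset that the paper does not prove this statement at all---it quotes it as Greene's theorem with a citation to Greene's 1974 paper---so your argument has to stand on its own, and as written it does not. Part (b) is where it breaks. In Viennot's construction with light from the southwest, the points lying on the first shadow line are the minimal points of the dominance order, and they form a strictly \emph{decreasing} subsequence; what equals $\lambda_1$ is the \emph{number} of shadow lines, not the number of points on the first one, and the iteration is performed on the corner ``skeleton,'' not by deleting the points of the first line and recomputing. Concretely, for $a=(2,3,1)$ the lower-left envelope is $\{(1,2),(3,1)\}$, a decreasing pair, even though $\lambda_1=2$ is witnessed by the subsequence $(2,3)$; and if you flip the orientation so that the envelope is increasing (the suffix minima), the first layer for the same word is the single point $(3,1)$, of size $1<\lambda_1$. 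More fundamentally, no greedy ``peel off a canonical increasing subsequence and recurse'' scheme can work without real care: for $a=(3,1,4,2)$, peeling off the longest increasing subsequence $(1,4)$ leaves the decreasing word $(3,2)$, so that route reaches only $3$ while $\lambda_1+\lambda_2=4$. This failure of greediness is precisely the difficulty Greene's theorem exists to overcome, so your key lemma (``the $j$th shadow line always has length $\lambda_j$'') is false as stated and cannot be rescued by the proposed induction parallel to bumping.

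Part (a) ends at the right place---Knuth invariance is essentially Greene's own proof strategy---but as written it hides the entire content of the theorem in the phrase ``a short direct check.'' Showing that an elementary Knuth transposition ($xzy\leftrightarrow zxy$ for $x\le y<z$, or $yxz\leftrightarrow yzx$ for $x<y\le z$) preserves the maximum total size of a union of $k$ increasing subsequences is the crux: a single move can destroy a given optimal family (the swapped letters may lie in different members, or consecutively in the same one), and one must show how to repair it, typically by exchanging tails between the two affected subsequences; this is a genuine case analysis, not a one-line verification. Note also that once you do prove this invariance, part (b) becomes unnecessary: the RSK shape is a Knuth invariant (Knuth's theorem), the $k$-increasing statistic is a Knuth invariant (the lemma), and both are evaluated directly on the row-reading word of the insertion tableau, where weakly increasing rows give a $k$-increasing subsequence of size $\lambda_1+\cdots+\lambda_k$ and strictly increasing columns show nothing larger is possible. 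So the architecture is salvageable, but as submitted the proof has a genuine gap---the unproved Knuth-invariance lemma---propped up by a supporting construction whose central claim is false.
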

\noindent Indeed, the RSK algorithm is most commonly  used in the literature to study the length of the longest increasing subsequence of a random permutation (equivalently, of a random word $\ba \sim \calX^{\otimes n}$, where $\calX$ denotes the uniform distribution on the alphabet $\calA = [0,1]$).

Let us note one immediate consequence of Greene's theorem.
(This consequence may also be derived directly from the description of the RSK algorithm.)
\begin{proposition}\label{prop:majorizer}
Given $a \in [d]^n$,
let $\rsk(a) = \lambda$.
Write $c_i(a)$ for the number of letter~$i$'s in~$a$.  Then $\lambda$ majorizes $c(a) := (c_1(a), \ldots, c_d(a))$.
\end{proposition}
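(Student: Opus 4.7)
The plan is to deduce the proposition as an almost immediate corollary of Greene's theorem (Theorem~\ref{thm:greene}). Recall that $\lambda$ majorizes $c(a)$ means that, when both are sorted into nonincreasing order, the partial sums of $\lambda$ dominate those of the sorted $c(a)$, with equality for the total sum. Since $\lambda$ is already nonincreasing by definition of a partition, and since $\sum_i \lambda_i = n = \sum_i c_i(a)$ (both count the length of~$a$), it suffices to show that for each $1 \leq k \leq d$,
\[
    \lambda_1 + \lambda_2 + \cdots + \lambda_k \;\geq\; c^{\downarrow}_1(a) + c^{\downarrow}_2(a) + \cdots + c^{\downarrow}_k(a),
\]
where $c^\downarrow(a)$ denotes $c(a)$ rearranged in nonincreasing order.

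By Greene's theorem, the left-hand side equals the length of the longest $k$-increasing subsequence of~$a$, i.e., the largest total length of a disjoint union of $k$ weakly increasing subsequences. So it suffices to exhibit such a subsequence of length at least $c^\downarrow_1(a) + \cdots + c^\downarrow_k(a)$. Let $i_1, \dots, i_k \in [d]$ be the indices of the $k$ most frequent letters in~$a$, so that $c_{i_j}(a) = c^\downarrow_j(a)$. For each $j$, the positions in~$a$ carrying the letter~$i_j$ form, trivially, a weakly increasing subsequence (a constant sequence of value~$i_j$) of length $c_{i_j}(a)$. These $k$ subsequences are pairwise disjoint as sets of positions because they involve distinct letters, so their union is a $k$-increasing subsequence of~$a$ of the required total length.

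There is essentially no obstacle here; the only point to double-check is that the definition of ``$k$-increasing subsequence'' in Fact~\ref{fact:quantum-classically} permits any disjoint union of $k$ \emph{weakly} increasing subsequences (it does), so that constant subsequences qualify. This completes the proof plan; no appeal to the details of the RSK insertion procedure is required beyond what is encapsulated in Greene's theorem.
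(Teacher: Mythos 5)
Your proposal is correct and is essentially the paper's own argument: the paper likewise deduces the proposition from Greene's Theorem~\ref{thm:greene} by observing that the constant subsequences of $k$ distinct letters are disjoint weakly increasing subsequences forming a $k$-increasing subsequence. Your choice of the $k$ most frequent letters (rather than letters $1,\dots,k$) just makes the comparison with the sorted counts explicit, which is a harmless refinement of the same idea.
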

\noindent
To see why this is true, note that for each $k \in [d]$, the all one's, all two's, \ldots, and all $k$'s subsequences together form a $k$-increasing subsequence of size $c_1(a) + \ldots + c_k(a)$, which by Theorem~\ref{thm:greene} is at most $\lambda_1 + \ldots + \lambda_k$, giving the proposition.
As $c(a)$ is the histogram of $a$, this shows that we can view $\rsk(a)$ as a ``shifted histogram'' of $a$ in which cells are shifted towards the lower numbers.

Although Greene's Theorem succinctly characterizes the output by the RSK algorithm, it is important to retain the algorithm itself and even to consider an extension of it.  Suppose that when the RSK algorithm is applied to~$a$ we also form a standard tableau~$T'$ over alphabet~$[n]$, where $T'$ has the same shape as $T_n$ and each cell~$\square$ in~$T'$ is labeled by the ``time'' at which $\square$ was created in~$T_n$.  As noted by Knuth~\cite{Knu70}, the word~$a$ is uniquely determined by the pair $(T_n, T')$.  As a consequence of this and of previous formulas, it is not hard to verify the following important fact, perhaps first observed by Its, Tracy, and Widom~\cite[equation~(2-1)]{ITW01}:
\begin{proposition}\label{prop:rsk-correspondence}
    Let $\ba \sim \calD^{\otimes n}$, where $\calD = (\eta_1, \dots, \eta_d)$ is a probability distribution on~$[d]$.  Then for each $\lambda \vdash n$,
    \[
        \Pr[\rsk(\ba) = \lambda] = \dim(\lambda) \cdot s_\lambda(\eta_1, \dots, \eta_d).
    \]
\end{proposition}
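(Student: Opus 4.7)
The plan is to exploit the classical fact (mentioned in the paragraph just before the statement) that the RSK algorithm, augmented with the ``recording tableau'' $T'$, gives a bijection between words $a \in [d]^n$ and pairs $(P, Q)$, where $P$ is a semistandard tableau of some shape $\lambda \vdash n$ over alphabet $[d]$ and $Q$ is a standard tableau of the same shape $\lambda$ over alphabet $[n]$. Granting this bijection, the proof is a straightforward unpacking of definitions.

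First I would observe that for any word $a \in [d]^n$, the probability of drawing $a$ from $\calD^{\otimes n}$ depends only on its letter histogram:
\[
    \Pr[\ba = a] = \prod_{i=1}^d \eta_i^{c_i(a)},
\]
where $c_i(a)$ counts the occurrences of letter $i$ in $a$. The next key observation is that the RSK bijection preserves letter multisets: each letter inserted into the running tableau lands somewhere in $P$ and is never removed, so the multiset of entries of $P = T_n$ equals the multiset of letters of $a$. In the notation of the Schur polynomial definition, this says $\eta^P = \prod_i \eta_i^{c_i(a)} = \Pr[\ba = a]$.

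Summing over all $a$ with $\rsk(a) = \lambda$ and then switching to the bijective description in terms of pairs $(P, Q)$, I get
\[
    \Pr[\rsk(\ba) = \lambda] = \sum_{\substack{a \in [d]^n \\ \rsk(a) = \lambda}} \Pr[\ba = a] = \sum_{P} \sum_{Q} \eta^P,
\]
where $P$ ranges over semistandard tableaus of shape $\lambda$ over $[d]$ and $Q$ ranges over standard tableaus of shape $\lambda$ over $[n]$. Since the summand $\eta^P$ does not depend on $Q$, the inner sum contributes a factor of $\dim(\lambda)$ (by Definition~\ref{def:hook-formula}), while the outer sum is precisely $s_\lambda(\eta_1, \dots, \eta_d)$ by the definition of the Schur polynomial. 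This gives the claimed identity.

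The only nontrivial step is the RSK bijection itself (the Robinson–Schensted–Knuth correspondence) together with the content-preservation property; these are standard but would need to be cited (Knuth~\cite{Knu70}), since the excerpt only describes the forward direction of RSK. Given this, no obstacle remains, and the proposition follows from a one-line computation combining the definitions of $\dim(\lambda)$ and $s_\lambda$.
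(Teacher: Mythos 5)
Your proposal is correct and is essentially the argument the paper has in mind: the text immediately preceding the proposition invokes Knuth's observation that $a$ is determined by the pair $(T_n, T')$, and the proposition is then exactly the content-preserving RSK bijection combined with the definitions of $\dim(\lambda)$ and $s_\lambda$, just as you do. The only point worth making explicit (as you note) is that one needs the full bijectivity of RSK — in particular surjectivity onto all pairs of a given shape — so that the sum over words with $\rsk(a)=\lambda$ really factors as $\dim(\lambda)\cdot s_\lambda(\eta_1,\dots,\eta_d)$.
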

\noindent By the symmetry of the Schur polynomials, this implies the surprising fact that the distribution of~$\rsk(\ba)$ is invariant to permutations of~$\calD$.

Finally, we mention the connection between the RSK algorithm and quantum spectrum testing.  As we will eventually see in Section~\ref{sec:understanding} (Remark~\ref{rem:quantum-free}), all of quantum spectrum testing can be boiled down to classical testing of symmetric probability distributions~$\calD$, with the following twist:  Rather than getting to see a random word~$\ba$ sampled from $\calD^{\otimes n}$, the tester only gets to see the partition~$\blambda = \rsk(\ba)$.  In light of Greene's Theorem~\ref{thm:greene}, this statement is equivalent to Fact~\ref{fact:quantum-classically}.

\ignore{
A combinatorial viewpoint is given by the \emph{Robinson--Schensted--Knuth (RSK)} algorithm.
The main subroutine of the RSK algorithm is \emph{Schensted's bumping algorithm}.\rnote{I'd also like to throw in the O'Connell version where you take infinitely many samples and condition on always having a decreasing fingerprint\dots}
We first describe this subroutine, using almost verbatim the description given in~\cite{Bay02}:\rnote{I like Dan Romik's description too}
\begin{nameddef}{Schensted's bumping algorithm}
Given a Young tableau $T$ over an alphabet $\calA$ and a letter $a \in \calA$, Schensted's bumping algorithm produces a new Young tableau as follows:
\begin{enumerate}
\item Look at the first row of $T$ and find the smallest letter that is larger than $a$. Replace this letter with $a$.   If the smallest letter larger than $a$ occurs more than once in the row then choose the one furthest to the left.  If no such letter is larger than $a$, simply place $a$ at the end of the first row.
\item If a letter $b$ was replaced by $a$ in the first row then bump $b$ into the second row using the same method as above.  If there is no row to add $b$ to, then $b$ has been bumped out of the bottom, in which case it forms a new row with one entry.
\item Repeat the process on each row of the tableau until either some letter gets added to the end of a row or until it is bumped out of the bottom.
\end{enumerate}
\end{nameddef}

The RSK simply iterates the bumping algorithm to produce a Young diagram from a word $w$.
\begin{nameddef}{The RSK algorithm}
Given a word $w = (w_1, \ldots, w_n)$, the RSK algorithm works as follows:
\begin{enumerate}
\item Let $T_0$ be the empty tableau.
\item Bump $w_1$ into $T_0$ to form $T_1$.
\item Bump $w_2$ into $T_1$ to form $T_2$.
\item Continue in this manner until every letter of $w$ has been bumped into the tableau, producing the final tableau $T_n$. Output the Young diagram with the same shape as $T_n$.
\end{enumerate}
We write $\rsk(w)$ for the Young diagram output by the RSK algorithm.
\end{nameddef}

Given a word $w$, the shape of the Young diagram $\rsk(w)$ is known to depend on various natural combinatorial properties of $w$.
For example, if we define
\begin{definition}
Given a word $w$, a \emph{weakly increasing subsequence} of $w$ is a set of indices $i_1< \ldots< i_m$ such that $w_{i_1} \leq \ldots \leq w_{i_m}$.
Similarly, a \emph{strongly decreasing subsequence} of $w$ is a set of indices $i_1 < \ldots < i_m$ such that $w_{i_1} > \ldots > w_{i_m}$.
\end{definition}
\noindent
Then these quantities characterize the width and height of $\rsk(w)$:
\begin{fact}
Given a word $w$, let $\lambda = \rsk(w)$.
Then $\lambda_1$ is the length of the longest weakly increasing subsequence of $w$,
and $\ell(\lambda)$ is the length of the longest strongly decreasing subsequence of $w$.
\end{fact}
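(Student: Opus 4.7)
The first assertion, that $\lambda_1$ is the length of the longest weakly increasing subsequence of $w$, is immediate from Greene's Theorem~\ref{thm:greene} applied with $k = 1$: a $1$-increasing subsequence is by definition a single weakly increasing subsequence, so its maximum length equals $\lambda_1 + \cdots + \lambda_1 = \lambda_1$. So the work lies entirely in the second assertion about $\ell(\lambda) = \lambda'_1$.

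For this, I would work directly from the bumping mechanics described in the \emph{RSK algorithm} definition above. The central tool is the standard \emph{Row Bumping Lemma} (sometimes called Schensted's Lemma) comparing the bumping paths of two successive insertions into a tableau $T$: if first $b$ is inserted and then $b'$, and $b > b'$, then the bumping path of $b'$ sits weakly to the left of the bumping path of $b$ at every row, and consequently the new cell added while inserting $b'$ lies in a row strictly higher than the new cell added while inserting $b$. I would prove this lemma by induction on the row index, using the fact that the insertion into row $j$ is governed purely by the leftmost entry strictly greater than the letter coming in from below, together with the weak/strong increase conditions on the rows and columns of $T$.

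With the Row Bumping Lemma in hand, both inequalities for $\ell(\lambda)$ fall out. For $\ell(\lambda) \geq k$ where $k$ is the length of the longest strongly decreasing subsequence, take such a subsequence $w_{i_1} > w_{i_2} > \cdots > w_{i_k}$. An iterated application of the lemma (at each step chaining through the intermediate insertions between positions $i_{j-1}$ and $i_j$) shows that the cell added to the growing shape at time $i_j$ lies in a strictly higher row than the cell added at time $i_{j-1}$, so the final shape reaches height at least $k$. For $\ell(\lambda) \leq k$, I would read off a strongly decreasing subsequence from the first column of the final tableau $T_n$. Let $t_p$ denote the insertion time at which the first-column cell of row $p$ was created. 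A second application of the Row Bumping Lemma shows that whenever a new row $p$ is opened up by an insertion, the letter responsible must be strictly smaller than the letter that opened row $p-1$ (otherwise the Row Bumping Lemma would place the new cell weakly below, contradicting that a new row was formed), so $t_1 < t_2 < \cdots < t_{\ell(\lambda)}$ and $w_{t_1} > w_{t_2} > \cdots > w_{t_{\ell(\lambda)}}$.

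The main obstacle is the careful bookkeeping in the Row Bumping Lemma itself: one must correctly distinguish the two cases (inserting letter is weakly smaller versus strictly larger than its predecessor), and verify that a \emph{strict} decrease in inserted letters forces a \emph{strict} rise of landing row (rather than just a weak rise). The French-notation convention in the excerpt (rows increasing upward, columns strictly increasing upward) is convenient here, since it makes the ``new cell created above'' statement match the geometric picture directly. Everything else is a short derivation from the lemma.
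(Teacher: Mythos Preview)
The paper itself does not give a proof of this fact: it writes ``it is not hard to show'' and immediately defers to Greene's theorem as the general statement, leaving the $\ell(\lambda)$ assertion to standard references. Your treatment of $\lambda_1$ via Greene's theorem with $k=1$ is fine and is exactly in the spirit of what the paper intends.

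Your direct bumping argument for $\ell(\lambda)$, however, has a genuine gap in \emph{both} directions. For the inequality $\ell(\lambda) \geq k$, the Row Bumping Lemma only compares two \emph{consecutive} insertions, and chaining it through intermediate letters does not preserve ``strictly higher row'': an intermediate letter that is $\geq$ its predecessor sends the new cell weakly \emph{down}. Concretely, take $w = (3,1,2)$ and the strictly decreasing subsequence at positions $i_1=1$, $i_2=3$. The new cell at time $1$ is in row~$1$; the new cell at time~$3$ (inserting $2$ into the tableau with first row $[1]$) is again in row~$1$. So the landing rows of $w_{i_1},w_{i_2}$ are not strictly increasing, and your chaining fails. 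For the inequality $\ell(\lambda) \leq k$, your claim that $w_{t_1} > w_{t_2} > \cdots$ (with $t_p$ the time row~$p$ is created) is also false: take $w = (1,3,2)$, where $t_1=1$, $t_2=3$, but $w_{t_1}=1 < 2 = w_{t_2}$. The Row Bumping Lemma simply gives no control over the inserted letters at the non-consecutive times $t_{p-1}, t_p$.

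A clean route that stays within what the paper states: by Greene's theorem, $\ell(\lambda)$ is the least $k$ with $\lambda_1+\cdots+\lambda_k=n$, i.e., the minimum number of weakly increasing subsequences needed to cover $w$; by Dilworth's theorem applied to the partial order $i \preceq j \iff (i\le j \text{ and } w_i\le w_j)$, this minimum chain cover equals the maximum antichain, which is exactly the longest strongly decreasing subsequence. Alternatively, one simply cites the dual form of Greene's theorem, $\lambda'_1 + \cdots + \lambda'_k = $ longest union of $k$ strongly decreasing subsequences, with $k=1$. If you want a purely bumping-based proof, the standard approach (Schensted) goes through column insertion or the reversal symmetry $P(w^{\mathrm{rev}})=P(w)^{t}$ for permutations, and requires more care for general words than your sketch suggests.
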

\noindent
For a proof of this, consult~XXX.

Now we can describe the correspondence between the RSK algorithm and weak Schur sampling.
First, we will define our inhomogeneous random word model:
\begin{definition}
Let $\calA$ be an alphabet, and let $\mu$ be a probability distribution over $\calA$.
Then $\word{n}(\calA, \mu)$ outputs $w= (w_1, \ldots, w_n)$, where each $w_i$ is drawn independently from $\mu$.
Let $\eta_1, \ldots, \eta_d$ be nonnegative numbers with $\eta_1+\ldots+\eta_d = 1$.
Then we define $\word{n}(\eta_1, \ldots, \eta_d) \coloneqq  \word{n}([d], (\eta_1, \ldots, \eta_d))$.
\end{definition}
\begin{fact}\label{fact:rsk-correspondence}
Let $\rho$ be a density matrix with eigenvalues $\eta_1, \ldots, \eta_d$.
Consider a random partition $\lambda\vdash n$
produced by sampling $w \sim \word{n}(\eta_1, \ldots, \eta_d)$ and outputting $\rsk(w)$.
Then $\lambda$ is distributed according to $\SWdens{n}{\rho}$.
\end{fact}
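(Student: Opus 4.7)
The plan is to exploit the full Robinson--Schensted--Knuth (RSK) correspondence, which refines the shape-only algorithm described in the text. In the refinement one tracks two tableaux throughout the insertion process: the \emph{insertion tableau} $P = T_n$ (a semistandard tableau of shape $\lambda = \rsk(a)$ over $[d]$) and the \emph{recording tableau} $Q$ (of the same shape $\lambda$, standard over $[n]$), whose cell $\square$ is labeled with the time step $i$ at which $\square$ was created in $T_i$.  A classical theorem of Knuth~\cite{Knu70} asserts that $a \mapsto (P,Q)$ is a \emph{bijection} between $[d]^n$ and the set of pairs $(P,Q)$ of a common shape $\lambda \vdash n$, with $P$ semistandard over $[d]$ and $Q$ standard over $[n]$; I would invoke this as a black box.

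Two observations then suffice.  First, the probability of a specific word $a \in [d]^n$ under $\calD^{\otimes n}$ depends only on the multiset of its letters: if letter~$i$ occurs $c_i$ times in~$a$, then $\Pr[\ba = a] = \prod_{i=1}^d \eta_i^{c_i}$.  Second, bumping merely shuffles letters around, never creating or destroying any, so the multiset of entries of $P$ equals the multiset of letters of $a$.  Hence, in the monomial notation $x^T$ from the definition of the Schur polynomial, $\Pr[\ba = a] = \eta^P$.

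Summing over all words $a$ with $\rsk(a) = \lambda$, and using Knuth's bijection to parameterize them by pairs $(P,Q)$ of shape~$\lambda$, gives
\begin{align*}
\Pr[\rsk(\ba) = \lambda]
= \sum_{(P,Q)\text{ of shape }\lambda} \eta^P
= \Bigl(\sum_{Q} 1\Bigr)\Bigl(\sum_{P} \eta^P\Bigr)
= \dim(\lambda) \cdot s_\lambda(\eta_1, \dots, \eta_d),
\end{align*}
where the last equality uses Definition~\ref{def:hook-formula} for the number of $Q$'s and the very definition of $s_\lambda$ for the sum over $P$ (over semistandard tableaux of shape $\lambda$ on alphabet $[d]$).

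The only real obstacle is justifying Knuth's bijection: one must check that $P$ remains semistandard throughout the bumping process (the bumping rule places each letter beneath a strictly larger one, preserving the rows-weak/columns-strict condition) and that the map is invertible by repeatedly ``unbumping,'' using the largest entry of $Q$ to identify the cell created most recently and reversing the bumping path that led to its creation.  Both checks are standard and may be cited from any reference on RSK (e.g., \cite{Knu70, Rom14}); granted these, the identity above is a one-line calculation.
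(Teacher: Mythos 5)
Your calculation of the law of $\rsk(\ba)$ is correct and is essentially the classical half of the paper's argument: via Knuth's bijection $a \leftrightarrow (P,Q)$ and the observation that $\Pr[\ba = a]$ depends only on the letter multiset (which insertion preserves), one gets $\Pr[\rsk(\ba) = \lambda] = \dim(\lambda)\cdot s_\lambda(\eta_1,\dots,\eta_d)$; this is exactly Proposition~\ref{prop:rsk-correspondence}, which the paper cites from Knuth and Its--Tracy--Widom rather than reproving, so your write-up is if anything more detailed on that side.

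The genuine gap is that you never address the other half of the statement: the distribution $\SWdens{n}{\rho}$ itself. The claim to be proved is an equality of two distributions, and your argument identifies only one of them; nowhere do you compute what weak Schur sampling (measuring $\rho^{\otimes n}$ in the Schur basis) actually outputs, so the conclusion ``hence $\rsk(\bw)$ is distributed as $\SWdens{n}{\rho}$'' does not follow from what you wrote. The missing ingredient is precisely Proposition~\ref{prop:schur-probability}: one must show $\SWdensProb{n}{\rho}{\lambda} = \dim(\lambda)\cdot s_\lambda(\eta_1,\dots,\eta_d)$, which the paper proves by writing $\SWdensProb{n}{\rho}{\lambda} = \tr(\Pi_\lambda\,\rho^{\otimes n})$ with $\Pi_\lambda = \dim(\lambda)\,\E_{\bpi}[\chi_\lambda(\bpi)\,\srep(\bpi)]$, computing $\tr(\srep(\pi)\rho^{\otimes n}) = p_\pi(\eta_1,\dots,\eta_d)$, and invoking Theorem~\ref{thm:power-schur-relation} to turn the character average of power sums into the Schur polynomial. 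This representation-theoretic step is not cosmetic: it is the only place where the quantum measurement enters, and without it (or an explicit citation of that formula) your proof establishes a statement about random words only, not about $\SWdens{n}{\rho}$. If you prepend that computation, or cite Proposition~\ref{prop:schur-probability}, your argument becomes a complete proof along the same lines as the paper's (cf.\ Remark~\ref{rem:quantum-free}).
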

\noindent
This fact is proved in~XXX.\jnote{is this proved anywhere in like a survey or something?  I see ``proved'' it in the Its-Tracy-Widom paper and in the Xu thesis, but it would be nice to be able to point to a survey or a textbook with this result.}\rnote{See the bottom of page 23 of the Feray--Meliot paper, or Sec 3 of O'Connell.}

One property that Fact~\ref{fact:rsk-correspondence} makes clear is that weak Schur sampling is never affected by the nonzero parts of the spectum of a state.
This means that, for instance, the state $\rho_1$ with eigenvalues $(.5, .5, 0)$ which sits in $\C^3$ and the state $\rho_2$ with eigenvalues $(.5, .5, 0, 0, 0)$ which sits in $\C^5$ will both produce identically distributed partitions when run through the weak Schur sampling algorithm.
In other words, we have the following strengthening of Fact~\ref{fact:only-spectrum}:
\begin{fact}\label{fact:only-nonzero-spectrum}
Let $\rho$ be a density matrix. The output of weak Schur sampling on $\rho^{\otimes n}$ depends only on the \emph{nonzero} spectrum of $\rho$ (and not, for example, on the ambient dimension of $\rho$).
\end{fact}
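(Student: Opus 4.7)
The plan is to leverage the augmented RSK bijection already described in the excerpt, which sends a word $a \in [d]^n$ to a pair $(P, Q)$, where $P = T_n$ is semistandard of shape $\lambda \vdash n$ over $[d]$ and $Q = T'$ is standard of the same shape over~$[n]$. Summing the word-probabilities over the fibers of~$\rsk$ will give the claim.

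First I would verify the \emph{content-preservation} property of RSK insertion: for every $i \in [d]$, the number of occurrences of letter~$i$ in $P$ equals the number of occurrences of~$i$ in the input word~$a$. This is straightforward by induction on~$n$: when $a_j$ is inserted into $T_{j-1}$, each successive ``bump'' displaces some letter from a cell and reinserts it into the row above, so the net effect on the tableau's multiset of entries is simply to add the single letter~$a_j$.

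Given this, the probability calculation is immediate. For any word~$a \in [d]^n$, let $k_i(a)$ denote the number of $i$'s in~$a$ (and similarly $k_i(P)$ for a tableau~$P$), so that $\Pr_{\ba \sim \calD^{\otimes n}}[\ba = a] = \prod_{i=1}^d \eta_i^{k_i(a)}$. Content-preservation gives $k_i(a) = k_i(P)$ whenever $a$ corresponds to the pair $(P, Q)$, so this probability depends only on~$P$. Summing over the fiber $\rsk^{-1}(\lambda)$ via the bijection $a \leftrightarrow (P, Q)$,
\[
    \Pr[\rsk(\ba) = \lambda] \;=\; \sum_{Q} \sum_{P} \prod_{i=1}^d \eta_i^{k_i(P)} \;=\; \dim(\lambda) \cdot \sum_{P} \prod_{i=1}^d \eta_i^{k_i(P)},
\]
where $P$ ranges over semistandard tableaus of shape~$\lambda$ over $[d]$ and $Q$ over the $\dim(\lambda)$ standard tableaus of the same shape over~$[n]$. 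By the definition of the Schur polynomial, the remaining sum is exactly $s_\lambda(\eta_1, \dots, \eta_d)$.

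No step here is substantively difficult: Knuth's bijectivity of the augmented RSK map is already cited, and content-preservation is a one-line observation about the insertion rule. The only point requiring genuine care is cleanly separating the two roles of the pair $(P, Q)$, with~$P$ carrying the multiset-of-letters data and~$Q$ carrying the order-of-insertion data; once that is in place, the formula drops out of the bijection almost by inspection.
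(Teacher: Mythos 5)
What you prove is correct, but it is not the stated fact. Your argument---content preservation of RSK insertion, Knuth's bijection $a \leftrightarrow (P,Q)$, and summing word probabilities over a fiber---establishes the purely classical identity $\Pr_{\ba\sim\calD^{\otimes n}}[\rsk(\ba)=\lambda] = \dim(\lambda)\, s_\lambda(\eta_1,\dots,\eta_d)$, i.e.\ Proposition~\ref{prop:rsk-correspondence}, and that derivation is fine. The fact you were asked to prove, however, is a statement about a quantum measurement: that the distribution of the weak Schur sampling outcome on $\rho^{\otimes n}$ depends only on the nonzero spectrum. Nothing in your proposal touches $\rho^{\otimes n}$, the Schur basis, or the projectors; the missing bridge is the identity $\SWdensProb{n}{\rho}{\lambda} = \dim(\lambda)\, s_\lambda(\eta_1,\dots,\eta_d)$ (Proposition~\ref{prop:schur-probability}), which the paper obtains by a genuinely representation-theoretic computation---writing $\Pi_\lambda = \dim(\lambda)\E_{\bpi\sim\symm{n}}[\chi_\lambda(\bpi)\srep(\bpi)]$, evaluating $\tr(\srep(\pi)\rho^{\otimes n}) = p_\pi(\eta_1,\dots,\eta_d)$, and invoking Theorem~\ref{thm:power-schur-relation}. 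That identification cannot be extracted from RSK combinatorics alone, and without it (or an explicit appeal to it, as in Remark~\ref{rem:quantum-free}) your calculation says nothing about weak Schur sampling.

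Even granting that bridge, your write-up stops one step short of the conclusion: the fact is precisely the assertion that the resulting distribution is unchanged when zero eigenvalues are appended (i.e.\ the ambient dimension is increased) or when the nonzero eigenvalues are permuted. From the formula this follows because the Schur polynomials are symmetric and stable, $s_\lambda(x_1,\dots,x_d,0) = s_\lambda(x_1,\dots,x_d)$ (immediate from the tableau definition, since any tableau using the extra letter contributes a monomial carrying a factor of $0$); equivalently, in the word picture, letters of probability zero never occur in $\ba$, so deleting them from the alphabet does not affect the distribution of $\rsk(\ba)$, and permutation invariance is the symmetry of $s_\lambda$. These observations are short, but they are literally the content of the fact being asserted, so they must be stated; as written, your proof derives the Schur-polynomial formula and ends there.
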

}

\subsection{Representation theory, and the symmetric group}                          \label{sec:rep-theory}

Herein we recall some basics of representation theory.  We will mainly focus on $\C$-representations of finite groups~$G$ (though at one point we will want to consider representations of the unitary group).  We may therefore define a \emph{representation} $\mu$ of $G$ to be a group homomorphism from $G$ into $\unitary{d}$, for some $d \in \Z^+$.  Here $\unitary{d}$ denotes the group of $d \times d$ unitary matrices.  The number~$d$ is also called the \emph{dimension} of the representation~$\mu$ and is denoted $\dim(\mu)$.

Two representations $\mu_1$ and $\mu_2$ are said to be \emph{isomorphic} if there is some unitary matrix $U$ such that $U \mu_1 U^\dagger = \mu_2$.  In this case we write  $\mu_1 \cong \mu_2$.  The \emph{direct sum} of $k$ representations $\mu_1, \ldots, \mu_k$ produces the representation $\mu$ given by block-diagonal matrices:
\begin{equation}
\mu(g) \coloneqq
\left[
\begin{array}{cccc}
\mu_1(g) &0 &\ldots&0\\
0 &\mu_2(g) &\ldots&0\\
\vdots & \vdots & \ddots & \vdots\\
0 & 0 & \hdots & \mu_k(g)
\end{array}\right]\label{eq:reducible}
\end{equation}
for all $g \in G$.  Equivalently, we may write
\begin{equation}
\mu(g)
\coloneqq  \sum_{i=1}^k |i\rangle\langle i| \otimes \mu_i(g).\label{eq:irr-expansion}
\end{equation}
We will also write  $\mu = \mu_1 \oplus \ldots \oplus \mu_k$ to denote that $\mu$ is the direct sum of $\mu_1, \ldots, \mu_k$.

Let $\mu_1$ be a representation of the group $G_1$ and $\mu_2$ be a representation of the group $G_2$.  Then the \emph{tensor product} of $\mu_1$ and $\mu_2$, denoted $\mu_1 \otimes \mu_2$, is the representation defined by
\begin{equation*}
\left(\mu_1 \otimes \mu_2\right)(g, h)
\coloneqq  (\mu_1(g)) \otimes (\mu_2(h)),
\end{equation*}
where the right-hand side uses the ordinary matrix tensor product. We have $\dim(\mu_1 \otimes \mu_2) = \dim(\mu_1) \cdot \dim(\mu_2)$. 

In our setting, a representation~$\mu$ of~$G$ is said to be \emph{reducible} if there are representations $\mu_1$~and~$\mu_2$ such that $\mu \cong \mu_1 \oplus \mu_2$.  Otherwise it is \emph{irreducible}, and is often called an \emph{irrep} for brevity.  Every representation can be uniquely decomposed into a direct sum of irreps (up to isomorphism and rearrangement of summands).  Further, the set of all irreps of~$G$ (up to isomorphism), denoted $\widehat{G}$, is finite.  Indeed, if we define the \emph{regular representation} of~$G$ to be the $|G|$-dimensional representation~$R$ given by $R(g) = \sum_{h \in G} \ket{gh} \bra{h}$), then $R$'s decomposition into irreps contains every $\mu \in \widehat{G}$, with $\mu$ occurring $\dim(\mu)$ times.  As a consequence, we have the formula
\[
    |G| = \sum_{\mu \in \widehat{G}} (\dim \mu)^2.
\]
This fact leads to a natural \emph{probability distribution} on irreps of~$G$:
\begin{definition}\label{def:group-plancherel}
    For a finite group~$G$, the \emph{Plancherel distribution} is the probability distribution on irreps in which $\mu \in \widehat{G}$ has probability $(\dim \mu)^2/|G|$.
\end{definition}

For a group $G$ and a representation $\mu$, the character $\chi_\mu$ is the function $\chi_\mu:G\rightarrow \C$ defined by
\begin{equation*}
\chi_\mu(g) = \trace(\mu(g)),
\end{equation*}
for each $g\in G$.  We have the following simple fact:
\begin{fact}\label{fact:class-function}
Let $\mu$ be a representation of $G$.  Then $\chi_\mu$ is a \emph{class function}; i.e., it is constant on the conjugacy classes of $G$.
\end{fact}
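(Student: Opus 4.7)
The plan is to use two standard facts: that $\mu$ is a group homomorphism, and that the trace is cyclic. First I would observe that two elements $g, h \in G$ lie in the same conjugacy class precisely when there exists $x \in G$ with $h = x g x^{-1}$, so it suffices to show $\chi_\mu(x g x^{-1}) = \chi_\mu(g)$ for all $x, g \in G$.

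Next, using that $\mu : G \to \unitary{d}$ is a homomorphism, I would write $\mu(x g x^{-1}) = \mu(x) \mu(g) \mu(x)^{-1}$. Then I would apply the trace to both sides and invoke the cyclic property $\trace(ABC) = \trace(CAB)$, obtaining
\[
    \chi_\mu(x g x^{-1}) = \trace\bigl(\mu(x) \mu(g) \mu(x)^{-1}\bigr) = \trace\bigl(\mu(x)^{-1} \mu(x) \mu(g)\bigr) = \trace(\mu(g)) = \chi_\mu(g).
\]

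There is no real obstacle here; the statement is immediate from the definitions. The only thing to be slightly careful about is that the argument works uniformly for representations over any group and any dimension, which is exactly what the definition of character as a trace buys us.
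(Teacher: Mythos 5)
Your proof is correct and is the standard argument via the homomorphism property of $\mu$ together with the cyclicity of the trace; the paper simply states this fact without proof, treating it as standard, and your argument is exactly the one it implicitly relies on.
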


We now recall some basics of Fourier analysis over an arbitrary finite group~$G$ (though we will ultimately only need the case $G = \symm{n}$).  For $f, g \co G \to \C$ we define $\la f, g \ra = \E_{\bu \sim G}[f(\bu)\ol{g(\bu)}]$.  Under this inner product, the characters $(\chi_{\mu})_{\mu \in \wh{G}}$ form an orthonormal basis for the space of class functions $f \co G \to \C$.  For general $f, g \co G \to \C$ we define $(\conv{f}{g})(u) = \E_{\bv \sim G}[f(\bv)g(\bv^{-1} u)]$; this includes a nonstandard normalization by $\frac{1}{|G|}$.  For a class function~$f$ and $\mu \in \wh{G}$ we employ the following ``Fourier notation'': $\four{f}(\mu) = \la f, \chi_\mu \ra$.  (According to standard notation we would have $\four{f}(\mu) = \frac{1}{|G|}\tr\mleft(\wh{\ol{f}}\mright)$). Then Fourier inversion is simply $f = \sum_\mu \four{f}(\mu)\chi_\mu$.  Further, if $g$ is another class function we have the formula $\four{f \ast g}(\mu) = \frac{1}{\dim \mu}\four{f}(\mu)\four{g}(\mu)$.

We close this section by specifically discussing the representation theory of the symmetric group~$\symm{n}$.  Two permutations $\pi, \sigma \in \symm{n}$ are conjugate within the group~$\symm{n}$ if and only if they have the same cycle type. As a result, the conjugacy classes of~$\symm{n}$ can be identified with the partitions of~$n$. As it happens, the set $\irrsymm{n}$ of irreps of the symmetric group can \emph{also} be naturally identified with the partitions of~$n$.  For $\lambda \vdash n$, we will use the notation $\sirrep_\lambda$ for the corresponding irrep of~$\symm{n}$.  (To avoid getting too far afield, we will not actually describe the representation~$\sirrep_\lambda$.)  Recalling Fact~\ref{fact:class-function}, we introduce the following notation:
\begin{definition}
Let $\lambda \vdash n$.  We denote the character $\chi_{\sirrep_\lambda}$ more simply as $\chi_\lambda$.  We remark that $\chi_\lambda$ is known to take on only rational values; in particular, $\overline{\chi_{\lambda}} = \chi_\lambda$. If $\mu \vdash n$ then we let  $\chi_\lambda(\mu)$ denote $\chi_\lambda(\pi)$, where $\pi \in \symm{n}$ is any permutation with cycle type~$\mu$.  This is well defined since $\chi_\lambda$ is constant on the conjugacy classes of~$\symm{n}$.  Finally, we also write $\dim(\lambda)$ for $\dim\mleft(\sirrep_\lambda\mright)$.  It is well known~\cite[Theorem 2.6.5]{Sag01} that $\dim(\lambda)$ is equal to the number of standard Young tableaus of shape~$\lambda$ over alphabet~$[n]$, explaining the notation from Definition~\ref{def:hook-formula}.
\end{definition}

Following Stanley~\cite[Corollary~7.17.5]{Sta99}, we can actually give a definition of the symmetric group characters $\chi_\mu$ in terms of the power sum and Schur polynomials:
\begin{theorem}                                     \label{thm:power-schur-relation}
    In the context of Fourier analysis over the group $G = \symm{n}$, suppose $\mu \vdash n$ and $x \in \C^d$.  Then $p_{(\cdot)}(x) \coloneqq \pi \mapsto p_{\pi}(x)$ is a class function, and its Fourier coefficients are given by
    \[
        \four{p_{(\cdot)}(x)}(\mu) = s_\mu(x).
    \]
\end{theorem}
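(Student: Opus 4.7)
The plan is essentially to recognize Theorem~\ref{thm:power-schur-relation} as a restatement of the classical Frobenius character formula in the Fourier-analytic language set up earlier in the section, so the proof reduces to unpacking definitions and citing a standard result.

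First I would dispense with the class function claim: by Definition~\ref{def:power-sum}, $p_\pi(x) = p_{\cyctype{\pi}}(x)$, and cycle type is a complete invariant of conjugacy classes in $\symm{n}$. Hence $\pi \mapsto p_\pi(x)$ is constant on conjugacy classes, which by Fact~\ref{fact:class-function}'s defining condition makes it a class function.

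Next, I would compute the Fourier coefficient directly from the definition $\four{f}(\mu) = \la f, \chi_\mu \ra = \E_{\bpi \sim \symm{n}}[f(\bpi)\ol{\chi_\mu(\bpi)}]$. Using that $\chi_\mu$ is real-valued (so $\ol{\chi_\mu} = \chi_\mu$) and that both $p_\pi(x)$ and $\chi_\mu(\pi)$ depend only on the cycle type $\lambda = \cyctype{\pi}$, grouping permutations by cycle type gives
\[
    \four{p_{(\cdot)}(x)}(\mu) \;=\; \frac{1}{n!}\sum_{\pi \in \symm{n}} p_\pi(x)\,\chi_\mu(\pi) \;=\; \sum_{\lambda \vdash n} \frac{1}{z_\lambda}\, p_\lambda(x)\, \chi_\mu(\lambda),
\]
since the number of permutations of cycle type $\lambda$ is $n!/z_\lambda$ (as recorded in Section~\ref{sec:partitions}).

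Finally I would invoke the Frobenius character formula (Stanley~\cite[Corollary~7.17.5]{Sta99}), which asserts precisely the identity $s_\mu(x) = \sum_{\lambda \vdash n} z_\lambda^{-1}\, \chi_\mu(\lambda)\, p_\lambda(x)$. Comparing with the display above immediately yields $\four{p_{(\cdot)}(x)}(\mu) = s_\mu(x)$, completing the proof. There is no real obstacle here: the only content is the combinatorial identity already cited, and the work is entirely in matching our Fourier normalization (with the $\frac{1}{|G|}$ built into $\la \cdot, \cdot \ra$) to the standard $z_\lambda^{-1}$ weighting appearing in Frobenius's formula.
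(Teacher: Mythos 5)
Your proposal is correct and matches what the paper does: the theorem is just the Frobenius character formula, which the paper states by citing Stanley~\cite[Corollary~7.17.5]{Sta99}, and your unpacking of the Fourier normalization (grouping permutations by cycle type with weight $n!/z_\lambda$) is exactly the routine verification needed to match that citation. Nothing further is required.
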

\noindent Although this can be taken as an implicit definition of the characters $\chi_\mu$, we will more often think of the characters $\chi_\mu$ as ``known'' and of Theorem~\ref{thm:power-schur-relation} as letting us express the Schur polynomials in terms of the power sum polynomials.

\subsection{Weak Schur sampling}                    \label{sec:weak-schur}

In this section we will introduce the weak Schur sampling algorithm. Our treatment of this topic will heavily follow the treatments given in Aram Harrow's thesis~\cite{Har05} and the paper~\cite{CHW07}.

To motivate the algorithm let us briefly consider the classical problem of testing symmetric properties of probability distributions on~$[d]$.  In this model, the tester obtains a random word $\ba = (\ba_1, \dots, \ba_n)$, where each letter $\ba_i$ is drawn independently from an unknown distribution~$\calD$ on~$[d]$.  The tester wants to decide whether $\calD$ satisfies a certain symmetric property~$\calP$.  Since the samples $\ba_1, \dots, \ba_n$ are independent, the tester could---without loss of generality---randomly permute them according to any~$\pi \in \symm{n}$.  Similarly, since the property~$\calP$ is symmetric, the tester could---again, without loss of generality---simultaneously apply any permutation $\sigma \in \symm{d}$ to the letters it sees.  Roughly speaking, the tester can ``factor out'' the action of the group $\symm{n} \times \symm{d}$.  The information that remains is precisely the sorted type~$\blambda \vdash n$ of~$\ba$ (recall Definition~\ref{def:sorted-type}).\footnote{This partition carries the same information as the so-called ``fingerprint'' used in classical property literature~\cite{Bat01, Val08}.}  Thus we see that the task of analyzing property testing of symmetric probability distributions boils down to the task of understanding the random partition $\blambda \vdash n$ (of length at most~$d$) induced as the sorted type of a random word drawn from $\calD^{\otimes n}$.

A similar but more complicated state of affairs holds for quantum spectrum testing.  In this case, there is an unknown $d$-dimensional mixed state~$\rho$, and the tester may measure $n$~copies, $\rho^{\otimes n}$, in an attempt to determine whether~$\rho$ satisfies a certain unitarily-invariant property~$\calP$.  As before, the tester could (without loss of generality) randomly permute the copies according to any~$\pi \in \symm{n}$.  And in this quantum scenario, by the unitary-invariance of~$\calP$, the tester could also (without loss of generality) simultaneously apply any unitary $U \in \unitary{d}$ to each copy.  \emph{Weak Schur sampling} refers to the process of ``factoring out'' this action of $\symm{n} \times \unitary{d}$.  What remains is again a random partition $\blambda \vdash n$ of length at most~$d$, whose distribution depends only on the spectrum of~$\rho$.  (In fact, as we will see later in Remark~\ref{rem:quantum-free}, the distribution of~$\blambda$ is precisely that of $\rsk(\ba)$ where $\ba$ is a random word chosen according to the probability distribution on~$[d]$ defined by $\rho$'s spectrum.)  To understand this situation more thoroughly, we will need to discuss representation theory in more detail.\footnote{In particular, we will go slightly beyond the framework from Section~\ref{sec:rep-theory} by mentioning representations of the unitary group, which is of course not a finite group.}

As mentioned above, the groups $\symm{n}$ and $\unitary{d}$ each have a natural, unitary action on the space~$(\C^d)^{\otimes n}$; the associated representations $\srep$ and~$\urep$ (respectively) are defined on the standard basis vectors $\ket{a_1} \otimes \ket{a_2} \otimes \cdots \otimes \ket{a_n}$ (for $a_i \in [d]$) via
\begin{align*}
\srep(\pi)\; |a_1\rangle \otimes |a_2\rangle \otimes \ldots \otimes |a_n \rangle
&= |a_{\pi^{-1}(1)}\rangle \otimes |a_{\pi^{-1}(2)}\rangle \otimes \ldots \otimes |a_{\pi^{-1}(n)} \rangle,\\
\urep(U)\; |a_1\rangle \otimes |a_2\rangle \otimes \ldots \otimes |a_n \rangle
&= \ \,\!(U |a_1\rangle) \;\!\otimes\, (U |a_2\rangle) \,\otimes\,\! \ldots \otimes (U |a_n \rangle).
\end{align*}
We know the irreps of~$\symm{n}$ are indexed by partitions of~$n$; thus, the representation~$\srep$ must decompose as
\begin{equation}                                \label{eqn:srep-decomposition}
    \srep(\pi) \mathop{\cong}^{\symm{n}} \bigoplus_{\lambda \vdash n} \sirrep_\lambda(\pi) \otimes I_{m_\lambda},
\end{equation}
with $m_\lambda$ denoting the number of copies of $\sirrep_\lambda$ in the decomposition.  The representation~$\urep$ also decomposes into irreps of the group~$\unitary{d}$. As it happens, these (infinitely many) irreps can \emph{also} be naturally identified with partitions; specifically, for each partition $\lambda \in \allpartitions$ with length at most~$d$, there is an associated irrep $\uirrep_\lambda^d \in \irrunitary{d}$.  Furthermore, the theory of \emph{Schur--Weyl duality} states that there is significant joint structure to these two decompositions.  This structure ultimately arises because the two representations~$\srep$ and~$\urep$ commute (i.e., $\srep(\pi)\urep(U) = \urep(U)\srep(\pi)$ for all $\pi \in \symm{n}$, $U\in \unitary{d}$), and hence the simultaneous action $\srep \urep$ defined by  $\srep\urep(\pi, U)\coloneqq \srep(\pi)\urep(U)$ is a representation of the direct product group $\symm{k}\times \unitary{d}$.
\begin{named}{Schur--Weyl duality}
    $\displaystyle \srep \urep \mathop{\cong}^{\symm{n} \times \unitary{d}} \bigoplus_{\lambda \vdash n} \sirrep_\lambda \otimes \uirrep_\lambda^d$.
\end{named}
\noindent In particular, by taking $U = id$ we see that $m_\lambda$, the multiplicity of $\sirrep_\lambda$ in the decomposition of~$\srep$, is equal to $\dim(\uirrep_\lambda^d)$.  Similarly, by taking $\pi = id$, we see that the multiplicity of $\uirrep_\lambda^d$ in the decomposition of~$\urep$ is $\dim(\lambda) = \dim(\sirrep_\lambda)$.

To restate Schur--Weyl duality, there exists a certain $d^n\times d^n$ unitary matrix $\schurbasis$ such that
\begin{equation}
\schurbasis\srep(\pi)\urep(U)\schurbasis^\dagger
= \sum_{\lambda \vdash n} |\lambda \rangle\langle \lambda|\otimes \sirrep_\lambda(\pi)\otimes\uirrep_\lambda^d(U),\label{eq:schur-weyl}
\end{equation}
for all $\pi \in \symm{n}$, $U \in \unitary{d}$.
We view $\schurbasis$ as a unitary linear transformation that performs a change of basis, from the standard basis into the \emph{Schur basis}.  We may now state the weak Schur sampling algorithm:
\begin{nameddef}{Weak Schur sampling}
Given $\rho^{\otimes n}$, where $\rho$ is a $d$-dimensional mixed state, the weak Schur sampling algorithm works as follows:
\begin{enumerate}
\item Measure $\rho^{\otimes n}$ in the Schur basis, receiving basis state $|\blambda\rangle \otimes |\bp\rangle \otimes |\bq\rangle$.
\item Output $\blambda$, a partition of size~$n$ and length at most~$d$.
\end{enumerate}
\end{nameddef}
We will write $\SWdens{n}{\rho}$ for the distribution on partitions induced from $\rho^{\otimes n}$ by the weak Schur sampling algorithm. We will also use the shorthand
\[
    \SWdensProb{n}{\rho}{\lambda} \coloneqq \Pr_{\blambda \sim \SWdens{n}{\rho}}[\blambda = \lambda].
\]
\ignore{
By virtue of Fact~\ref{fact:simple-decomp}, we have
\[
    \Pr_{\blambda \sim \SWdens{n}{\rho}}[\blambda = \lambda] = \trace(I\otimes R^\rho_\lambda) = \dim(\lambda) \cdot \trace(R^\rho_\lambda).
\]}

As we will state shortly, performing weak Schur sampling is without loss of generality in the context of testing unitarily invariant properties.  To see why, suppose $\rho$ is a $d$-dimensional mixed state, and consider the product mixed state $\rho^{\otimes n}$.  Then it's not too hard to show (using invariance under $\srep$ and Schur's Lemma, see e.g.~\cite[equation~(6.1)]{Har05}) that when represented in the Schur basis, it has a ``trivial $\symm{n}$ register'':
\begin{fact}\label{fact:simple-decomp}
We may write
$\schurbasis \rho^{\otimes n} \schurbasis^\dagger =\sum_{\lambda \vdash n}|\lambda \rangle\langle \lambda|\otimes I \otimes R^\rho_\lambda$, for some matrices $R^\rho_\lambda$.  Here, for each $\lambda$ we interpret $I$ as the $\dim(\lambda) \times \dim(\lambda)$ identity matrix.
\end{fact}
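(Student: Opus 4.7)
The plan is to exploit the fact that $\rho^{\otimes n}$ is manifestly invariant under permutation of its tensor factors, together with Schur--Weyl duality and Schur's lemma, to pin down its structure in the Schur basis.

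First I would record the basic observation that $\srep(\pi) \rho^{\otimes n} \srep(\pi)^\dagger = \rho^{\otimes n}$ for every $\pi \in \symm{n}$, since permuting $n$ identical copies of $\rho$ leaves the tensor product unchanged. Equivalently, $\rho^{\otimes n}$ lies in the commutant of the representation $\srep$ of $\symm{n}$.

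Next I would translate this into the Schur basis. Setting $U = I$ in equation~\eqref{eq:schur-weyl} gives
\[
    \schurbasis \srep(\pi) \schurbasis^\dagger = \sum_{\lambda \vdash n} |\lambda\rangle\langle\lambda| \otimes \sirrep_\lambda(\pi) \otimes I,
\]
where the last $I$ is the $\dim(\uirrep_\lambda^d) \times \dim(\uirrep_\lambda^d)$ identity.  Writing $A \coloneqq \schurbasis \rho^{\otimes n} \schurbasis^\dagger$, the invariance from the first step says that $A$ commutes with all these block-diagonal matrices as $\pi$ ranges over $\symm{n}$.

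The final step is a two-stage application of Schur's lemma. Commutation with the block-diagonal action whose block indices are the $\lambda$'s forces $A$ itself to be block-diagonal in $\lambda$; that is, $A = \sum_\lambda |\lambda\rangle\langle\lambda| \otimes M_\lambda$ for some matrix $M_\lambda$ acting on the $\sirrep_\lambda \otimes \uirrep_\lambda^d$ factor. Within each block, $M_\lambda$ must commute with $\sirrep_\lambda(\pi) \otimes I$ for every $\pi \in \symm{n}$. Since $\sirrep_\lambda$ is an irrep of $\symm{n}$, Schur's lemma (applied to the first tensor factor) implies that $M_\lambda$ has the form $I \otimes R^\rho_\lambda$ for some matrix $R^\rho_\lambda$ of size $\dim(\uirrep_\lambda^d) \times \dim(\uirrep_\lambda^d)$. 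Assembling these pieces yields the claimed decomposition.

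There is no real obstacle here; the only subtlety is to be careful about which tensor factor carries the identity and which carries the nontrivial matrix $R^\rho_\lambda$. The distinction is exactly what makes weak Schur sampling discard the ``permutation register'' without loss of information, since the reduced density matrix on the $\sirrep_\lambda$ factor is maximally mixed and hence carries no information beyond the label $\lambda$ itself.
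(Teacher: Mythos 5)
Your proof is correct and follows essentially the same route as the paper's argument (which it leaves to~\cite{Har05}): use the permutation-invariance $\srep(\pi)\rho^{\otimes n}\srep(\pi)^\dagger = \rho^{\otimes n}$, the form $\schurbasis\srep(\pi)\schurbasis^\dagger = \sum_\lambda |\lambda\rangle\langle\lambda|\otimes \sirrep_\lambda(\pi)\otimes I$ obtained from~\eqref{eq:schur-weyl} with $U = I$, and Schur's lemma to kill the off-diagonal $\lambda$-blocks and force the within-block form $I \otimes R^\rho_\lambda$. Nothing is missing; the only elaboration one might add is that both Schur-lemma steps are carried out entrywise with respect to a basis of the $\unitary{d}$ multiplicity register, exactly as in the standard argument.
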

\ignore{
\begin{proof}
We may begin by writing
\begin{equation*}
\schurbasis \rho^{\otimes n} \schurbasis^\dagger =\sum_{\lambda, \mu \vdash n}|\lambda \rangle\langle \mu|\otimes R_{\lambda, \mu},
\end{equation*}
for some matrices $R_{\lambda,\mu}$.
Next, $\urep(I) = I$ and $\uirrep_\lambda^n(I) = I$ for all $\lambda$.
Thus, Equation~\eqref{eq:schur-weyl} implies that
\begin{equation*}
\schurbasis\srep(\pi)\schurbasis^\dagger
= \sum_{\lambda \vdash n} |\lambda \rangle\langle \lambda|\otimes \sirrep_\lambda(\pi)\otimes I,
\end{equation*}
where, for each $\lambda$, $I$ is the $\dim(\uirrep_\lambda) \times \dim(\uirrep_\lambda)$ identity matrix.
Because $\rho^{\otimes n}$, we know that $\srep(\pi) \rho^{\otimes n} \srep(\pi)^\dagger = \rho^{\otimes n}$ for all $\pi \in \symm{n}$.
As a result,
\begin{align*}
\sum_{\lambda, \mu \vdash n}|\lambda \rangle\langle \mu|\otimes R_{\lambda, \mu}
& = \schurbasis \rho^{\otimes n} \schurbasis^\dagger\\
& = \schurbasis \srep(\pi) \rho^{\otimes n} \srep(\pi)^\dagger \schurbasis^\dagger\\
& = \sum_{\lambda, \mu \vdash n}|\lambda \rangle\langle \mu|\otimes \left((\sirrep_\lambda(\pi)\otimes I)R_{\lambda, \mu}(\sirrep_\mu(\pi)^{\dagger}\otimes I)\right),
\end{align*}
for all $\pi \in \symm{n}$.
Thus, $R_{\lambda, \mu} = \left((\sirrep_\lambda(\pi)\otimes I)R_{\lambda, \mu}(\sirrep_\mu(\pi)^{\dagger}\otimes I)\right)$.
Writing $R_{\lambda,\mu} = \sum_{i, j} R_{\lambda,\mu}^{(i,j)}\otimes |i\rangle\langle j|$, this implies that
$R_{\lambda,\mu}^{(i,j)} = \sirrep_\lambda(\pi) R_{\lambda,\mu}^{(i,j)} \sirrep_\mu(\pi)^\dagger$, for each $i, j$.
As a result, it is a simple consequence of Schur's Lemma that $R_{\lambda,\mu}^{(i,j)} = 0$ unless $\lambda = \mu$, in which case $R_{\lambda, \mu}^{(i,j)}$ must be a scalar matrix.
This proves the fact.\end{proof}
}
As a consequence, it makes sense that a testing algorithm may discard the~$\symm{n}$ register.  Now in general, the ``$\unitary{d}$ register'' $R^\rho_\lambda$ of $\rho^{\otimes n}$ is not trivial, and thus it may seem like the tester is losing information by discarding it.  (Indeed, this potential loss is the source of the word ``weak'' in the phrase ``weak Schur sampling''.) However when testing unitarily invariant properties of~$\rho$, the state~$\rho^{\otimes n}$ should be treated no differently than the state $\urep(U)\rho^{\otimes n}\urep(U^\dagger) = (U\rho U^\dagger)^{\otimes n}$, for any $U \in \unitary{d}$. In particular, a tester could average over all unitaries~$U$, and this \emph{would} cause the resulting state to have trivial a $\unitary{d}$ register in the Schur basis.  This idea is formalized in the next lemma, which shows that weak Schur sampling is an optimal quantum measurement for the testing of unitarily invariant properties. The lemma, implicit in~\cite{CHW07}, can be found with proof in~\cite[Lemma~19]{MdW13}.
\begin{lemma}                                       \label{lem:wolog}
Let $\calP$ be a unitarily invariant property of $d$-dimensional mixed states.  Assume there exists a tester which uses $n$ copies of the input state $\rho$, accepts all states $\rho\in\calP$ with probability at least $1-\delta$, but accepts all states which are $\eps$-far from $\calP$ with probability at most $1-f(\eps)$ for $\eps>0$.  Then there exists a tester with the same parameters which consists of performing weak Schur sampling on $\rho^{\otimes n}$ and then classically postprocessing the results.
\end{lemma}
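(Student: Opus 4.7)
The plan is to show that for any tester meeting the stated parameters, we can symmetrize its ``accept'' POVM element over both $\unitary{d}$ and $\symm{n}$ without changing its performance, and then to use Schur--Weyl duality together with Schur's lemma to conclude that the symmetrized POVM element is diagonal in the Schur basis with constant weight on each $\lambda$-block.  The resulting diagonal coefficients $c_\lambda \in [0,1]$ then describe the classical postprocessing applied after weak Schur sampling.

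First, let $\{M, I-M\}$ be the two-outcome POVM corresponding to the given tester (``accept''/``reject''), so the hypotheses read as $\tr(M \rho^{\otimes n}) \geq 1-\delta$ for $\rho \in \calP$ and $\tr(M\rho^{\otimes n}) \leq 1-f(\eps)$ for $\rho$ that is $\eps$-far from $\calP$.  Define
\[
  M' = \E_{\bU \sim \unitary{d}}\bigl[\urep(\bU)\, M\, \urep(\bU)^\dagger\bigr],
  \qquad
  M'' = \E_{\bpi \sim \symm{n}}\bigl[\srep(\bpi)\, M'\, \srep(\bpi)^{-1}\bigr],
\]
where the $\bU$-average is with respect to Haar measure.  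Using the identity $\tr(\urep(\bU) M \urep(\bU)^\dagger \rho^{\otimes n}) = \tr(M\,(\bU^\dagger \rho \bU)^{\otimes n})$ and the fact that $\calP$ is unitarily invariant, the pointwise bounds $\tr(M (\bU^\dagger\rho\bU)^{\otimes n}) \geq 1-\delta$ (resp.\ $\leq 1-f(\eps)$) are preserved under the average; hence $M'$ yields a tester with the same completeness and soundness.  Similarly, because $\srep(\pi)\rho^{\otimes n}\srep(\pi)^{-1} = \rho^{\otimes n}$, averaging over $\symm{n}$ does not change the acceptance probability on any state of the form $\rho^{\otimes n}$.  Thus $M''$ still tests $\calP$ with the same parameters, and $0 \preceq M'' \preceq I$.

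By construction, $M''$ commutes with $\urep(U)$ for every $U \in \unitary{d}$ and with $\srep(\pi)$ for every $\pi \in \symm{n}$.  Applying the unitary $\schurbasis$ of~\eqref{eq:schur-weyl} we may regard $M''$ as acting on $\bigoplus_{\lambda \vdash n} \ket{\lambda}\bra{\lambda} \otimes \sirrep_\lambda \otimes \uirrep_\lambda^d$.  Since $M''$ commutes with $\srep\urep(\pi,U)$ for all $(\pi,U)$, and since the representations $\sirrep_\lambda \otimes \uirrep_\lambda^d$ are pairwise non-isomorphic irreps of $\symm{n}\times \unitary{d}$, Schur's lemma forces $M''$ to act as a scalar $c_\lambda \in \C$ on each $\lambda$-block:
\[
  \schurbasis\, M''\, \schurbasis^\dagger \;=\; \sum_{\lambda \vdash n} c_\lambda \cdot \ket{\lambda}\bra{\lambda}\otimes I_{\sirrep_\lambda}\otimes I_{\uirrep_\lambda^d}.
\]
The constraint $0 \preceq M'' \preceq I$ forces each $c_\lambda \in [0,1]$.

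Combining this with Fact~\ref{fact:simple-decomp}, the acceptance probability of the symmetrized tester on input $\rho^{\otimes n}$ is
\[
  \tr(M''\,\rho^{\otimes n}) \;=\; \sum_{\lambda \vdash n} c_\lambda \cdot \SWdensProb{n}{\rho}{\lambda} \;=\; \E_{\blambda \sim \SWdens{n}{\rho}}[\,c_{\blambda}\,].
\]
But this is exactly the acceptance probability of the following classical algorithm: perform weak Schur sampling on $\rho^{\otimes n}$ to obtain $\blambda$, then accept with probability $c_{\blambda}$.  Since this algorithm has the same acceptance probabilities as the symmetrized tester, it satisfies the same completeness and soundness parameters, completing the proof.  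The only subtle step is the application of Schur's lemma to both group actions simultaneously, which is handled cleanly by invoking Schur--Weyl duality to see that each isotypic component of $(\C^d)^{\otimes n}$ under $\symm{n}\times\unitary{d}$ is a single irrep of multiplicity one.
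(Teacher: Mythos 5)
Your proof is correct, and it is essentially the same argument the paper relies on: the paper does not prove Lemma~\ref{lem:wolog} itself but cites~\cite[Lemma~19]{MdW13}, whose proof (and the paper's own informal discussion just before the lemma about averaging over unitaries) proceeds exactly as you do, by twirling the accept operator over $\unitary{d}$ and $\symm{n}$ and invoking Schur--Weyl duality with Schur's lemma to reduce it to a convex combination of the projectors onto the $\lambda$-blocks. Your write-up fills in the details correctly, including the preservation of the completeness and soundness bounds under symmetrization and the interpretation of the scalars $c_\lambda \in [0,1]$ as randomized classical postprocessing of the weak Schur sampling outcome.
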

\noindent
As a result of this lemma, we are able to focus exclusively on the weak Schur sampling algorithm in this paper.  One final remark: Although our quantum spectrum testing upper bounds are formally only concerned with copy complexity, they can in fact also be implemented \emph{efficiently}, by (quantum) algorithms running in time $\poly(n, \log d, \log(1/\eps))$.  This holds because the only expensive operation is the computation of the Schur change-of-basis, and this can be done in $\poly(n, \log d, \log(1/\eps))$ time; see~\cite[Appendix~A]{BCH05}, \cite[Section~8.1.1]{Har05}.

\subsection{Understanding the weak Schur sampling distribution}\label{sec:understanding}

There are a several ways to understand the probability distribution induced by weak Schur sampling algorithm, each of which proves advantageous in different settings.  Let us begin with a direct calculation that expresses the probabilities in terms of the Schur polynomials.  The following known fact may be attributed to Alicki et al.~\cite{ARS88}; see~\cite[equation~(36)]{Aud06} for further discussion.  We will include a proof for the reader's convenience.
\begin{proposition}     \label{prop:schur-probability}
    Let $\rho$ be a $d$-dimensional density matrix with eigenvalues $\eta_1, \eta_2, \ldots, \eta_d$.  Then
    \begin{equation*}
        \SWdensProb{n}{\rho}{\lambda} 
            = \dim(\lambda)\cdot s_\lambda(\eta_1, \eta_2, \ldots, \eta_d).
    \end{equation*}
    In particular, $\SWdens{n}{\rho}$ depends only on the spectrum of~$\rho$.
\end{proposition}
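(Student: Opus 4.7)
My plan has two phases. The first is to reduce the claim to identifying $\trace(R^\rho_\lambda)$. By Fact \ref{fact:simple-decomp}, writing $\rho^{\otimes n}$ in the Schur basis yields the block form $\schurbasis \rho^{\otimes n} \schurbasis^\dagger = \sum_\mu |\mu\rangle\langle\mu| \otimes I_{\dim(\mu)} \otimes R^\rho_\mu$, so measuring the first register returns $\lambda$ with probability
\[
    \SWdensProb{n}{\rho}{\lambda} \;=\; \trace(I_{\dim(\lambda)} \otimes R^\rho_\lambda) \;=\; \dim(\lambda) \cdot \trace(R^\rho_\lambda).
\]
Thus the claim reduces to showing $\trace(R^\rho_\lambda) = s_\lambda(\eta_1, \ldots, \eta_d)$.

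To extract this trace, I would compute $\trace(\srep(\pi) \rho^{\otimes n})$ in two different ways for an arbitrary $\pi \in \symm{n}$. On one hand, a direct computation in the standard basis: since $\srep(\pi)$ permutes tensor factors according to $\pi^{-1}$, the standard-basis calculation forces the indices within each cycle of $\pi$ to coincide, so the trace factors over the cycles with each $k$-cycle contributing $\trace(\rho^k) = p_k(\eta)$. Thus $\trace(\srep(\pi)\rho^{\otimes n}) = p_\mu(\eta_1, \ldots, \eta_d)$, where $\mu$ is the cycle type of $\pi$. On the other hand, conjugating by $\schurbasis$ and combining Schur--Weyl duality with Fact \ref{fact:simple-decomp} gives
\[
    \schurbasis \srep(\pi) \rho^{\otimes n} \schurbasis^\dagger \;=\; \sum_{\lambda \vdash n} |\lambda\rangle\langle\lambda| \otimes \sirrep_\lambda(\pi) \otimes R^\rho_\lambda,
\]
whose trace is $\sum_\lambda \chi_\lambda(\mu) \cdot \trace(R^\rho_\lambda)$.

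Equating the two expressions yields the identity $p_\mu(\eta) = \sum_{\lambda \vdash n} \chi_\lambda(\mu) \cdot \trace(R^\rho_\lambda)$ for every $\mu \vdash n$. To finish, I would invoke Theorem \ref{thm:power-schur-relation}: Fourier inversion on $\symm{n}$ turns the relation $\four{p_{(\cdot)}(\eta)}(\mu) = s_\mu(\eta)$ into the classical Frobenius formula $p_\mu(\eta) = \sum_\lambda \chi_\lambda(\mu) s_\lambda(\eta)$. Subtracting this from the previous identity and using the linear independence of the irreducible characters $\{\chi_\lambda\}_{\lambda \vdash n}$ as class functions on $\symm{n}$ forces $\trace(R^\rho_\lambda) = s_\lambda(\eta)$ for every $\lambda$, completing the proof. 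The main (mild) obstacle is unpacking Fact \ref{fact:simple-decomp} correctly when $\srep(\pi)$ is inserted, so that the identity block on the $\symm{n}$-register is replaced by $\sirrep_\lambda(\pi)$ while the unitary-group block $R^\rho_\lambda$ is left untouched; once that bookkeeping is in hand the remainder is essentially the classical Frobenius character formula.
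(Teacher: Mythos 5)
Your proof is correct, and it reaches the formula by a slightly different (though closely related) route than the paper. The paper starts from $\SWdensProb{n}{\rho}{\lambda} = \trace(\Pi_\lambda \rho^{\otimes n})$ and invokes the isotypic projector formula $\Pi_\lambda = \dim(\lambda)\cdot \E_{\pi \sim \symm{n}}[\chi_\lambda(\pi)\,\srep(\pi)]$ (a consequence of the orthogonality relations), computes $\trace(\srep(\pi)\rho^{\otimes n}) = p_\pi(\eta)$, and finishes in one line with Theorem~\ref{thm:power-schur-relation}. You instead take Fact~\ref{fact:simple-decomp} as the starting point, expand the class function $\pi \mapsto \trace(\srep(\pi)\rho^{\otimes n})$ in irreducible characters via the Schur--Weyl block form, and identify $\trace(R^\rho_\lambda) = s_\lambda(\eta)$ by matching coefficients using linear independence of the $\chi_\lambda$. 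These are Fourier-dual presentations of the same computation: the paper extracts a single Fourier coefficient of $\pi \mapsto p_\pi(\eta)$ directly, while you write down its full character expansion and compare it with the Frobenius expansion coming from Theorem~\ref{thm:power-schur-relation}; both arguments hinge on exactly the same two ingredients, namely $\trace(\srep(\pi)\rho^{\otimes n}) = p_\pi(\eta)$ and Theorem~\ref{thm:power-schur-relation}. What your version buys is that it leans only on facts stated verbatim in the paper (Fact~\ref{fact:simple-decomp} plus character orthogonality), at the cost of the extra coefficient-matching step; the projector identity gets there more directly. One small point of care: your assertion that the standard-basis trace ``forces the indices within each cycle to coincide'' is literally true only after diagonalizing $\rho$, which you may do without loss of generality by unitary invariance (as the paper does); for general $\rho$ the indices need not coincide, but the trace still telescopes around each cycle to give a factor $\trace(\rho^k)$ per $k$-cycle, so the conclusion $\trace(\srep(\pi)\rho^{\otimes n}) = p_\pi(\eta)$ stands either way.
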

\begin{remark}                                  \label{rem:quantum-free}
    As this is the exact same formula as in Proposition~\ref{prop:rsk-correspondence}, we conclude that if $\calD$ is the probability distribution on~$[d]$ given by the spectrum of~$\rho$ (in any order), then
    \[
        \SWdensProb{n}{\rho}{\lambda} = \Pr_{\ba \sim \calD^{\otimes n}}[\rsk(\ba) = \lambda].
    \]
    This gives a completely ``quantum-free'' way of analyzing quantum spectrum testing, as mentioned in Fact~\ref{fact:quantum-classically}.  Nevertheless, we will actually use this fact only occasionally (mainly via Theorem~\ref{thm:greene}).  As we will see later, interpreting $\SWdens{n}{\rho}$ via representation theory proves to be more powerful.
\end{remark}
\begin{proof}[Proof of Proposition~\ref{prop:schur-probability}]
    By definition, $\SWdensProb{n}{\rho}{\lambda} = \tr(\Pi_{\lambda} \rho^{\otimes n})$, where $\Pi_\lambda$ denotes the operator that projects onto the subspace corresponding to $\lambda$ in the Schur basis.  It is a basic fact of representation theory (following from orthogonality relations, see e.g.~\cite[Equation~(7)]{CHW07}) that from the decomposition~\eqref{eqn:srep-decomposition} of~$\srep$ we may deduce
    \[
        \Pi_\lambda = \dim(\lambda) \cdot \E_{\bpi\sim\symm{n}}\left[\overline{\chi_{\sirrep_\lambda}(\bpi)} \cdot \srep(\bpi)\right] = \dim(\lambda) \cdot \E_{\bpi\sim\symm{n}}\left[\chi_{\lambda}(\bpi) \cdot \srep(\bpi)\right].
    \]
    Thus
    \[
        \SWdensProb{n}{\rho}{\lambda} = \dim(\lambda) \cdot \E_{\bpi\sim\symm{n}}\left[\chi_\lambda(\bpi) \cdot \tr(\srep(\bpi)\rho^{\otimes n})\right].
    \]
    To compute the trace, we may assume by unitary invariance that $\rho = \diag(\eta_1, \dots, \eta_d)$.  Thus
    \[
        \rho^{\otimes n} = \sum_{\substack{\text{words} \\ (a_1, \dots, a_n) \in [d]^n}} \left(\prod_{i=1}^n \eta_{a_i}\right) \ket{a_1,\dots, a_n}\bra{a_1,\dots, a_n}.
    \]
    Notice that if we let $\calD_\eta$ denote the probability distribution on~$[d]$ in which $\calD_\eta(a) = \eta_a$, then the coefficient $\prod_{i=1}^n \eta_{a_i}$ above is $\calD_\eta^{\otimes n}(a_1, \dots, a_n)$; i.e., the probability that a random length-$n$ word drawn i.i.d.\ from~$\calD_\eta$ is equal to $(a_1, \dots, a_n)$.  From the definition of~$\srep(\pi)$ we further deduce
    \[
        \srep(\pi) \rho^{\otimes n} = \sum_{(a_1, \dots, a_n)} \calD_\eta^{\otimes n}(a_1, \dots, a_n) \ket{a_{\pi^{-1}(1)}, \dots, a_{\pi^{-1}(n)}}\bra{a_1,\dots, a_n}.
    \]
    We immediately conclude that $\tr(\srep(\bpi)\rho^{\otimes n})$ is equal to the sum over all $\pi$-invariant words $(a_1, \dots, a_n)$ of~$\calD_\eta^{\otimes n}(a_1, \dots, a_n)$.  Recalling Definition~\ref{def:power-sum}, this is precisely given by the power sum polynomial $p_{\pi}(\eta_1, \dots, \eta_d)$.  Therefore
    \[
        \SWdensProb{n}{\rho}{\lambda} = \dim(\lambda) \cdot \E_{\bpi\sim\symm{n}}\left[\chi_\lambda(\bpi) \cdot p_{\pi}(\eta_1, \dots, \eta_d)\right],
    \]
    and the proposition now follows from Theorem~\ref{thm:power-schur-relation}.
\end{proof}

\ignore{

\subsubsection{Schur functions}
One way of computing $\Pr[\lambda\mid\rho^{\otimes n}]$
is by using a family of symmetric polynomials called the \emph{Schur polynomials}.
For each partition $\lambda =(\lambda_1,\ldots, \lambda_k)$, $s_\lambda(x_1, \ldots, x_k)$ is a symmetric polynomial in the $x_i$'s;
together these polynomials form a basis for the space of all $k$-variable symmetric polynomials.
There are many equivalent ways of defining these polynomials, one of which is by the following somewhat opaque determinantal formula:
\begin{definition}
Let $\lambda = (\lambda_1, \lambda_2, \ldots, \lambda_k)$ be a partition.  Then
\begin{equation*}
s_\lambda(x_1, \ldots, x_k) \coloneqq  \frac{\det[x_i^{\lambda_j + k -j} ]}{\det[x_i^{k-j}]}.
\end{equation*}
\end{definition}
\noindent
A second expression for the Schur polynomials involves the \emph{power sum symmetric polynomials} $p_\mu$.
Given input $x_1, \ldots, x_n$, the $i$-th power sum symmetric polynomial outputs $p_i(x) = \sum_j x_j^i$.
For a partition $\mu=(\mu_1, \mu_2, \ldots, \mu_m)$, we define the polynomial $p_\mu\coloneqq p_{\mu_1}\cdot p_{\mu_2}\cdots p_{\mu_m}$,
and for a permutation $\pi$, we define $p_\pi\coloneqq p_\mu$, where $\mu$ is the cycle type of $\pi$.  Then we can write the Schur polynomial as follows:
\begin{fact}\label{fact:looks-like-projector-formula}
Let $\lambda = (\lambda_1, \lambda_2, \ldots, \lambda_k)$ be a partition of~$n$. Then\jnote{should be moved to some other section, preferably whenever we introduce the related formula for $s^*$ polynomials (that's the only place that we use this)}
\begin{equation*}
s_\lambda(x_1, \ldots, x_k) = \E_{\pi \in \symm{n}}\left[\chi_\lambda(\pi)\cdot p_\pi(x)\right].
\end{equation*}
\end{fact}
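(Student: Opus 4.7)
The plan is to bypass the Frobenius identity of Theorem~\ref{thm:power-schur-relation} entirely, and instead derive the formula directly from Schur--Weyl duality, using the characterization of Schur polynomials as characters of polynomial irreducible representations of $GL_d$.

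By definition $\SWdensProb{n}{\rho}{\lambda} = \tr(\Pi_\lambda \rho^{\otimes n})$, where $\Pi_\lambda$ is the projector onto the $\sirrep_\lambda \otimes \uirrep_\lambda^d$ summand of~\eqref{eq:schur-weyl} in the Schur basis. Since $\rho^{\otimes n}$ commutes with every $\srep(\pi)$, Schur's lemma forces $\rho^{\otimes n}$ to act on this summand as $I_{\dim(\lambda)} \otimes M_\lambda(\rho)$ for some operator $M_\lambda(\rho)$ on the multiplicity space. Therefore
\[
\SWdensProb{n}{\rho}{\lambda} = \dim(\lambda)\cdot \tr M_\lambda(\rho),
\]
and the proposition reduces to proving $\tr M_\lambda(\rho) = s_\lambda(\eta_1,\ldots,\eta_d)$.

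Next, I would observe that the assignment $A\mapsto M_\lambda(A)$ makes sense for \emph{every} matrix $A\in\C^{d\times d}$, not just density matrices: the same Schur's-lemma argument applies to $A^{\otimes n}$ since it still commutes with $\srep(\pi)$. Each matrix entry of $M_\lambda(A)$ is a polynomial of degree $n$ in the entries of $A$, and when $A=U$ is unitary the map coincides with $\uirrep_\lambda^d(U)$ by Schur--Weyl. Thus $M_\lambda$ is the unique polynomial extension of $\uirrep_\lambda^d$ to $\C^{d\times d}$; in particular it is multiplicative ($M_\lambda(AB)=M_\lambda(A)M_\lambda(B)$), so $\tr M_\lambda(A)$ depends only on the similarity class of $A$ and is hence a symmetric polynomial in its eigenvalues.

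To identify this polynomial as $s_\lambda$, specialize to $\Lambda = \diag(\eta_1,\ldots,\eta_d)$: then $\Lambda^{\otimes n}$ is diagonal in the computational basis, with entry $\prod_i \eta_{a_i}$ on $\ket{a_1,\ldots,a_n}$. The Schur basis is naturally indexed by triples $(\lambda,P,T)$ where $P$ is a standard tableau of shape $\lambda$ (indexing the $\sirrep_\lambda$ factor) and $T$ is a semistandard tableau of shape $\lambda$ over alphabet $[d]$ (indexing the $\uirrep_\lambda^d$ factor); with respect to this basis $M_\lambda(\Lambda)$ is diagonal with eigenvalue $\prod_i \eta_i^{\#\{i \text{ in } T\}}$ on the basis vector labeled by $T$. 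Summing over all such $T$ gives $\tr M_\lambda(\Lambda) = \sum_T \eta^T = s_\lambda(\eta_1,\ldots,\eta_d)$ by the very definition of the Schur polynomial. Since $\rho$ is unitarily similar to $\diag(\eta_1,\ldots,\eta_d)$, this proves the formula; the ``in particular'' statement is then immediate from the symmetry of $s_\lambda$.

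The main obstacle is the last identification --- that $\uirrep_\lambda^d$, when applied to a diagonal matrix, decomposes into weight spaces indexed by semistandard tableaux with weights given by the corresponding monomials. This is the standard highest-weight description of the polynomial irreps of $GL_d$ and may be taken essentially as a definition of $\uirrep_\lambda^d$; alternatively, one can verify it directly on the symmetric and antisymmetric tensor powers (which correspond to $\lambda=(n)$ and $\lambda=(1^n)$) and extend to general $\lambda$ via Young symmetrizers. Avoiding this detour is precisely what the original character-theoretic proof via Theorem~\ref{thm:power-schur-relation} buys.
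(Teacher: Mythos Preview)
You have proved the wrong statement. What you establish is Proposition~\ref{prop:schur-probability}, namely
\[
\SWdensProb{n}{\rho}{\lambda}=\dim(\lambda)\cdot s_\lambda(\eta_1,\ldots,\eta_d),
\]
together with its ``in particular'' clause about dependence only on the spectrum. (Indeed your closing sentence refers to an ``in particular'' clause, which the Fact does not have but Proposition~\ref{prop:schur-probability} does.) The Fact you were asked to prove is the Frobenius character formula
\[
s_\lambda(x)=\E_{\pi\in\symm{n}}\bigl[\chi_\lambda(\pi)\,p_\pi(x)\bigr],
\]
a purely algebraic identity between symmetric polynomials and characters of~$\symm{n}$; no density matrix, no weak Schur sampling. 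This identity is in fact exactly Theorem~\ref{thm:power-schur-relation}, so your opening line ``bypass the Frobenius identity of Theorem~\ref{thm:power-schur-relation}'' amounts to bypassing the very thing you are supposed to prove. The paper does not prove the Fact either; it is quoted with a citation (and Theorem~\ref{thm:power-schur-relation} is likewise stated following Stanley).

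Your argument can be completed into a genuine proof of the Fact, but an extra step is needed. Keep your computation $\tr(\Pi_\lambda A^{\otimes n})=\dim(\lambda)\,s_\lambda(\text{eig}(A))$ via the $GL_d$ weight-space description. Now \emph{also} compute the same trace the way the paper does in the proof of Proposition~\ref{prop:schur-probability}: use the projector formula $\Pi_\lambda=\dim(\lambda)\,\E_\pi[\chi_\lambda(\pi)\srep(\pi)]$ and the elementary identity $\tr(\srep(\pi)A^{\otimes n})=p_\pi(\text{eig}(A))$. Equating the two expressions gives $s_\lambda(\eta)=\E_\pi[\chi_\lambda(\pi)p_\pi(\eta)]$ for every $d$-tuple of eigenvalues, hence for all~$x$ since both sides are polynomials. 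Without this second half your write-up does not touch $\chi_\lambda$ or $p_\pi$ at all, and so cannot yield the stated identity.
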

\noindent
A proof of this fact can be found in~\cite[Equation~(38)]{Aud06}.

Our main use of Schur functions comes from the following fact.
\begin{fact}\label{fact:schur-probability}
Let $\rho$ have eigenvalues $\eta_1, \eta_2, \ldots, \eta_d$.  Then\rnote{Note: it now follows from Proposition~\ref{prop:s111} that if all $\eta_i$'s are $1/d$ then the probability is $\frac{\dim(\lambda)^2}{k!} \frac{\rising{d}{\lambda}}{d^{|\lambda|}}$. This is precisely Definition~\ref{def:SW-dist} below.}\rnote{In the first para of Section II of Audenaart he says it was \emph{rediscovered}(!) by Alicki et al.  Perhaps he counts Frobenius as the original discoverer???!}
\begin{equation*}
\Pr[\lambda \mid \rho^{\otimes n}] = \dim(\lambda)\cdot s_\lambda(\eta_1, \eta_2, \ldots, \eta_d).
\end{equation*}
\end{fact}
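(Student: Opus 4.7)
The plan is to compute the weak Schur sampling probability directly via the trace of the appropriate projector, then unpack it using characters and finally identify the result with a Schur polynomial via Theorem~\ref{thm:power-schur-relation}.

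First, I would observe that by the definition of the weak Schur sampling algorithm together with the Schur--Weyl decomposition, the probability $\SWdensProb{n}{\rho}{\lambda}$ equals $\tr(\Pi_\lambda \rho^{\otimes n})$, where $\Pi_\lambda$ is the orthogonal projector onto the $\lambda$-isotypic subspace of $(\C^d)^{\otimes n}$ (i.e.\ the block $\ket\lambda\bra\lambda\otimes I\otimes I$ in the Schur basis). The key representation-theoretic fact I would invoke is the standard isotypic-projector formula: for the action $\srep$ of $\symm{n}$,
\[
\Pi_\lambda = \dim(\lambda)\cdot \E_{\bpi\sim\symm{n}}\bigl[\chi_\lambda(\bpi)\,\srep(\bpi)\bigr],
\]
using that $\chi_\lambda$ is real-valued. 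Substituting gives
\[
\SWdensProb{n}{\rho}{\lambda} = \dim(\lambda)\cdot\E_{\bpi\sim\symm{n}}\bigl[\chi_\lambda(\bpi)\cdot\tr(\srep(\bpi)\rho^{\otimes n})\bigr].
\]

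The next step is to evaluate $\tr(\srep(\pi)\rho^{\otimes n})$ for a fixed $\pi$. Since the quantity I am computing is unitarily invariant in $\rho$, I can assume $\rho=\diag(\eta_1,\dots,\eta_d)$. Expanding $\rho^{\otimes n}$ in the computational basis and applying the definition of $\srep(\pi)$ as a coordinate permutation, I find that the only surviving diagonal contributions come from $\pi$-invariant words $(a_1,\dots,a_n)\in[d]^n$, giving
\[
\tr(\srep(\pi)\rho^{\otimes n}) = \sum_{\substack{(a_1,\dots,a_n)\\a_i = a_{\pi(i)}\ \forall i}}\prod_{i=1}^n \eta_{a_i}.
\]
A word is $\pi$-invariant iff it is constant on each cycle of $\pi$, so this sum factors cycle-by-cycle and equals $\prod_c p_{|c|}(\eta) = p_\pi(\eta_1,\dots,\eta_d)$, exactly the power sum polynomial of Definition~\ref{def:power-sum}. (This is also immediate from the probabilistic interpretation noted right after that definition.)

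Combining the two computations,
\[
\SWdensProb{n}{\rho}{\lambda} = \dim(\lambda)\cdot\E_{\bpi\sim\symm{n}}\bigl[\chi_\lambda(\bpi)\cdot p_\pi(\eta)\bigr] = \dim(\lambda)\cdot\langle p_{(\cdot)}(\eta),\chi_\lambda\rangle,
\]
and Theorem~\ref{thm:power-schur-relation} identifies the inner product as $s_\lambda(\eta_1,\dots,\eta_d)$, completing the proof. The only mildly nontrivial step is the isotypic-projector formula, which is standard; everything else is a bookkeeping calculation. As a sanity check, summing over $\lambda\vdash n$ should give $1$, which follows from $\sum_\lambda \Pi_\lambda = I$ and $\tr(\rho^{\otimes n})=1$, and this is also visible from the identity $\sum_\lambda \dim(\lambda) s_\lambda(\eta) = (\eta_1+\cdots+\eta_d)^n = 1$.
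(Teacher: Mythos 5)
Your proposal is correct and follows essentially the same route as the paper's own proof: expressing the probability as $\tr(\Pi_\lambda \rho^{\otimes n})$, applying the isotypic projector formula $\Pi_\lambda = \dim(\lambda)\cdot\E_{\bpi\sim\symm{n}}[\chi_\lambda(\bpi)\srep(\bpi)]$, identifying $\tr(\srep(\pi)\rho^{\otimes n})$ with the power sum $p_\pi(\eta)$ via $\pi$-invariant words, and finishing with Theorem~\ref{thm:power-schur-relation}. No gaps to report.
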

\noindent
A proof of this fact can be found in~\cite[Equation~(36)]{Aud06}.
}

For the purposes of the testing lower bounds in this paper, the case of greatest interest to us is when $\rho = \frac{1}{d} I_{d \times d}$ is the \emph{maximally mixed $d$-dimensional state}; i.e., the spectrum of~$\rho$ is the uniform distribution $\unif{d} = (\frac1d, \dots, \frac1d)$.  This is also by far the most well-studied case in the literature:
\begin{definition}\label{def:SW-dist}
    The \emph{Schur--Weyl} distribution with parameters $n$ and $d$, which we denote $\SWunif{n}{d}$, is the distribution on partitions $\lambda \vdash n$ of length at most~$d$ given by $\SWdens{n}{\rho}$ in the case that~$\rho$ is the maximally mixed state of dimension~$d$.  Equivalently, it is the distribution of $\rsk(\ba)$, where $\ba \sim [d]^n$ is uniformly random.
\end{definition}
Combining Proposition~\ref{prop:schur-probability} and Proposition~\ref{prop:s111}, together with the homogeneity of the Schur polynomials, we obtain the following known formula (cf.~\cite[equation~(26)]{CHW07}):
\begin{proposition}         \label{prop:SW-dist}
$\displaystyle \SWunifProb{n}{d}{\lambda} = \frac{(\dim \lambda)^2}{n!}\cdot \frac{\rising{d}{\lambda}}{d^n}.$
\end{proposition}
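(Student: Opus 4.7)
The plan is to directly apply the two cited results together with the observation that $s_\lambda$ is a homogeneous polynomial. First I would invoke Proposition~\ref{prop:schur-probability} in the special case where the spectrum is the uniform distribution $\unif{d} = (1/d, \ldots, 1/d)$, giving
\[
\SWunifProb{n}{d}{\lambda} = \dim(\lambda)\cdot s_\lambda\mleft(\tfrac1d, \ldots, \tfrac1d\mright).
\]
Next, since $s_\lambda$ is homogeneous of degree $|\lambda| = n$ (immediate from the definition $s_\lambda = \sum_T x^T$, as each monomial $x^T$ has total degree equal to the number of cells of $\lambda$), I may pull the factor $1/d$ out of each variable to obtain
\[
s_\lambda\mleft(\tfrac1d, \ldots, \tfrac1d\mright) = \frac{1}{d^n}\cdot s_\lambda(\underbrace{1, \ldots, 1}_{d\ \text{entries}}).
\]
Finally, substituting the evaluation from Proposition~\ref{prop:s111}, namely $s_\lambda(1, \ldots, 1) = (\dim \lambda)\rising{d}{\lambda}/n!$, and combining yields
\[
\SWunifProb{n}{d}{\lambda} = \dim(\lambda)\cdot \frac{1}{d^n}\cdot \frac{(\dim \lambda)\rising{d}{\lambda}}{n!} = \frac{(\dim \lambda)^2}{n!}\cdot \frac{\rising{d}{\lambda}}{d^n},
\]
as desired. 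There is no real obstacle here: the entire content of the proposition is the three-line chain of substitutions above, and the only thing worth remarking on is that when $\ell(\lambda) > d$ both sides vanish (the left side because $\SWunif{n}{d}$ is supported on partitions of length at most~$d$, the right side because $\rising{d}{\lambda}$ contains a factor $(d + c(\square)) = 0$ for the cell $\square$ in row $d+1$, column $1$, or alternatively because $s_\lambda \equiv 0$ in this regime by Proposition~\ref{prop:identically-zero}).
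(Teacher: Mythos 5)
Your proof is correct and matches the paper's argument exactly: the paper derives Proposition~\ref{prop:SW-dist} by combining Proposition~\ref{prop:schur-probability} with Proposition~\ref{prop:s111} via the homogeneity of the Schur polynomials, precisely the three-step substitution you wrote out. Your closing remark about the $\ell(\lambda) > d$ case is a harmless extra observation consistent with Proposition~\ref{prop:identically-zero}.
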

Notice that if $n$ is held fixed and $d \to \infty$, the fraction $\frac{\rising{d}{\lambda}}{d^n}$ tends to~$1$ and we obtain the Plancherel distribution (for~$\symm{n}$) on partitions described in Definition~\ref{def:group-plancherel}.  This recovers the well-known fact that the Plancherel distribution is obtained by running the RSK algorithm on a uniformly random permutation (equivalently, a uniformly random word from $[0,1]^n$). We will write $\Planch{n}$ for this distribution.

\begin{remark}\label{rem:no-rsk-correspondence}
It is easy to see that $\SWunifProb{n}{d}{\lambda} = \frac{1}{d^n}\cdot\dim(\sirrep_\lambda) \cdot \dim(\uirrep_\lambda^d)$.
From Remark~\ref{rem:quantum-free}, we see that there are $\dim(\sirrep_\lambda) \cdot \dim(\uirrep_\lambda^d)$ words $a \in [d]^n$
such that $\rsk(a) = \lambda$.
\end{remark}

\ignore{

\subsubsection{Explicit formulas}
\rnote{get the plancherel discussion into here}
Finally, for some particular choices of the density matrix $\rho$,
it turns out that we have relatively nice explicit formulas for $\Pr\left[\lambda\mid\rho^{\otimes n}\right]$.
First, suppose $\rho = I_d/d$, where $I_d$ is the $d \times d$ identity matrix.  Then $\rho^{\otimes n}$ is $I_{d^n}/d^n$, as is $U \rho^{\otimes n} U^\dagger$.
This means that a measurement of $\rho^{\otimes n}$ in the Schur basis, as done in the first step of weak Schur sampling, will produce a uniformly random one of the $d^n$ Schur basis vectors.
As there are $\dim(\sirrep_\lambda)\cdot\dim(\uirrep_\lambda^d)$ Schur basis vectors labeled by $\lambda$, we have
\begin{equation}
\Pr\left[\lambda\mid \rho^{\otimes k}\right] = \frac{\dim(\sirrep_\lambda)\cdot\dim(\uirrep_\lambda^d)}{d^k}.\label{eq:uniform-distribution}
\end{equation}
More broadly, Fact~\ref{fact:only-nonzero-spectrum} showed that the output distribution of weak Schur sampling is dependent only on the nonzero spectrum of a density matrix. Thus, Equation~\eqref{eq:uniform-distribution} holds for \emph{any} mixed state~$\rho$ satisfying $\rho^2 = \rho/d$.

XXX gives an extremely convenient expression for the dimension of the irreducible representations of $\unitary{d}$:
\begin{theorem}
$\dim(\uirrep_\lambda^d) = \frac{d^n}{n!}\prod_{(i, j)\in \lambda}\left(1 + \frac{j-i}{d}\right)$.
\end{theorem}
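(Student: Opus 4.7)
The plan is to derive this dimension formula by combining Schur--Weyl duality with the explicit Schur--Weyl probability formula already established in the paper. From Remark~\ref{rem:no-rsk-correspondence}, obtained by applying Schur--Weyl duality to the maximally mixed state $\frac{1}{d^n}I_{d^n}$ (which is uniform over the $d^n$ Schur basis vectors), we have
\[
\SWunifProb{n}{d}{\lambda} = \frac{\dim(\sirrep_\lambda)\,\dim(\uirrep_\lambda^d)}{d^n},
\]
while Proposition~\ref{prop:SW-dist} provides the alternative explicit formula $\SWunifProb{n}{d}{\lambda} = \frac{(\dim\lambda)^2}{n!}\cdot\frac{\rising{d}{\lambda}}{d^n}$. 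Equating the two expressions and using $\dim(\sirrep_\lambda) = \dim(\lambda)$ immediately solves for
\[
\dim(\uirrep_\lambda^d) = \frac{(\dim\lambda)\,\rising{d}{\lambda}}{n!}.
\]

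The remaining step is purely algebraic: unpacking $\rising{d}{\lambda} = \prod_{\square \in [\lambda]}(d + c(\square))$ from its definition in Section~\ref{sec:partitions} and factoring $d$ out of each of the $n$ terms gives
\[
\rising{d}{\lambda} = d^n\prod_{(i,j)\in \lambda}\left(1 + \frac{j-i}{d}\right),
\]
which produces the claimed product form $\frac{d^n}{n!}\prod_{(i,j)\in\lambda}\bigl(1 + \frac{j-i}{d}\bigr)$ (the derivation picks up an additional $\dim(\lambda)$ factor that I believe should appear in the final statement, since the hook-content formula gives $\dim(\uirrep_\lambda^d) = s_\lambda(1, \dots, 1)$ via Proposition~\ref{prop:s111}).

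An alternative self-contained derivation, bypassing the quantum machinery, starts from Weyl's character formula for $\unitary{d}$, specializes via a L'Hopital-style limit at the identity to obtain $\dim(\uirrep_\lambda^d) = \prod_{1 \leq i<j\leq d}\frac{\lambda_i - \lambda_j + j - i}{j - i}$, and then reorganizes this product into the ``hook-content'' form $\prod_{\square \in [\lambda]}\frac{d + c(\square)}{h(\square)}$, followed by the Hook-Length Formula (Definition~\ref{def:hook-formula}) to absorb $\prod_{\square} h(\square) = n!/\dim(\lambda)$. The main ``obstacle'' is essentially bookkeeping rather than any genuine representation-theoretic difficulty: all the necessary structural results---Schur--Weyl duality, the Schur polynomial identity of Proposition~\ref{prop:s111}, and the Hook-Length Formula---are already developed in the preliminaries, and the theorem amounts to an invertible consequence of them.
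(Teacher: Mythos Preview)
Your derivation is correct, and you are right that the stated formula is missing a factor of $\dim(\lambda)$: the hook-content formula gives $\dim(\uirrep_\lambda^d) = s_\lambda(1,\dots,1) = \frac{(\dim\lambda)\,\rising{d}{\lambda}}{n!}$, exactly as you obtained by equating Remark~\ref{rem:no-rsk-correspondence} with Proposition~\ref{prop:SW-dist}. (Indeed, without that factor the subsequent claim in the same passage---that plugging into~\eqref{eq:uniform-distribution} yields probability $\frac{\dim(\lambda)^2}{n!}\prod_{(i,j)\in\lambda}(1 + \frac{j-i}{d})$---would not follow.)

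As for comparison with the paper: this theorem appears only in an excised draft passage, where it is not proved but merely cited to an unspecified reference (``XXX''). In the final version the logic is reorganized exactly along the lines of your argument: Proposition~\ref{prop:SW-dist} is derived independently of $\dim(\uirrep_\lambda^d)$ via Propositions~\ref{prop:schur-probability} and~\ref{prop:s111}, and then Remark~\ref{rem:no-rsk-correspondence} recovers the dimension count $\dim(\sirrep_\lambda)\dim(\uirrep_\lambda^d)$ by equating with the direct Schur-basis probability $\frac{1}{d^n}\dim(\sirrep_\lambda)\dim(\uirrep_\lambda^d)$. So your route is the one the paper ultimately adopted, and there is no circularity since neither Proposition~\ref{prop:schur-probability} nor Proposition~\ref{prop:s111} depends on the unitary-group dimension.
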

\noindent
Plugging this in to Equation~\eqref{eq:uniform-distribution} (and using $\dim(\sirrep_\lambda) = \dim(\lambda)$), we arrive at the main probability distribution this work will study:
\begin{definition}\label{def:SW-dist} \rnote{See an rnote from Prop~\ref{prop:schur-probability}}
The \emph{Schur--Weyl} distribution, denoted $\SWunif{n}{d}$, outputs the partition $\lambda$ with probability
\begin{equation*}
\frac{\dim(\lambda)^2}{n!}\prod_{(i, j) \in \lambda}\left(1 + \frac{j-i}{d}\right).
\end{equation*}
Given a state $\rho$ satisfying $\rho^2 = \rho/d$, this distribution is the same as the distribution of the weak Schur sampling algorithm on the state $\rho^{\otimes k}$.
\end{definition}
\noindent
We note that the first term in this product ($\dim(\lambda)^2/n!$) is exactly the Plancherel distribution on partitions as defined in Definition~\ref{def:group-plancherel}.
Thus, as $d\rightarrow \infty$, $\SWunif{n}{d}$ approaches the Plancherel distribution.
}

\subsection{Asymptotic theory of the symmetric group}

For small $n$, the exact distribution on partitions of $n$ given by the Plancherel or Schur--Weyl distributions is not particularly easy to understand.  As a result, a significant body of work has been devoted to showing asymptotic properties of these distributions as $n$ grows large.

Let us focus first on the Plancherel measure.
Perhaps the most basic thing one could ask for is the ``typical'' width and height of a diagram drawn from this distribution.
Though either of these values could be as large as $n$, Hammersly~\cite{Ham72} showed that both values tend to concentrate around the same number $c\cdot \sqrt{n}$,
for some constant $c$ (later determined to be $c = 2$~\cite{LS77,VK77}).
Therefore, in order to put partitions of different values of~$n$ on equal footing, we can define scaled partitions as follows:
\begin{definition}
Let $\lambda \vdash n$ and recall Definition~\ref{def:russian-function}.  Then $\overline{\lambda}:\R\rightarrow\R_+$ is defined as $\overline{\lambda}(x)\coloneqq \lambda(\sqrt{n}\cdot x)/\sqrt{n}$, for all $x$.
\end{definition}

Logan and Shepp~\cite{LS77} and Vershik and Kerov~\cite{VK77} independently proved the so-called ``law of large numbers'' for the Plancherel distribution, showing that when $\blambda \sim \Planch{n}$ and $n \rightarrow \infty$, the function $\overline{\blambda}$ converges to $\Omega(x)$, the curve defined as
\begin{equation*}
\Omega(x) \coloneqq  \left\{
\begin{array}{ll}
\frac{2}{\pi}(x \arcsin\frac{x}{2} + \sqrt{4-x^2}), & |x|\leq 2,\\
|x| & |x|\geq 2.
\end{array} \right.
\end{equation*}
This ``ice cream cone''-shaped function is pictured in Figure~\ref{fig:biane} ($c = 0$ case).
Though this curve is a limiting shape rather than the Russian notation of any Young diagram,
it is useful to think of it as a continual analogue of a Young diagram, as per the following definition.
\begin{definition}
A \emph{continual diagram} is a function $f:\R\rightarrow \R$ satisfying (i) $f$ is $1$-Lipschitz and (ii) $f(x) = |x|$ when $|x|$ is sufficiently large.
\end{definition}
\noindent
This definition originates in the paper of~\cite{Ker93b}.

More recently, Kerov~\cite{Ker93a} showed a ``central limit theorem'' for the Plancherel measure,
characterizing the deviation of a random Young diagram from the curve $\Omega(x)$ by a certain Gaussian process.
A second proof of this result, also by Kerov, was given in the paper of Ivanov and Olshanski~\cite{IO02}.
Much of our work is based on the techniques of this paper.

Subsequent studies revealed that a similar state of affairs exists for the Schur--Weyl $\SWunif{n}{d}$ distribution, though in this case the features of a ``typical'' $\blambda \sim \SWunif{n}{d}$ depend on the ratio $c\coloneqq  \frac{\sqrt{n}}{d}$. Biane~\cite{Bia01} extended the Plancherel law of large numbers to the Schur--Weyl distribution in the case when~$c$ is a fixed constant and $n, d \to \infty$. In this case, for a random $\blambda \sim \SWunif{n}{d}$, the function $\overline{\blambda}$ will approach a certain limiting curve $\Omega_c$, specified as follows:
\begin{theorem}[\cite{Bia01}]\label{thm:biane}
Fix an absolute constant $c > 0$ and assume $n, d \to \infty$ with $\frac{\sqrt{n}}{d}\rightarrow c$. Then
\begin{equation*}
\Pr_{\blambda \sim \SWunif{n}{d}}\left[\Vert \overline{\blambda} - \Omega_c\Vert_{\infty} \geq \eps\right] \rightarrow 0,
\end{equation*}
where $\Omega_c$ is the continual diagram defined as follows:
\begin{align*}
&\Omega_0(x) = \Omega(x);\\
&\Omega_{c \in (0, 1)}(x)
	=\left\{
		\begin{array}{ll}
			\frac{2}{\pi}\left(x \arcsin(\frac{x+c}{2\sqrt{1+cx}})
				+\frac{1}{c}\arccos(\frac{2+cx - c^2}{2\sqrt{1+cx}})
				+\frac{\sqrt{4-(x-c)^2}}{2}\right) & \text{if $|x-c| \leq 2$},\\
			|x| & \text{otherwise};
		\end{array}\right.\\
&\Omega_{c =1}(x)
	=\left\{
		\begin{array}{ll}
			\frac{x+1}{2} +\frac{1}{\pi}\left((x-1)\arcsin(\frac{x-1}{2})
				+\sqrt{4-(x-1)^2}\right) & \text{if $|x-1| \leq 2$},\\
			|x| & \text{otherwise};
		\end{array}\right.\\
&\Omega_{c >1}(x)
	=\left\{
		\begin{array}{ll}
			x+\frac{2}{c}&\text{if $x \in (\frac{-1}{c},c-2)$}\\
			\frac{2}{\pi}\left(x \arcsin(\frac{x+c}{2\sqrt{1+cx}})
				+\frac{1}{c}\arccos(\frac{2+cx - c^2}{2\sqrt{1+cx}})
				+\frac{\sqrt{4-(x-c)^2}}{2}\right) & \text{if $|x-c| \leq 2$},\\
			|x| & \text{otherwise}.
		\end{array}\right.
\end{align*}
\end{theorem}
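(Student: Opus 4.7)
The plan is to use the method of moments, applied not directly to the diagram but to its associated \emph{transition measure}, following the strategy of Kerov in the Plancherel case and extended by Biane to the Schur--Weyl setting. To any continual diagram $f$ one associates a probability measure $\tau_f$ on $\R$ (the Markov--Krein correspondence); the $k$th moment of $\tau_f$ is determined by $f$, and conversely convergence of moments combined with uniform support control implies $L^{\infty}$ convergence of the underlying continual diagrams. Thus it suffices to show that, under the scaling $\overline{\blambda}(x) = \blambda(\sqrt{n}\,x)/\sqrt{n}$, the transition measure of $\overline{\blambda}$ converges in moments (in probability) to $\tau_{\Omega_c}$.

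To compute these moments, I would work inside Kerov's algebra of observables~$\Lambda^*$, using the identification from the excerpt of the moments of $\tau_f$ with the scaled $\pstar{k}$-polynomials. The key lever is the family $\psharp{\mu}$, because as the excerpt notes, $\E_{\blambda \sim \SWdens{n}{\rho}}[\psharp{\mu}(\blambda)]$ admits an explicit closed form for any $\rho$. Specializing to $\rho = \frac{1}{d}I$, the expectation $\E_{\blambda \sim \SWunif{n}{d}}[\psharp{k}(\blambda)]$ reduces to a product of falling factorials in $n$ and rising factorials in $d$, which I can expand asymptotically as $n,d\to\infty$ with $\sqrt{n}/d\to c$. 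Re-expressing $\pstar{k}$ in the $\psharp{\mu}$ basis (via the transition matrix in~$\Lambda^*$) then yields
\[
\frac{1}{n^{k/2}}\,\E_{\blambda \sim \SWunif{n}{d}}[\pstar{k}(\blambda)] \;\longrightarrow\; M_k(c),
\]
for explicit limits $M_k(c)$; by the method of moments these $M_k(c)$ determine a unique compactly supported measure $\tau_c$ on~$\R$.

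Next, I would prove concentration by showing $\mathrm{Var}(n^{-k/2}\pstar{k}(\blambda)) \to 0$. This reduces to computing $\E[\psharp{\mu}\psharp{\nu}]$ or, equivalently, $\E[\psharp{\mu\sqcup\nu}]$ plus lower-order corrections, using the multiplicative structure of $\Lambda^*$; the leading terms in the two expectations cancel, leaving a lower-order remainder. A union bound over finitely many~$k$ (enough to determine the transition measure up to the desired accuracy via standard moment-to-Kolmogorov-distance estimates) then gives convergence in probability of $\tau_{\overline{\blambda}}$ to $\tau_c$ in moments, hence uniform convergence $\overline{\blambda}\to\Omega_c$ once I identify $\Omega_c$ as the continual diagram whose transition measure is $\tau_c$.

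The main obstacle is the identification of $\Omega_c$ and the case split at $c = 1$. The moments $M_k(c)$ will factor as the moments of a free convolution: intuitively, the Schur--Weyl transition measure is the free convolution of a semicircle (the Plancherel/Wigner contribution) with a shifted Bernoulli measure (capturing the $d$-dimensional cutoff). For $c < 1$ the Bernoulli piece is small and $\tau_c$ is a single-arc deformation of the semicircle, producing a smooth perturbation of $\Omega$. At $c = 1$ the support of $\tau_c$ first touches the origin, which is what causes the qualitative change in the formula for $\Omega_c$. For $c > 1$ the transition measure acquires an atom at~$0$, and an atom of mass $m$ in $\tau_f$ at a point $x_0$ corresponds precisely to a straight segment of slope $\pm 1$ in $f$ around $x_0$; here this yields the segment on $(-\tfrac{1}{c}, c-2)$. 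Tracing these free-probabilistic facts through explicit Cauchy/$R$-transform computations to recover the three piecewise formulas, and checking the matching on the overlapping intervals $|x-c|\le 2$, is the delicate technical part of the argument.
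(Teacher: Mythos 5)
You should first note that the paper does not prove this statement at all: it is quoted verbatim from Biane~\cite{Bia01}, so there is no paper-internal proof to compare against. The closest in-paper analogue is the argument for Theorem~\ref{thm:all-converge} in Section~\ref{sec:eyd-cont}, where the authors deliberately avoid Biane's original method and instead follow the Ivanov--Olshanski observable-based proof of the Plancherel law of large numbers~\cite{IO02}, extended to the Schur--Weyl setting using Meliot's results~\cite{Mel10a}. Your plan---moments of observables, expectations via the $\psharp{\mu}$ family, concentration via the multiplicative structure of $\Lambda^*$, then Markov--Krein to pass from moment convergence to uniform convergence of diagrams---is exactly that Ivanov--Olshanski/Meliot route, and its skeleton is sound: Corollary~\ref{cor:sw-expect} gives $\E_{\blambda\sim\SWunif{n}{d}}[\psharp{\mu}(\blambda)] = \falling{n}{|\mu|}\, d^{\ell(\mu)-|\mu|}$ (note: a pure power of $d$, not a rising factorial---rising factorials in $d$ show up for $s^*_\mu$, not $\psharp{\mu}$), and Proposition~\ref{prop:should-be-in-a-structure-constants-discussion-or-theorem} supplies the cancellation needed for the variance bound. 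Also, the moments of the transition measure are not literally the $\pstar{k}$'s; they are related to the $\pstar{k}$ (or $\widetilde{p}_k$) by a triangular change of basis, which is harmless but needs to be tracked.

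The genuine gap is in the final identification step, which is precisely where the content of Biane's theorem lives and which you defer as ``the delicate technical part.'' Moreover, the ansatz you state for it is incorrect: the limiting (rescaled) transition measure is not the free additive convolution of a semicircle with a shifted Bernoulli law. From $\E[\psharp{k}(\blambda)] \approx n^{k} d^{1-k}$ one gets $n^{-(k+1)/2}\E[\psharp{k}(\blambda)] \to c^{k-1}$, and since $\psharp{k}$ approximates (to top order in the gradation) the $(k+1)$st free cumulant of the rescaled transition measure, the limit measure has free cumulants $R_1 = c$ and $R_j = c^{j-2}$ for $j \geq 2$; this is the free Poisson (Marchenko--Pastur) law of rate $1/c^2$ and jump size $c$, not a semicircle-plus-Bernoulli convolution (a single Bernoulli summand cannot produce the geometric cumulant sequence $c^{j-2}$). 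The phase transition at $c=1$ and the linear segment of $\Omega_{c>1}$ on $(-\tfrac1c, c-2)$ then come from the atom of mass $1-1/c^2$ that the Marchenko--Pastur law acquires when $c>1$, and recovering the three explicit piecewise formulas for $\Omega_c$ requires actually inverting the Cauchy transform of this law via Markov--Krein. Until that computation is carried out (and with the correct limiting measure), the proposal establishes convergence to \emph{some} limit shape but not to the specific curves $\Omega_c$ asserted in the statement.
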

\noindent
These curves are pictured for various values of~$c$ in Figure~\ref{fig:biane} (which we have reproduced from~\cite{Mel10a}).  Meliot~\cite{Mel10a,Mel10b} has extended Kerov's central limit theorem to the Schur--Weyl distribution, characterizing the fluctuations of $\overline{\blambda}$ around the limiting curves given by Biane.

\begin{figure}
\centering
\begin{subfigure}{.45\textwidth}
	\centering
	\includegraphics[height=100pt]{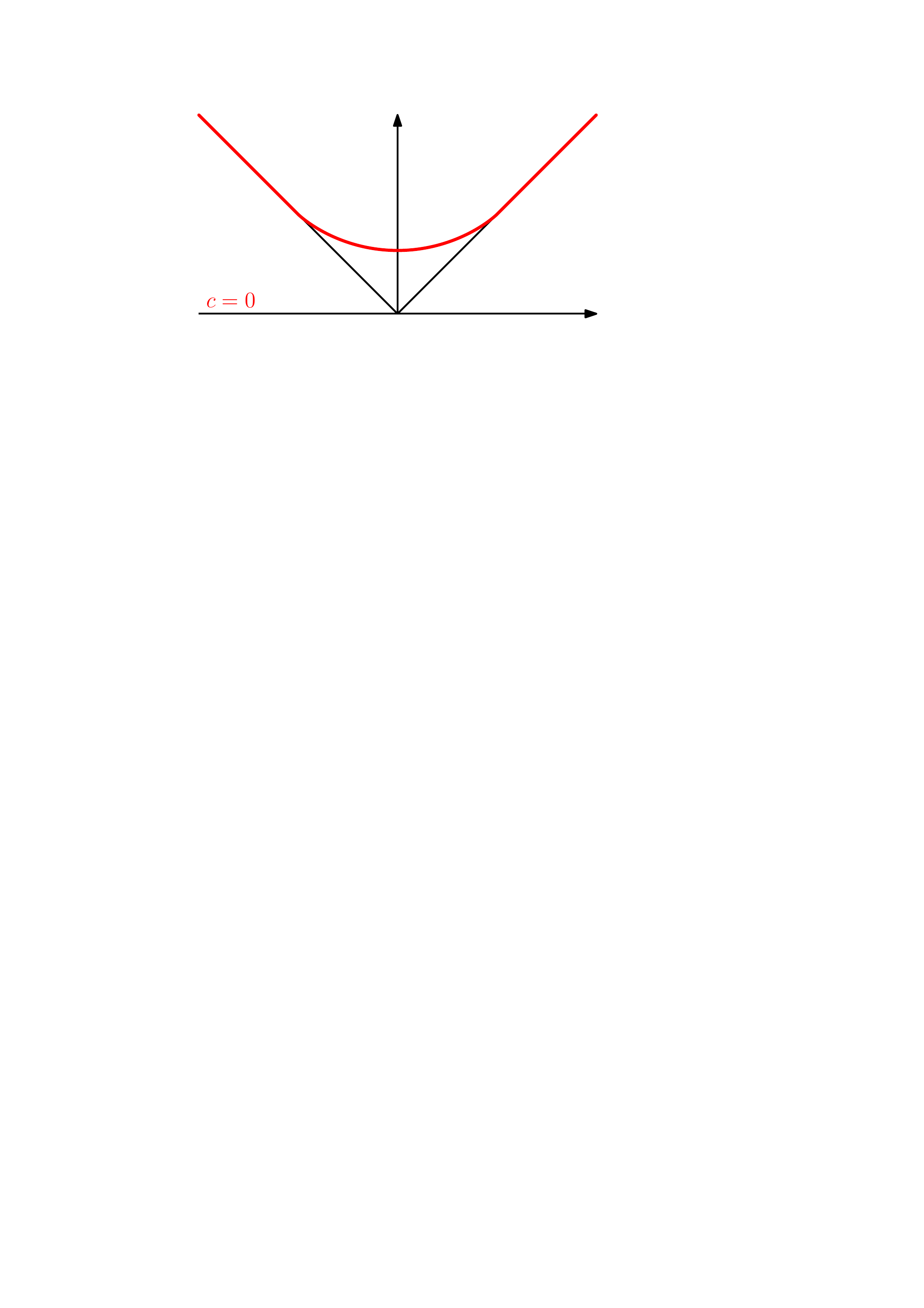}
\end{subfigure}
\begin{subfigure}{.45\textwidth}
	\centering
	\includegraphics[height=100pt]{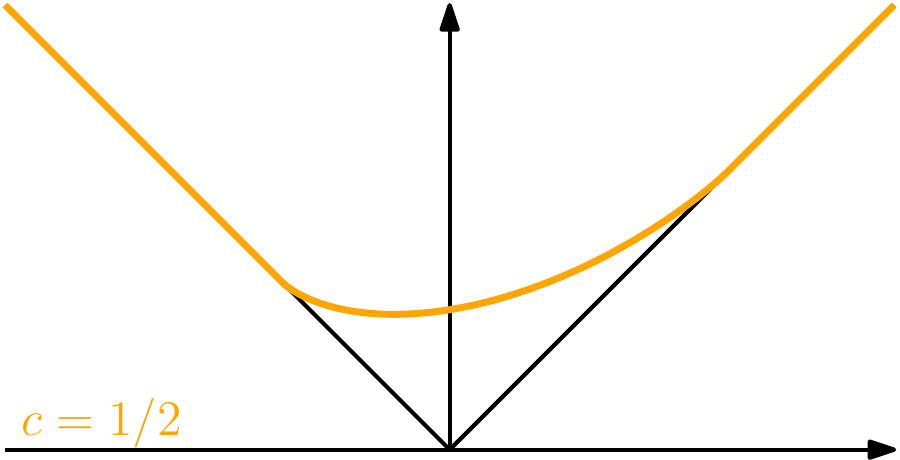}
\end{subfigure}\\
\vspace{5mm}
\begin{subfigure}{.45\textwidth}
	\centering
	\includegraphics[height=100pt]{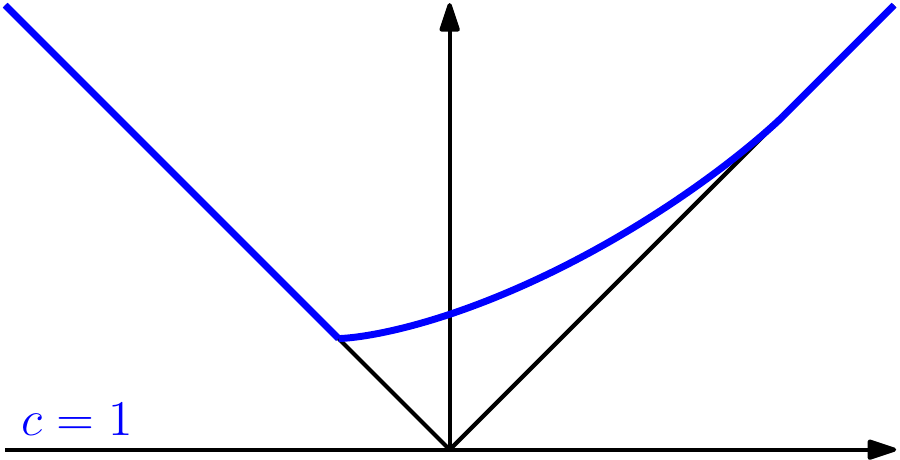}
\end{subfigure}
\begin{subfigure}{.45\textwidth}
	\centering
	\includegraphics[height=100pt]{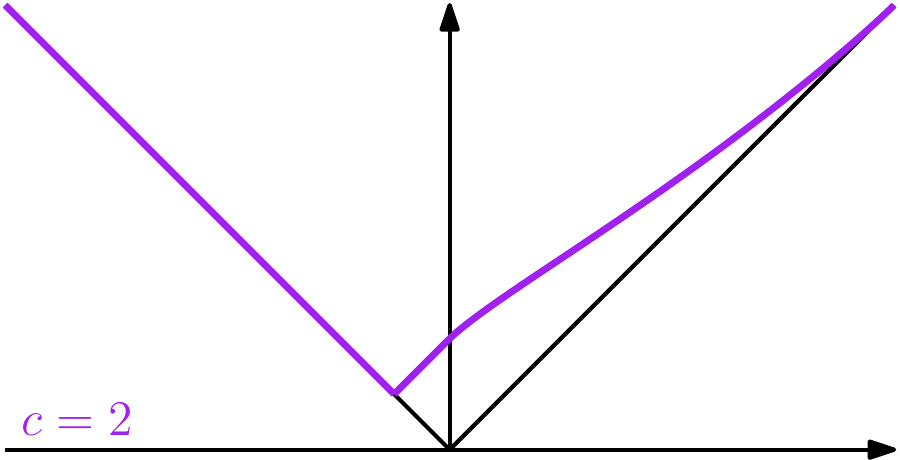}
\end{subfigure}
\caption{The Biane limiting curves $\Omega_c$.  The $c = 0$ case corresponds to the function $\Omega(x)$.}
\label{fig:biane}
\end{figure}

One consequence is these results is that when $n = o(d^2)$, the function $\overline{\blambda}$ converges to the ice cream cone curve $\Omega(x)$ from above. This fact is a manifestation of the discussion at the end of Section~\ref{sec:understanding} concerning $\SWunif{n}{d}$ tending to $\Planch{n}$ as $d\rightarrow\infty$.  Indeed, Childs et al.~\cite{CHW07} showed that when $n = o(d)$, the two distributions are statistically indistinguishable (from which the lower bound in Theorem~\ref{thm:q-bday} follows via the triangle inequality
$\dtv{\SWunif{n}{r}}{\SWunif{n}{2r}}\leq\dtv{\SWunif{n}{r}}{\Planch{n}}+\dtv{\Planch{n}}{\SWunif{n}{2r}}$).

We close this section by recording some simple concentration bounds on the width and length of $\blambda \sim \SWunif{n}{d}$.  They are not as precise as what is suggested by the above limit theorems, but they have the advantage of giving concrete error bounds.  We follow a simple line of argument similar to that in~\cite[Lemma~1.5]{Rom14}.
\begin{proposition}\label{prop:increasing-seq}
Let $\blambda \sim \SWunif{n}{d}$.  For every $B \in \Z^+$ we have $\Pr[\blambda_1 \geq B] \leq \left(\frac{(1+B/d)e^2 n}{B^2}\right)^B$.
The same bound holds for $\Pr[\blambda_1' \geq B]$.
\end{proposition}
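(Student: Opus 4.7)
The strategy is to reduce to a classical increasing-subsequence question via Greene's Theorem~\ref{thm:greene} and Remark~\ref{rem:quantum-free}, then apply a union bound. Specifically, for $\blambda \sim \SWunif{n}{d}$ we may realize $\blambda = \rsk(\ba)$ where $\ba \sim [d]^n$ is a uniformly random word; then $\blambda_1$ equals the length of the longest weakly increasing subsequence of~$\ba$, and $\blambda_1'$ equals the length of the longest strictly decreasing subsequence of~$\ba$.

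For the bound on~$\blambda_1$, I will union-bound over all $\binom{n}{B}$ index sets $i_1 < \cdots < i_B$ in $[n]$ the event that $\ba_{i_1} \leq \ba_{i_2} \leq \cdots \leq \ba_{i_B}$. Since the letters are i.i.d.\ uniform on~$[d]$, this probability equals the number of weakly increasing length-$B$ words over~$[d]$ divided by~$d^B$, which is $\binom{d+B-1}{B}/d^B$. Using $\binom{n}{B} \leq (en/B)^B$ and
\[
    \frac{1}{d^B}\binom{d+B-1}{B} = \frac{1}{B!}\prod_{i=0}^{B-1}\mleft(1+\tfrac{i}{d}\mright) \leq \frac{(1+B/d)^B}{B!} \leq \mleft(\frac{e(1+B/d)}{B}\mright)^B,
\]
the product gives exactly $\left(\frac{(1+B/d)e^2 n}{B^2}\right)^B$, as desired.

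For the bound on~$\blambda_1'$, the argument is essentially the same: union-bound over $\binom{n}{B}$ index sets the probability that $\ba$ is strictly decreasing on that set, which is $\binom{d}{B}/d^B \leq 1/B! \leq (e/B)^B$. Since $\binom{d}{B}/d^B \leq \binom{d+B-1}{B}/d^B$, the identical (in fact slightly stronger) bound holds.

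The step with the most opportunity for a slip is the chain of elementary inequalities; everything else is genuinely routine once one invokes Greene's theorem. There is no real obstacle, since this is merely an instance of the standard first-moment/union-bound argument used, e.g., in the classical analysis of the longest increasing subsequence of a random permutation (cf.~\cite[Lemma~1.5]{Rom14}).
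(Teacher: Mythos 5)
Your proof is correct and follows essentially the same route as the paper's: both reduce to a uniformly random word via Greene's theorem and apply a first-moment/union bound over $\binom{n}{B}$ index sets, counting weakly increasing length-$B$ words as $\binom{d+B-1}{B}$ and using the same elementary estimates. The only cosmetic difference is that the paper dispatches the $\blambda_1'$ case by noting weakly increasing subsequences dominate strictly monotone ones, whereas you compute the strictly decreasing count $\binom{d}{B}/d^B$ directly; both give the stated bound.
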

\begin{proof}
By Theorem~\ref{thm:greene}, $\Pr[\blambda_1 \geq B]$ (respectively, $\Pr[\blambda'_1 \geq B]$) is equal to the probability that a uniformly random word from $[d]^n$ contains a weakly increasing (respectively, strongly increasing) subsequence of length exactly~$B$.  As weakly increasing subsequences are more probable than strongly increasing ones, it suffices to bound
    \[
        \Pr[\blambda_1 \geq B] \leq \left(\frac{(1+B/d)e^2 n}{B^2}\right)^B.
    \]
    Letting $\bS$ denote the number of weakly increasing subsequences of length $B$ in a random word we have
    \[
        \Pr[\blambda_1 \geq B] \leq \E[\bS] = \binom{n}{B} \cdot \frac{c}{d^B},
    \]
    where $c$ is the number of words in $[d]^{B}$ which are weakly increasing. Evidently~$c$ also equals the number of ``weak $d$-compositions of~$B$'', which~\cite[Chapter 1.2]{Sta11} is~$\binom{d- 1 + B}{B} \leq \binom{d + B}{B}$.  We conclude
    \[
        \Pr[\blambda_1 \geq B] \leq \binom{n}{B} \cdot \frac{\binom{d + B}{B}}{d^{B}} \leq \frac{\left(\frac{en}{B}\right)^B \left(\frac{(1+B/d)ed}{B}\right)^B}{d^B} = \left(\frac{(1+B/d)e^2 n}{B^2}\right)^B,
    \]
    as needed.
\end{proof}

\subsection{Polynomial algebras}               \label{sec:polynomial-algebras}

We have already discussed the power sum and Schur polynomials, which are elements of the $\C$-algebra~$\Lambda$ of symmetric polynomials in indeterminates $x_1, x_2, \dots$.\footnote{Strictly speaking, these are \emph{families} of bounded-degree polynomials, one for each number of indeterminates, which are \emph{stable} in the sense that $p_\lambda(x_1, \dots, x_d, 0) = p_\lambda(x_1, \dots, x_d)$, and similarly for $s_\lambda$.  See, e.g.,~\cite{Mac95} for a formal definition via projective limits.}  Important to our work will be a closely related polynomial algebra~$\Lambda^*$, the algebra of \emph{shifted symmetric} polynomials, formally introduced introduced in~\cite{OO98b}.  This algebra consists of those polynomials which are symmetric in the ``shifted'' indeterminates $\wt{x}_i \coloneqq  x_i - i + c$, where $c$ is any fixed constant.  (The definition does not depend on the constant~$c$.)  When we view the inputs to the shifted symmetric functions $x_1, x_2, \ldots$ as the values $\lambda_1, \lambda_2, \ldots$ of a partition~$\lambda$, the result is (isomorphic to) \emph{Kerov's algebra} of polynomial functions on the set of Young diagrams, also known as the \emph{algebra of observables of diagrams}.  In a nutshell, the importance of this algebra is that, on one hand, it still contains polynomials that are similar to ``power sums'' or ``moments'' of the $\lambda_i$'s; and, on the other hand, it is easier to compute their expected value under $\SWdens{n}{\rho}$ distributions.


We will need to study several families of observables/shifted symmetric polynomials, and their relationships:
\begin{definition}
The following polynomials are known to be elements of $\Lambda^*$.  (We describe the first four as observables of Young diagrams.)
\begin{itemize}
    \item For $k \geq 1$,
            \begin{equation*}
                \pstar{k}(\lambda) \coloneqq  \sum_{i = 1}^{d(\lambda)} \Bigl((a^*_i)^k - (-b^*_i)^k \Bigr) = \sum_{i=1}^\infty\Bigl((\lambda_i - i + \tfrac{1}{2})^k - (-i + \tfrac{1}{2})^k\Bigr).
            \end{equation*}
            These are the most basic polynomials on Young diagrams, giving the ``moments'' of the coordinates.  For more information on them see~\cite{IO02}, where they are introduced (in equation~(1.4)) under the notation $p_k(\lambda)$.  We use the notation $\pstar{k}(\lambda)$ to distinguish them from the ordinary power sum symmetric polynomials.  It is obvious from the second definition above that the $\pstar{k}$ polynomials are in~$\Lambda^*$.  In fact they are algebraically independent, and they generate~$\Lambda^*$.
    \item For $k \geq 0$, the $k$th \emph{content sum} polynomial is  $c_k(\lambda) \coloneqq  \sum_{\square \in [\lambda]} c(\square)^k$.  Although these polynomials are quite natural, we will have little occasion to use them.  The fact that they are in~$\Lambda^*$ was proven in~\cite{KO94}.
    \item For $k \geq 2$,
        \begin{equation*}
            \widetilde{p}_k(\lambda) \coloneqq  k(k-1) \int_{-\infty}^\infty x^{k-2}\sigma(x)\,dx,
        \end{equation*}
        where $\sigma(x) \coloneqq  \frac{1}{2}(\lambda(x) - |x|)$.  These polynomials were introduced and shown to be algebraically independent generators of~$\Lambda^*$ in~\cite[Section~2]{IO02}.  They can shown to be the ``moments of the local extrema of $\lambda(x)$'', and are also useful for studying continual diagrams.  We use them only briefly, to pass between the $\pstar{k}$ polynomials and $\psharp{k}$ polynomials defined below.
    \item For $\lambda \vdash n$ and $\mu \vdash k$, the \emph{central characters} are defined by
        \begin{equation*}
        \psharp{\mu}(\lambda) = \begin{cases}
                            \falling{n}{k}\cdot \frac{\chi_\lambda(\mu \cup 1^{n-k})}{\dim(\lambda)} & \text{if }n \geq k,\\
                            0 & \text{if }n < k.
                                 \end{cases}
        \end{equation*}
        where $\mu \cup 1^{n-k}$ denotes the partition $(\mu, 1, 1, \dots, 1) \vdash n$.  In case $\mu = (k)$ we simply write $\psharp{k}(\lambda)$. Note that we are somewhat unexpectedly applying the character $\chi_\lambda$ to (an extension of)~$\mu$, and not the other way around.  The advantage of the $\psharp{\mu}$ polynomials is that, by virtue of them being characters of the symmetric group (up to some normalizations), their expectations under $\SWdens{n}{\rho}$ can be easily calculated exactly, as we will see below.  A disadvantage is that, by virtue of them being characters of the symmetric group, explicit formulas for them are famously quite complex~\cite{Las08,Fer10} (though in Section~\ref{sec:work-it}\ignore{the proof of our Lemma~\ref{lem:errL2}\rnote{update this pointer}} we will mention a formula that allows one to compute $\psharp{k}$ for small~$k$ fairly easily). Wassermann~\cite[III.6]{Was81} showed that the $\psharp{k}$ polynomials are in~$\Lambda^*$, and in fact~\cite{VK81,KO94,OO98b} more generally the polynomials $\psharp{\mu}$ form a \emph{linear} basis of~$\Lambda^*$.
    \item For $\mu \vdash k$, the \emph{shifted Schur} polynomial in indeterminates $x_1, \dots, x_d$ is
        \[
            s^*_\mu(x_1, \dots, x_d) = \frac{\det\Bigl(\falling{(x_i -i + d)}{(d+\lambda_j -j)}\Bigr)_{ij}}{\det\Bigl(\falling{(x_i-i+d)}{(d-j)}\Bigr)_{ij}} \quad \text{if $\ell(\mu) \leq d$, else~$0$.}
        \]
        These polynomials are the shifted analogues of the Schur polynomials (cf.~Theorem~\ref{thm:schur-determinant}).  They were introduced by Okounkov and Olshanski~\cite{OO98b}, and are similar to the earlier-defined ``factorial Schur functions'' (see, e.g.,~\cite[I.3.20--21]{Mac95}), but with the advantage that they are \emph{stable}---i.e., $s^*_\mu(x_1, \dots, x_d, 0) = s^*_\mu(x_1, \dots, x_d)$.  They arise for us because they can sometimes be used to express the ratio of two Schur functions (see the ``Binomial Formula'' Theorem~\ref{thm:binomial-formula}).  To analyze them, we will use the following ``shifted analogue'' of Theorem~\ref{thm:power-schur-relation}, proved in~\cite[Theorem~8.1]{OO98b},~\cite[Theorem~9.1]{IK01} (see also~\cite[p.25]{Mel10b}):
        \begin{theorem}                                     \label{thm:psharp-ssharp-relation}
            For $\mu \vdash k$, let us think of the central character polynomial~$\psharp{\mu}$ not as an observable of Young diagrams (applied to $\lambda_1, \dots, \lambda_d$) but as a shifted symmetric polynomial in indeterminates $x_1, \dots, x_d$.  In the context of Fourier analysis over the group $G = \symm{k}$, for each fixed $x \in \C^d$ we may think of $\psharp{(\cdot)}(x) \coloneqq \pi \mapsto \psharp{\pi}(x)$ as a class function.  Then its Fourier coefficients are given by
            \[
                \four{\psharp{(\cdot)}(x)}(\mu) = s^*_\mu(x).
            \]
        \end{theorem}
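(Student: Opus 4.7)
The plan is to prove the equivalent, Fourier-inverted identity
\[
    \psharp{\pi}(x) = \sum_{\mu \vdash k} s^*_\mu(x)\, \chi_\mu(\pi), \qquad \pi \in \symm{k},
\]
as a polynomial identity in $x = (x_1, \dots, x_d)$. Both sides lie in $\Lambda^*$, and it is a standard fact about Kerov's algebra that elements of $\Lambda^*$ are determined by their values at integer partitions $\lambda$ with $\ell(\lambda) \leq d$.  It therefore suffices to verify the identity after substituting $x = \lambda$ for each such $\lambda$, where we may further restrict to $n := |\lambda| \geq k$ (since both sides vanish identically on smaller partitions when $\mu \not\subseteq \lambda$).

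At $x = \lambda$, the left-hand side is just $\psharp{\pi}(\lambda) = \falling{n}{k}\cdot \chi_\lambda(\pi \cup 1^{n-k})/\dim(\lambda)$ by definition. To simplify this character, I would apply the classical Frobenius formula (Theorem~\ref{thm:power-schur-relation}) to the power sum $p_{\pi \cup 1^{n-k}} = p_\pi \cdot p_1^{n-k}$, expand $p_\pi = \sum_{\mu \vdash k} \chi_\mu(\pi) s_\mu$ by Frobenius on $\symm{k}$, and then apply Pieri's rule $p_1 \cdot s_\nu = \sum_{|\lambda/\nu| = 1} s_\lambda$ a total of $n-k$ times to multiply by $p_1^{n-k}$. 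Comparing coefficients of $s_\lambda$ in the two resulting expressions for $p_{\pi\cup 1^{n-k}}$ yields the classical identity
\[
    \chi_\lambda(\pi \cup 1^{n-k}) = \sum_{\mu \vdash k} \chi_\mu(\pi) \cdot \dim(\lambda/\mu),
\]
where $\dim(\lambda/\mu)$ counts standard Young tableaus of skew shape $\lambda/\mu$ (and vanishes unless $\mu \subseteq \lambda$). Equivalently, this is the Littlewood--Richardson decomposition of $\sirrep_\lambda \big|_{\symm{k}\times \symm{n-k}}$ evaluated at $(\pi, \mathrm{id})$, using $\chi_\nu(\mathrm{id}) = \dim(\nu)$ and the identity $\dim(\lambda/\mu) = \sum_{\nu \vdash n-k} c^\lambda_{\mu\nu}\, \dim(\nu)$.

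The final step is to recognize the coefficient of $\chi_\mu(\pi)$ above as $s^*_\mu(\lambda)$, for which I would invoke the Okounkov--Olshanski evaluation formula for shifted Schur polynomials at integer partitions (a central result of~\cite{OO98b}):
\[
    s^*_\mu(\lambda) = \falling{n}{k}\cdot \frac{\dim(\lambda/\mu)}{\dim(\lambda)}.
\]
Substituting this, both sides of the target identity become $(\falling{n}{k}/\dim(\lambda)) \cdot \sum_{\mu} \chi_\mu(\pi)\,\dim(\lambda/\mu)$, concluding the proof.

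The principal technical obstacle is the Okounkov--Olshanski evaluation formula: while elementary to state, proving it from the determinantal definition of $s^*_\mu$ requires a nontrivial computation, either by direct manipulation of the Vandermonde-like determinants in the numerator and denominator, or by establishing that the right-hand side is the unique shifted symmetric polynomial of degree $\leq |\mu|$ with the appropriate vanishing behavior at partitions $\nu$ with $|\nu| \leq |\mu|$ and prescribed value at $\nu = \mu$. I would cite this formula from~\cite{OO98b} rather than reprove it; everything else in the argument is standard symmetric function theory and character theory of~$\symm{n}$.
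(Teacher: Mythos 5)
Your argument is essentially correct, but note first that the paper does not prove Theorem~\ref{thm:psharp-ssharp-relation} at all: it is imported wholesale from \cite[Theorem~8.1]{OO98b} and \cite[Theorem~9.1]{IK01}. So there is no in-paper proof to match; what you have supplied is a derivation that reduces the statement to two ingredients: the classical branching identity $\chi_\lambda(\pi\cup 1^{n-k}) = \sum_{\mu\vdash k}\chi_\mu(\pi)\dim(\lambda/\mu)$ (obtained either by Frobenius plus iterated Pieri as you describe, or by restricting $\sirrep_\lambda$ to $\symm{k}\times\symm{n-k}$), and the Okounkov--Olshanski evaluation formula $s^*_\mu(\lambda) = \falling{n}{k}\,\dim(\lambda/\mu)/\dim(\lambda)$. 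The reduction is sound: by Fourier inversion the theorem is equivalent to $\psharp{\pi}(x) = \sum_{\mu\vdash k}s^*_\mu(x)\chi_\mu(\pi)$; both sides are shifted symmetric polynomials in $x_1,\dots,x_d$, and such polynomials are determined by their values at partitions of length at most~$d$ (after the substitution $y_i = x_i - i$ the evaluation points form a Zariski-dense set), so evaluating at $x=\lambda$ and matching the two displayed expressions closes the argument, with all constants checking out against the paper's definition of $\psharp{\mu}$. Be aware, however, that the evaluation formula you invoke is itself a headline theorem of the very reference \cite{OO98b} from which the paper quotes the present statement, so your proof is best described as a transparent bridge between two results of that paper---more informative than a bare citation, since it exhibits both sides as encoding the branching of $\chi_\lambda$ down to $\symm{k}$, but not an independent argument; you acknowledge this, and it is a reasonable trade.

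One small repair: your parenthetical justification for restricting to $|\lambda| = n \geq k$ is garbled as written. Either observe that the identity holds trivially for $n < k$ (the left side vanishes by the definition of $\psharp{\mu}$, and every $s^*_\mu(\lambda)$ with $|\mu| = k > |\lambda|$ vanishes by the vanishing property of shifted Schur polynomials, since then $\mu \not\subseteq \lambda$), or note that discarding the finitely many partitions of size less than $k$ does not destroy Zariski density. Either one-line fix suffices.
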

        (Note that give the determinantal definition of the shifted Schur polynomials, one may alternatively take this Theorem as a definition of the shifted symmetric polynomials $\psharp{\mu}(x)$.)
\end{itemize}
\end{definition}

As mentioned, the $\psharp{\mu}$ polynomials are especially important for us as because there is a simple expression for their expectation under any Schur--Weyl distribution.  This is the subject of our next proposition.
\begin{proposition}\label{prop:sw-alg-expect}
Let $\rho$ be a $d \times d$ density matrix with eigenvalues $\eta_1, \ldots, \eta_d$, and let $\mu \vdash k$.  Then
\[
    \E_{\blambda \sim \SWdens{n}{\rho}} [\psharp{\mu}(\blambda)] =   \falling{n}{k} \cdot p_\mu(\eta_1, \ldots, \eta_d).
\]
\end{proposition}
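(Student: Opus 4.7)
The plan is to carry out a direct computation starting from the definitions of $\psharp{\mu}$ and $\SWdens{n}{\rho}$, and then recognize the resulting sum as a Fourier inversion of the class function given by Theorem~\ref{thm:power-schur-relation}. The one non-trivial ingredient is the normalization $\eta_1 + \cdots + \eta_d = 1$, which collapses the extra $1$-cycles.

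First I would observe that the $n < k$ case is immediate: by definition $\psharp{\mu}(\lambda) = 0$ when $n < k$, and the right-hand side also vanishes because the falling factorial $\falling{n}{k} = n(n-1)\cdots(n-k+1)$ contains the factor~$0$. So assume $n \geq k$. Now expand the expectation using Proposition~\ref{prop:schur-probability}:
\begin{align*}
\E_{\blambda \sim \SWdens{n}{\rho}}\bigl[\psharp{\mu}(\blambda)\bigr]
&= \sum_{\lambda \vdash n} \dim(\lambda)\cdot s_\lambda(\eta_1,\dots,\eta_d)\cdot \falling{n}{k}\cdot\frac{\chi_\lambda(\mu \cup 1^{n-k})}{\dim(\lambda)} \\
&= \falling{n}{k}\cdot \sum_{\lambda \vdash n} s_\lambda(\eta_1,\dots,\eta_d)\cdot \chi_\lambda(\mu \cup 1^{n-k}).
\end{align*}
The $\dim(\lambda)$ factors cancel, which is the essential reason the $\psharp{\mu}$ normalization is convenient.

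Next I would apply Theorem~\ref{thm:power-schur-relation} in its Fourier-inverted form. The theorem says that in Fourier analysis over $\symm{n}$, the class function $\pi \mapsto p_\pi(\eta_1,\dots,\eta_d)$ has Fourier coefficients $s_\lambda(\eta_1,\dots,\eta_d)$. Since $\chi_\lambda$ is real-valued (as noted in Section~\ref{sec:rep-theory}), Fourier inversion gives
\[
    p_\pi(\eta_1,\dots,\eta_d) = \sum_{\lambda \vdash n} s_\lambda(\eta_1,\dots,\eta_d)\cdot \chi_\lambda(\pi)
\]
for every $\pi \in \symm{n}$. Applying this with any permutation~$\pi$ of cycle type $\mu \cup 1^{n-k}$, the sum above equals $\falling{n}{k}\cdot p_{\mu \cup 1^{n-k}}(\eta_1,\dots,\eta_d)$.

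Finally, I would simplify $p_{\mu \cup 1^{n-k}}(\eta)$ using the multiplicativity of the power-sum polynomials on partition concatenation: $p_{\mu \cup 1^{n-k}}(\eta) = p_\mu(\eta)\cdot p_1(\eta)^{n-k}$. Since $\rho$ is a density matrix, $p_1(\eta) = \eta_1 + \cdots + \eta_d = \tr(\rho) = 1$, so the extra factor is~$1$ and we obtain $\falling{n}{k}\cdot p_\mu(\eta_1,\dots,\eta_d)$, completing the proof. No step looks like a real obstacle; the only place to be careful is remembering that the density-matrix normalization is what makes the $(n-k)$ fixed points of the cycle structure harmless.
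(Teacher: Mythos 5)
Your proposal is correct and follows essentially the same route as the paper's own proof: expand via Proposition~\ref{prop:schur-probability} so the $\dim(\lambda)$ factors cancel, apply Theorem~\ref{thm:power-schur-relation} (Fourier inversion over $\symm{n}$) to identify the sum as $\falling{n}{k}\cdot p_{\mu \cup 1^{n-k}}(\eta)$, and use $p_1(\eta)=1$ to drop the trivial $1$-cycles. Nothing to add.
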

\begin{proof}
It's immediate from the definitions that both sides are~$0$ if $n < k$, so we assume $n \geq k$.  Applying Proposition~\ref{prop:schur-probability} and the definition of $\psharp{\mu}$ we obtain
\begin{align*}
    \E_{\blambda \sim \SWdens{n}{\rho}} [\psharp{\mu}(\blambda)] &= \falling{n}{k} \cdot \sum_{\lambda \vdash n} s_\lambda(\eta_1, \dots, \eta_d) \cdot \chi_\lambda(\mu \cup 1^{n-k}) \\
    &= \falling{n}{k} \cdot p_{\mu \cup 1^{n-l}}(\eta_1, \dots, \eta_d),
\end{align*}
where the second equation is from Theorem~\ref{thm:power-schur-relation}.  But $p_{\mu \cup 1^{n-l}}(\eta_1, \dots, \eta_d) = p_{\mu}(\eta_1, \dots, \eta_d)$, since the two quantities differ only by factors of $p_1(\eta_1, \dots, \eta_d) = \eta_1 + \cdots + \eta_d = 1$.
\end{proof}

Note that in the case of $\eta_1 = \ldots = \eta_d = 1/d$, we have that $p_\mu(\eta_1, \ldots, \eta_d) = d^{\ell(\mu) - k}$.  This gives us the following important corollary:
\begin{corollary}\label{cor:sw-expect}
Let $\mu \vdash k$.  Then $\displaystyle \E_{\blambda \sim \SWunif{n}{d}} [\psharp{\mu}(\blambda)] = n^{\downarrow k} \cdot d^{\ell(\mu) - k}$.
\end{corollary}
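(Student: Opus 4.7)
The plan is to obtain this as a direct specialization of Proposition~\ref{prop:sw-alg-expect}. Recall that $\SWunif{n}{d} = \SWdens{n}{\rho}$ when $\rho$ is the maximally mixed state, i.e., when $\eta_1 = \cdots = \eta_d = 1/d$. Applying Proposition~\ref{prop:sw-alg-expect} to this spectrum, the only thing left to compute is $p_\mu(1/d, \ldots, 1/d)$.

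By Definition~\ref{def:power-sum}, $p_\mu = \prod_{i=1}^{\ell(\mu)} p_{\mu_i}$, and for each $i$ we have
\[
    p_{\mu_i}(1/d, \ldots, 1/d) = \sum_{j=1}^d (1/d)^{\mu_i} = d \cdot d^{-\mu_i} = d^{1 - \mu_i}.
\]
Multiplying these over $i = 1, \ldots, \ell(\mu)$ and using $\sum_i \mu_i = |\mu| = k$, one finds
\[
    p_\mu(1/d, \ldots, 1/d) = d^{\ell(\mu) - k}.
\]
Substituting this into the conclusion of Proposition~\ref{prop:sw-alg-expect} yields the claimed identity. There is no real obstacle here; the corollary is essentially a one-line computation once the proposition is in hand, and its main purpose is to record the clean closed form for the uniform case that will be invoked repeatedly in the subsequent lower-bound arguments.
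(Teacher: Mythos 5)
Your proposal is correct and matches the paper's own argument: the paper obtains the corollary by specializing Proposition~\ref{prop:sw-alg-expect} to the uniform spectrum $\eta_1 = \cdots = \eta_d = 1/d$ and noting $p_\mu(1/d, \ldots, 1/d) = d^{\ell(\mu) - k}$, exactly as you do. Your explicit computation of each factor $p_{\mu_i}(1/d,\ldots,1/d) = d^{1-\mu_i}$ is just a slightly more spelled-out version of the same one-line calculation.
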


\subsubsection{Working with the $\psharp{\mu}$ polynomials}         \label{sec:work-it}

As we will be working heavily with the $\psharp{\mu}$ polynomials, let us describe them further.  We begin with the simpler case of the $\psharp{k}$ polynomials.  Let us see how these polynomials can be written in terms of the $\pstar{k}$ polynomials.  From~\cite[III.6]{Was81} (cf.~\cite[Proposition~3.3]{IO02}) we have the following identity using generating functions:
\[
    \psharp{k} = [t^{k+1}]\left\{-\frac1k \prod_{j=1}^k (1-(j-\tfrac12)t)\cdot \exp\left(\sum_{j=1}^\infty \frac{\pstar{j} t^j}{j}(1-(1-kt)^{-j})\right) \right\}.
\]
One may rewrite this (cf.~\cite[(3.3)]{IO02}) as
\begin{equation}                            \label{eqn:crazy-gen}
    \psharp{k} = [t^{k+1}]\left\{-\frac1k \prod_{j=1}^k (1-(j-\tfrac12)t)\cdot \sum_{i=0}^\infty \frac{(-1)^i}{i!} Q_k(t)^i\right\},
\end{equation}
where
\begin{equation}                            \label{eqn:crazy-gen2}
    Q_k(t) = \sum_{m = 1}^{\infty} Q_{k,m} t^{m+1}, \quad Q_{k,m} = \tfrac{1}{1}\tbinom{m}{0} k^m \pstar{1} + \tfrac{1}{2}\tbinom{m}{1}k^{m-1}\pstar{2} + \tfrac{1}{3}\tbinom{m}{2}k^{m-2}\pstar{3} + \cdots + \tfrac{1}{m}\tbinom{m}{m-1}k \pstar{m}.
\end{equation}
It follows that in~\eqref{eqn:crazy-gen} we may restrict the sum on~$i$ to the range between $0$~and~$\frac{k+1}{2}$, and in~\eqref{eqn:crazy-gen2} we can restrict the sum on~$m$ to the range between $1$~and~$k$.  We thereby obtain a relatively simple finitary method for expressing $\psharp{k}$'s polynomials in terms of $\pstar{j}$'s.  In particular, we can deduce
\begin{equation}                    \label{eqn:first-few-psharps}
    \psharp{1} = \pstar{1}, \qquad \psharp{2} = \pstar{2}, \qquad \psharp{3} = \pstar{3} -\tfrac32 (\pstar{1})^2 + \tfrac54\pstar{1}, \qquad \psharp{4} = \pstar{4} - 4\pstar{2}\pstar{1} +\tfrac{11}{2}\pstar{2}.
\end{equation}

\noindent As observed in~\cite[Proposition~3.4]{IO02}, we can also deduce that in general,
\begin{equation}                            \label{eqn:invertible}
    \psharp{k} = \pstar{k} + \Bigl\{\text{polynomial in }\pstar{1}, \dots, \pstar{k-1} \text{ of gradation at most $k-1$}\Bigr\},
\end{equation}
where \emph{gradation} refers to the canonical grading in which $\prod_{i} \pstar{\lambda_i}$ has gradation~$|\lambda|$.  We can of course inductively invert this relationship, deducing that
\begin{equation}                            \label{eqn:inverted}
    \pstar{k} = \psharp{k} + \Bigl\{\text{polynomial in }\psharp{1}, \dots, \psharp{k-1} \text{ of gradation
     at most $k-1$}\Bigr\}.
\end{equation}
For example,
\begin{equation}                    \label{eqn:first-few-pstars}
    \pstar{1} = \psharp{1}, \qquad \pstar{2} = \psharp{2}, \qquad \pstar{3} = \psharp{3} +\tfrac32 (\psharp{1})^2 -\tfrac54 \psharp{1}, \qquad \pstar{4} = \psharp{4} +4 \psharp{2}\psharp{1} - \tfrac{11}{2}\psharp{2}.
\end{equation}

Recall that the more general $\psharp{\tau}$ polynomials (for $\tau \in \allpartitions$) are known to linearly generate the algebra of observables.  This means that any product $\psharp{\mu_1} \psharp{\mu_2}$ can be converted to a linear combination of $\psharp{\tau}$'s.  In particular, if we applied this conversion in~\eqref{eqn:first-few-pstars} we would get linear expressions for the ``low-degree moments of Young diagrams'' (i.e., the $\pstar{j}$'s) in terms of $\psharp{\tau}$'s; we could then compute the expectation of these, under any Schur--Weyl distribution, using Proposition~\ref{prop:sw-alg-expect}.

We are therefore interested in the \emph{structure constants} $f^{\tau}_{\mu_1\mu_2}$ of $\Lambda^*$ in the basis $\{\psharp{\tau}\}$; i.e., the numbers such that
\[
    \psharp{\mu_1} \psharp{\mu_2} = \sum_{\tau \in \allpartitions} f^{\tau}_{\mu_1 \mu_2} \psharp{\tau}.
\]
These were first determined by Ivanov and Kerov~\cite{IK01} in terms of the algebra of \emph{partial permutations}.  We quote the following formulation from~\cite[Proposition~4.5]{IO02}:
\begin{proposition}\label{prop:structure-coefficients}
    Let $\tau,\mu_1,\mu_2 \in \allpartitions$. Fix a set $R$ of cardinality $|\tau|$ and a permutation $w:R\rightarrow R$ of cycle type~$\tau$. Then
    \begin{equation*}
        f_{\mu_1 \mu_2}^\tau = \frac{z_{\mu_1} z_{\mu_2}}{z_\tau} g^\tau_{\mu_1 \mu_2},
    \end{equation*}
    where $g^\tau_{\mu_1 \mu_2}$ equals the number of quadruples $(R_1, w_1, R_2, w_2)$ such that:
    \begin{enumerate}
    \item $R_1\subseteq R, \quad R_2 \subseteq R,  \quad R_1 \cup R_2 = R$;
    \item $|R_i| = |\mu_i|$ and $w_i : R_i\rightarrow R_i$ is a permutation of cycle type $\mu_i$, for $i = 1,2$;
    \item $\overline{w}_1 \overline{w}_2 = w$, where $\overline{w}_i : R\rightarrow R$ denotes the natural extension of $w_i$ from $R_i$ to the whole of~$R$.
    \end{enumerate}
\end{proposition}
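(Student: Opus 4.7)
The plan is to prove the identity by working inside the group algebra $\C[\symm{n}]$ for sufficiently large $n$, using the Ivanov--Kerov \emph{partial permutations} viewpoint. For each partition $\mu$ of $k$ and each $n \geq k$, I introduce the element
\[
    A_\mu^{(n)} \coloneqq \sum_{(R, w)} \bar w \;\in\; \C[\symm{n}],
\]
where the sum ranges over pairs $(R, w)$ with $R \subseteq [n]$, $|R| = k$, and $w$ a permutation of $R$ of cycle type~$\mu$, and $\bar w$ denotes the extension of $w$ to $[n]$ by identity. The first step is to observe that $A_\mu^{(n)}$ equals $\binom{n - k + m_1(\mu)}{m_1(\mu)}$ times the class sum $K_{\mu \cup 1^{n-k}}$ (since each $\sigma \in C_{\mu \cup 1^{n-k}}$ arises from $\binom{n - k + m_1(\mu)}{m_1(\mu)}$ choices of $R$ extending the support of $\sigma$ by $1$-cycles), and is therefore central. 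A direct calculation using $z_{\mu \cup 1^{n-k}} = z_\mu \cdot (n-k+m_1(\mu))!/m_1(\mu)!$ and the definition of $\psharp{\mu}$ then shows that $A_\mu^{(n)}$ acts on the irrep $\sirrep_\lambda$ via the scalar $\psharp{\mu}(\lambda)/z_\mu$.

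The second, combinatorial, step computes $A_{\mu_1}^{(n)} A_{\mu_2}^{(n)}$ directly from the definition by grouping terms according to the merged support $R = R_1 \cup R_2$. The extended product $\overline{w_1 w_2}$ fixes $[n] \setminus R$ pointwise and restricts to some permutation $\sigma$ of $R$. By $\symm{R}$-conjugation symmetry, the number of factorizations $\sigma = \bar w_1 \bar w_2$ with $R_1 \cup R_2 = R$ and $(R_i, w_i)$ of type $\mu_i$ depends only on the cycle type~$\tau$ of~$\sigma$ on~$R$, and this count is precisely $g^\tau_{\mu_1\mu_2}$ as defined in the statement. Reassembling the $\tau$-indexed sum of $A_\tau^{(n)}$ yields the clean ``multiplication formula''
\[
    A_{\mu_1}^{(n)} \cdot A_{\mu_2}^{(n)} \;=\; \sum_{\tau \in \allpartitions} g^\tau_{\mu_1 \mu_2}\, A_\tau^{(n)}.
\]

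To conclude, I evaluate both sides on $\sirrep_\lambda$ for an arbitrary $\lambda \vdash n$ with $n$ large, using the scalar from Step one. This gives the identity $(z_{\mu_1} z_{\mu_2})^{-1} \psharp{\mu_1}(\lambda) \psharp{\mu_2}(\lambda) = \sum_\tau g^\tau_{\mu_1 \mu_2}\, z_\tau^{-1}\, \psharp{\tau}(\lambda)$; clearing denominators and noting that the resulting identity holds on infinitely many Young diagrams (so, being an identity between fixed elements of $\Lambda^*$, it must hold identically) yields $f^\tau_{\mu_1 \mu_2} = (z_{\mu_1} z_{\mu_2}/z_\tau)\, g^\tau_{\mu_1 \mu_2}$. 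The main bookkeeping obstacle is the Step one calculation: the multiplicity $\binom{n-k+m_1(\mu)}{m_1(\mu)}$ relating $A_\mu^{(n)}$ to the ordinary class sum must combine with the factorial factor inside $z_{\mu \cup 1^{n-k}}$ to cancel all $n$-dependence and leave the clean scalar $\psharp{\mu}(\lambda)/z_\mu$. It is exactly this cancellation that makes the partial permutations viewpoint the right framework here, and ultimately explains why the structure constants in the $\psharp{}$ basis are $n$-independent at all.
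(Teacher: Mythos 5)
Your argument is correct, and it is essentially the standard partial-permutation proof of Ivanov and Kerov: the paper itself does not prove this proposition but quotes it from~\cite{IK01,IO02}, so what you have written is a reconstruction of the cited source's argument rather than an alternative to anything in the paper. The bookkeeping checks out: $A_\mu^{(n)}$ is $\binom{n-k+m_1(\mu)}{m_1(\mu)}$ times the class sum of $\mu\cup 1^{n-k}$, the identity $z_{\mu\cup 1^{n-k}} = z_\mu\,(n-k+m_1(\mu))!/m_1(\mu)!$ makes the scalar collapse to $\psharp{\mu}(\lambda)/z_\mu$, and grouping the product by the exact union $R_1\cup R_2$ and the restricted permutation gives the multiplication formula with coefficients $g^\tau_{\mu_1\mu_2}$. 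One wording fix: ``holds on infinitely many Young diagrams'' is not by itself sufficient, since a nonzero element of $\Lambda^*$ can vanish on infinitely many diagrams; what your argument actually establishes, and what you should invoke, is agreement on \emph{all} diagrams of every (sufficiently large) size, and for each fixed number of rows these integer points are Zariski dense, which forces equality in $\Lambda^*$ and hence the claimed formula for $f^\tau_{\mu_1\mu_2}$.
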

We present an equivalent formulation we have found to be more convenient.  We omit its straightforward combinatorial deduction from Proposition~\ref{prop:structure-coefficients}.
\begin{corollary}\label{cor:structure-coefficients-rewrite}
    Let
    \[
        C^{t}_{r_1r_2} \coloneqq \frac{r_1!r_2!}{(t-r_1)!(t-r_2)!(r_1+r_2-t)!}
    \]
    if the positive integers $r_1,r_2,t$ satisfy $r_1, r_2 \leq t \leq r_1 + r_2$, and let $C^t_{r_1r_2} \coloneqq 0$ otherwise. Then for $\mu \vdash r_1$, $\nu \vdash r_2$, $\tau \vdash t$,
    \[
        f_{\mu \nu}^\tau = C^{t}_{r_1r_2} \cdot
            \Pr_{\bw_1, \bw_2}\left[\overline{\bw}_1 \overline{\bw}_2 \textnormal{ has cycle type } \tau\right],
    \]
    where $\bw_1$ is a uniformly random permutation on $\{1, \ldots, r_1\}$ of cycle type $\mu$, and $\bw_2$ is a uniformly random permutation on $\{t-r_2+1, \ldots, t\}$ of cycle type~$\nu$.
\end{corollary}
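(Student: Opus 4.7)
The plan is to unpack the definition of $g^\tau_{\mu_1\mu_2}$ in Proposition~\ref{prop:structure-coefficients} and rewrite it by a double counting argument that pivots between summing over the ``target'' permutation $w$ and summing over the decompositions $(R_1, R_2)$.

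First, I would observe that the definition in Proposition~\ref{prop:structure-coefficients} makes $g^\tau_{\mu_1\mu_2}$ depend a priori on the chosen $w$, but in fact it is independent of the specific $w \in \symm{t}$ of cycle type $\tau$: given another such $w' = \sigma w \sigma^{-1}$, conjugation by $\sigma$ sets up a bijection $(R_1, w_1, R_2, w_2) \mapsto (\sigma(R_1), \sigma w_1\sigma^{-1}, \sigma(R_2), \sigma w_2 \sigma^{-1})$ between valid quadruples for $w$ and for $w'$. Consequently, taking $R = [t]$ and summing over all $t!/z_\tau$ permutations $w$ of cycle type $\tau$, the total number of quintuples $(w, R_1, w_1, R_2, w_2)$ satisfying conditions~1--3 is exactly $(t!/z_\tau)\cdot g^\tau_{\mu_1\mu_2}$.

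Next, I would recount this total by choosing $(R_1, R_2)$ first. The valid pairs of subsets $(R_1, R_2)$ of $[t]$ with $|R_i| = r_i$ and $R_1 \cup R_2 = [t]$ exist precisely when $r_1, r_2 \leq t \leq r_1+r_2$, in which case their number is the multinomial $\binom{t}{r_1+r_2-t,\,t-r_2,\,t-r_1} = \frac{t!}{(r_1+r_2-t)!(t-r_2)!(t-r_1)!}$, by choosing which elements lie in $R_1\cap R_2$, only in $R_1$, and only in $R_2$. By relabeling, for each such $(R_1, R_2)$ the number of pairs $(w_1, w_2)$ (with $w_i$ of cycle type $\mu_i$ on $R_i$) such that $\overline{w}_1\overline{w}_2$ has cycle type $\tau$ equals the count for the canonical choice $(R_1^0, R_2^0) = (\{1,\dots,r_1\}, \{t-r_2+1,\dots,t\})$. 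Since the number of permutations of $R_i^0$ of cycle type $\mu_i$ is $r_i!/z_{\mu_i}$, this canonical count equals $\tfrac{r_1! r_2!}{z_{\mu_1} z_{\mu_2}}\cdot \Pr_{\bw_1,\bw_2}[\overline{\bw}_1\overline{\bw}_2\text{ has cycle type }\tau]$.

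Equating the two expressions for the total and solving gives
\[
    g^\tau_{\mu_1\mu_2} = \frac{z_\tau \cdot r_1!\, r_2!}{z_{\mu_1} z_{\mu_2}\,(r_1+r_2-t)!(t-r_2)!(t-r_1)!} \cdot \Pr_{\bw_1,\bw_2}\!\left[\overline{\bw}_1\overline{\bw}_2\text{ type }\tau\right],
\]
and substituting into $f^\tau_{\mu_1\mu_2} = (z_{\mu_1}z_{\mu_2}/z_\tau)\, g^\tau_{\mu_1\mu_2}$ produces exactly $C^t_{r_1 r_2}\cdot\Pr[\cdot]$. In the boundary case where $r_1,r_2 \leq t \leq r_1+r_2$ fails, no valid $(R_1,R_2)$ exists, so $g^\tau_{\mu_1\mu_2} = 0$, matching the convention $C^t_{r_1 r_2} = 0$. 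This argument is essentially a routine bookkeeping translation; the only step that requires any care is verifying the symmetry arguments (independence from $w$ and from $(R_1,R_2)$), both of which follow from straightforward conjugation/relabeling bijections, so I do not anticipate any substantive obstacle.
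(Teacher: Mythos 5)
Your proposal is correct. The paper itself omits this argument, stating only that the corollary follows by a ``straightforward combinatorial deduction'' from Proposition~\ref{prop:structure-coefficients}, and your double-counting of quintuples $(w, R_1, w_1, R_2, w_2)$ is exactly such a deduction: the two symmetry steps you flag (independence of the count from the choice of $w$ of cycle type $\tau$, via conjugation, and independence from the choice of $(R_1,R_2)$, via relabeling with a permutation matching the three blocks $R_1\cap R_2$, $R_1\setminus R_2$, $R_2\setminus R_1$) are the only points needing care, and you handle both correctly; the multinomial count of valid pairs $(R_1,R_2)$, the count $r_i!/z_{\mu_i}$ of permutations of a given cycle type, and the final substitution into $f^\tau_{\mu\nu} = \tfrac{z_{\mu}z_{\nu}}{z_\tau} g^\tau_{\mu\nu}$ all check out, including the degenerate case where no valid $(R_1,R_2)$ exists.
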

As very simple examples, we can compute
\begin{equation}                            \label{eqn:simple-psharp-products}
    (\psharp{1})^2 = \psharp{(1,1)} + \psharp{1}, \qquad \psharp{2}\psharp{1} = \psharp{(2,1)} + 2\psharp{2}, \qquad (\psharp{2})^2 = \psharp{(2,2)} + 4 \psharp{3} + 2\psharp{(1,1)}.
\end{equation}
Substituting these into~\eqref{eqn:first-few-psharps}, we obtain the formulas
\begin{equation}                \label{eqn:low-moments}
    \pstar{1} = \psharp{1}, \qquad \pstar{2} = \psharp{2}, \qquad \pstar{3} = \psharp{3} +\tfrac32\psharp{(1,1)} + \tfrac14\psharp{1}, \qquad \pstar{4} = \psharp{4} +4\psharp{(2,1)}+\tfrac52 \psharp{2},
\end{equation}
which will be useful to us later.

Given the formula for the structure constants, it's not hard to show that
\[
    \psharp{\mu} \psharp{\nu} = \psharp{\mu \cup \nu} + \Bigl\{\text{linear combination of $\psharp{\tau}$'s with $|\tau| < |\mu \cup \nu|$}\Bigr\},
\]
where $\mu \cup \nu$ denotes the partition formed by joining the parts of $\mu$ and $\nu$ and sorting them in nonincreasing order (i.e., $m_w(\mu \cup \nu) = m_w(\mu) + m_w(\nu)$).  In fact, we will require a stronger statement, based on the following notion introduced in~\cite{IK01}:
\begin{definition}
    For a partition $\lambda \in \allpartitions$, its \emph{weight} is defined to be $\weight(\lambda) = |\lambda| + \ell(\lambda)$.
\end{definition}
Now \'{S}niady~\cite[Corollary~3.8]{Sni06} proved:
\begin{proposition}                                     \label{prop:should-be-in-a-structure-constants-discussion-or-theorem}
    $\displaystyle
        \psharp{\mu} \psharp{\nu} = \psharp{\mu \cup \nu} + \Bigl\{ \textnormal{linear combination of $\psharp{\tau}$'s with $\weight(\tau) \leq \weight(\mu) + \weight(\nu) - 2$}\Bigr\}.
    $
\end{proposition}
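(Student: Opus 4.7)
The plan is to derive the claim directly from the explicit formula for the structure constants $f^{\tau}_{\mu\nu}$ given in Corollary~\ref{cor:structure-coefficients-rewrite}. Write $r_1 = |\mu|$, $r_2 = |\nu|$, $t = |\tau|$; the constant vanishes unless $\max(r_1,r_2)\leq t \leq r_1+r_2$, so set $k = r_1+r_2-t \in [0,\min(r_1,r_2)]$. If $k=0$, the subsets $\{1,\dots,r_1\}$ and $\{t-r_2+1,\dots,t\}$ are disjoint, so $\overline{\bw}_1\overline{\bw}_2$ is the ``juxtaposed'' permutation with cycle type $\mu \cup \nu$ deterministically, and $C^{t}_{r_1 r_2}=1$; thus $f^{\mu\cup\nu}_{\mu\nu}=1$, accounting for the named leading term (of weight $\weight(\mu)+\weight(\nu)$). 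For $k\geq 1$ it remains to show that whenever $f^{\tau}_{\mu\nu}\neq 0$, one has $\weight(\tau)\leq \weight(\mu)+\weight(\nu)-2$.

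Since $t = r_1+r_2-k$, the target inequality $\weight(\tau)=t+\ell(\tau)\leq (r_1+\ell(\mu))+(r_2+\ell(\nu))-2$ is equivalent to the cycle-count bound
\begin{equation*}
c(\overline{\bw}_1\overline{\bw}_2)\ =\ \ell(\tau)\ \leq\ \ell(\mu)+\ell(\nu)+k-2.
\end{equation*}
I would invoke the classical genus (Hurwitz) inequality for factorizations in $\symm{t}$: for any $\sigma_1,\sigma_2\in\symm{t}$,
\begin{equation*}
c(\sigma_1)+c(\sigma_2)+c(\sigma_1\sigma_2)\ \leq\ t+2o(\sigma_1,\sigma_2),
\end{equation*}
where $o(\sigma_1,\sigma_2)$ is the number of orbits of $\langle\sigma_1,\sigma_2\rangle$ on $\{1,\dots,t\}$ (provable by induction using that multiplication by a transposition changes the cycle count by $\pm 1$). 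Since $\overline{\bw}_1$ fixes $\{r_1+1,\dots,t\}$ pointwise and $\overline{\bw}_2$ fixes $\{1,\dots,t-r_2\}$ pointwise, counting fixed points gives $c(\overline{\bw}_1)=\ell(\mu)+(r_2-k)$ and $c(\overline{\bw}_2)=\ell(\nu)+(r_1-k)$; substituting and simplifying reduces everything to the purely combinatorial claim
\begin{equation*}
o(\overline{\bw}_1,\overline{\bw}_2)\ \leq\ \ell(\mu)+\ell(\nu)-1.
\end{equation*}

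This orbit bound is the main obstacle, and I would handle it as follows. Partition $\{1,\dots,t\}$ into $A=\{1,\dots,t-r_2\}$, $B=\{t-r_2+1,\dots,r_1\}$ (of size $k\geq 1$), and $C=\{r_1+1,\dots,t\}$. Every $\langle\overline{\bw}_1,\overline{\bw}_2\rangle$-orbit is of exactly one of three types: (i)~contained in $A$, in which case it is a single $\overline{\bw}_1$-cycle disjoint from $B$ (since $\overline{\bw}_2$ acts trivially there); (ii)~contained in $C$, a single $\overline{\bw}_2$-cycle disjoint from $B$; or (iii)~meeting $B$. Let $b_i$ denote the number of $\overline{\bw}_i$-cycles meeting $B$; since $B\neq\emptyset$ we have $b_1,b_2\geq 1$. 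Types (i) and (ii) contribute $\ell(\mu)-b_1$ and $\ell(\nu)-b_2$ orbits respectively. Orbits of type (iii) are in bijection with the connected components of the bipartite ``incidence'' graph whose vertex classes are the $\overline{\bw}_1$- and $\overline{\bw}_2$-cycles touching $B$ and whose edges are the elements of $B$ (each $x\in B$ joining its unique $\overline{\bw}_1$-cycle to its unique $\overline{\bw}_2$-cycle); this graph has $b_1+b_2$ vertices and $k\geq 1$ edges, hence at most $b_1+b_2-1$ components. Summing the three contributions gives $o \leq (\ell(\mu)-b_1)+(\ell(\nu)-b_2)+(b_1+b_2-1) = \ell(\mu)+\ell(\nu)-1$, which plugged into the genus inequality yields the desired weight bound.
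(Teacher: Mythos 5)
Your proof is correct, and it takes a genuinely different route from the paper: the paper does not prove this proposition at all, but simply cites \'{S}niady~[Sni06, Corollary~3.8], whereas you derive it directly from the structure-constant formula in Corollary~\ref{cor:structure-coefficients-rewrite}. Your reduction is exactly right: the $t=r_1+r_2$ term contributes $\psharp{\mu\cup\nu}$ with coefficient exactly $C^{r_1+r_2}_{r_1r_2}=1$ (and no lower-$|\tau|$ term can collide with $\mu\cup\nu$), so everything hinges on showing that for $k=r_1+r_2-t\geq 1$ every $\tau$ in the support satisfies $\ell(\tau)\leq\ell(\mu)+\ell(\nu)+k-2$; your fixed-point counts $c(\overline{\bw}_1)=\ell(\mu)+r_2-k$, $c(\overline{\bw}_2)=\ell(\nu)+r_1-k$ are correct, and the orbit bound $o\leq\ell(\mu)+\ell(\nu)-1$ via the bipartite incidence graph on the cycles meeting $B$ (with the elements of $B$ as edges, $b_1,b_2\geq 1$, at least one edge, hence at most $b_1+b_2-1$ components) is a clean and complete argument. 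What your approach buys is a short, self-contained, purely combinatorial proof inside the paper's own framework (Ivanov--Olshanski structure constants plus classical permutation combinatorics), at the cost of invoking the genus/Hurwitz inequality; \'{S}niady's proof instead goes through his general machinery for the gradation of $\Lambda^*$ and cumulants of the $\psharp{}$-basis, which gives more (e.g.\ control of all higher-order terms) but is far from elementary.

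One caveat: the parenthetical justification of the genus inequality $c(\sigma_1)+c(\sigma_2)+c(\sigma_1\sigma_2)\leq t+2\,o(\sigma_1,\sigma_2)$ as ``induction using that multiplication by a transposition changes the cycle count by $\pm1$'' is too glib as stated. The naive induction (peel a transposition $\tau$ off $\sigma_2$, apply the hypothesis to $(\sigma_1,\sigma_2\tau)$) fails because splitting a cycle of $\sigma_2$ can only \emph{increase} the number of orbits of the generated group, which is the wrong direction; the correct induction reduces to the transitive case and then splits into two cases according to whether $\langle\sigma_1,\sigma_2\tau\rangle$ remains transitive, using in the disconnected case that the two points of $\tau$ then lie in different cycles of $\sigma_1\sigma_2\tau$, so that multiplying by $\tau$ \emph{merges} cycles. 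The inequality itself is classical (non-negativity of the genus in the Euler formula for hypermaps/constellations, e.g.\ Jacques or Cori--Mach\`{\i}), so either cite it or include the two-case induction; with that repaired or referenced, your argument is a complete and valid alternative proof.
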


\section{The empirical Young diagram algorithm}

The empirical Young diagram (EYD) algorithm works as follows:
\begin{nameddef}{The EYD algorithm}
Given $\rho^{\otimes n}$:
\begin{enumerate}
\item Sample $\blambda \sim \SWdens{n}{\rho}$.
\item Output $\underline{\blambda}\coloneqq (\blambda_1/n, \ldots, \blambda_d/n)$.
\end{enumerate}
\end{nameddef}
\noindent
This algorithm has, either implicitly or explicitly, arisen in several independent research threads.
The first was the work of Alicki, Rudnicki, and Sadowski~\cite{ARS88}, who showed that if $\rho$ has eigenvalues
$\eta_1 \geq \ldots \geq \eta_d$, then $\underline{\blambda}\rightarrow \eta$ as $n\rightarrow \infty$, and furthermore sketched a central limit theorem for the fluctuations.
Ten years later, Keyl and Werner~\cite{KW01} independently reproved the first part of this result (and showed an ``error rate'' for the EYD algorithm which, for any fixed~$d$, decreases exponentially in~$n$); they also explicitly suggested the EYD algorithm for spectrum estimation.  Further independent work, developing the research on the ``Gaussian Unitary Ensemble'' nature of the fluctuations, was performed by Its--Tracy--Widom, Houdr\'{e} and coauthors, and others~\cite{ITW01,Lit08,HX13}

\subsection{The upper bound}

Following Keyl and Werner's paper~\cite{KW01}, a short, simplified proof of correctness containing explicit error bounds was discovered in~\cite{HM02}.
A small bug in their derivation was corrected by~\cite{CM06}, whose Corollary~$1$ states:
\begin{theorem}\label{thm:kw-prelim}
Let $\rho$ be a mixed state with eigenvalues $\eta_1 \geq \ldots \geq \eta_d$.
Let $S$ be any set of partitions of $n$, and set $d_{\mathrm{KL}} := \min_{\lambda \in S} \dkl{\underline{\lambda}}{\eta}$.  Then
\begin{equation*}
\Pr_{\blambda \sim \SWdens{n}{\rho}}[\blambda \in S]
\leq (n+1)^{d(d+1)/2}\cdot e^{-n \cdot d_{\mathrm{KL}}}.
\end{equation*}
\end{theorem}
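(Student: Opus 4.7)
The plan is to first prove the per-partition estimate $\SWdensProb{n}{\rho}{\lambda} \leq (n+1)^{d(d-1)/2}\cdot e^{-n\cdot \dkl{\underline{\lambda}}{\eta}}$ and then union-bound over $\lambda \in S$. By Proposition~\ref{prop:schur-probability}, $\SWdensProb{n}{\rho}{\lambda} = \dim(\lambda) \cdot s_\lambda(\eta_1,\dots,\eta_d)$, so the core task is to estimate $s_\lambda(\eta)$. The heart of the argument is the bound $s_\lambda(\eta) \leq s_\lambda(1,\dots,1) \cdot \eta^\lambda$, where I write $\eta^\lambda \coloneqq \prod_i \eta_i^{\lambda_i}$. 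To prove this, I would use the combinatorial expansion $s_\lambda(\eta) = \sum_T \eta^T$ over semistandard tableaux of shape $\lambda$ with alphabet $[d]$ and show that each term satisfies $\eta^T \leq \eta^\lambda$. The key point is that in any SSYT $T$, the letter $i$ can occur only in rows $1,\dots,i$, so the content vector $(m_1(T),\dots,m_d(T))$ is majorized by $(\lambda_1,\dots,\lambda_d)$; combined with the assumption $\eta_1\geq\cdots\geq\eta_d$ (which makes $\log\eta_i$ nonincreasing in $i$), an Abel summation gives $\sum_i m_i(T)\log\eta_i \leq \sum_i \lambda_i\log\eta_i$.

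Next I would bound the remaining factors and recognize the KL divergence. For $\dim(\lambda)$, note that any standard Young tableau is determined by its row-partition of $[n]$ (column-strictness is an extra constraint that can only shrink the count), so $\dim(\lambda) \leq \binom{n}{\lambda_1,\dots,\lambda_d}$; the classical entropy bound $\binom{n}{\lambda_1,\dots,\lambda_d} \leq n^n/\prod_i \lambda_i^{\lambda_i}$ then follows by setting $p_i=\lambda_i/n$ in the multinomial identity $1=(\sum_i p_i)^n \geq \binom{n}{\lambda_1,\dots,\lambda_d}\cdot p^\lambda$. For $s_\lambda(1,\dots,1)$, I invoke Weyl's dimension formula $s_\lambda(1,\dots,1) = \prod_{1\leq i<j\leq d}\frac{\lambda_i-\lambda_j+j-i}{j-i}$ and bound each of the $\binom{d}{2}$ factors by $\lambda_i-\lambda_j+1 \leq n+1$ (since $b \geq 1$ implies $(a+b)/b = a/b+1 \leq a+1$), yielding $s_\lambda(1,\dots,1) \leq (n+1)^{d(d-1)/2}$. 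Finally, the definition of KL divergence gives
\[
    \eta^\lambda \cdot \frac{n^n}{\prod_i \lambda_i^{\lambda_i}} = \prod_i\left(\frac{n\eta_i}{\lambda_i}\right)^{\lambda_i} = e^{-n\cdot\dkl{\underline{\lambda}}{\eta}}.
\]
Multiplying the three bounds together yields the desired per-partition estimate.

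To conclude, I would sum over $\lambda \in S$, using the crude bound that the number of partitions of $n$ with length at most $d$ is at most $(n+1)^{d-1}$ (encode $\lambda$ by $(\lambda_1,\dots,\lambda_{d-1})\in\{0,\dots,n\}^{d-1}$). This yields
\[
    \Pr_{\blambda\sim\SWdens{n}{\rho}}[\blambda\in S]\;\leq\;(n+1)^{d-1}\cdot(n+1)^{d(d-1)/2}\cdot e^{-n\cdot d_{\mathrm{KL}}}\;\leq\;(n+1)^{d(d+1)/2}\cdot e^{-n\cdot d_{\mathrm{KL}}},
\]
as required. The main obstacle is really the majorization-based bound $\eta^T\leq\eta^\lambda$; the rest is in the spirit of the classical method-of-types proof of Sanov's theorem, adapted to the Schur--Weyl distribution via Proposition~\ref{prop:schur-probability}.
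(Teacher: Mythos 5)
Your proposal is correct, and every estimate checks out: the majorization of the content of a semistandard tableau by its shape (letter $i$ lives in rows $1,\dots,i$) does give $\eta^T \leq \prod_i \eta_i^{\lambda_i}$ once the $\eta_i$ are sorted, the bound $\dim(\lambda) \leq \binom{n}{\lambda_1,\dots,\lambda_d}$ and the Weyl-product bound $s_\lambda(1,\dots,1) \leq (n+1)^{d(d-1)/2}$ are both valid, and the counting of partitions of length at most $d$ by $(n+1)^{d-1}$ combines with these to give the stated exponent $d(d+1)/2$.

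The route differs from the paper's mainly in packaging. The paper deliberately reinterprets the argument through RSK: it views $\SWdensProb{n}{\rho}{\lambda}$ as $\Pr[\rsk(\ba)=\lambda]$ for a random word $\ba$ (Remark~\ref{rem:quantum-free}), gets the factor $\prod_i \eta_i^{\lambda_i}$ from the fact that $\rsk(a)=\lambda$ forces $\lambda$ to majorize the histogram of $a$ (Proposition~\ref{prop:majorizer}, i.e.\ Greene's theorem), and counts the words mapping to $\lambda$ by $\dim(\sirrep_\lambda)\cdot\dim(\uirrep_\lambda^d)$, citing the standard bounds on these two dimensions. You instead stay with the algebraic formula of Proposition~\ref{prop:schur-probability} and bound $s_\lambda(\eta) \leq s_\lambda(1,\dots,1)\prod_i\eta_i^{\lambda_i}$ directly from the semistandard-tableau expansion, and you prove (rather than cite) the unitary-dimension bound via Weyl's product formula. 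Under the RSK bijection these are literally the same estimates seen through different lenses --- the histogram of the word is the content of its insertion tableau --- and your version is in fact closer to the original Hayashi--Matsumoto/Christandl--Mitchison proof that the paper set out to reinterpret. What the paper's RSK framing buys is a combinatorially transparent source for the majorization step (and consistency with how RSK is used elsewhere in the paper); what your version buys is self-containedness, needing no random-word interpretation and no external dimension bounds beyond Weyl's formula, plus a marginally sharper count of the partitions in the union bound.
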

\noindent
If we apply Theorem~\ref{thm:kw-prelim} with the set of partitions $S = \{\lambda \vdash n\mid \dtv{\underline{\lambda}}{\eta} > \eps\}$
and use Pinsker's inequality, we get the following corollary:
\begin{corollary}\label{cor:kw-bound}
Let $\rho$ be a mixed state with eigenvalues $\eta_1 \geq \ldots \geq \eta_d$. Then
\begin{equation*}
\Pr_{\blambda \sim \SWdens{n}{\rho}}[\dtv{\underline{\blambda}}{\eta}>\eps]
\leq (n+1)^{d(d+1)/2}\cdot e^{-2n\eps^2}.
\end{equation*}
In particular, $O(d^2/\eps^2) \cdot \log(d/\eps)\cdot\log(1/\delta)$ samples are sufficient
to output an estimate $\underline{\lambda}$ satisfying $\dtv{\underline{\lambda}}{\eta} \leq \eps$ with probability at least $1-\delta$.
\end{corollary}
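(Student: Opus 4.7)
The proof is essentially a direct substitution into Theorem~\ref{thm:kw-prelim} combined with Pinsker's inequality, so there is no real obstacle — the task is just to choose $S$ correctly and then to solve for $n$.

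My plan is as follows. First, apply Theorem~\ref{thm:kw-prelim} with the set $S = \{\lambda \vdash n : \dtv{\underline{\lambda}}{\eta} > \eps\}$, which is exactly the ``bad'' event we wish to control. Pinsker's inequality (stated in Section~2.1 of the preliminaries) gives $\dtv{\underline{\lambda}}{\eta} \leq \frac{1}{\sqrt{2}}\sqrt{\dkl{\underline{\lambda}}{\eta}}$, and hence any $\lambda \in S$ satisfies $\dkl{\underline{\lambda}}{\eta} > 2\eps^2$. Taking the minimum over $\lambda \in S$, we conclude $d_{\mathrm{KL}} \geq 2\eps^2$ (if $S$ is empty the probability is trivially~$0$). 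Plugging this bound into Theorem~\ref{thm:kw-prelim} yields
\[
    \Pr_{\blambda \sim \SWdens{n}{\rho}}[\dtv{\underline{\blambda}}{\eta} > \eps] \leq (n+1)^{d(d+1)/2} \cdot e^{-2n\eps^2},
\]
which is the main inequality claimed.

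For the ``in particular'' statement, we want to choose $n$ large enough that the right-hand side is at most~$\delta$. Taking logarithms, it suffices to ensure
\[
    2n\eps^2 \geq \tfrac{d(d+1)}{2}\ln(n+1) + \ln(1/\delta).
\]
Setting $n = C \cdot (d^2/\eps^2) \cdot \ln(d/\eps) \cdot \ln(1/\delta)$ for a sufficiently large absolute constant~$C$, we have $\ln(n+1) = O(\ln(d/\eps) + \ln\ln(1/\delta))$, and a routine check shows both terms on the right-hand side are dominated by $n\eps^2$. Hence this choice of~$n$ suffices to make the failure probability at most~$\delta$, completing the proof.
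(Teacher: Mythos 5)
Your proposal is correct and matches the paper's own (very short) proof: the paper likewise applies Theorem~\ref{thm:kw-prelim} to the set $S = \{\lambda \vdash n : \dtv{\underline{\lambda}}{\eta} > \eps\}$ and invokes Pinsker's inequality to get $d_{\mathrm{KL}} \geq 2\eps^2$, and the ``in particular'' clause is the same routine choice of $n$ making $(n+1)^{d(d+1)/2}e^{-2n\eps^2} \leq \delta$. Nothing further is needed.
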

\noindent
This means that any unitarily invariant property of mixed states is testable with $O(d^2/\eps^2)\cdot \log(d/\eps)$ copies.

We now give a simplified proof of Theorem~\ref{thm:kw-prelim}.
This will largely follow the outline of the proof found in~\cite{HM02,CM06}, except we will reinterpret their majorizing step in light of the RSK algorithm.
\begin{proof}[Proof of Theorem~\ref{thm:kw-prelim}]
Define the probability distribution $\calD = (\eta_1, \ldots, \eta_d)$.
For a fixed partition $\lambda \in S$,
Remark~\ref{rem:quantum-free} shows that upper-bounding $\SWdensProb{n}{\rho}{\lambda}$
is equivalent to upper-bounding $\Pr_{\ba \sim \calD^{\otimes n}}[\rsk(\ba) = \lambda]$.
By Proposition~\ref{prop:majorizer}, $\rsk(a) = \lambda$ only if $\lambda$ \emph{majorizes} $c(a)$.

By Remark~\ref{rem:no-rsk-correspondence}, there are exactly $\dim(\sirrep_\lambda)\cdot \dim(\uirrep_\lambda^d)$
words $a \in [d]^n$ for which $\rsk(a) = \lambda$.
By the majorizing step, the probability that such an $a$ is drawn from $\calD^{\otimes n}$ is
\begin{equation*}
\prod_i \eta_i^{c_i(a)} \leq \prod_i \eta_i^{\lambda_i}.
\end{equation*}
From this point on, the rest of the argument is as in~\cite{HM02,CM06}.
Recall the well-known upper bounds (cf.~\cite[Equations (1.21) and (1.22)]{Chr06})
\begin{equation*}
\dim(\sirrep_\lambda) \leq\frac{n!}{\prod_i \lambda_i !},
\qquad
\dim(\uirrep_\lambda^d)\leq (n+1)^{d(d-1)/2}.
\end{equation*}
Thus, we can upper-bound $\Pr_{\ba \sim \calD^{\otimes n}}[\rsk(\ba) = \lambda]$ by
\begin{equation*}
(n+1)^{d(d-1)/2}\cdot \frac{n!}{\prod_i \lambda_i !}\cdot \prod_i \eta_i^{\lambda_i}
\leq (n+1)^{d(d-1)/2}\cdot \exp(-n \cdot \dkl{\underline{\lambda}}{\eta}).
\end{equation*}
To recover Theorem~\ref{thm:kw-prelim}, we now union bound over all $\lambda \in S$, of which there are at most $(n+1)^d$.
\end{proof}

\subsection{The lower bound}

Our main result of this section is that Corollary~\ref{cor:kw-bound} is nearly tight, even when~$\rho$ is the maximally mixed state.
In particular, we show the following lower bound:
\begin{theorem}\label{thm:gw-lower}
There is a $\delta > 0$ such that for sufficiently small values of $\eps$,
\begin{equation*}
\Pr_{\blambda \sim \SWunif{n}{d}}[\dtv{\underline{\blambda}}{\unif{d}} > \eps] \geq \delta
\end{equation*}
unless $n = \Omega(d^2/\eps^2)$.
\end{theorem}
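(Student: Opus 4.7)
The plan is to combine a geometric reformulation of the TV distance with Biane's limit shape theorem (Theorem~\ref{thm:biane}) and a bounded-differences concentration argument. Setting $c = \sqrt{n}/d$, the interesting regime is $c \gtrsim 1$ (where the limit shape looks nearly like the staircase), with the regime $c \lesssim 1$ handled separately via zero entries of $\underline{\blambda}$.

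For the reformulation, note that $\dtv{\underline{\blambda}}{\unif{d}} = A(\blambda)/n$, where $A(\blambda) := \sum_i (\blambda_i - n/d)_+$ counts the cells of $\blambda$'s French diagram lying to the right of the vertical line $F_x = n/d$ (positive and negative deviations balance since $\sum_i (\blambda_i - n/d) = 0$). For any $1 \leq k \leq d$ one has the elementary bound $A(\blambda) \geq \sum_{i=1}^k \blambda_i - k(n/d)$. I will take $k = \lfloor d/2 \rfloor$ and define $\bT := \sum_{i=1}^k \blambda_i$; by Greene's theorem (Theorem~\ref{thm:greene}) together with Remark~\ref{rem:quantum-free}, $\bT$ equals the length of the longest $k$-increasing subsequence of a uniform random word $\ba \in [d]^n$. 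Because the length of any $k$-increasing subsequence changes by at most~$1$ when a single letter of $\ba$ is altered, McDiarmid's inequality yields $\bT \geq \E[\bT] - 10\sqrt{n}$ with probability at least $9/10$.

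Now comes the main calculation, estimating $\E[\bT]$. By Biane's theorem, $\bT/n \to \int_0^{1/(2c)} \psi_c(\xi)\,d\xi$ in probability, where $\psi_c$ is the scaled French profile of $\Omega_c$. A direct manipulation of the arcsin formula for $\Omega_c$ yields the identity $\Omega_c(c + u) + \Omega_c(c - u) = 2c + 2/c$ on the arch region $|u| \leq 2$ (to leading order in $1/c$), which translates to the French antisymmetry $\psi_c(\xi) + \psi_c(1/c - \xi) = 2c + 1/c$. Combined with the computation of the scaled-French $L^1$ distance $\int_0^{1/c} |\psi_c(\xi) - c|\,d\xi = 4/(3\pi c) + O(1/c^2)$ (via the substitution $x - c = 2\sin\theta$), this gives $\int_0^{1/(2c)}\psi_c\,d\xi = 1/2 + 2/(3\pi c) + O(1/c^2)$, and hence $\E[\bT] \geq n/2 + (2/(3\pi)) d\sqrt{n}(1 + o(1))$. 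For $d$ large, the signal $(2/(3\pi)) d\sqrt{n}$ dominates the concentration error $10\sqrt{n}$, so with probability $\geq 9/10$,
\[
    A(\blambda) \geq \bT - n/2 \geq \frac{d\sqrt{n}}{3\pi} - 10\sqrt{n},
\]
giving $\dtv{\underline{\blambda}}{\unif{d}} \geq 1/(4\pi c) \geq \eps$ whenever $c \leq 1/(4\pi\eps)$, equivalently $n \leq d^2/(16\pi^2\eps^2)$. For the complementary regime $c \lesssim 1$, Biane's theorem (or, for $c$ bounded away from $1$ from below, directly Proposition~\ref{prop:increasing-seq}) shows $\ell(\blambda) \leq (2-c)\sqrt{n} + o(\sqrt{n})$, which produces $\Omega(d)$ zero entries in $\underline{\blambda}$ and hence $\dtv \geq \Omega(1) \geq \eps$ for small $\eps$; the narrow transition window $c \approx 1$ is handled by direct evaluation of Biane's formula at $c=1$, which gives a $\Theta(1)$ excess over the staircase.

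The main obstacle is making the limit-shape calculation quantitative at finite $n, d$: Theorem~\ref{thm:biane} as stated is only a qualitative convergence-in-probability statement. A rigorous substitute can be obtained either by invoking Meliot's quantitative central limit theorem~\cite{Mel10a, Mel10b} (which provides $\sqrt{n}$-scale fluctuation bounds on $\overline{\blambda}$ around $\Omega_c$), or by directly computing $\E[\bT]$ through Kerov's algebra of observables: the partial sums $\sum_{i \leq k} \blambda_i$ can be expanded in the $\psharp{\mu}$ basis via the formulas connecting them to the $\pstar{j}$ polynomials, after which Corollary~\ref{cor:sw-expect} supplies each expectation in closed form.
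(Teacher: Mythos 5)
Your reduction is sound as far as it goes: writing $\dtv{\underline{\blambda}}{\unif{d}}$ as the normalized sum of positive row deviations, lower-bounding it by $\sum_{i\le \lfloor d/2\rfloor}\blambda_i - n/2$, identifying that partial sum with a longest $k$-increasing-subsequence statistic via Theorem~\ref{thm:greene}, and concentrating it with McDiarmid is a clean mechanism, genuinely different from the paper's Chebyshev-on-observables argument. The gap is the load-bearing step you defer: a finite-$(n,d)$ bound $\E\bigl[\sum_{i\le d/2}\blambda_i\bigr] - n/2 = \Omega(d\sqrt{n})$ that holds \emph{uniformly} over the whole range $d^2 \lesssim n \lesssim d^2/\eps^2$ (and an $\Omega(1)$ bound on the TV distance uniformly over the intermediate ratios). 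Theorem~\ref{thm:biane} is a qualitative convergence-in-probability statement for a fixed ratio $c=\sqrt{n}/d$ with an unspecified rate, and Meliot's CLT has the same fixed-$c$ asymptotic character; as the paper stresses in the discussion preceding Theorem~\ref{thm:all-converge}, an adversarial choice of $n=n(d)$ can keep you permanently outside the regime where any fixed-$c$ limit theorem has kicked in. Closing exactly this hole is what the paper's proof consists of: for $n\ge Cd^2$ it abandons limit shapes and runs a second-vs-fourth moment argument with $\pstarc{2}{n/d}$ and $\pstarc{4}{n/d}$ (Theorem~\ref{thm:large-n}), which is the rigorous implementation of the $\Omega(d\sqrt n)$ aggregate deviation you want, and for $n\le Cd^2$ it proves the uniform convergence statement Theorem~\ref{thm:all-converge} by a lengthy bookkeeping of the observables method in Section~\ref{sec:eyd-cont}.

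Moreover, your proposed in-house fix fails as stated: the truncated row sum $\sum_{i\le k}\blambda_i$ with a hard cutoff at $k=\lfloor d/2\rfloor$ is \emph{not} a shifted symmetric function, i.e.\ not an element of $\Lambda^*$, so it has no expansion in the $\psharp{\mu}$ basis and Corollary~\ref{cor:sw-expect} cannot be applied to it. (The paper's large-$n$ proof works precisely because it uses $\sum_{i=1}^d(\blambda_i-i-n/d)^k$ for $k=2,4$, which are polynomials in the $\pstar{j}$'s, and then extracts ``many rows deviate by $\Omega(\sqrt n)$'' by moment comparison rather than by isolating the top half of the rows; approximating your sharp cutoff by polynomial observables is possible but is essentially the Weierstrass-approximation argument of Section~\ref{sec:eyd-cont}.) Finally, the intermediate regime is not a ``narrow window'': Proposition~\ref{prop:increasing-seq} only forces $\Omega(d)$ empty rows when $\sqrt n/d$ is below a small absolute constant, so a constant-length interval of ratios remains, where you again need quantitative, uniform limit-shape information — i.e.\ the content of Theorem~\ref{thm:all-converge}. (The exact constants such as $4/(3\pi c)$ are immaterial; the missing uniform quantitative estimate is the issue.)
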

We will split the lower bound into two cases.

\begin{theorem}\label{thm:small-n}
For every constant $C > 0$, there are constants $\delta, \eps >0$ such that
\begin{equation*}
\Pr_{\blambda \sim \SWunif{n}{d}}[\dtv{\underline{\blambda}}{\unif{d}} > \eps] \geq \delta
\end{equation*}
when $n < C d^2$ and $d$ is sufficiently large.
\end{theorem}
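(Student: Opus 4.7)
The strategy is to translate the total-variation distance $\dtv{\underline{\blambda}}{\unif{d}}$ into an $L^{1}$-distance between rescaled Russian profile curves, and to handle separately the regimes in which $c := \sqrt{n}/d$ is very small or moderate.  Letting $\overline{R}$ denote the rescaled Russian profile of the ``uniform rectangle'' partition with all rows of length $n/d$---this being a rhombus with upper corner at $(c-1/c,\, c+1/c)$---a direct computation (each Young-diagram cell has area $2$ in Russian coordinates, hence $2/n$ after rescaling by $\sqrt{n}$) yields
\[
\int_{\R} \bigl|\overline{\lambda}(x) - \overline{R}(x)\bigr|\, dx \;=\; \frac{2}{n} \sum_{i=1}^{d} \bigl|\lambda_{i} - n/d\bigr| \;=\; 4\, \dtv{\underline{\lambda}}{\unif{d}}.
\]
It therefore suffices to lower-bound $\int |\overline{\blambda} - \overline{R}|$ away from zero with constant probability.

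For very small $c$ (say $c \leq c_{0}$ for a threshold $c_{0} = c_{0}(C)$ chosen small enough that $(1 + 4c_{0}) e^{2}/16 < 1$), Proposition~\ref{prop:increasing-seq} applied with $B = 4\sqrt{n}$ gives $\ell(\blambda) = \blambda_{1}' \leq 4\sqrt{n} = 4cd \leq 4c_{0} d$ with probability $1 - o(1)$.  Then $\underline{\blambda}$ is supported on at most $4c_{0} d$ of the $d$ coordinates, so $\dtv{\underline{\blambda}}{\unif{d}} \geq (1 - 4c_{0})/2$, a positive constant.

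For the substantial regime $c \in [c_{0}, \sqrt{C}\,]$, invoke Theorem~\ref{thm:biane}: with high probability, $\|\overline{\blambda} - \Omega_{c}\|_{\infty} = o(1)$. The key geometric fact is that $\Omega_{c}$ is smooth on the interior of its nontrivial support---its derivative lying strictly in $(-1, +1)$, as can be verified from the explicit formulas---whereas $\overline{R}$ has a sharp corner where its slope jumps from $+1$ to $-1$.  In particular $\overline{R} \neq \Omega_{c}$, so $\|\overline{R} - \Omega_{c}\|_{\infty} \geq \beta_{c} > 0$; since both profiles are $1$-Lipschitz, such an $L^{\infty}$-gap forces $\int |\overline{R} - \Omega_{c}| \geq \beta_{c}^{\,2}/2$.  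By compactness of $[c_{0}, \sqrt{C}\,]$ and continuity of $c \mapsto (\overline{R}, \Omega_{c})$, the bound $\beta := \inf_{c} \beta_{c} > 0$ is uniform.  The triangle inequality, together with $\int |\overline{\blambda} - \Omega_{c}| \leq O(\|\overline{\blambda} - \Omega_{c}\|_{\infty}) = o(1)$ (both curves agree with $|x|$ outside a bounded interval), then gives $\dtv{\underline{\blambda}}{\unif{d}} \geq \beta^{2}/16$ with high probability.

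The principal obstacle is uniformity: Theorem~\ref{thm:biane} is stated for a single fixed $c$, but we apply it across the compact range $c \in [c_{0}, \sqrt{C}\,]$.  I would address this by covering the interval with a finite $\delta$-net, applying Theorem~\ref{thm:biane} at each grid point, and invoking continuity of $c \mapsto \Omega_{c}$ together with the $1$-Lipschitz-ness of $\overline{\blambda}$ to interpolate between grid points.
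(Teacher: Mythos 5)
Your reduction to the $L^{1}$ distance between rescaled Russian profiles is exactly the one the paper uses ($\dcurve{\overline{\lambda}}{\overline{\unifpart{\theta}}} = 4\dtv{\underline{\lambda}}{\unif{d}}$), and your two-regime split is sound; but at the crucial step you and the paper diverge, and your version has a real gap. In the regime $\theta=\sqrt{n}/d\in[c_0,\sqrt{C}]$ you invoke Theorem~\ref{thm:biane} directly and propose to get uniformity in $\theta$ by a finite $\delta$-net. As stated, that does not work: Theorem~\ref{thm:biane} at a grid point $c_j$ is a statement about \emph{sequences} $(n_k,d_k)$ with $\sqrt{n_k}/d_k\to c_j$, and says nothing by itself about a pair $(n,d)$ whose ratio merely lies within $\delta$ of $c_j$, since no rate and no neighborhood radius are specified. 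This is precisely the difficulty the paper identifies (``an adversary could choose $n$ as a function of $d$ so that Biane's theorem never applies''), and it is why Section~\ref{sec:eyd-cont} does not use Biane's theorem at all but instead reproves a law of large numbers uniform over the whole range $f(d)\le n\le Cd^2$ (Theorem~\ref{thm:all-converge}), via Kerov's algebra of observables and Meliot's moment formulas, with error bounds depending only on~$d$. Your route can be rescued without that machinery, but not by a naive net with a mesh chosen in advance: either extract from the sequential statement, at each $c$, both a threshold $D(c)$ \emph{and} a neighborhood radius $\gamma(c)$ and then pass to a finite subcover, or argue by contradiction and Bolzano--Weierstrass (a putative sequence of counterexamples has $\theta_k\to c^*\in[c_0,\sqrt{C}]$ along a subsequence, to which Theorem~\ref{thm:biane} applies). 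Both repairs additionally require verifying sup-norm continuity of $c\mapsto\Omega_c$ on $[c_0,\sqrt{C}]$ from the explicit formulas; you assert this but it must be checked (note the formula changes form at $c=1$). So the approach is genuinely different from the paper's---it trades the paper's quantitative observable calculus for a soft compactness argument---but the uniformity upgrade is exactly where the work lives, and your sketch currently elides it.

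Two further points need patching. First, your passage from $\Vert\overline{\blambda}-\Omega_\theta\Vert_\infty=o(1)$ to an $L^1$ bound uses ``both curves agree with $|x|$ outside a bounded interval,'' which is false for $\overline{\blambda}$ in general ($\blambda_1$ can be as large as $n$, so $\overline{\blambda}$ can differ from $|x|$ out to $\sqrt{n}$); you need either a width bound on $\blambda$ with high probability (Proposition~\ref{prop:increasing-seq}, as in the paper's ``usual'' event and Proposition~\ref{prop:constant-width-interval}) or the observation that both regions have total area exactly~$2$, so sup-norm control on a bounded window also controls the tail of the integral. Second, your justification that $\Omega_c\neq\overline{\unifpart{c}}$ (``derivative strictly in $(-1,1)$ on the interior of the support'') fails for $c>1$: there $\Omega_c$ contains the linear piece $x+\tfrac{2}{c}$ of slope exactly $+1$ on $(-\tfrac1c, c-2)$. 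The conclusion $\dcurve{\overline{\unifpart{c}}}{\Omega_c}>0$ is still true (the paper gets a uniform lower bound on $[0.3,\sqrt{C}]$ by continuity plus the extreme value theorem, after handling $c<0.3$ directly), so replace your slope argument accordingly; the constant in ``$L^\infty$ gap $\beta$ implies $L^1\ge\beta^2/2$'' should also be $\beta^2/4$ since the difference of two $1$-Lipschitz curves is $2$-Lipschitz, which is harmless. Your small-$c$ regime via Proposition~\ref{prop:increasing-seq} is fine (though for bounded $n$ the failure probability is only a constant less than $1$, not $o(1)$, which still suffices), and it differs mildly from the paper, which instead disposes of $n=o(d)$ trivially and pushes all remaining $\theta$ through the uniform convergence theorem.
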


\begin{theorem}\label{thm:large-n}
There are absolute constants $C >0$ and $0 < \delta < 1$ such that
\begin{equation*}
\Pr_{\blambda \sim \SWunif{n}{d}}[\dtv{\underline{\blambda}}{\unif{d}} > \eps] \geq \delta
\end{equation*}
when $n \geq C d^2$, unless $n = \Omega(d^2/\eps^2)$.
\end{theorem}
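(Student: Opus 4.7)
The plan is to show that when $n \geq Cd^2$ for a sufficiently large absolute constant $C$, a sample $\blambda \sim \SWunif{n}{d}$ satisfies $\dtv{\underline{\blambda}}{\unif{d}} \geq \Omega(d/\sqrt n)$ with at least constant probability.  Writing $L_1(\blambda) = \sum_{i=1}^d |\blambda_i - n/d|$, this is equivalent to $L_1 = \Omega(d\sqrt n)$ with constant probability, since $\dtv{\underline{\blambda}}{\unif{d}} = L_1/(2n)$.  The key reduction is the Cauchy--Schwarz estimate $L_1 \cdot L_\infty \geq L_2^2$, where $L_2^2 = \sum_i (\blambda_i - n/d)^2$ and $L_\infty = \max_i |\blambda_i - n/d|$.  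Hence it suffices to prove (i) that $L_\infty \leq K\sqrt n$ with high probability, for some absolute $K$, and (ii) that $L_2^2 \geq \Omega(nd)$ with constant probability.

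For (i), the upper bound $\blambda_1 \leq n/d + K\sqrt n$ follows directly from Proposition~\ref{prop:increasing-seq} applied with $B = n/d + K\sqrt n$: when $n \geq Cd^2$ and $K$ is sufficiently large, the probability bound there becomes $o(1)$.  The matching lower bound $\blambda_d \geq n/d - K\sqrt n$ is obtained via Greene's theorem (Theorem~\ref{thm:greene}): $n - \blambda_d = B_{d-1}$ is the length of the longest $(d-1)$-increasing subsequence of the uniform word $\ba \sim [d]^n$, and a counting argument analogous to the proof of Proposition~\ref{prop:increasing-seq}---now bounding the number of $(d-1)$-increasing subsequences of length exceeding $(d-1)n/d + K\sqrt n$---yields the needed tail estimate.

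For (ii), the key observable identity I will exploit is $\pstar{2}(\lambda) = p_2(\lambda) - p_2(\lambda')$, where $p_2(\lambda) := \sum_i \lambda_i^2$; this follows by direct expansion of the definition of $\pstar{2}$ together with the elementary identity $\sum_i i\lambda_i = \tfrac{1}{2}(p_2(\lambda') + n)$.  Applying Cauchy--Schwarz to the conjugate partition gives $p_2(\blambda') \geq n^2/\blambda_1$ (since $\blambda'$ has $\blambda_1$ nonzero parts summing to $n$), hence
\[
    p_2(\blambda) \;\geq\; \frac{n^2}{\blambda_1} + \pstar{2}(\blambda).
\]
From Corollary~\ref{cor:sw-expect}, $\E[\pstar{2}(\blambda)] = n(n-1)/d$, and combining the product formula $(\psharp{2})^2 = \psharp{(2,2)} + 4\psharp{3} + 2\psharp{(1,1)}$ from~\eqref{eqn:simple-psharp-products} with Corollary~\ref{cor:sw-expect} gives $\mathrm{Var}[\pstar{2}(\blambda)] = 2n(n-1)(1 - 1/d^2) = O(n^2)$.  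Chebyshev's inequality then ensures $\pstar{2}(\blambda) \geq n^2/d - O(n)$ with constant probability.  Combined with $\blambda_1 \leq n/d + K\sqrt n$ from (i) (so that $n^2/\blambda_1 \geq nd(1 - O(d/\sqrt n))$), this yields $L_2^2(\blambda) = p_2(\blambda) - n^2/d \geq \tfrac{1}{2} nd$ with constant probability, provided $C$ is large enough that the error of order $O(n + d^2\sqrt n)$ is dominated by $nd$.

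Combining (i) and (ii) gives $L_1 \geq L_2^2/L_\infty \geq \Omega(nd)/O(\sqrt n) = \Omega(d\sqrt n)$ with constant probability, which completes the proof.  The hard part will be the lower bound $\blambda_d \geq n/d - K\sqrt n$ in step (i): it requires a careful analogue of Proposition~\ref{prop:increasing-seq} for $(d-1)$-increasing subsequences rather than the simple weakly-increasing ones there, and in particular one must first establish that $\ell(\blambda) = d$ with high probability (i.e., that $\ba$ contains a strictly decreasing subsequence of length $d$), a fact most cleanly verified by a second-moment argument on the number of such subsequences.
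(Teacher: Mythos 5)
Your step~(ii) is fine as far as it goes: the identity $\pstar{2}(\lambda) = \sum_i \lambda_i^2 - \sum_j (\lambda'_j)^2$ is correct, and combined with $\E[\psharp{2}(\blambda)] = n(n-1)/d$ and $\Var[\psharp{2}(\blambda)] \leq 2n(n-1)$ it does give $\sum_i(\blambda_i - n/d)^2 \geq \Omega(nd)$ with constant probability, \emph{provided} you already know $\blambda_1 \leq n/d + K\sqrt{n}$. But that proviso is where the proof breaks. Your step~(i) claims that $\Pr[\blambda_1 \geq n/d + K\sqrt{n}]$ is small ``directly from Proposition~\ref{prop:increasing-seq}''; it is not. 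With $B = n/d + K\sqrt{n}$ and $n \geq Cd^2$, the quantity $(1+B/d)e^2 n/B^2$ in that proposition is approximately $e^2 n/(Bd) \approx e^2 > 1$, so the bound is vacuous. This is not a fixable constant issue: the proposition is a first-moment (counting) bound, and the expected number of weakly increasing subsequences of length $\approx n/d$ in a uniform word over $[d]$ is exponentially large, so no union-bound argument of this type can certify fluctuations of $\blambda_1$ at scale $\sqrt{n}$ around $n/d$. The same objection kills the ``analogous counting argument'' you propose for the longest $(d-1)$-increasing subsequence (via Theorem~\ref{thm:greene}) to get $\blambda_d \geq n/d - K\sqrt{n}$. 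The row-wise concentration you need is true (it is essentially the GUE-fluctuation statement), but it is exactly the ``precise rate of convergence to Gaussian fluctuations'' that the paper explicitly says is difficult and deliberately avoids; nothing in the paper supplies it, and your sketch does not either. Note also that trying to rescue $L_\infty$ from a fourth-moment bound (e.g.\ $\E\sum_i(\blambda_i - i - n/d)^4 = O(n^2 d)$) only yields $L_\infty \lesssim d^{1/4}\sqrt{n}$ with high probability, which through $L_1 \geq L_2^2/L_\infty$ degrades the conclusion to $n = \Omega(d^{3/2}/\eps^2)$, short of the theorem.

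For comparison, the paper sidesteps any control of the maximum deviation. It works with the shifted quantities $\bL_i = \blambda_i - i - n/d$ (so that $\sum_i \bL_i^2$ and $\sum_i \bL_i^4$ are expressible via the observables $\pstarc{2}{c}, \pstarc{4}{c}$ and hence computable from Corollary~\ref{cor:sw-expect}), shows $\E\bigl[\sum_i \bL_i^2\bigr] \geq \tfrac{3}{4}nd$ and $\E\bigl[\sum_i \bL_i^4\bigr] \leq 6n^2 d$, and then uses the fourth moment only to truncate away indices with $|\bL_i| \geq 5\sqrt{n}$ from the second-moment sum. An averaging argument then forces at least $d/200$ coordinates to have $|\bL_i| \geq \sqrt{n}/200$ with probability $\geq 1/100$, which already gives $L_1 = \Omega(d\sqrt{n})$ without ever bounding $L_\infty$. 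If you want to keep your $L_1 \cdot L_\infty \geq L_2^2$ route, you would have to import or prove genuine $O(\sqrt{n})$ concentration for $\blambda_1$ and $\blambda_d$ in the $n \geq Cd^2$ regime; as written, that step is a gap that the cited proposition cannot fill.
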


To prove Theorem~\ref{thm:gw-lower}, let $C$ and $\delta_1$ be the constants in Theorem~\ref{thm:large-n}.
Apply Theorem~\ref{thm:small-n} with the value of $C$, and let $\delta_2$ and $\eps_0$ be the resulting constants.
Set $\delta := \min\{\delta_1, \delta_2\}$.
Then we see that for all $\eps \leq \eps_0$,
\begin{equation*}
\Pr_{\blambda \sim \SWunif{n}{d}}[\dtv{\underline{\blambda}}{\unif{d}}) > \eps] \geq \delta
\end{equation*}
unless $n = \Omega(d^2/\eps^2)$, giving Theorem~\ref{thm:gw-lower}.

Theorem~\ref{thm:small-n} might appear somewhat superfluous, as Theorem~\ref{thm:large-n} already proves the lower bound for sufficiently large values of $n$ (i.e., $n \geq Cd^2$), and intuitively having fewer copies of~$\rho$ shouldn't improve the performance of the EYD algorithm.
However, this intuition, though it may be true in some approximate sense, is false in general: there are regimes of state estimation where the performance of the EYD algorithm does \emph{not} increase monotonically with the value of $n$.
For example, if $n$ is a multiple of $d$, then when $\blambda \sim \SWunif{n}{d}$,
$\underline{\blambda}$ will equal $\unif{d}$ with some nonzero probability.
On the other hand, a random $\blambda \sim \SWunif{n+1}{d}$ will never be uniform, because $n+1$ is not a multiple of~$d$.
Thus, decreasing the value of $n$ can sometimes help (according to some performance metrics),
and this shows why we need Theorem~\ref{thm:small-n} to supplement Theorem~\ref{thm:large-n}.

The proof of Theorem~\ref{thm:small-n} is quite technical, and we defer it to Section~\ref{sec:eyd-cont}.
Our proof of Theorem~\ref{thm:large-n} is simpler and appears below.  It is a good illustration of the basic technique of using polynomial functions on Young diagrams.  The intuition behind the proof is as follows:  By the (traceless) Gaussian Unitary Ensemble fluctuations predicted in~\cite{ITW01}, we expect that for $\blambda \sim \SWunif{n}{d}$, the empirical distribution $\underline{\blambda}$ will deviate from $\unif{d}$ by  roughly~$\Theta(1/\sqrt{n})$ in each coordinate. This will yield total variation distance $\Theta(d/\sqrt{n})$, necessitating $n \geq \Omega(d^2/\eps^2)$ to achieve $\dtv{\underline{\blambda}}{\unif{d}} \leq \eps$. Actually analyzing the precise rate of convergence to Gaussian fluctuations in terms of~$n$ is difficult, and is overkill anyway; instead, we use the Fourth Moment Method to lower bound the fluctuations.

\begin{proof}[Proof of Theorem~\ref{thm:large-n}]
Our goal is to show that for $n\geq 10^{10} d^2$,
with 1\% probability over a random $\blambda \sim \SWunif{n}{d}$, at least $\frac{d}{200}$ coordinates $i \in [d]$ satisfy
\begin{equation*}
\left|\blambda_i - \frac{n}{d}\right| \geq \frac{\sqrt{n}}{1000}.
\end{equation*}
When this event occurs,
\begin{align*}
\dtv{\underline{\blambda}}{\unif{d}}
= \frac{1}{2}\cdot \sum_{i = 1}^{d}\left|\frac{\blambda_i}{n} - \frac{1}{d}\right|
&= \frac{1}{2}\cdot \sum_{i = 1}^{d}\frac{1}{n}\cdot\left|\blambda_i - \frac{n}{d}\right|\\
&\geq
\frac{1}{2}\cdot \frac{d}{200}\cdot \frac{1}{n}\cdot \frac{\sqrt{n}}{1000}
= \frac{1}{400000} \cdot \frac{d}{\sqrt{n}},
\end{align*}
which is bigger than~$\eps$ unless $n = \Omega(d^2/\eps^2)$.
Showing this will prove Theorem~\ref{thm:large-n} with the parameters $C = 10^{10}$ and $\delta=.01$.

To begin, let us define a family of polynomials.
\begin{definition}
Given $k \geq 1$ and $c \in \R$, we define $\pstarc{k}{c}(\lambda)\coloneqq \sum_{i=1}^\infty (\lambda_i - i - c)^k - (-i - c)^k$.
\end{definition}
\noindent
This generalizes the definition of the $\pstar{k}$ polynomials,
as $\pstarc{k}{-\frac12} = \pstar{k}$.
\begin{fact}\label{fact:tedious-fact}
Let $c \in \R$.  Then
\begin{itemize}
\item $p^*_{2, c} = (-2c-1)\psharp{1} + \psharp{2}$, and
\item $p^*_{4, c} = (-4c^3-6c^2-4c-1)\psharp{1}+(6c^2+6c+4)\psharp{2}+(-6c-3)\psharp{(1, 1)}+(-4c-2)\psharp{3}+4\psharp{(2, 1)}+\psharp{4}$.
\end{itemize}
\end{fact}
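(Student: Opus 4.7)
The proof is a routine bookkeeping calculation: I will first express $\pstarc{k}{c}$ as a linear combination of the ordinary $\pstar{j}$'s (via a binomial shift), and then substitute the formulas from equation~\eqref{eqn:low-moments}, which already convert $\pstar{1},\pstar{2},\pstar{3},\pstar{4}$ into linear combinations of $\psharp{\mu}$'s. Collecting terms will give the claimed identities.

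\textbf{Step 1: Reduce $\pstarc{k}{c}$ to the $\pstar{j}$'s.} Set $\alpha \coloneqq c + \tfrac12$, so that $\lambda_i - i - c = (\lambda_i - i + \tfrac12) - \alpha$ and $-i - c = (-i + \tfrac12) - \alpha$. Writing $a_i \coloneqq \lambda_i - i + \tfrac12$ and $b_i \coloneqq -i + \tfrac12$, the binomial theorem gives
\[
    (a_i - \alpha)^k - (b_i - \alpha)^k = \sum_{j=1}^k \binom{k}{j}(-\alpha)^{k-j}\bigl(a_i^j - b_i^j\bigr),
\]
where the $j=0$ term vanishes. Summing over $i$ and recalling $\pstar{j}(\lambda) = \sum_i (a_i^j - b_i^j)$ yields
\[
    \pstarc{k}{c} \;=\; \sum_{j=1}^k \binom{k}{j}(-\alpha)^{k-j}\,\pstar{j}.
\]
Specializing to $k=2$ this reads $\pstarc{2}{c} = -2\alpha\,\pstar{1} + \pstar{2}$, and to $k=4$ it reads $\pstarc{4}{c} = -4\alpha^3\,\pstar{1} + 6\alpha^2\,\pstar{2} - 4\alpha\,\pstar{3} + \pstar{4}$.

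\textbf{Step 2: Substitute the $\psharp{\mu}$-expansions.} Plugging in $\pstar{1}=\psharp{1}$ and $\pstar{2}=\psharp{2}$ from~\eqref{eqn:low-moments} into the $k=2$ expression and using $-2\alpha = -2c-1$ immediately gives the first bullet. For $k=4$, I substitute the full list from~\eqref{eqn:low-moments}:
\[
    \pstar{3} = \psharp{3} + \tfrac{3}{2}\psharp{(1,1)} + \tfrac{1}{4}\psharp{1}, \qquad \pstar{4} = \psharp{4} + 4\psharp{(2,1)} + \tfrac{5}{2}\psharp{2},
\]
obtaining
\[
    \pstarc{4}{c} = (-4\alpha^3 - \alpha)\psharp{1} + (6\alpha^2 + \tfrac{5}{2})\psharp{2} - 6\alpha\,\psharp{(1,1)} - 4\alpha\,\psharp{3} + 4\psharp{(2,1)} + \psharp{4}.
\]
Expanding $\alpha = c + \tfrac12$ with $\alpha^2 = c^2 + c + \tfrac14$ and $\alpha^3 = c^3 + \tfrac{3}{2}c^2 + \tfrac{3}{4}c + \tfrac18$ gives $-4\alpha^3 - \alpha = -4c^3 - 6c^2 - 4c - 1$ and $6\alpha^2 + \tfrac{5}{2} = 6c^2 + 6c + 4$, while $-6\alpha = -6c - 3$ and $-4\alpha = -4c - 2$. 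Collecting these coefficients yields exactly the second bullet.

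\textbf{Obstacle.} There is essentially no conceptual obstacle here; the entire content of the proof is the two-line identity in Step~1 combined with the previously established formulas~\eqref{eqn:low-moments}. The only real care needed is the arithmetic of expanding $(c+\tfrac12)^2$ and $(c+\tfrac12)^3$ and keeping track of the small constants ($\tfrac14$ and $\tfrac{5}{2}$) that are absorbed into the $\psharp{1}$ and $\psharp{2}$ coefficients, respectively.
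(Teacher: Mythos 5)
Your proposal is correct and follows essentially the same route as the paper: the paper's proof likewise writes $\pstarc{k}{c} = \sum_{j=1}^k \binom{k}{j}(-c-\tfrac12)^{k-j}\pstar{j}$ (specializing to $k=2,4$) and then invokes the expansions in~\eqref{eqn:low-moments}. Your arithmetic for the resulting coefficients checks out, so the only difference is that you spell out the substitution that the paper leaves as "the claim now follows."
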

\begin{proof}
    By explicit computation, one can check that
    \begin{align*}
        \pstarc{2}{c} = 2 (-c-\tfrac{1}{2}) \pstar{1} + \pstar{2}, \qquad
        \pstarc{4}{c} = 4 (-c-\tfrac{1}{2})^3 \pstar{1} + 6 (-c-\tfrac{1}{2})^2 \pstar{2}
    				+ 4 (-c-\tfrac{1}{2}) \pstar{3} + \pstar{4}.
    \end{align*}
    (Indeed, it's not hard to show that in general, $\pstarc{k}{c} = \sum_{j=1}^k \binom{k}{j}(-c - \tfrac{1}{2})^{k-j} \pstar{j}$.)  The claim now follows from~\eqref{eqn:low-moments}.
\end{proof}

For any $c$, these formulas allow us to compute the expected value of $\pstarc{2}{c}$ and $\pstarc{4}{c}$ over a random $\blambda \sim \SWunif{n}{d}$, by using Corollary~\ref{cor:sw-expect}.  Furthermore, for any $k$ and $d$, $\sum_{i=1}^d (-i - c)^k$ is a constant which doesn't depend on~$\blambda$.
Combining these two facts allows us to compute average value over a random $\blambda \sim \SWunif{n}{d}$
of $\sum_{i=1}^d (\blambda_i - i - c)^k$, for $k = 2,4$.
In particular, we are interested in computing this expectation when $c = \frac{n}{d}$.
Write $\bL_i \coloneqq  \blambda_i - i - \frac{n}{d}$.
Then
\begin{equation}\label{eq:squared-sum}
\E_{\blambda \sim \SWunif{n}{d}}\left[\sum_{i=1}^d \bL_i ^2\right]
= -\tfrac{n}{d}+nd+\tfrac{d^3}{3}+\tfrac{d^2}{2}+\tfrac{d}{6}
\geq -\tfrac{n}{d}+nd
\geq \tfrac{3nd}{4},
\end{equation}
where in the last step we used the fact that $n/d \leq nd /4 $ because $d \geq 2$.

Similarly, as $n \geq 10^{10} d^2 \geq d^2$, we can use the bound
\begin{align*}
\E_{\blambda \sim \SWunif{n}{d}}\left[\sum_{i=1}^d \bL_i ^4\right]
&= 2n-\tfrac{d}{30}-\tfrac{4n}{d^2}-\tfrac{6n}{d^3}+2nd^2+\tfrac{d^5}{5}+\tfrac{d^3}{3}+\tfrac{3n^2}{d^3}+\tfrac{d^4}{2}+nd^3+2 n^2 d+n d-\tfrac{5n^2}{d}+\tfrac{4n}{d}\\
&\leq 2n +2nd^2+\tfrac{d^5}{5}+\tfrac{d^3}{3}+\tfrac{3n^2}{d^3}+\tfrac{d^4}{2}+nd^3+2 n^2 d+n d +\tfrac{4n}{d}\\
& \leq 6 n^2 d,
\end{align*}
where in the last step we have used only trivial bounds involving the facts that $n \geq d^2$ and $d \geq 2$.

For a fixed $\lambda$, let $\calL(\lambda) \coloneqq  \{ i \in [d] \mid |L_i| \geq 5\sqrt{n}\}$.
Then
\begin{align*}
\E_{\blambda \sim \SWunif{n}{d}}\left[\sum_{i \in \calL(\blambda)} \bL_i ^2\right]
 \leq \frac{1}{25 n}\E_{\blambda \sim \SWunif{n}{d}}\left[\sum_{i \in \calL(\blambda)} \bL_i ^4\right]
 \leq \frac{1}{25 n}\E_{\blambda \sim \SWunif{n}{d}}\left[\sum_{i=1}^d \bL_i ^4\right]
 \leq \frac{nd}{4}.
\end{align*}
Thus, by~\eqref{eq:squared-sum},
\begin{equation*}
\E_{\blambda \sim \SWunif{n}{d}}\left[\sum_{i \in [d]\setminus \calL(\blambda)} \bL_i ^2\right] \geq \frac{nd}{2}.
\end{equation*}
Now define
\begin{equation*}
\calM(\lambda) \coloneqq  \left\{ i \in [d]~\middle|~\frac{\sqrt{n}}{200} \leq |L_i| < 5\sqrt{n}\right\},
\end{equation*}
and let $\calE$ be the event that $|\calM(\lambda)| \geq d/200$.
We claim that $p = \Pr[\calE] \geq 1/100$.  This is because if $p < 1/100$, then
\begin{equation*}
\E_{\blambda \sim \SWunif{n}{d}}\left[\sum_{i \in [d]\setminus \calL(\blambda)} \bL_i ^2\right]
\leq
p \cdot 25 n d + (1-p)\cdot \left(\frac{25n d}{200} + \left(1-\frac{1}{200}\right)\cdot\frac{n d}{200^2}  \right)
< \frac{nd}{2},
\end{equation*}
which is a contradiction.

Now let us use the assumption that $n \geq 10^{10}d^2$.
Consider any coordinate $i \in [d]$ satisfying
\begin{equation*}
|\bL_i| = \left|\blambda_i - i - \frac{n}{d}\right| \geq \frac{\sqrt{n}}{200}.
\end{equation*}
By our assumption that $n \geq 10^{10} d^2$, this implies that
\begin{equation*}
\left|\blambda_i - \frac{n}{d}\right| \geq \frac{\sqrt{n}}{1000}.
\end{equation*}
As a result, when $\calE$ holds, which happens with at least $1\%$ probability, there are at least $\frac{d}{200}$ coordinates $i \in [d]$ such that
\begin{equation*}
\left|\blambda_i - \frac{n}{d}\right| \geq \frac{\sqrt{n}}{1000}.
\end{equation*}
This completes the proof.
\end{proof}

\section{A quantum Paninski theorem}
In this section, we prove Theorem~\ref{thm:paninski-intro}, that $\Theta(d/\eps^2)$ copies are necessary and sufficient to test whether or not a given state $\rho \in \C^{d \times d}$ is the maximally mixed state, i.e., has spectrum $(\frac1d, \dots, \frac1d)$.

\subsection{The upper bound}                    \label{sec:mixedness-upper}

The upper bound for Theorem~\ref{thm:paninski-intro} will follow from our analysis of the following simple algorithm.
\begin{nameddef}{Mixedness Tester}
Given $\rho^{\otimes n}$, where $\rho$ is $d$-dimensional:
\begin{enumerate}
\item Sample $\blambda \sim \SWdens{n}{\rho}$.
\item Accept if $\psharp{2}(\blambda) \leq \left(1+\frac{\eps^2}{2}\right)\cdot\frac{n (n-1)}{d}$.  Reject otherwise.
\end{enumerate}
\end{nameddef}
\noindent
We remark that the tester Childs et al.~\cite{CHW07} used to distinguish the maximally mixed states of dimension~$\frac{d}{2}$ and~$d$ also depended only on the magnitude of $\psharp{2}(\blambda) = 2c_1(\blambda)$; see~\cite[equations~(49),~(50)]{CHW07}.

\begin{theorem}                                  \label{thm:mixedness-upper}
The Mixedness Tester can test whether a state $\rho \in \C^{d \times d}$ is the maximally mixed state using $n = O(d/\eps^2)$ copies of $\rho$.
\end{theorem}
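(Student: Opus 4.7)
The plan is to apply Chebyshev's inequality to the statistic $\psharp{2}(\blambda)$, exploiting the clean formulas that Proposition~\ref{prop:sw-alg-expect} and~\eqref{eqn:simple-psharp-products} give for its first and second moments under $\SWdens{n}{\rho}$. First I would compute the mean: by Proposition~\ref{prop:sw-alg-expect}, $\E[\psharp{2}(\blambda)] = n(n-1)\sum_i \eta_i^2$. When $\rho$ is maximally mixed, $\sum \eta_i^2 = 1/d$, so the mean is exactly $n(n-1)/d$. When $\dtvsymm{\eta}{\unif{d}} \geq \eps$, Cauchy--Schwarz gives $\|\eta - \unif{d}\|_2^2 \geq 4\eps^2/d$, hence $\sum \eta_i^2 \geq (1+4\eps^2)/d$. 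Thus the tester's threshold of $(1+\eps^2/2)\cdot n(n-1)/d$ lies between the two means with a gap of $\Omega(\eps^2 n^2/d)$ on each side, and it only remains to control the variance.

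Next I would compute $\Var[\psharp{2}]$ using the structure constant identity $(\psharp{2})^2 = \psharp{(2,2)} + 4\psharp{3} + 2\psharp{(1,1)}$ from~\eqref{eqn:simple-psharp-products}, together with Proposition~\ref{prop:sw-alg-expect}. Writing $P = \sum_i \eta_i^2$, this gives
\[
    \Var[\psharp{2}] = \bigl[n^{\downarrow 4} - (n^{\downarrow 2})^2\bigr] P^2 + 4n^{\downarrow 3}\sum_i \eta_i^3 + 2n(n-1).
\]
The key cancellation is that $n^{\downarrow 4} - (n^{\downarrow 2})^2 = -n(n-1)(4n-6)$, which is large and negative, so after rearranging one obtains
\[
    \Var[\psharp{2}] = n(n-1)\Bigl[\,4n\bigl(\textstyle\sum_i \eta_i^3 - P^2\bigr) + 6P^2 - 8\sum_i \eta_i^3 + 2\,\Bigr] \leq 4n^3\,\Var[X_\eta] + 2n^2,
\]
where $X_\eta$ is the random variable taking value $\eta_i$ with probability $\eta_i$, so that $\Var[X_\eta] = \sum_i \eta_i^3 - P^2 \geq 0$. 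The right-hand side is the ``nice'' variance bound, vanishing when $\eta$ is uniform.

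The third step is to bound $\Var[X_\eta]$ in terms of the ``excess'' $Q := P - 1/d$. Writing $\eta_i = 1/d + \delta_i$ with $\sum_i \delta_i = 0$ and $\sum_i \delta_i^2 = Q$, a direct expansion yields $\Var[X_\eta] = Q(1/d - Q) + \sum_i \delta_i^3$. Using $|\delta_i| \leq 1$ to bound $|\sum_i \delta_i^3| \leq Q$ and noting $|Q(1/d - Q)| \leq Q$, I conclude $\Var[X_\eta] \leq 2Q$. Plugging back in gives $\Var[\psharp{2}] \leq 8n^3 Q + 2n^2$.

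Finally I apply Chebyshev in both cases. In completeness, $Q=0$ so $\Var \leq 2n^2$, while the gap to the threshold is $\tfrac12\eps^2 n(n-1)/d$, so $\Pr[\text{reject}] \lesssim d^2/(\eps^4 n^2)$, which is $<1/3$ once $n = \Omega(d/\eps^2)$. In soundness, $Q \geq 4\eps^2/d$ so the gap from the mean down to the threshold is at least $\tfrac{7}{8}\cdot n(n-1)Q \gtrsim n^2 Q$, and
\[
    \Pr[\text{accept}] \lesssim \frac{8n^3 Q + 2n^2}{n^4 Q^2} = \frac{1}{nQ} + \frac{1}{n^2 Q^2},
\]
which is $<1/3$ as soon as $nQ = \Omega(1)$, i.e., $n = \Omega(d/\eps^2)$. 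I expect the main obstacle to be the second step: the naive variance bound $\Var[\psharp{2}] \leq 4n^3 \sum_i \eta_i^3 + 2n^2$ is too weak (it would force $n \gg 1/\eps^4$ even for completeness), and one must carefully exploit the cancellation between the $\psharp{(2,2)}$ contribution and the square of the mean to get the bound in terms of $\Var[X_\eta]$; likewise in soundness, the bound $\Var[X_\eta] \leq 2Q$ (rather than the trivial $\Var[X_\eta] \leq P$) is what pins down the correct $O(d/\eps^2)$ rate.
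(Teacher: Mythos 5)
Your proposal is correct and follows essentially the same route as the paper's proof: the same statistic $\psharp{2}(\blambda)$ and threshold, the same moment computation via $(\psharp{2})^2 = \psharp{(2,2)} + 4\psharp{3} + 2\psharp{(1,1)}$ together with Proposition~\ref{prop:sw-alg-expect}, the same decomposition $\eta_i = \tfrac1d + \delta_i$ with Cauchy--Schwarz, and Chebyshev on both sides; your $Q$ and $\Var[X_\eta]$ are just a repackaging of the paper's $\sum_i \Delta_i^2$ bookkeeping. The only step stated without justification is replacing $6P^2 - 8\sum_i\eta_i^3 + 2$ by $2$, which does hold since $P^2 = (\sum_i \eta_i^2)^2 \leq \sum_i \eta_i^3$ by Cauchy--Schwarz (and in any case a cruder constant, as in the paper, changes nothing).
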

\begin{proof}
We will run the Mixedness Tester with $n = 100 d/\eps^2$.  Both the ``completeness'' and the ``soundness'' analysis will require the last identity from~\eqref{eqn:simple-psharp-products}, namely
\begin{equation}\label{eq:thanks-maple}
    (\psharp{2})^2 = \psharp{(2,2)} + 4 \psharp{3} + 2\psharp{(1,1)}.
\end{equation}

\paragraph{Completeness.}
Suppose first that $\rho$ is the maximally mixed state, so that in fact $\blambda \sim \SWunif{n}{d}$.  We compute the mean and variance of $\psharp{2}(\blambda)$ using~\eqref{eq:thanks-maple} and Corollary~\ref{cor:sw-expect}:
\begin{gather}
\E_{\blambda \sim \SWunif{n}{d}}[\psharp{2}(\blambda)] = \frac{n(n-1)}{d},\label{eq:thanks-maple-avg}\\
\Var_{\blambda \sim \SWunif{n}{d}}\left[\psharp{2}(\blambda)\right]
= \E_{\blambda \sim \SWunif{n}{d}} \left[\psharp{2}(\blambda)^2\right] - \left(\E_{\blambda \sim \SWunif{n}{d}} [\psharp{2}(\blambda)] \right)^2
= \frac{2n(n-1)(d^2-1)}{d^2} \leq 2 n(n-1).\label{eq:thanks-maple-var}
\end{gather}
Thus by Chebyshev's inequality,
\begin{equation*}
\Pr_{\blambda \sim \SWunif{n}{d}}\left[\psharp{2}(\blambda) > \left(1+\frac{\eps^2}{2}\right)\cdot\frac{n(n-1)}{d}\right]
\leq \frac{8d^2}{n(n-1)\eps^4}\leq \frac{1}{3},
\end{equation*}
by our choice of $n$.  Thus indeed when $\rho$ is the maximally mixed state, the Mixedness Tester accepts with probability at least $2/3$.

\paragraph{Soundness.}
Suppose now that $\rho$ is a density matrix whose spectrum $\eta = (\eta_1, \ldots, \eta_d)$ satisfies $\dtvsymm{\eta}{\unif{d}} \geq \eps$.  Writing $\eta_i = \frac1d + \Delta_i$, this means that
\[
\eps \leq \frac12 \cdot \sum_{i=1}^d |\Delta_i| \leq \frac12 \sqrt{d \cdot \sum_{i=1}^d \Delta_i^2},
\]
using Cauchy--Schwarz; hence
\begin{equation}\label{eq:sum-of-squares-lower-bound}
\sum_{i=1}^d \Delta_i^2 \geq \frac{4\eps^2}{d}.
\end{equation}
Using~\eqref{eq:thanks-maple} and Proposition~\ref{prop:sw-alg-expect},
we can calculate the difference between the mean of $\psharp{2}(\blambda)$ and the cutoff used by the Mixedness Tester as
\begin{align*}
\E_{\blambda\sim \SWdens{n}{\rho}}\left[ \psharp{2}(\blambda)\right]
-
\frac{n(n-1)}{d} \cdot \left(1+\frac{\eps^2}{2}\right)
&= n(n-1)\left(\sum_{i=1}^d \eta_i^2 - \frac{1}{d}\left(1+\frac{\eps^2}{2}\right) \right).\\
&= n(n-1)\left(\sum_{i=1}^d \Delta_i^2 - \frac{\eps^2}{2d} \right)\\
&\geq \frac{n(n-1)}{2}\sum_{i=1}^d \Delta_i^2,
\end{align*}
where the last line follows from~\eqref{eq:sum-of-squares-lower-bound}.
Similarly, we can calculate the variance of $\psharp{2}(\blambda)$ as
\begin{align}
\Var_{\blambda \sim \SWdens{n}{\rho}}\left[\psharp{2}(\blambda)\right]
& = n(n-1)\left(2+4n\left(\sum \eta_i^3 - \left(\sum \eta_i^2 \right)^2\right)+6\left(\sum \eta_i^2\right)^2-8\sum \eta_i^3\right)\nonumber\\
& \leq n(n-1)\left(8 + 4n\left(\sum \eta_i^3 - \left(\sum \eta_i^2 \right)^2\right)\right)\nonumber\\
& = n(n-1)\left(8 + 4n\left(\frac{1}{d}\sum \Delta_i^2 + \sum \Delta_i^3 - \left(\sum \Delta_i^2\right)^2\right)\right)\nonumber\\
& \leq n(n-1)\left(8 + 8n\left(\sum \Delta_i^2\right)\right).\nonumber
\end{align}
Applying Chebyshev's inequality gives us
\begin{align*}
\Pr_{\blambda \sim \SWdens{n}{\rho}}\left[\psharp{(2)}(\blambda) < \left(1+\frac{\eps^2}{2}\right)\cdot\frac{n(n-1)}{d}\right]
& \leq \frac{1}{n(n-1)\left(\sum_{i=1}^d \Delta_i^2 \right)^2}\cdot \left(32 + 32 n \left(\sum_{i=1}^d \Delta_i^2\right)\right)\\
& \leq \frac{4}{n^2\left(\eps^2/d\right)^2}+ \frac{16}{n\left(\eps^2/d\right)},
\end{align*}
where the second step follows from~\eqref{eq:sum-of-squares-lower-bound}. By our choice of $n$, this is at most $1/3$.
Thus, when $\rho$ is $\eps$-far from the maximally mixed state, the Mixedness Tester rejects with probability at least $2/3$, as required.
\end{proof}

\subsection{The lower bound: overview}
For almost all of the lower bound proof we will assume~$d$ is even.  In the end we will indicate how to obtain the lower bound when~$d$ is odd.  For $0 \leq \eps \leq \frac12$, let $\pan{\eps}{d}$ denote the probability distribution on~$[d]$ in which
\[
    \pan{\eps}{d}(j) = \frac{1+(-1)^{j-1} 2\eps}{d}.
\]
This is essentially the same probability distribution that Paninski~\cite{Pan08} studies in his lower bound.
As usual, we also identify $\pan{\eps}{d}$ with the diagonal density matrix having these entries; i.e.,
\[
    \pan{\eps}{d} = \diag\left(\frac{1+2\eps}{d}, \frac{1-2\eps}{d}, \frac{1+2\eps}{d}, \frac{1-2\eps}{d}, \dots, \frac{1+2\eps}{d}, \frac{1-2\eps}{d}\right).
\]
Note that $\dtvsymm{\pan{\eps}{d}}{\unif{d}} = \eps$.  We also remark that when $\eps = \frac12$, the distribution $\pan{\eps}{d}$ is the uniform distribution on $\frac{d}{2}$ elements (the odd-numbered ones).  As in~\cite{Pan08}, it proves to be most convenient to study the chi-squared distance between $\SWdens{n}{\pan{\eps}{d}}$ and $\SWunif{n}{d}$; our main theorem is the following:
\begin{theorem}                                     \label{thm:main-pan-lower}
    $\displaystyle \dchi{\SWdens{n}{\pan{\eps}{d}}}{\SWunif{n}{d}} \leq \exp((4n\eps^2/d)^2) - 1$.
\end{theorem}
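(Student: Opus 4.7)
My plan is to translate $\chi^2$ into an integral over the Haar measure on $U(d)$ and then bound it via moment estimates on a scalar trace statistic.

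Using Proposition~\ref{prop:schur-probability} together with $\SWunifProb{n}{d}{\lambda} = \dim(\lambda)\,s_\lambda(1^d)/d^n$ (obtained from Propositions~\ref{prop:SW-dist} and~\ref{prop:s111}), I write
\[
\chi^{2}+1 \;=\; \sum_{\lambda\vdash n}\frac{\SWdensProb{n}{\pan{\eps}{d}}{\lambda}^{2}}{\SWunifProb{n}{d}{\lambda}} \;=\; d^{n}\sum_{\lambda\vdash n}\dim(\lambda)\,\frac{s_{\lambda}(\eta)^{2}}{s_{\lambda}(1^{d})},
\]
where $\eta$ is the spectrum of $\pan{\eps}{d}$. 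Since $s_\lambda$ is the character of the irreducible $GL_d$-representation $V_\lambda$ of dimension $s_\lambda(1^d)$, Schur's lemma yields the orthogonality identity
\[
\int_{U(d)}s_{\lambda}(UAU^{-1}B)\,dU \;=\; \frac{s_{\lambda}(A)\,s_{\lambda}(B)}{s_{\lambda}(1^{d})}
\]
for any $d\times d$ matrices $A, B$. Setting $A=B=D:=\mathrm{diag}(\eta)$, summing against $\dim(\lambda)$, and using the Schur--Weyl identity $\sum_{\lambda\vdash n}\dim(\lambda)\,s_{\lambda}(M)=(\trace M)^{n}$, this collapses to
\[
\chi^{2}+1 \;=\; d^{n}\int_{U(d)}\bigl(\trace(UDU^{-1}D)\bigr)^{\!n}\,dU.
\]

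Decomposing $D=\tfrac{1}{d}(I+2\eps E)$ with $E=\mathrm{diag}(1,-1,\ldots,1,-1)$ traceless, the cross-terms involving $\trace(UEU^{-1})$ and $\trace(E)$ vanish, giving $\trace(UDU^{-1}D)=\tfrac{1}{d}\bigl(1+\tfrac{4\eps^{2}X(U)}{d}\bigr)$ where $X(U):=\trace(UEU^{-1}E)$. Thus
\[
\chi^{2}+1 \;=\; \E_{U\sim U(d)}\!\left[\Bigl(1+\tfrac{4\eps^{2}X(U)}{d}\Bigr)^{\!n}\right].
\]
The random variable $X$ satisfies $\E[X]=0$ (Schur's lemma plus $\trace E=0$), and all its odd moments vanish (in the Weingarten expansion of $\E[X^k]$ only permutations with all-even cycles contribute, which is impossible for $k$ odd). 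Using $(\C^d)^{\otimes 2}=\mathrm{Sym}^{2}\oplus\mathrm{Alt}^{2}$, a direct computation gives $\E[X^{2}]=d^{2}/(d^{2}-1)\leq 2$. Assuming the sharp Gaussian-type bound $\E[X^{2k}]\leq 2^{k}(2k-1)!!$, a binomial expansion combined with the identity $(2k)!=2^{k}k!\,(2k-1)!!$ gives
\[
\chi^{2}+1 \;\leq\; \sum_{k\geq 0}\frac{n^{2k}}{(2k)!}\Bigl(\tfrac{4\eps^{2}}{d}\Bigr)^{\!2k}\!\cdot 2^{k}(2k-1)!! \;=\; \sum_{k\geq 0}\frac{1}{k!}\Bigl(\tfrac{4n\eps^{2}}{d}\Bigr)^{\!2k} \;=\; \exp\!\left(\Bigl(\tfrac{4n\eps^{2}}{d}\Bigr)^{\!2}\right),
\]
as desired.

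The main obstacle is establishing the sharp moment bound $\E[X^{2k}]\leq 2^{k}(2k-1)!!$ with the correct leading constant, uniformly in $d$. Standard Haar concentration on $U(d)$ only yields $\E[X^{2k}]\leq (Ck)^{k}$ for some absolute $C>1$, which would inflate the target exponent by a multiplicative constant. Obtaining the sharp constant will likely require the explicit Weingarten formula $\E[X^{2k}]=\sum_{\sigma,\tau\in\symm{2k}^{\text{even}}}\mathrm{Wg}(\sigma\tau^{-1},d)\,d^{c(\sigma)+c(\tau)}$ (where $\symm{2k}^{\text{even}}$ denotes the permutations with all even cycle lengths) together with combinatorial cancellation, or equivalently exploiting the observation that $X=4\|A\|_{F}^{2}-d$ with $A$ the top-left $\tfrac{d}{2}\times\tfrac{d}{2}$ block of $U$, whose squared singular values form a classical Jacobi $(0,0)$ ensemble with a known moment-generating function.
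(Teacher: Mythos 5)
Your reduction is correct, and it is a genuinely different route from the paper's.  The chain
\[
\dchi{\SWdens{n}{\pan{\eps}{d}}}{\SWunif{n}{d}} + 1
= d^n\sum_{\substack{\lambda \vdash n\\ \ell(\lambda)\le d}} \dim(\lambda)\,\frac{s_\lambda(\eta)^2}{s_\lambda(1,\dots,1)}
= d^n\, \E_{U}\Bigl[\trace\bigl(UDU^{-1}D\bigr)^n\Bigr]
= \E_{U}\Bigl[\bigl(1+\tfrac{4\eps^2}{d}X\bigr)^{n}\Bigr]
\]
(with $D = \diag(\pan{\eps}{d})$ and $X = \trace(UEU^{-1}E)$) is sound: the first equality is Proposition~\ref{prop:schur-probability} plus homogeneity, the second combines the Schur-orthogonality identity $\int s_\lambda(UAU^{-1}B)\,dU = s_\lambda(A)s_\lambda(B)/s_\lambda(1,\dots,1)$ with the Schur--Weyl identity $\sum_{\lambda\vdash n}\dim(\lambda)s_\lambda(M) = (\trace M)^n$, and the cross terms vanish because $\trace E = 0$.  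Your preliminary facts about $X$ also check out: $\E[X]=0$; the odd moments vanish (a cleaner argument than your Weingarten sketch: for even $d$ there is a permutation matrix $P$ with $PEP^{-1}=-E$, so $X(UP)=-X(U)$ and Haar invariance makes $X$ and $-X$ identically distributed); and $\E[X^2]=d^2/(d^2-1)$ is correct.  By contrast, the paper never leaves the partition side: it expands the Schur-polynomial ratio by the binomial formula into shifted Schur polynomials, kills the off-diagonal terms $\E[s^*_\mu s^*_\nu]$ by Fourier analysis on the symmetric group together with the structure constants of the $\psharp{}$-basis (Theorem~\ref{thm:paninski-var}), and evaluates the surviving diagonal terms exactly via the $2$-quotient formula for $s_\mu(+1,-1,\dots)$ (Theorem~\ref{thm:plus-minus-one}).

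The genuine gap is exactly where you flag it: the entire quantitative content of the theorem rests on the moment bound $\E[X^{2k}]\le 2^k(2k-1)!!$, uniformly in $k$ and $d$, and in your write-up this is assumed, not proved.  Off-the-shelf Haar concentration (the map $U\mapsto \trace(UEU^{-1}E)$ is $O(\sqrt d)$-Lipschitz in Hilbert--Schmidt norm) gives only $\E[X^{2k}]\le (Ck)^k$ for an unspecified absolute constant, yielding $\exp(C'(n\eps^2/d)^2)-1$; that still implies the $\Omega(d/\eps^2)$ copy lower bound of Corollary~\ref{cor:pan-lower} with a worse constant, but it does not prove the stated inequality.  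Your proposed repairs (full Weingarten calculus, or the identity $X = 4\|A\|_F^2 - d$ with $A$ the top-left $\frac d2\times\frac d2$ corner of $U$ and the Jacobi-ensemble law of its singular values) are plausible but each is a substantive computation still to be carried out, and it is not a priori clear that the constant $2$ survives uniformly in $k$ and $d$.  It may help you to know that the bound is true and, given your reduction, is equivalent to what the paper proves: your expansion $\sum_{k\ge 1}\frac{\falling{n}{(2k)}}{(2k)!}(4\eps^2/d)^{2k}\E[X^{2k}]$ and the paper's~\eqref{eqn:pan0} are exact expansions of the same quantity in the basis $\{\falling{n}{(2k)}\}_k$, so matching coefficients (these functions of~$n$ are linearly independent) gives
\[
\E[X^{2k}] \;=\; \frac{(2k)!}{\df{(2k)}^2}\sum_{\substack{\mu\vdash 2k \text{ balanced}\\ 0<\ell(\mu)\le d}}\chi_\mu(2,\dots,2)^2\,\frac{\rising{d}{\evencontents{\mu}}}{\rising{d}{\oddcontents{\mu}}},
\]
whereupon Proposition~\ref{prop:even-over-odd} together with~\eqref{eqn:estim-pan2} yields exactly $\E[X^{2k}]\le 2^k(2k-1)!!$.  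Of course, invoking that would be circular for your purposes; as a self-contained proof, your argument is missing precisely this lemma.
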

Since this distance is small unless $n = \Omega(d/\eps^2)$, our lower bound is complete.  More precisely:
\begin{corollary}                                       \label{cor:pan-lower}
    For even $d$, testing whether a $d$-dimensional mixed state $\rho$ has the the property of being the maximally mixed requires $n \geq .15 d/\eps^2$ copies.
\end{corollary}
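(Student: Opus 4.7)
The plan is to derive this corollary as a straightforward consequence of Theorem~\ref{thm:main-pan-lower} together with the standard reductions already established in the paper. First I invoke Lemma~\ref{lem:wolog}: any tester for the unitarily invariant property ``$\rho$ is maximally mixed'' can, without loss of generality, be assumed to perform weak Schur sampling on $\rho^{\otimes n}$ and then classically postprocess the resulting partition. Hence, the data available to an optimal tester when the input is $\rho$ is exactly a sample from $\SWdens{n}{\rho}$.

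Next I use the two distinguished inputs: $\rho = \frac{1}{d} I_d$ (which satisfies the property, and yields the distribution $\SWunif{n}{d}$) and $\rho = \pan{\eps}{d}$ (which satisfies $\dtvsymm{\pan{\eps}{d}}{\unif{d}} = \eps$ by the observation right before Theorem~\ref{thm:main-pan-lower}, and yields $\SWdens{n}{\pan{\eps}{d}}$). By the standard property-testing indistinguishability argument, for any tester achieving completeness and soundness error at most $1/3$ we must have
\[
\dtv{\SWdens{n}{\pan{\eps}{d}}}{\SWunif{n}{d}} \geq \tfrac{1}{3}.
\]

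I then pass from total variation distance to chi-squared distance via the inequality $\dtv{\calD_1}{\calD_2} \leq \tfrac{1}{2}\sqrt{\dchi{\calD_1}{\calD_2}}$ recalled in the preliminaries. Squaring, this forces $\dchi{\SWdens{n}{\pan{\eps}{d}}}{\SWunif{n}{d}} \geq 4/9$. Applying Theorem~\ref{thm:main-pan-lower} gives $\exp\!\bigl((4n\eps^2/d)^2\bigr) - 1 \geq 4/9$, i.e.\ $(4n\eps^2/d)^2 \geq \ln(13/9)$. Solving for $n$ yields
\[
n \geq \frac{\sqrt{\ln(13/9)}}{4} \cdot \frac{d}{\eps^2} \geq 0.15\, \frac{d}{\eps^2},
\]
since $\sqrt{\ln(13/9)}/4 \approx 0.1516$.

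There is essentially no obstacle here; the corollary is purely a matter of combining (i) the weak-Schur-sampling reduction, (ii) Cauchy--Schwarz converting TV to chi-squared, and (iii) plugging into the bound of Theorem~\ref{thm:main-pan-lower}. The only minor subtlety is that the hypothesis of Theorem~\ref{thm:main-pan-lower} implicitly assumes $d$ is even (so that $\pan{\eps}{d}$ is well-defined with equal numbers of ``high'' and ``low'' eigenvalues); the statement of the corollary is restricted to even $d$ for exactly this reason, and the odd-$d$ case will be handled separately as indicated in the paper's overview.
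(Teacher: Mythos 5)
Your proposal is correct and matches the paper's own proof essentially verbatim: reduce via Lemma~\ref{lem:wolog} to distinguishing $\SWdens{n}{\pan{\eps}{d}}$ from $\SWunif{n}{d}$, bound the total variation distance by $\frac12\sqrt{\dchi{\cdot}{\cdot}}$, and plug in Theorem~\ref{thm:main-pan-lower}, with the same numerical constant $\sqrt{\ln(13/9)}/4 \approx 0.15$. No gaps; nothing further to add.
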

\begin{proof}
    In light of Lemma~\ref{lem:wolog} we know that any $\eps$-tester may as well make its testing decision based on a draw $\blambda \sim \SWdens{n}{\rho}$.  Since $\dtvsymm{\pan{\eps}{d}}{\unif{d}} = \eps$, the tester must be able to distinguish a draw from $\SWdens{n}{\pan{\eps}{d}}$ and a draw from $\SWdens{n}{d}$ with probability advantage~$1/3$; this is possible if and only if $\dtv{\SWdens{n}{\pan{\eps}{d}}}{\SWunif{n}{d}} \geq 1/3$.  But
    \[
        \dtv{\SWdens{n}{\pan{\eps}{d}}}{\SWunif{n}{d}} \leq \frac12 \sqrt{\dchi{\SWdens{n}{\pan{\eps}{d}}}{\SWunif{n}{d}}} \leq \frac12 \sqrt{\exp((4n\eps^2/d)^2) - 1} < 1/3.
    \]
    if $n < .15 d/\eps^2$.
\end{proof}
\noindent We remark that by taking $\eps = \frac12$ we exactly recover the lower bound from Theorem~\ref{thm:q-bday} due to Childs et al.~\cite{CHW07}.

There are two major steps in the proof of Theorem~\ref{thm:main-pan-lower}.  The first major step will be proving the following formula:
\begin{theorem}                                     \label{thm:paninski-var}
    Let $x \in \R^d$ satisfy $x_1 + \cdots + x_d = 0$ .  Then
    \[
        \E_{\blambda \sim \SWunif{n}{d}}\left[\PAREN{
                            \frac{s_{\blambda}(1+x_1, \dots, 1+x_d)}{s_{\blambda}(1, \dots, 1)} - 1}^2\right]
        = \sum_{\substack{\mu \in \allpartitions \\ 0 < \ell(\mu) \leq d}}
                \frac{s_\mu(x)^2}{\rising{d}{\mu} \cdot d^{|\mu|}} \cdot \falling{n}{|\mu|}.
    \]
    (The sum has only finitely many terms since $\falling{n}{|\mu|} = 0$ when $|\mu| > n$.)
\end{theorem}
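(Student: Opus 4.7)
The plan is to apply the Okounkov--Olshanski Binomial Formula (Theorem~\ref{thm:binomial-formula}),
\[
\frac{s_\lambda(1+x_1, \dots, 1+x_d)}{s_\lambda(1, \dots, 1)} \;=\; \sum_{\mu\,:\,\ell(\mu) \leq d} \frac{s^*_\mu(\lambda)}{\rising{d}{\mu}}\, s_\mu(x_1, \dots, x_d),
\]
in which the $\mu = \emptyset$ summand contributes $1$. I will subtract the $1$, square both sides, and take expectation over $\blambda \sim \SWunif{n}{d}$, arriving at
\[
\E_{\blambda}\!\left[\left(\frac{s_\blambda(1+x)}{s_\blambda(1^d)} - 1\right)^2\right] \;=\; \sum_{\substack{\mu, \nu \\ 0 < \ell(\mu), \ell(\nu) \leq d}} \frac{\E[s^*_\mu(\blambda)\, s^*_\nu(\blambda)]}{\rising{d}{\mu}\, \rising{d}{\nu}}\, s_\mu(x)\, s_\nu(x).
\]
The theorem then reduces to showing that, modulo the relation $p_1(x) = \sum_i x_i = 0$, this double sum collapses to the diagonal expression on the right-hand side.

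To compute $\E[s^*_\mu(\blambda)\, s^*_\nu(\blambda)]$, I will use Fourier inversion from Theorem~\ref{thm:psharp-ssharp-relation} to write $s^*_\mu(\lambda) = \sum_{\tau \vdash |\mu|} \chi_\mu(\tau)\, \psharp{\tau}(\lambda)/z_\tau$ (and similarly for $\nu$), the product rule $\psharp{\tau}\,\psharp{\rho} = \sum_\sigma f^\sigma_{\tau\rho}\, \psharp{\sigma}$ with structure constants $f^\sigma_{\tau\rho}$ given explicitly by Proposition~\ref{prop:structure-coefficients} (or the reformulation in Corollary~\ref{cor:structure-coefficients-rewrite}), and finally Corollary~\ref{cor:sw-expect}'s evaluation $\E[\psharp{\sigma}(\blambda)] = \falling{n}{|\sigma|}\, d^{\ell(\sigma)-|\sigma|}$. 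This presents $\E[s^*_\mu\, s^*_\nu]$ as a completely explicit (though combinatorially intricate) triple sum over $(\tau, \rho, \sigma)$.

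The key simplification will come from the sub-identity
\[
\sum_{\mu \vdash k,\ \ell(\mu) \leq d} \frac{s_\mu(x)\, s_\mu(1^d)}{\rising{d}{\mu}} \;=\; \frac{p_1(x)^k}{k!} \qquad (k \geq 1),
\]
which I can derive from Theorem~\ref{thm:power-schur-relation} applied at $\pi = (1^k)$ (giving $\sum_\mu \dim(\mu)\, s_\mu(x) = p_1(x)^k$), together with Proposition~\ref{prop:s111}'s evaluation $s_\mu(1^d) = \dim(\mu)\,\rising{d}{\mu}/|\mu|!$. The right-hand side vanishes on $\sum_i x_i = 0$, so this identity furnishes the mechanism by which off-diagonal ($\mu \neq \nu$) contributions are expected to disappear under the constraint. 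The main obstacle will be the combinatorial bookkeeping: organizing the triple sum above according to ``cycle-merging'' patterns in the $f^\sigma_{\tau\rho}$'s, and verifying that the non-top-weight contributions (those with $|\sigma| < |\tau|+|\rho|$) conspire to produce precisely the diagonal answer $\falling{n}{|\mu|}\, s_\mu(x)^2 / (\rising{d}{\mu}\, d^{|\mu|})$ modulo $p_1(x) = 0$, with the remaining off-diagonal pieces all carrying a factor of $p_1(x)$. A potentially cleaner route, worth attempting first, is to work directly with $\E[F_\blambda(x)^2] = \sum_\lambda \frac{n!\, s_\lambda(1+x)^2}{\rising{d}{\lambda}\, d^n}$ and evaluate it via a Cauchy-type identity exploiting $\rising{d}{\lambda} = n!\, s_\lambda(1^d)/\dim(\lambda)$.
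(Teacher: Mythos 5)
Your setup is exactly the paper's: apply the Binomial Formula, strip off the $\mu=\emptyset$ term, square, take expectations, and reduce to evaluating $\sum_{\mu,\nu} \frac{s_\mu(x)s_\nu(x)}{\rising{d}{\mu}\rising{d}{\nu}}\E[s^*_\mu(\blambda)s^*_\nu(\blambda)]$; and your recipe for $\E[s^*_\mu s^*_\nu]$ (expand via Theorem~\ref{thm:psharp-ssharp-relation}, multiply using the structure constants of Corollary~\ref{cor:structure-coefficients-rewrite}, evaluate with Corollary~\ref{cor:sw-expect}) is precisely the paper's Lemma~\ref{lem:expected-shifted-schur-prod}. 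The gap is at what you call the ``key simplification.'' Your identity $\sum_{\mu\vdash k}\frac{s_\mu(x)\,s_\mu(1^d)}{\rising{d}{\mu}} = p_1(x)^k/k!$ is true, but it is only the identity-permutation instance of the pairing that actually arises: after the expansion, $s_\nu(x)$ is paired not against $s_\nu(1,\dots,1)\propto\chi_\nu(\mathrm{id})$ but against $\chi_\nu(w)$ for the various permutations $w$ produced by the cycle-merging inside $f^\tau_{\mu\nu}$, weighted by $d^{\ell(\cdot)}$ factors coming from $p_\tau(\tfrac1d,\dots,\tfrac1d)$. The fact you actually need is the general one, $\sum_{\nu\vdash r_2}s_\nu(x)\chi_\nu(w)=p_w(x)$ (Theorem~\ref{thm:power-schur-relation}), which vanishes exactly when $w$ has a fixed point; and the real work---which your proposal defers to ``combinatorial bookkeeping''---is showing that, after the sum over $\nu$, the relevant permutation is forced to have a fixed point unless $|\mu|=|\nu|=t$. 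In the paper this is Theorem~\ref{thm:paninski-internal2}, and it needs two ingredients absent from your outline: the convolution computation $\E_{\bw\sim\symm{r}}[\chi_\nu(\bw)\,d^{\ell(u\bw)}]=\chi_\nu(u)\rising{d}{\nu}/r!$ (Lemma~\ref{lem:beginning-lemma}), which converts the $d^{\ell(\cdot)}$ weights back into a single character value, and the cycle decomposition $\ell(\ol{\bw}_1\ol{\bw}_2)=\ell^{\setminus}(\bw_1)+\ell(\bw_1^\bot\cdot\bw_2)$, after which character orthogonality forces $\bv$ to be conjugate to $\bw_1^\bot$, a permutation that necessarily has fixed points unless $R_2\setminus R_1=\emptyset$, i.e., unless $r_1=r_2=t$.

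Without that argument, the cancellation of the $n^{k+1},\dots,n^{2k}$-order contributions---which the paper identifies as the essence of the proof, since naively $\E[s^*_\mu s^*_\nu]$ contributes at order $n^{2k}$---is asserted rather than proved, and the diagonal evaluation (the $z_{\cyctype{\bw_1}}$ cancellation that yields exactly $\falling{n}{|\mu|}\,s_\mu(x)/d^{|\mu|}$, rather than merely ``something of order $n^{|\mu|}$'') is likewise missing; note also that $\E[s^*_\mu(\blambda)s^*_\nu(\blambda)]$ is \emph{not} $1_{\{\mu=\nu\}}\cdot\frac{\rising{d}{\mu}}{d^{|\mu|}}\falling{n}{|\mu|}$, so no term-by-term orthogonality shortcut is available. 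Your fallback suggestion of a Cauchy-type evaluation of $\sum_{\lambda\vdash n} n!\,s_\lambda(1+x)^2/(\rising{d}{\lambda}\,d^n)$ also does not obviously go through: the sum is restricted to partitions of fixed size $n$ and carries the nonstandard weight $1/\rising{d}{\lambda}$, which no off-the-shelf Cauchy identity handles.
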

Once the above theorem is established, the following consequence is essentially immediate:
\begin{corollary}                                     \label{cor:paninski-chi-squared}
    Let  $x \in \R^d$ satisfy $x_1 + \cdots + x_d = 0$ and $x_i \geq -1$ for all~$i$.  We write~$\calQ_x$ for the probability distribution on~$[d]$ in which~$i$ has probability $\frac{1+x_i}{d}$. Then
    \[
          \dchi{\SWdens{n}{\calQ_x}}{\SWunif{n}{d}}
        = \sum_{\substack{\mu \in \allpartitions \\ 0 < \ell(\mu) \leq d}}
                \frac{s_\mu(x)^2}{\rising{d}{\mu} \cdot d^{|\mu|}} \cdot \falling{n}{|\mu|}.
    \]
\end{corollary}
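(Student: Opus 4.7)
The plan is that this corollary follows from Theorem~\ref{thm:paninski-var} essentially by unpacking the definitions. First I would verify that $\calQ_x$ is indeed a probability distribution: nonnegativity follows from $x_i \geq -1$, and $\sum_i (1+x_i)/d = 1$ because $\sum_i x_i = 0$. This lets me view $\calQ_x$ as the spectrum of a $d$-dimensional diagonal density matrix, so that $\SWdens{n}{\calQ_x}$ is well-defined.

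Next, I would compute the likelihood ratio. By Proposition~\ref{prop:schur-probability} applied to the diagonal density matrix with spectrum $\calQ_x$, for each $\lambda \vdash n$ with $\ell(\lambda) \leq d$,
\[
    \SWdensProb{n}{\calQ_x}{\lambda} = \dim(\lambda) \cdot s_\lambda\PAREN{\tfrac{1+x_1}{d}, \ldots, \tfrac{1+x_d}{d}} = \frac{\dim(\lambda)}{d^n} \cdot s_\lambda(1+x_1, \ldots, 1+x_d),
\]
using the fact that $s_\lambda$ is homogeneous of degree $|\lambda|=n$. The same reasoning applied to the maximally mixed state gives
\[
    \SWunifProb{n}{d}{\lambda} = \frac{\dim(\lambda)}{d^n} \cdot s_\lambda(1, \ldots, 1).
\]
Taking the ratio, the common prefactor $\dim(\lambda)/d^n$ cancels, leaving
\[
    \frac{\SWdensProb{n}{\calQ_x}{\lambda}}{\SWunifProb{n}{d}{\lambda}} = \frac{s_\lambda(1+x_1, \ldots, 1+x_d)}{s_\lambda(1, \ldots, 1)}.
\]
(Here I am implicitly using that both distributions are supported on partitions of $n$ with at most $d$ parts, so the denominator is nonzero on the support of $\SWunif{n}{d}$ by Proposition~\ref{prop:identically-zero} and Proposition~\ref{prop:s111}.)

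Finally, I would plug this ratio into the definition of chi-squared distance:
\[
    \dchi{\SWdens{n}{\calQ_x}}{\SWunif{n}{d}} = \E_{\blambda \sim \SWunif{n}{d}}\left[\PAREN{\frac{\SWdensProb{n}{\calQ_x}{\blambda}}{\SWunifProb{n}{d}{\blambda}} - 1}^2\right] = \E_{\blambda \sim \SWunif{n}{d}}\left[\PAREN{\frac{s_{\blambda}(1+x_1, \ldots, 1+x_d)}{s_{\blambda}(1, \ldots, 1)} - 1}^2\right],
\]
and then apply Theorem~\ref{thm:paninski-var} directly to obtain the claimed formula. There is no significant obstacle here; the entire content of the corollary is the identification of the chi-squared distance with the expectation computed in Theorem~\ref{thm:paninski-var}, so the real work is in proving that theorem rather than in deducing this corollary.
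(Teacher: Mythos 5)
Your proposal is correct and matches the paper's own proof: both use Proposition~\ref{prop:schur-probability} to express the likelihood ratio, cancel the common factor $d^{-|\blambda|}\dim(\blambda)$ via homogeneity of the Schur polynomials, and then invoke Theorem~\ref{thm:paninski-var}. The extra checks you include (that $\calQ_x$ is a distribution and that the ratio is well-defined on the support) are fine but not where any real content lies.
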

\begin{proof}
    By definition, $\dchi{\SWdens{n}{\calQ_x}}{\SWunif{n}{d}}$ is equal to
    \begin{align*}
         \E_{\blambda \sim \SWunif{n}{d}}\left[\PAREN{\frac{\SWdensProb{n}{\calQ_x}{\blambda}}{\SWunifProb{n}{d}{\blambda}}-1}^2\right]
          &= \E_{\blambda \sim \SWunif{n}{d}}\left[\PAREN{\frac{s_{\blambda}\paren{\tfrac{1+x_1}{d}, \dots, \tfrac{1+x_d}{d}} \dim(\blambda)}{ s_{\blambda}\paren{\tfrac{1}{d}, \dots, \tfrac{1}{d}}\dim(\blambda)}-1}^2\right].
    \end{align*}
    where we used Proposition~\ref{prop:schur-probability}.  In turn, this equals the quantity on the left in Theorem~\ref{thm:paninski-var} after canceling the common factor of~$d^{-|\blambda|} \dim \blambda$ in the fraction (recall the homogeneity of the Schur polynomials).
\end{proof}

Let us sketch the intuition of the proof once Theorem~\ref{thm:paninski-var} is established.  We are ultimately interested in the case $x = 2\eps \cdot c$, where $\eps > 0$ is thought of as ``small'' and $c \in \R^d$ satisfies $c_1 + \cdots + c_d = 0$; specifically, $c = c_\pm \coloneqq (+1,-1,+1,-1, \dots, +1,-1)$.  For simplicity, let us write $\eps$ instead of $2\eps$.  Since $s_\mu$ is homogeneous of degree~$|\mu|$, this means $s_\mu(x)^2 = s_\mu(c)^2\eps^{2|\mu|} $.  For the sake of intuition, let us consider the summands in Theorem~\ref{thm:paninski-var} when $|\mu| = k$ is ``small''; i.e., the coefficients on~$\eps^{2k}$. 
For $k = 1$ we have only $\mu = (1)$, and the associated summand actually drops out:  this is because $s_{(1)}(x) = x_1 + \cdots + x_d = 0$.  For $k \geq 2$, the term $\falling{n}{|\mu|}$ is asymptotically~$n^k$ and the denominator $d^{|\mu|}\cdot\rising{d}{\mu}$ is asymptotically~$d^{2k}$.  It remains to analyze $s_\mu(c_\pm)$.  This is the second major step in the proof of Theorem~\ref{thm:main-pan-lower}: in Section~\ref{sec:maya} we establish an exact formula for it.  Naively one might expect  $|s_\mu(c_\pm)|$ to scale like~$d^k$ when $|\mu| = k$; however, as we will see it scales only like $d^{k/2}$ (and will in fact be~$0$ whenever~$k$ is odd).  Thus the summands with $|\mu| = k$ small scale asymptotically as~$n^k \cdot \frac{\eps^{2k}}{d^k}$, whence we get that $\dchi{\SWdens{n}{\calQ_{\eps \cdot c_\pm}}}{\SWunif{n}{d}}$ is small if $n \ll \frac{d}{\eps^2}$.

\subsection{Proof of Theorem~\ref{thm:paninski-var}}
To analyze the quantity in Theorem~\ref{thm:paninski-var} we will require the so-called Binomial Formula.  (It generalizes the ``usual'' Binomial Formula, viz.\ $(1+x)^\ell = \sum_{m \geq 0} x^m \falling{\ell}{m}/m!$, in the case $d = 1$.)
\begin{theorem}                                     \label{thm:binomial-formula}
    The following polynomial identity holds:
    \[
          \frac{s_{\lambda}(1+x_1, \dots, 1+x_d)}{s_\lambda(1, \dots, 1)}
        = \sum_{\substack{\mu \in \allpartitions \\ \ell(\mu) \leq d}} \frac{s_\mu(x)}{\rising{d}{\mu}}\cdot s^*_\mu(\lambda).
    \]
    (The sum is actually finite since we may include the restriction $\mu \subseteq \lambda$ due to the factor $s^*_\mu(\lambda)$.)
\end{theorem}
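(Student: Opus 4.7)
The plan is to verify the Binomial Formula as a polynomial identity in $x = (x_1, \ldots, x_d)$ by matching Schur-basis expansions of both sides. For fixed $\lambda$, both sides are symmetric polynomials in $x$ of degree at most $n \coloneqq |\lambda|$, so each is uniquely determined by its coefficients in the basis $\{s_\mu(x)\}_{\mu : \ell(\mu) \leq d}$. The RHS is manifestly of the form $\sum_\mu \frac{s^*_\mu(\lambda)}{\rising{d}{\mu}} s_\mu(x)$, so the task reduces to showing that the LHS has the same expansion.

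To extract these coefficients from the LHS, I would first apply Theorem~\ref{thm:power-schur-relation} to write
\[
    s_\lambda(1+x_1,\ldots,1+x_d) = \E_{\bpi \sim \symm{n}}\bigl[\chi_\lambda(\bpi)\, p_\bpi(1+x)\bigr],
\]
then use the elementary expansion $p_m(1+x) = d + \sum_{j \geq 1}\binom{m}{j} p_j(x)$ to reduce each $p_\bpi(1+x) = \prod_i p_{\bpi_i}(1+x)$ to a finite combination of power-sum products $p_\nu(x)$. Regrouping by $\nu$, converting back to the Schur basis via the Fourier inversion $p_\nu(x) = \sum_\mu \chi_\mu(\nu) s_\mu(x)$, and dividing through by $s_\lambda(1,\ldots,1) = \dim(\lambda)\rising{d}{\lambda}/n!$ (Proposition~\ref{prop:s111}), one obtains an explicit character-theoretic formula for the LHS's coefficient of each $s_\mu(x)$.

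The key recognition is that this coefficient rearranges into $\tfrac{1}{\rising{d}{\mu}} \E_{\bpi \sim \symm{k}}[\chi_\mu(\bpi)\psharp{\bpi}(\lambda)]$ for $\mu \vdash k$, which by Theorem~\ref{thm:psharp-ssharp-relation} is exactly $s^*_\mu(\lambda)/\rising{d}{\mu}$. The central character $\psharp{\bpi}(\lambda)$ appears because the average $\E_{\bpi \sim \symm{n}}[\chi_\lambda(\bpi) \cdot (\text{cycle contribution of }\bpi)]$ is, after restricting which cycles of $\bpi$ contribute nontrivially to a given $p_\nu(x)$ factor (the remaining cycles being forced to be fixed points contributing via $p_0(x) = d$), essentially the defining formula $\psharp{\nu}(\lambda) = \falling{n}{|\nu|}\chi_\lambda(\nu \cup 1^{n-|\nu|})/\dim(\lambda)$.

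The main obstacle is the combinatorial bookkeeping that produces the $\rising{d}{\mu}$ denominator: it does not arise from any single ingredient but only after carefully collecting the constant terms $p_0(x) = d$ from each expansion of $p_m(1+x)$, together with the factor $\rising{d}{\lambda}$ in $s_\lambda(1,\ldots,1)$. A more elegant alternative, essentially the original route of Okounkov and Olshanski~\cite{OO98b}, sidesteps this bookkeeping: one uses the vanishing property $s^*_\mu(\nu) = 0$ for $\mu \not\subseteq \nu$, combined with the normalization $s^*_\mu(\mu) = \prod_{\square \in [\mu]} h(\square)$, to reduce the identity to a triangular interpolation problem in the values $\lambda = \nu$. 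This system is uniquely solved by the claimed coefficients once the base case is checked via the hook-length formula.
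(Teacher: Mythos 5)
First, note that the paper itself gives no proof of Theorem~\ref{thm:binomial-formula}: it is quoted from Okounkov--Olshanski (and, in factorial-Schur form, Lascoux/Macdonald). So your proposal must stand on its own, and as written it has a genuine gap in both routes. In Route A, the setup is fine (writing $s_\lambda(1+x) = \E_{\bpi \sim \symm{n}}[\chi_\lambda(\bpi)\,p_{\bpi}(1+x)]$ via Theorem~\ref{thm:power-schur-relation}, expanding $p_m(1+x) = d + \sum_{j\ge1}\binom{m}{j}p_j(x)$, converting back with Fourier inversion, and normalizing by Proposition~\ref{prop:s111}), but the ``key recognition'' that the resulting coefficient of $s_\mu(x)$ equals $\rising{d}{\mu}^{-1}\,\E_{\bpi\sim\symm{k}}[\chi_\mu(\bpi)\psharp{\bpi}(\lambda)]$ is precisely the theorem, and the one justification you offer for it is wrong: an unassigned cycle is \emph{not} forced to be a fixed point, since the constant term of $p_m(1+x)$ is $d$ for every cycle length $m$, and a cycle of length $\rho_i$ contributes $\binom{\rho_i}{j}p_j(x)$ for every $1\le j\le \rho_i$ (e.g.\ for $\lambda=(2)$, $\mu=(1)$ the transposition contributes $\binom{2}{1}=2$ to the $p_1$-coefficient). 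Hence the coefficient of each $p_\nu(x)$ is a character-weighted sum over \emph{all} of $\symm{n}$ involving powers of $d$ and products of binomial coefficients, not ``essentially the defining formula'' of $\psharp{\nu}(\lambda)$. Moreover the target factor $1/\rising{d}{\mu}$ depends on $\mu$ while the $p_\nu$-coefficients do not, so the identification cannot be made term-by-term in $\nu$; it can only emerge after the full sum against $\chi_\mu(\nu)$, via nontrivial identities of the same flavor as Lemma~\ref{lem:beginning-lemma} and the $d^{\ell(\cdot)}$ manipulations in Theorem~\ref{thm:paninski-internal2}. You yourself flag this bookkeeping as ``the main obstacle'' --- but it is the entire content of the proof, and it is not carried out.

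Route B does not close the gap either. The vanishing property $s^*_\mu(\nu)=0$ for $\mu\not\subseteq\nu$ and the normalization $s^*_\mu(\mu)=\prod_{\square\in[\mu]}h(\square)$ are themselves nontrivial theorems nowhere established in the paper (which defines $s^*_\mu$ only by its determinant and Theorem~\ref{thm:psharp-ssharp-relation}); and even granting them, the interpolation argument needs an ingredient you never address: that for fixed $\mu$ the coefficient of $s_\mu(x)$ in $s_\lambda(1+x)/s_\lambda(1,\dots,1)$, multiplied by $\rising{d}{\mu}$, is a shifted symmetric \emph{polynomial} in $\lambda$ of degree at most $|\mu|$. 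Without that degree-bounded polynomiality, knowing its values at the finitely many $\nu$ with $|\nu|\le|\mu|$ (your easy base cases: the top homogeneous part of $s_\nu(1+x)$ is $s_\nu(x)$, and $\rising{d}{\mu}/s_\mu(1,\dots,1)=|\mu|!/\dim\mu=\prod h(\square)$) determines nothing, so the ``triangular interpolation problem'' is not actually set up. If you want a short self-contained proof compatible with the paper's definitions, expand the bialternant of Theorem~\ref{thm:schur-determinant}: in $\det\bigl((1+x_i)^{\lambda_j+d-j}\bigr)$ expand each entry by the binomial theorem and use multilinearity and antisymmetry to regroup by strictly decreasing exponent sequences $\mu_i+d-i$; dividing by the translation-invariant Vandermonde shows the coefficient of $s_\mu(x)$ in $s_\lambda(1+x)$ is $\det\bigl(\tbinom{\lambda_j+d-j}{\mu_i+d-i}\bigr)$, which equals $s^*_\mu(\lambda)\,s_\lambda(1,\dots,1)/\rising{d}{\mu}$ directly from the determinantal definition of $s^*_\mu$, Proposition~\ref{prop:s111}, and $\rising{d}{\mu}=\prod_i\frac{(\mu_i+d-i)!}{(d-i)!}$.
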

\noindent In this form with the shifted Schur polynomials, the result appears in Okounkov and Olshanski's work~\cite[Theorem 5.1]{OO98b} (see also~\cite{OO98a}). In a form involving factorial Schur polynomials it dates back to Lascoux~\cite{Las78}; see~\cite[Example I.3.10]{Mac95}.

The $\mu = \emptyset$ summand in Theorem~\ref{thm:binomial-formula} is always equal to~$1$; it follows that the quantity on the left of Theorem~\ref{thm:paninski-var} is
\[
\E_{\blambda \sim \SWunif{n}{d}}\left[\PAREN{\sum_{0 < \ell(\mu) \leq d} \frac{s_\mu(x)}{\rising{d}{\mu}}\cdot s^*_\mu(\blambda)}^2\right]
    = \sum_{0 < \ell(\mu), \ell(\nu) \leq d} \frac{s_\mu(x)s_{\nu}(x)}{\rising{d}{\mu}\rising{d}{\nu}} \E_{\blambda \sim \SWunif{n}{d}}\left[s^*_\mu(\blambda)s^*_{\nu}(\blambda)\right].
\]
Therefore proving Theorem~\ref{thm:paninski-var} reduces to proving
\begin{equation}                                    \label{eqn:paninski-var2}
    x_1 + \cdots + x_d = 0 \implies
    \sum_{0 < \ell(\mu), \ell(\nu) \leq d} \frac{s_\mu(x)s_{\nu}(x)}{\rising{d}{\mu}\rising{d}{\nu}} \E_{\blambda \sim \SWunif{n}{d}}\left[s^*_\mu(\blambda)s^*_{\nu}(\blambda)\right]
    = \sum_{0 < \ell(\mu) \leq d}  \frac{s_\mu(x)^2}{\rising{d}{\mu} \cdot d^{|\mu|}} \cdot \falling{n}{|\mu|}.
\end{equation}
This is the main difficult step of the proof; the surprising aspect here is that we only get a contribution on the order of~$n^{k}$ from the terms with $|\mu| = k$, whereas naively one would expect~$n^{2k}$.
Showing that the $n^{k+1}, n^{k+2}, \dots, n^{2k}$ contributions ``drop out'' is the essence of the proof.

In aid of proving~\eqref{eqn:paninski-var2}, it's tempting to guess that $\E[s^*_\mu(\blambda)s^*_{\nu}(\blambda)] = 1_{\{\mu = \nu\}} \cdot \frac{\rising{d}{\mu}}{d^{|\mu|}}\cdot\falling{n}{|\mu|}$; however such a statement is false.  Instead, what \emph{is} true is the following:
\begin{theorem}                                     \label{thm:paninski-internal} \label{thm:paninski-internal2}
    Let $x \in \R^d$ satisfy $x_1 + \cdots + x_d = 0$ and let $\mu \in \allpartitions$ satisfy $|\mu| = r_1$ and $0 < \ell(\mu) \leq d$.  Assume $r_2 \geq r_1$. Then
    \[
        \sum_{\substack{|\nu| = r_2 \\ \ell(\nu) \leq d}} \frac{s_\nu(x)}{\rising{d}{\nu}}
         \E_{\blambda \sim \SWunif{n}{d}}\left[s^*_\mu(\blambda)s^*_{\nu}(\blambda)\right]
        = 1_{\{r_2 = r_1\}} \cdot 
              \frac{s_\mu(x)}{d^{|\mu|}}\cdot \falling{n}{|\mu|}. 
    \]
\end{theorem}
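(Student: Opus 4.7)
The plan is to recognize the left-hand side as the degree-$r_2$ homogeneous component (in $x$) of a single expectation that admits a clean closed form. Multiplying the Binomial Formula (Theorem~\ref{thm:binomial-formula}) by $s^*_\mu(\lambda)$, taking expectation over $\blambda\sim \SWunif{n}{d}$, and grouping terms by $|\nu|$ (each $s_\nu(x)$ is homogeneous of degree $|\nu|$ and $\rising{d}{\nu}$ is a scalar), we see that the LHS of Theorem~\ref{thm:paninski-internal} equals the degree-$r_2$ part (in $x$) of
\[
F(x) \coloneqq \E_{\blambda\sim \SWunif{n}{d}}\left[s^*_\mu(\blambda)\cdot \frac{s_\blambda(1+x_1,\dots,1+x_d)}{s_\blambda(1,\dots,1)}\right].
\]
It therefore suffices to prove $F(x) = \tfrac{\falling{n}{r_1}}{d^{r_1}}\cdot s_\mu(1+x_1,\dots,1+x_d)$, because $s_\mu(1+x)$ has total degree exactly $r_1 = |\mu|$ in $x$ with top-degree part $s_\mu(x)$; hence its degree-$r_2$ component is $s_\mu(x)$ when $r_2 = r_1$ and $0$ when $r_2 > r_1$, matching the RHS.

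To evaluate $F(x)$, plug in $\SWunifProb{n}{d}{\lambda} = \dim(\lambda)\cdot s_\lambda(1,\dots,1)/d^n$ from Proposition~\ref{prop:schur-probability}; the factor $s_\blambda(1,\dots,1)$ cancels, and writing $\widetilde\eta \coloneqq ((1+x_1)/d,\dots,(1+x_d)/d)$ one obtains $F(x) = \sum_{\lambda\vdash n}\dim(\lambda)\cdot s_\lambda(\widetilde\eta)\cdot s^*_\mu(\lambda)$, where the restriction $\ell(\lambda)\leq d$ may be dropped by Proposition~\ref{prop:identically-zero}. Now expand $s^*_\mu(\lambda) = \E_{\bpi\sim \symm{r_1}}[\chi_\mu(\bpi)\cdot \psharp{\bpi}(\lambda)]$ using Theorem~\ref{thm:psharp-ssharp-relation}, together with the definition $\psharp{\bpi}(\lambda) = \falling{n}{r_1}\chi_\lambda(\bpi\cup 1^{n-r_1})/\dim(\lambda)$ (the case $n < r_1$ is trivial, both sides vanishing). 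Swapping the $\lambda$- and $\bpi$-sums and applying Theorem~\ref{thm:power-schur-relation} on $\symm{n}$ yields
\[
\sum_{\lambda\vdash n} s_\lambda(\widetilde\eta)\,\chi_\lambda(\bpi\cup 1^{n-r_1}) = p_{\bpi\cup 1^{n-r_1}}(\widetilde\eta) = p_\bpi(\widetilde\eta)\cdot p_1(\widetilde\eta)^{n-r_1}.
\]

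This is where the hypothesis $x_1+\cdots+x_d=0$ enters decisively: it gives $p_1(\widetilde\eta) = \sum_i (1+x_i)/d = 1$, so the $p_1(\widetilde\eta)^{n-r_1}$ factor (which would otherwise create an uncontrolled $n$-dependence) disappears. A second invocation of Theorem~\ref{thm:power-schur-relation}, this time on $\symm{r_1}$, then collapses the average over $\bpi$ to $\E_{\bpi}[\chi_\mu(\bpi)\cdot p_\bpi(\widetilde\eta)] = s_\mu(\widetilde\eta) = s_\mu(1+x_1,\dots,1+x_d)/d^{r_1}$, the last step being homogeneity. Altogether $F(x) = \tfrac{\falling{n}{r_1}}{d^{r_1}}\cdot s_\mu(1+x_1,\dots,1+x_d)$, which combined with the first paragraph proves the theorem.

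The principal obstacle---or rather, the key cancellation without which the statement would simply be false---is the trivialization of the $p_1(\widetilde\eta)^{n-r_1}$ factor afforded by the hypothesis $x_1+\cdots+x_d=0$; every other step is a routine Fourier manipulation packaged by the twin Theorems~\ref{thm:power-schur-relation} and~\ref{thm:psharp-ssharp-relation} (applied once on $\symm{n}$ to handle the $\blambda$ expectation, and once on $\symm{r_1}$ to reassemble $s_\mu$).
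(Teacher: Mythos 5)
Your proof is correct, and it takes a genuinely different route from the paper's. The paper attacks the product $s^*_\mu s^*_\nu$ head-on: it expands both factors into central characters (Theorem~\ref{thm:psharp-ssharp-relation}), multiplies them using the structure constants of partial permutations (Corollary~\ref{cor:structure-coefficients-rewrite}, packaged as Lemma~\ref{lem:expected-shifted-schur-prod}), and then kills every term with $t > r_1 = r_2$ via a convolution identity (Lemma~\ref{lem:beginning-lemma}), character orthogonality, and a fixed-point argument --- there the hypothesis $p_1(x)=0$ forces $\bv$ to be fixed-point-free, hence non-conjugate to $\bw_1^\bot$ unless $R_2 \setminus R_1 = \emptyset$. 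You instead sum over $\nu$ first: by the Binomial Formula (Theorem~\ref{thm:binomial-formula}) the theorem's left side is the degree-$r_2$ homogeneous part of $F(x)=\E\bigl[s^*_\mu(\blambda)\, s_\blambda(1+x_1,\dots,1+x_d)/s_\blambda(1,\dots,1)\bigr]$, and this single expectation collapses, via Proposition~\ref{prop:schur-probability} together with the two Fourier relations (Theorems~\ref{thm:power-schur-relation} and~\ref{thm:psharp-ssharp-relation}), to $\falling{n}{r_1}\, s_\mu(\widetilde\eta)\, p_1(\widetilde\eta)^{n-r_1}$ with $\widetilde\eta=(1+x)/d$ --- in effect the first-moment identity $\E_{\blambda\sim\SWdens{n}{q}}[s^*_\mu(\blambda)]=\falling{n}{|\mu|}s_\mu(q)$ extended off the probability simplex --- so the hypothesis $x_1+\cdots+x_d=0$ performs all the cancelation at once through $p_1(\widetilde\eta)=1$. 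This is shorter and makes transparent why the $n^{k+1},\dots,n^{2k}$ contributions drop out; what the paper's route buys is an explicit formula (Lemma~\ref{lem:expected-shifted-schur-prod}) for each individual $\E[s^*_\mu(\blambda)s^*_\nu(\blambda)]$ under an arbitrary spectrum $q$, whereas you only ever evaluate the $\nu$-aggregated quantity on the hyperplane. One small step you should make explicit: the identity $F(x)=\tfrac{\falling{n}{r_1}}{d^{r_1}}s_\mu(1+x_1,\dots,1+x_d)$ holds only for $x$ with $\sum_i x_i=0$, so extracting degree-$r_2$ components is legitimate because that hyperplane is closed under dilation --- for fixed $x_0$ in it, substitute $x=tx_0$ and compare coefficients of $t^{r_2}$ in the resulting polynomial identity in $t$; this one-line remark makes the reduction airtight.
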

\noindent To deduce~\eqref{eqn:paninski-var2} from Theorem~\ref{thm:paninski-internal}, simply write
\[
\sum_{0 < \ell(\mu), \ell(\nu) \leq d} \frac{s_\mu(x)s_{\nu}(x)}{\rising{d}{\mu}\rising{d}{\nu}} \E_{\blambda \sim \SWunif{n}{d}}\left[s^*_\mu(\blambda)s^*_{\nu}(\blambda)\right] = \sum_{r_1,r_2 > 0} \sum_{\substack{|\mu| = r_1 \\ \ell(\mu) \leq d}} \sum_{\substack{|\nu| = r_2 \\ \ell(\nu) \leq d}} \frac{s_\mu(x)s_{\nu}(x)}{\rising{d}{\mu}\rising{d}{\nu}} \E_{\blambda \sim \SWunif{n}{d}}\left[s^*_\mu(\blambda)s^*_{\nu}(\blambda)\right].
\]
Then use Theorem~\ref{thm:paninski-internal} when $r_2 \geq r_1$ and use it with the roles of $\mu$ and $\nu$ reversed when $r_2 < r_1$.


As for the proof of Theorem~\ref{thm:paninski-internal2} itself, the first step is to compute the expected product of the shifted Schur polynomials. One possible approach for this might be to use the Littlewood--Richardson rule for factorial Schur functions (see \cite[Proposition~4.2]{MS99} or~\cite[Corollary~3.3]{Mol09}) to write $s^*_\mu s^*_\nu$ as a linear combination of $s^*_\tau$ polynomials. Unfortunately, these Littlewood--Richardson coefficients seem somewhat difficult to work with. Instead, we will expand the shifted Schur polynomials in terms of the central characters and then multiply them via the known structure constants.  We do this in the below lemma, carried out for a generic Schur--Weyl distribution.  In this lemma, $\symmset{R}$ denotes the symmetric group acting on the finite set~$R$.
\begin{lemma}                                       \label{lem:expected-shifted-schur-prod}
    Let $q = (q_1, \dots, q_d)$ be a probability distribution on~$[d]$ and let $\mu \vdash r_1$, $\nu \vdash r_2$.  Then
    \[
          \E_{\blambda \sim \SWdens{n}{q}}\left[s^*_\mu(\blambda)s^*_{\nu}(\blambda)\right]
        = \sum_{t = r_1 \vee r_2}^{r_1+r_2} C^{t}_{r_1r_2} \cdot \falling{n}{t} \cdot
              \E_{\substack{\bw_1 \sim \symmset{R_1} \\
                            \bw_2 \sim \symmset{R_2}}}
              \left[
                  \chi_{\mu}(\bw_1) \chi_{\nu}(\bw_2) p_{\overline{\bw}_1 \overline{\bw}_2}(q)
              \right].
    \]
Here, for each choice of~$t$, we let $R_1, R_2$ denote (arbitrary but fixed) subsets of~$[t]$ having cardinality $r_1, r_2$, respectively, with $R_1 \cup R_2 = [t]$.  (E.g., $R_1 = \{1, \dots, r_1\}$, $R_2 = \{t - r_2 + 1, \dots, t\}$.)  Also, $\overline{\bw}_1$ denotes the extension of $\bw_1$ to $\symm{t}$ formed by letting $\overline{\bw}_1$ fix each element of $[t] \setminus R_1$; similarly for~$\overline{\bw}_2$.
\end{lemma}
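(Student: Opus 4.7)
The plan is to expand the shifted Schur polynomials as Fourier series in the central character basis, reduce products of $\psharp{\cdot}$'s to linear combinations via structure constants, and finally apply Proposition~\ref{prop:sw-alg-expect} term by term.

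First I would invoke Theorem~\ref{thm:psharp-ssharp-relation}, which says that, on the symmetric group $\symm{r_1}$, the class function $\pi \mapsto \psharp{\pi}(\lambda)$ has $\mu$-Fourier coefficient equal to $s^*_\mu(\lambda)$.  Fourier inversion (together with the fact that $\symm{n}$-characters are real) therefore gives
\[
s^*_\mu(\lambda)=\E_{\bw_1\sim\symm{r_1}}\bigl[\chi_\mu(\bw_1)\,\psharp{\bw_1}(\lambda)\bigr],
\qquad
s^*_\nu(\lambda)=\E_{\bw_2\sim\symm{r_2}}\bigl[\chi_\nu(\bw_2)\,\psharp{\bw_2}(\lambda)\bigr].
\]
Multiplying, exchanging expectations (everything is a finite sum), and taking $\E_{\blambda\sim\SWdens{n}{q}}$ yields
\[
\E_{\blambda}\bigl[s^*_\mu(\blambda)s^*_\nu(\blambda)\bigr]
= \E_{\bw_1,\bw_2}\Bigl[\chi_\mu(\bw_1)\chi_\nu(\bw_2)\,
                         \E_{\blambda}\!\left[\psharp{\bw_1}(\blambda)\psharp{\bw_2}(\blambda)\right]\Bigr].
\]

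Next I would evaluate the inner expectation for \emph{fixed} $\bw_1,\bw_2$.  Expanding the product using the $\Lambda^*$ structure constants $\psharp{\bw_1}\psharp{\bw_2}=\sum_\tau f^\tau_{\bw_1\bw_2}\psharp{\tau}$ and applying Proposition~\ref{prop:sw-alg-expect},
\[
\E_{\blambda}\!\left[\psharp{\bw_1}(\blambda)\psharp{\bw_2}(\blambda)\right]
=\sum_{t}\falling{n}{t}\sum_{\tau\vdash t} f^\tau_{\bw_1\bw_2}\, p_\tau(q).
\]
Now I apply Corollary~\ref{cor:structure-coefficients-rewrite}, which rewrites $f^\tau_{\bw_1\bw_2}$ as $C^t_{r_1 r_2}\cdot\Pr[\overline{\bu}_1\overline{\bu}_2 \text{ has cycle type }\tau]$, where $\bu_1,\bu_2$ are uniform on permutations of $R_1,R_2$ (with the fixed choice $R_1=\{1,\dots,r_1\}$, $R_2=\{t-r_2+1,\dots,t\}$) having the same cycle types as $\bw_1,\bw_2$.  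Summing over $\tau\vdash t$ collapses the probability into an expectation:
\[
\sum_{\tau\vdash t} f^\tau_{\bw_1\bw_2}\, p_\tau(q)
= C^t_{r_1 r_2}\,\E_{\bu_1,\bu_2\mid \bw_1,\bw_2}\bigl[p_{\overline{\bu}_1\overline{\bu}_2}(q)\bigr].
\]
Since $C^t_{r_1 r_2}=0$ outside $r_1\vee r_2\leq t\leq r_1+r_2$, the $t$-sum restricts to that range.

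Finally I would simplify by averaging out $\bw_1,\bw_2$.  The conditional law of $\bu_1$ given $\bw_1$ is uniform on permutations of $R_1$ sharing the cycle type of $\bw_1$; since $\symmset{R_1}\cong\symm{r_1}$, the joint marginal of $\bu_1$ alone (with $\bw_1$ uniform on $\symm{r_1}$) is simply uniform on $\symmset{R_1}$, and crucially $\chi_\mu(\bw_1)=\chi_\mu(\bu_1)$ because characters are class functions.  The same applies to $\bu_2$, and the two draws are independent.  Substituting and reading off gives exactly the claimed identity.

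\paragraph{Main obstacle.}  The substantive work is purely algebraic/combinatorial bookkeeping: identifying precisely which joint distribution on $(\bu_1,\bu_2)$ emerges after marginalizing out $(\bw_1,\bw_2)$, and verifying that the class-function property of $\chi_\mu,\chi_\nu$ legitimately lets one replace $\chi_\mu(\bw_1)\chi_\nu(\bw_2)$ by $\chi_\mu(\bu_1)\chi_\nu(\bu_2)$ under this coupling.  No novel analytic inequality or estimate is needed; the key inputs (Theorem~\ref{thm:psharp-ssharp-relation}, Proposition~\ref{prop:sw-alg-expect}, and Corollary~\ref{cor:structure-coefficients-rewrite}) have already been assembled earlier in the paper.
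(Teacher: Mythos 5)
Your proposal is correct and follows essentially the same route as the paper's proof: expand $s^*_\mu, s^*_\nu$ via Theorem~\ref{thm:psharp-ssharp-relation}, multiply using the structure constants in the form of Corollary~\ref{cor:structure-coefficients-rewrite}, evaluate with Proposition~\ref{prop:sw-alg-expect}, and then observe that drawing a cycle type uniformly from $\symm{r_i}$ and then a uniform permutation of that type is the same as drawing a uniform element of $\symmset{R_i}$, with the class-function property justifying the replacement of $\chi_\mu(\bw_i)$ by the character of the re-drawn permutation. The only (immaterial) difference is the order in which you apply Proposition~\ref{prop:sw-alg-expect} versus the structure-constant rewriting.
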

\begin{proof}
    Recall the notation $\cyctype{w}$ from Section~\ref{sec:partitions} used denote the cycle type of a permutation~$w$.  In this proof, we also use the following notation:  We write $\brho \sim \symm{r}$ to denote that $\brho$ is a random partition of~$r$ formed by first choosing $\bw \sim \symm{r}$ uniformly and then taking $\brho = \cyctype{\bw}$.

    Using Theorem~\ref{thm:psharp-ssharp-relation} for the first equality below, and Corollary~\ref{cor:structure-coefficients-rewrite} for the third equality, we have
    \begin{align*}
        &\phantom{{}=}  \E_{\blambda \sim \SWdens{n}{q}}\left[s^*_\mu(\blambda)s^*_{\nu}(\blambda)\right] \\
        &= \E_{\blambda \sim \SWdens{n}{q}}\left[
                \E_{\brho_1 \sim \symm{r_1}}[\chi_{\mu}(\brho_1) \cdot \psharp{\brho_1}(\blambda)]
           \cdot\E_{\brho_2 \sim \symm{r_2}}[\chi_{\nu}(\brho_2) \cdot \psharp{\brho_2}(\blambda)]
                                          \right] \\
        &= \E_{\substack{\brho_1 \sim \symm{r_1} \\ \brho_2 \sim \symm{r_2}}}
           \left[
                    \chi_{\mu}(\brho_1) \chi_{\nu}(\brho_2) \cdot
                            \E_{\blambda \sim \SWdens{n}{q}}
                            \left[
                                    \psharp{\brho_1}(\blambda) \cdot \psharp{\brho_2}(\blambda)
                            \right]
           \right]\\
        &= \E_{\substack{\brho_1 \sim \symm{r_1} \\ \brho_2 \sim \symm{r_2}}}
           \left[
                    \chi_{\mu}(\brho_1) \chi_{\nu}(\brho_2) \cdot
                            \E_{\blambda \sim \SWdens{n}{q}}
                            \left[
                                    \sum_{t = r_1 \vee r_2}^{r_1+r_2} \sum_{\tau \vdash t} C^{t}_{r_1r_2} \cdot
                                        \Pr_{\substack{\bw_1 \sim \symmset{R_1} \mid \brho_1
                                             \\        \bw_2 \sim \symmset{R_2} \mid \brho_2}}
                                        [\cyctype{\overline{\bw}_1 \overline{\bw}_2} = \tau]
                                    \cdot \psharp{\tau}(\blambda)
                            \right]
           \right],
    \end{align*}
    where here $\bw_i$ is chosen to be a uniformly random permutation on $R_i$ (as in the lemma's statement), \emph{conditioned on} having cycle type~$\brho_i$.  By Proposition~\ref{prop:sw-alg-expect} the above equals
    \begin{align*}
        &\phantom{{}=} \E_{\substack{\brho_1 \sim \symm{r_1} \\ \brho_2 \sim \symm{r_2}}}
           \left[
                    \chi_{\mu}(\brho_1) \chi_{\nu}(\brho_2) \cdot
                                    \sum_{t = r_1 \vee r_2}^{r_1+r_2} \sum_{\tau \vdash t} C^{t}_{r_1r_2} \cdot
                                        \Pr_{\substack{\bw_1 \sim \symmset{R_1} \mid \brho_1
                                             \\        \bw_2 \sim \symmset{R_2} \mid \brho_2}}
                                        [\cyctype{\overline{\bw}_1 \overline{\bw}_2} = \tau]
                                    \cdot \falling{n}{t} \cdot p_\tau(q)
           \right] \\
        &= \sum_{t = r_1 \vee r_2}^{r_1+r_2} C^{t}_{r_1r_2} \cdot \falling{n}{t} \cdot
              \E_{\substack{\brho_1 \sim \symm{r_1},\ \brho_2 \sim \symm{r_2} \\
                            \bw_1 \sim \symmset{R_1} \mid \brho_1 \\
                            \bw_2 \sim \symmset{R_2} \mid \brho_2}}
              \left[
                  \chi_{\mu}(\brho_1) \chi_{\nu}(\brho_2) \cdot
                  \sum_{\tau \vdash t} 1_{\{\cyctype{\overline{\bw}_1 \overline{\bw}_2} = \tau\}} \cdot
                  p_\tau(q)
              \right]
    \end{align*}
    The summation on the inside here simply equals $p_{\cyctype{\overline{\bw}_1 \overline{\bw}_2}}(q)$; we may also replace $\chi_\mu(\brho_1)$ with $\chi_\mu(\bw_1)$, and similarly for $\chi_{\nu}(\brho_2)$.  Thus to complete the proof it remains to show that~$\bw_1$ and~$\bw_2$ have the same distribution as in the statement of the lemma.  But this is clear:  if we first pick a random permutation of~$r_i$ symbols, then take its cycle type, then set~$\bw_i$ to be a random permutation of~$r_i$ symbols of this cycle type, this is the same as simply taking~$\bw_i$ to be a uniformly random permutation of~$r_i$ symbols.
\end{proof}

We will also require the following Fourier-theoretic lemma:

\begin{lemma}                                     \label{lem:beginning-lemma}
    For $u \in \symm{r}$, $\nu \vdash r$, and $d \in \Z^+$,
    \[
        \E_{\bw \sim \symm{r}}[\chi_\nu(\bw) \cdot d^{\ell(u \bw)}] = \frac{{\chi_\nu}(u)\rising{d}{\nu}}{r!}.
    \]
\end{lemma}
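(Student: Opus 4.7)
The key observation is that the function $f \co \symm{r} \to \R$ given by $f(w) = d^{\ell(w)}$ is, in the language of Definition~\ref{def:power-sum}, nothing other than the class function $w \mapsto p_w(\underbrace{1,\dots,1}_d)$ (since each $p_{m}(1,\dots,1)=d$). By Theorem~\ref{thm:power-schur-relation}, its Fourier coefficients over $G = \symm{r}$ are therefore $\four{f}(\mu) = s_\mu(1,\dots,1)$, and Proposition~\ref{prop:s111} lets us evaluate these as $\four{f}(\mu) = (\dim \mu)\rising{d}{\mu}/r!$.

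Next, I would recognize the expectation $\E_{\bw}[\chi_\nu(\bw)\,f(u\bw)]$ as (essentially) the convolution $(f \ast \chi_\nu)(u)$ as defined in Section~\ref{sec:rep-theory}.  Writing $v = u \bw^{-1}$, using the fact that cycle type is inversion-invariant (so $f(u\bw) = f((u\bw)^{-1}) = f(\bw^{-1} u^{-1})$; equivalently $f(u\bw^{-1}) = f(\bw u^{-1})$ has the same distribution) and that $\chi_\nu(\bw^{-1})=\chi_\nu(\bw)$ because characters of $\symm{r}$ are real-valued, the change of variable gives
\[
    \E_{\bw}\bigl[\chi_\nu(\bw)\, f(u\bw)\bigr] \;=\; \E_{v}\bigl[f(v)\,\chi_\nu(v^{-1}u)\bigr] \;=\; (f \ast \chi_\nu)(u).
\]

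Finally, I would apply the convolution-Fourier identity $\four{f \ast \chi_\nu}(\mu) = \frac{1}{\dim \mu}\four{f}(\mu)\four{\chi_\nu}(\mu)$ from Section~\ref{sec:rep-theory}. Since $\four{\chi_\nu}(\mu) = \la \chi_\nu, \chi_\mu\ra = \delta_{\mu\nu}$ by orthonormality of the irreducible characters, only the $\mu = \nu$ term survives:
\[
    \four{f \ast \chi_\nu}(\mu) \;=\; \delta_{\mu\nu}\cdot \frac{1}{\dim \nu}\cdot \frac{(\dim \nu)\rising{d}{\nu}}{r!} \;=\; \delta_{\mu\nu}\cdot\frac{\rising{d}{\nu}}{r!}.
\]
Fourier inversion then yields $f \ast \chi_\nu = \tfrac{\rising{d}{\nu}}{r!}\,\chi_\nu$, and evaluating at $u$ gives exactly $\chi_\nu(u)\rising{d}{\nu}/r!$, as required.

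There is no real obstacle here; the only place to be careful is the bookkeeping in the change-of-variable that identifies the expectation with $(f \ast \chi_\nu)(u)$, where one must use both that cycle type (hence $f$) is invariant under inversion and that $\chi_\nu$ is real-valued. Everything else is a direct application of already-established formulas (Theorem~\ref{thm:power-schur-relation}, Proposition~\ref{prop:s111}, and the convolution identity for Fourier coefficients).
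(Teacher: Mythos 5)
Your proposal is correct and follows essentially the same route as the paper's own proof: both identify $d^{\ell(\cdot)}$ as the class function $w \mapsto p_w(1,\dots,1)$ with Fourier coefficients $s_\mu(1,\dots,1) = (\dim\mu)\rising{d}{\mu}/r!$, rewrite the expectation as the convolution $(\conv{e}{\chi_\nu})(u)$ using $\chi_\nu(\bw)=\chi_\nu(\bw^{-1})$, and finish with the convolution--Fourier identity and orthonormality of characters. The only difference is cosmetic bookkeeping in the change of variables, which you handle correctly.
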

\begin{proof}
    Define the class function $e$ on $\symm{r}$ by
    \[
        e(v) = p_{v}(\underbrace{1, \dots, 1}_{d\ \text{entries}}) = d^{\ell(v)}.
    \]
    Since $\chi_\nu(\bw)  = \chi_\nu(\bw^{-1})$ because $\chi_\nu$ is a class function, the quantity on the left in the proposition's statement is
    \begin{multline*}
          \E_{\bw \sim \symm{r}}[\chi_\nu(\bw^{-1}) \cdot d^{\ell(u \bw)}]
        = \E_{\bv \sim \symm{r}}[\chi_\nu(\bv^{-1} u) \cdot d^{\ell(\bv)}]
        = (\conv{e}{\chi_\nu})(u) = \sum_{\mu \vdash r} \four{\conv{e}{\chi_\nu}}(\mu) \chi_\mu(u) \\
        = \sum_{\mu \vdash r} \tfrac{1}{\dim \mu} \four{e}(\mu) \four{\chi_\nu}(\mu) \chi_\mu(u) = \tfrac{1}{\dim \nu} \four{e}(\nu)  \chi_\nu(u) = \tfrac{1}{\dim \nu}  s_\nu(1, \dots, 1) \chi_\nu(u) = \frac{{\chi_\nu}(u)\rising{d}{\nu}}{r!},
    \end{multline*}
    the last equality being Proposition~\ref{prop:s111}.
\end{proof}

We can now complete the proof of Theorem~\ref{thm:paninski-internal2} (and therefore also Theorem~\ref{thm:paninski-var}):
\begin{proof}[Proof of Theorem~\ref{thm:paninski-internal2}]
    We will use Lemma~\ref{lem:expected-shifted-schur-prod} in the case of $\SWunif{n}{d}$, i.e., $q = (\tfrac1d, \dots, \tfrac1d)$; in this case, for $\tau \vdash t$ we have $p_\tau(q) = d^{\ell(\tau)-t}$.  We thereby obtain
    \begin{multline}
        \sum_{\substack{|\nu| = r_2 \\ \ell(\nu) \leq d}} \frac{s_\nu(x)}{\rising{d}{\nu}}
             \E_{\blambda \sim \SWunif{n}{d}}\left[s^*_\mu(\blambda)s^*_{\nu}(\blambda)\right] \\
        = \sum_{t = r_2}^{r_1+r_2} C^{t}_{r_1r_2} \cdot \frac{\falling{n}{t}}{d^t} \cdot
          \E_{\bw_1 \sim \symmset{R_1}}
          \Brak{
                  \chi_{\mu}(\bw_1) \cdot
                  \sum_{\substack{|\nu| = r_2 \\ \ell(\nu) \leq d}} \frac{s_\nu(x)}{\rising{d}{\nu}}
                    \cdot \E_{\bw_2 \sim \symmset{R_2}}
              \left[
                  \chi_{\nu}(\bw_2) d^{\ell(\overline{\bw}_1 \overline{\bw}_2)}
              \right]
          }.            \label{eqn:paninski-come-home}
    \end{multline}
    (Here we are using the convention $\ell(\ol{\bw}_1 \ol{\bw}_2) = \ell(\cyctype{\ol{\bw}_1 \ol{\bw}_2})$.) We now would like to analyze the number of cycles of $\ol{\bw}_1 \ol{\bw}_2$ within $\symm{t}$.  In~$\ol{\bw}_1$'s cycle decomposition, there are some cycles that act \emph{only} on elements of~$R_1 \setminus R_2$.  Let's write $\ell^\setminus(\bw_1)$ for the number of such cycles, and let's define $\ol{\bw}_1^\cap \in \symm{t}$ to be $\ol{\bw}_1$ with those cycles deleted.  Thus
    \[
        \ell(\overline{\bw}_1 \overline{\bw}_2) = \ell^\setminus(\bw_1) + \ell(\ol{\bw}_1^\cap \cdot \overline{\bw}_2).
    \]
    Next, let $\bw_1^\bot$ denote the permutation obtained by deleting every element of $R_1 \setminus R_2$ from the cycle decomposition of~$\ol{\bw}_1^\cap$.  Though $\bw_1^\bot$ acts only on $R_1 \cap R_2$, we will view it as an element of~$\symmset{R_2}$.  Although we don't have $\bw_1^\bot \cdot \bw_2 = \ol{\bw}_1^\cap \cdot \overline{\bw}_2$, it's not too hard to see that
    \[
        \ell(\ol{\bw}_1^\cap \cdot \overline{\bw}_2) = \ell(\bw_1^\bot \cdot \bw_2).
    \]
    Thus we obtain
    \[
        \eqref{eqn:paninski-come-home} =
        \sum_{t = r_2}^{r_1+r_2} C^{t}_{r_1r_2} \cdot \frac{\falling{n}{t}}{d^t} \cdot
          \E_{\bw_1 \sim \symmset{R_1}}
          \Brak{
                  \chi_{\mu}(\bw_1) d^{\ell^\setminus(\bw_1)} \cdot
                  \sum_{\substack{|\nu| = r_2 \\ \ell(\nu) \leq d}} \frac{s_\nu(x)}{\rising{d}{\nu}}
                    \cdot \E_{\bw_2 \sim \symmset{R_2}}
              \left[
                  \chi_{\nu}(\bw_2) d^{\ell(\bw_1^{\bot} \cdot \bw_2)}
              \right]
          }.
    \]
    Applying Lemma~\ref{lem:beginning-lemma}, we deduce
    \[
        \eqref{eqn:paninski-come-home} =
        \sum_{t = r_2}^{r_1+r_2} C^{t}_{r_1r_2} \cdot \frac{\falling{n}{t}}{d^t} \cdot
          \E_{\bw_1 \sim \symmset{R_1}}
          \Brak{
                  \chi_{\mu}(\bw_1)  d^{\ell^\setminus(\bw_1)} \cdot
                  \frac{1}{r_2!}\sum_{\substack{|\nu| = r_2 \\ \ell(\nu) \leq d}} s_\nu(x)\chi_\nu(\bw_1^\bot)
          }.
    \]
    Notice that we may extend the summation over~$\nu$ to include $\ell(\nu) > d$ as well: since~$x$ has~$d$ coordinates, $s_\nu(x) = 0$ anyway when $\ell(\nu) > d$ by Proposition~\ref{prop:identically-zero}.  Having done this, we replace $s_\nu(x)$ with $\E_{\bv \sim \symm{r_2}}[\chi_\nu(\bv)p_{\bv}(x)]$, obtaining
    \begin{align*}
        \eqref{eqn:paninski-come-home} &=
        \sum_{t = r_2}^{r_1+r_2} C^{t}_{r_1r_2} \cdot \frac{\falling{n}{t}}{d^t} \cdot
          \E_{\bw_1 \sim \symmset{R_1}}
          \Brak{
                  \chi_{\mu}(\bw_1) d^{\ell^\setminus(\bw_1)} \cdot
                  \frac{1}{r_2!}\sum_{|\nu| = r_2}
                  \E_{\bv \sim \symm{r_2}}[\chi_\nu(\bv)\cdot p_{\bv}(x)]\chi_\nu(\bw_1^\bot)
          } \\
        &= \sum_{t = r_2}^{r_1+r_2} \frac{C^{t}_{r_1r_2}}{r_2!} \cdot \frac{\falling{n}{t}}{d^t} \cdot
          \E_{\bw_1 \sim \symmset{R_1}}
          \Brak{
                  \chi_{\mu}(\bw_1) d^{\ell^\setminus(\bw_1)} \cdot
                  \E_{\bv \sim \symm{r_2}} \Brak{p_{\bv}(x) \cdot
                        \sum_{|\nu| = r_2}\chi_{\nu}(\bv)\chi_{\nu}(\bw_1^\bot)}
          }.
    \end{align*}
    We claim that the inner expectation is~$0$ in most cases.  First, $p_{\bv}(x)$ vanishes whenever $\bv$ has a fixed point, since $p_1(x) = x_1 + \cdots + x_d = 0$ by assumption.  Next, suppose that~$\bv$ has no fixed points. By the orthogonality relations of representation theory, the innermost sum vanishes unless~$\bv$ and $\bw_1^\bot$ are conjugate. Since $\bw_1^\bot \in \symmset{R_2}$ acts only on~$\R_1 \cap R_2$, it \emph{must} have a fixed point (and therefore not be conjugate to~$\bv$) unless $R_2 \setminus R_1 = \emptyset$.  Since $r_2 \geq r_1$, this can only happen if $|\mu| = r_1 = r_2 = t$.  We conclude that the inner expectation can only be nonzero in case $|\mu| = r_1 = r_2 = t$. In this case we have $C^t_{r_1r_2} = r_2!$ and $\ell^\setminus(\bw_1) = 0$, whence
    \[
        \eqref{eqn:paninski-come-home} = 1_{\{r_2 = r_1\}} \cdot \frac{\falling{n}{r_1}}{d^{r_1}} \cdot
          \E_{\bw_1 \sim \symm{r_1}}
          \Brak{
                  \chi_{\mu}(\bw_1) \cdot
                  \E_{\bv \sim \symm{r_1}} \Brak{ p_{\bv}(x) \cdot
                        \sum_{|\nu| = r_1}\chi_{\nu}(\bv)\chi_{\nu}(\bw_1^\bot)}
          }.
    \]
    Once again, the summation is~$0$ if $\bv$ and $\bw_1$ are not conjugate; otherwise it equals~$z_{\cyctype{\bw_1}}$.  Further, having chosen~$\bw_1$, the probability that $\bv$ is conjugate to~$\bw_1$ is precisely~$z_{\cyctype{\bw_1}}^{-1}$.  Thus these factors cancel and we obtain
    \[
        \eqref{eqn:paninski-come-home} = 1_{\{r_2 = r_1\}} \cdot \frac{\falling{n}{r_1}}{d^{r_1}} \cdot
          \E_{\bw_1 \sim \symm{r_1}}
                      [\chi_{\mu}(\bw_1) \cdot p_{\bw_1}(x)]
          = 1_{\{r_2 = r_1\}} \cdot \frac{\falling{n}{r_1}}{d^{r_1}} \cdot s_{\mu}(x),
    \]
    completing the proof.
\end{proof}

\ignore{
\newcommand{\normch}[1]{\widehat{\chi}_{#1}}
We use the following fact about convolution of characters is sometimes called the ``generalized\rnote{Is is the same as ``great'' or ``grand''?} orthogonality relation''.\cite{theorem 2.13 in Character Theory of Finite Groups by I. Martin Isaacs, see alsomath.stackexchange.com/questions/152300/convolution-of-irreducible-characters-of-a-finite-group}\rnote{Actually, if you look in Isaacs you get $\chi_\mu(\bw u)\chi_nu(\bw^{-1})$ inside the expectation.  It's the same for the following reason.  First, we may replace $\bw$ with $\bw^{-1}$, getting $\chi_\mu(\bw^{-1} u)\chi_\nu(\bw)$ (that's much more natural anyway). Finally, we can replace $\bw$ with $u\bw u^{-1}$; that's still uniformly random.  Then we get $\chi_\mu(u\bw^{-1})\chi_{\nu}(u \bw u^{-1}) = \chi_\mu(u\bw^{-1})\chi_nu(\bw)$, where the equality uses that characters are class functions (invariant to conjugation of inputs).}
\begin{proposition}                                     \label{prop:convolution-formula}
    Let $G$ be a finite group with irreducible characters $\widehat{G}$.\rnote{is that good notation/phraseology?} Then for $\mu, \nu \in \widehat{G}$ and $u \in G$ we have\rnote{Note that LHS is literally $\conv{\chi_\mu}{\chi_\nu}(u)$.  That makes the following ``obvious'' to me.  To understand $\chi_\mu \ast \chi_\nu$, look at its Fourier transform. Fourier coeffs are pointwise product of those of $\chi_\mu$, $\chi_\nu$.  Now it's obvious you get $0$ if $\mu \neq \nu$.  Otherwise you just get $\chi_\nu$ (up to various normalizations).}
    \[
        \E_{\bw \sim G}[\chi_\mu(u\bw^{-1}) \chi_\nu(\bw)] = \begin{cases}
                                                                      \normch{\nu}(u) & \text{if $\mu = \nu$,} \\
                                                                      0 & \text{else,}
                                                                  \end{cases}
    \]
    where $\normch{\mu}$ denotes the normalized character\rnote{did we already define it?} defined by $\normch{\mu}(u) = \chi_\mu(u) / \chi_\mu(1)$.
\end{proposition}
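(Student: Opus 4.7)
The plan is to recognize the left-hand side as the convolution $(\conv{\chi_\mu}{\chi_\nu})(u)$ and then invoke the Fourier-analytic framework for class functions on the finite group $G$ that was set up in Section~\ref{sec:rep-theory}. First I would reparameterize: letting $\bv = u\bw^{-1}$, which again ranges uniformly over $G$ as $\bw$ does, and noting $\bv^{-1}u = \bw$ under this substitution, we rewrite
\[
    \E_{\bw \sim G}[\chi_\mu(u\bw^{-1})\chi_\nu(\bw)]
    = \E_{\bv \sim G}[\chi_\mu(\bv)\chi_\nu(\bv^{-1}u)]
    = (\conv{\chi_\mu}{\chi_\nu})(u),
\]
so that the claim reduces to a statement about the convolution of two characters in the paper's (nonstandard, $\tfrac{1}{|G|}$-weighted) normalization.

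Next I would compute the Fourier transform of this convolution. By the identity $\four{\conv{f}{g}}(\tau) = \tfrac{1}{\dim \tau} \four{f}(\tau)\four{g}(\tau)$ recorded for class functions, combined with orthonormality of the irreducible characters (i.e., $\four{\chi_\rho}(\tau) = \langle \chi_\rho, \chi_\tau \rangle = 1_{\{\rho=\tau\}}$), I would obtain
\[
    \four{\conv{\chi_\mu}{\chi_\nu}}(\tau) = \frac{1}{\dim \tau}\cdot 1_{\{\mu=\tau\}}\cdot 1_{\{\nu=\tau\}} = 1_{\{\mu=\nu=\tau\}}\cdot \frac{1}{\dim \mu}.
\]

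Finally, Fourier inversion gives
\[
    (\conv{\chi_\mu}{\chi_\nu})(u)
    = \sum_{\tau \in \widehat{G}} \four{\conv{\chi_\mu}{\chi_\nu}}(\tau)\, \chi_\tau(u)
    = 1_{\{\mu=\nu\}} \cdot \frac{\chi_\mu(u)}{\dim \mu},
\]
and since $\chi_\mu(1) = \dim \mu$ this is exactly $1_{\{\mu=\nu\}} \cdot \chi_\mu(u)/\chi_\mu(1)$, as claimed. There is essentially no real obstacle here: once the left-hand side is identified as a convolution, the result follows immediately from character orthogonality in a couple of lines. The only care needed is tracking the paper's nonstandard convolution normalization when performing the substitution $\bv = u\bw^{-1}$, and correctly applying the convolution--product formula with the compensating factor of $1/\dim \tau$.
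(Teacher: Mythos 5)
Your proof is correct and follows essentially the same route the paper intends: identify the left-hand side as the convolution $\chi_\mu \ast \chi_\nu$ evaluated at $u$, compute its Fourier coefficients via the convolution--product formula and character orthonormality, and apply Fourier inversion, yielding $1_{\{\mu=\nu\}}\chi_\mu(u)/\chi_\mu(1)$. This matches the paper's own sketch (and the argument used for its Lemma on $\E_{\bw}[\chi_\nu(\bw)d^{\ell(u\bw)}]$), so there is nothing to add beyond noting that the convolution--product identity you invoke is exactly the background fact the paper records in its representation-theory preliminaries.
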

We derive the following consequence:\rnote{same as John's ``beginning lemma''}:
}

\subsection{A formula for $s_\mu(+1, -1, +1, -1, \dots)$}               \label{sec:maya}
For this formula we will need to recall the notion of the $2$-quotient of a partition.
\ignore{
Several equivalent definitions are known; we give a slightly nonstandard one, see~\cite[Theorem 2.7.37]{JK81}:
\begin{definition}
    Let $\mu \vdash k$.  Let $\evenhooks{\mu}$ denote all boxes $\square \in [\mu]$ with even hook length.  Observe that the hook length of~$\square$ is equal to the number of boxes in the rim-path from $\hand(\square)$ to $\foot(\square)$.\rnote{does the reader know what I'm talking about at this point?} Since the content alternates in parity along this path, we see that $\square \in \evenhooks{\mu}$ if and only if $\hand(\square)$ and $\foot(\square)$ have contents of different parity.  We therefore have the partition $\evenhooks{\mu} = \evenhandhooks{0}{\mu} \sqcup \evenhandhooks{1}{\mu}$, where
    \[
        \evenhandhooks{b}{\mu} = \{\square \in [\mu] : c(\text{hand}(\square)) \equiv b,\ c(\text{foot}(\square)) \equiv b+1\ \text{(mod~$2$)}\}.
    \]
    It is not hard to see that the boxes of $\evenhandhooks{0}{\mu}$ form a valid Young diagram after being left- and top-justified, as do the boxes of $\evenhandhooks{1}{\mu}$. The associated pair of partitions $(\mu^{(0)}, \mu^{(1)})$ is called the \emph{$2$-quotient} of~$\mu$.
\end{definition}
\begin{definition}
    Let $\mu \vdash k$.  We say that $\mu$ is \emph{balanced} if exactly half of its hook lengths are even; i.e., $|\evenhooks{\mu}| = |\mu^{(0)}| + |\mu^{(1)}| = k/2$.  In particular, $k$ must be even.  In the literature, the condition is also referred to as the ``$2$-core of $\mu$ being empty''.  An equivalent condition~\cite[Section~2.1]{Rio12} is that $[\mu]$ can tiled by dominoes.  Another equivalent condition (following from~\cite[Theorem~2.7.41]{JK81}) is that exactly half of $[\mu]$'s boxes have even content.
\end{definition}
}%
This definition essentially encodes the ways in which a partition can be tiled by dominoes.
\begin{definition}                                      \label{def:2-hook}
    Given a partition~$\mu$, a \emph{$2$-hook} in $[\mu]$ is a hook of length~$2$; i.e., a domino whose removal from~$[\mu]$ results in a valid Young diagram.
\end{definition}
\begin{definition}  \label{def:balanced}
    A partition $\mu$ is said to be \emph{balanced} (or to have an \emph{empty $2$-core}) if $[\mu]$ can be reduced to the empty diagram by successive removal of $2$-hooks.
\end{definition}
\begin{definition}
    Given a partition~$\mu$ we write $\evencontents{\mu}$ (respectively, $\oddcontents{\mu}$) for the set of boxes $\square \in [\mu]$ with even (respectively, odd) content~$c(\square)$.
\end{definition}
\begin{remark}
    It's obvious from Definition~\ref{def:balanced} that if $\mu \vdash k$ is balanced then $|\evencontents{\mu}| = k/2$.  In fact, the converse also holds (this follows from, e.g.,~\cite[Theorem~2.7.41]{JK81}).
\end{remark}
\myfig{.375}{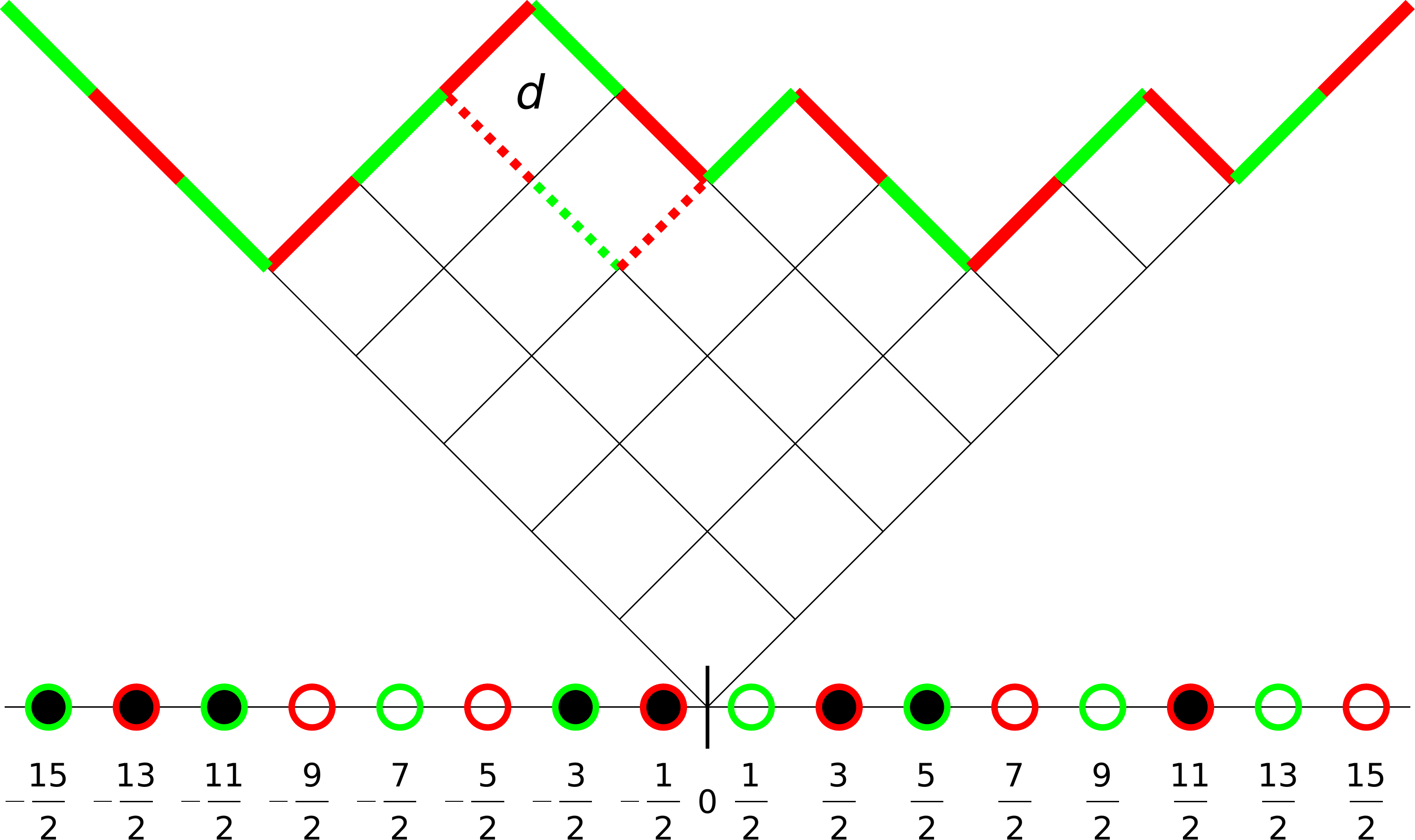}{The Russian and Maya diagrams for $\mu = (6, 4, 4, 3, 3) \vdash 20$.  The segments and pebbles corresponding to the $2$-quotient pair are colored green and red.  The dashed lines outline a $2$-hook that could be removed; $d$~is the square in this $2$-hook with even content (namely, $-2$).}{fig:partition-diagram-example}
\myfig{.375}{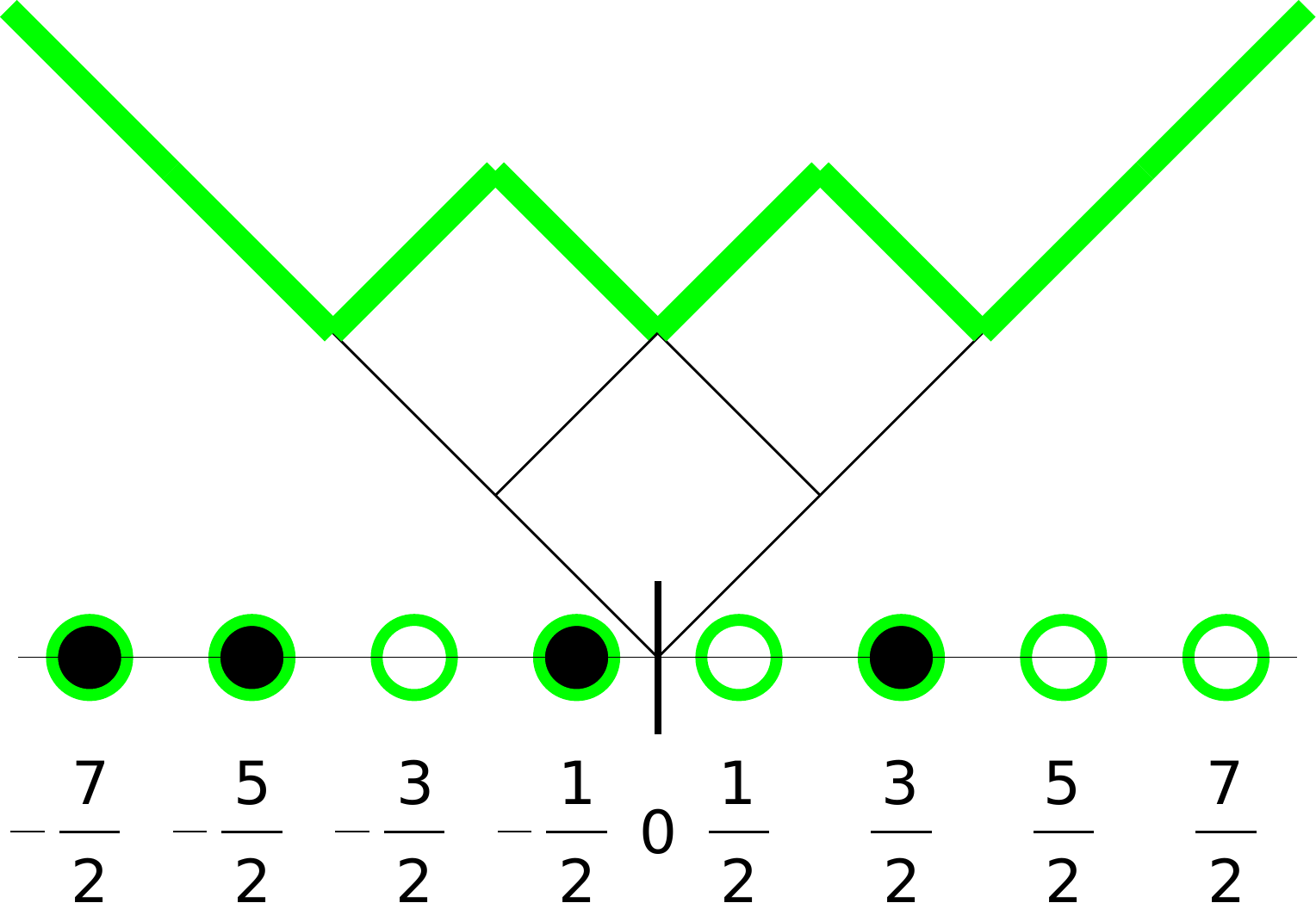}{The diagram for $2$-quotient partition $\mu^{(0)} = (2, 1) \vdash 3$.}{fig:partition-diagram-example-quotient0}
\myfig{.375}{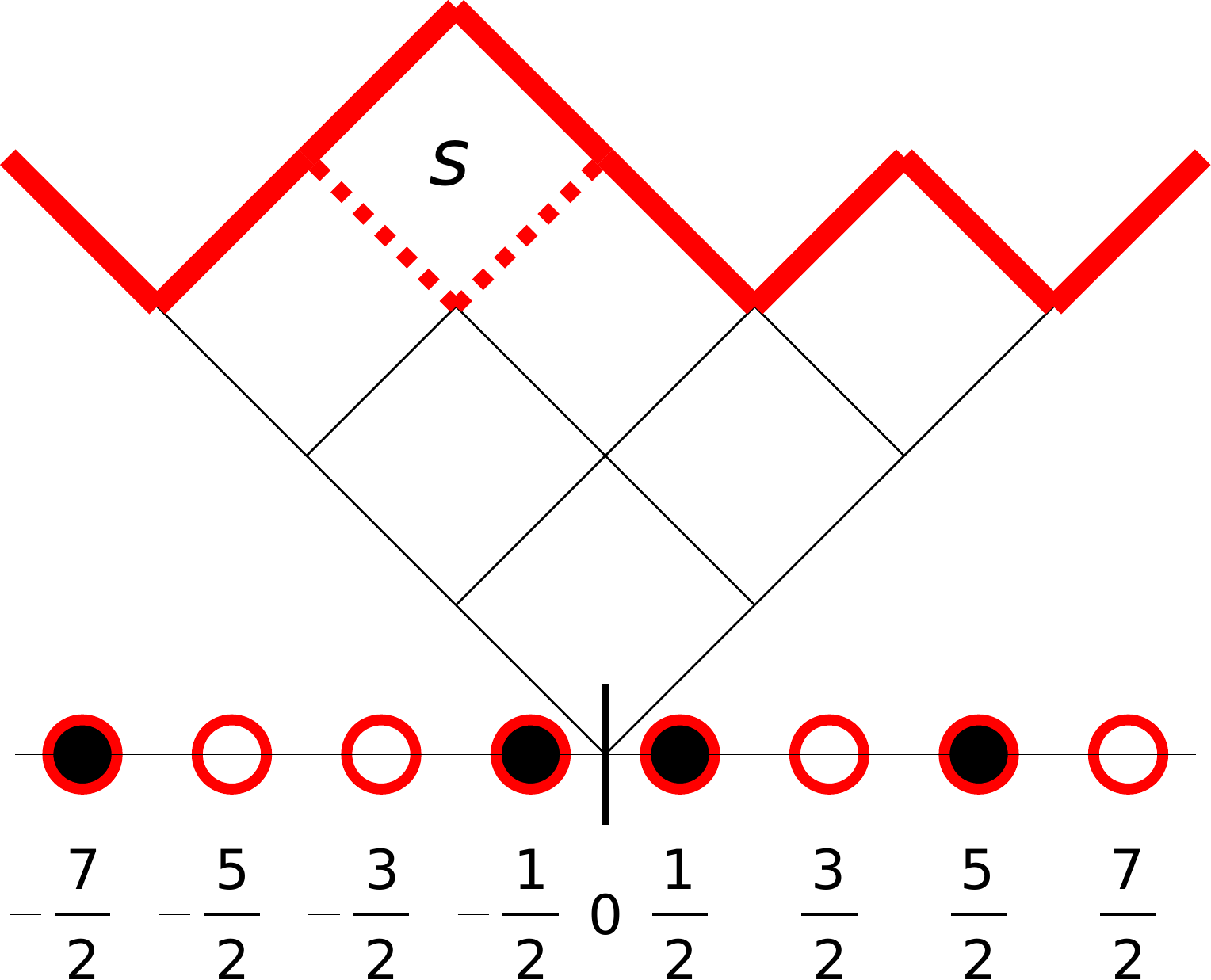}{The diagram for $2$-quotient partition $\mu^{(1)} = (3,2,2) \vdash 7$. The $1$-hook square~$s$ (with content~$-1$) is associated to the $2$-hook in Figure~\ref{fig:partition-diagram-example} that contains square~$d$.}{fig:partition-diagram-example-quotient1}

\begin{definition}
    Let $\mu$ be a partition.  From the Maya diagram for~$[\mu]$, form two new Maya diagrams by taking the two alternating sequences of pebbles.  More precisely, for $b \in \{0,1\}$, let $\mu^{(b)}$ denote the partition whose Maya diagram is formed by the pebbles at positions $2z + (-1)^b \frac12$, $z \in \Z$.  (See Figure~\ref{fig:partition-diagram-example}, in which $b = 0$ is associated to green and $b = 1$ is associated to red.) The pair $(\mu^{(0)}, \mu^{(1)})$ is called the \emph{$2$-quotient} of~$\mu$.  (See Figures~\ref{fig:partition-diagram-example-quotient0},~\ref{fig:partition-diagram-example-quotient1} respectively.)
\end{definition}
\begin{remark}                                      \label{rem:centered-maya}
    Note that when the Maya diagrams for $\mu^{(0)}$, $\mu^{(1)}$ are formed, each of the two origin mark positions may need to be adjusted from the former origin mark position coming from $\mu$'s origin mark.  It is a fact (see, e.g.,~\cite[Section~2.1]{Rio12}) that $\mu$ is balanced if and only if \emph{neither} origin mark position must be adjusted.
\end{remark}
\begin{fact}                                        \label{fact:hook-pebbles}
    A $2$-hook in $[\mu]$ naturally corresponds to a sequence of three pebbles in $[\mu]$'s Maya diagram of the form {(white, \textasteriskcentered, black)}. (See the dashed domino containing the label~$d$ in Figure~\ref{fig:partition-diagram-example}.) In turn, this corresponds to a ``$1$-hook'' in one of $\mu^{(0)}, \mu^{(1)}$; i.e., a square on the rim whose removal leaves a valid Young diagram (see the square labeled~$s$ in Figure~\ref{fig:partition-diagram-example-quotient1}).  Removal of the $2$-hook from~$[\mu]$ corresponds to replacing the sequence {(white, \textasteriskcentered, black)} by {(black, \textasteriskcentered, white)}.  (One thinks of the ``filled'' black pebble as jumping two positions to the left, onto the ``empty'' white pebble.)  In turn, this corresponds to removing the associated $1$-hook from either $\mu^{(0)}$ or $\mu^{(1)}$.
\end{fact}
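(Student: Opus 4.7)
The plan is to track the effect of a $2$-hook removal on the Maya diagram of $\mu$ and then project onto the two alternating sub-sequences of pebbles that define the $2$-quotient.

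First, I would establish (by direct inspection of the Russian picture, as essentially foreshadowed in Section~\ref{sec:partitions}) the Maya-diagram description of single-box additions and removals. The boundary of $[\mu]$ in Russian notation is a lattice path of alternating $\pm 1$ slopes, and the pebble at a half-integer position $p$ is black or white according as the segment of the boundary above $p$ has slope $-1$ or $+1$. An inner corner of $[\mu]$ (a box whose removal leaves a valid partition) is a peak of the boundary, and this corresponds exactly to an adjacent pair (white, black) at positions $(p-1,p)$; removing that box swaps the pair to (black, white). This is the ``$1$-hook'' step that the statement alludes to.

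Next, I would analyze a rim $2$-hook removal at the Maya-diagram level. A horizontal $2$-hook at the end of row $i$ lowers $\lambda_i$ by $2$, so the black pebble at position $a^*_i = \lambda_i - i + \frac12$ moves to $a^*_i - 2$; a vertical $2$-hook at the top of column $j$ dually moves the white pebble at $-b^*_j$ two positions to the right. In both cases the net change to the full Maya diagram is the same: a black pebble and a white pebble whose positions differ by exactly $2$ swap their colors, while the pebble at the intermediate half-integer is untouched. For the result to again be the Maya diagram of a partition, the left of the two swapped positions must have been white originally (otherwise two black pebbles would collide). This gives the promised bijection between $2$-hooks in $[\mu]$ and triples (white,~\textasteriskcentered, black) at three consecutive half-integer positions, and identifies removal of the hook with the swap to (black,~\textasteriskcentered, white).

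Finally, I would restrict attention to the two alternating sub-Maya-diagrams. The two swapped positions differ by $2$ and hence have the same parity, so they both lie in exactly one of the sub-sequences $\{2z+\frac12 : z \in \Z\}$ or $\{2z-\frac12 : z \in \Z\}$ used in the definition of $\mu^{(0)}$ and $\mu^{(1)}$, while the intermediate pebble lies in the other sub-sequence and plays no role. Under the bijection that collapses every-other half-integer onto consecutive half-integers, the two swapped pebbles become \emph{adjacent} in the Maya diagram of $\mu^{(b)}$, and by the first step their swap is exactly the removal of a $1$-hook (an inner corner) from $[\mu^{(b)}]$. This is the claimed correspondence.

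The main point of care, rather than a genuine obstacle, is the bookkeeping around the two origin marks: a $2$-hook swap exchanges one black and one white pebble of the same parity, so the number of black pebbles to the right of the origin in each sub-sequence is preserved, meaning neither sub-Maya diagram's origin shifts and the resulting $1$-hook removal in $[\mu^{(b)}]$ is unambiguous. The only other bit of care is giving a uniform treatment of horizontal versus vertical $2$-hooks, which is handled by the observation in the second step that both cases produce the same local move.
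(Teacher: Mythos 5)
Your proof is correct, and since the paper states this Fact without proof (relying on the figures), your argument---single-box removal of a $1$-hook corresponds to an adjacent (white, black) $\to$ (black, white) swap in the Maya diagram, a $2$-hook removal (horizontal or vertical alike) is the same swap between two pebbles at distance~$2$ with the intermediate pebble untouched, and projecting onto the alternating subsequence containing the two swapped pebbles turns this into an adjacent swap, i.e.\ a $1$-hook removal in $\mu^{(0)}$ or $\mu^{(1)}$---is exactly the standard bead-move justification the paper has in mind. One tiny imprecision: in your origin-mark remark, if the sub-diagram's origin lies strictly between the two swapped pebbles then the number of black pebbles to its right \emph{does} drop by one, but so does the number of white pebbles to its left, so the balance condition (and hence the origin) is still preserved; in any case the Fact concerns only the pebble pattern, so that bookkeeping is not actually needed here.
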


We will require the following lemma.  It is likely to be known; however we were unable to find its statement in the literature.  The analogous lemma for hook lengths is well known (see, e.g.,~\cite[Lemma~2.1.ii]{Rio12}).
\begin{lemma}                                       \label{lem:quotient-content}
    Let $\mu \vdash k$ be a balanced partition with $2$-quotient $(\mu^{(0)}, \mu^{(1)})$.  Then the multiset $\{c(\square) : \square \in [\mu^{(0)}], \square \in [\mu^{(1)}]\}$ is equal to the multiset $\{\tfrac12 c(\square) : \square \in \evencontents{\mu}\}$.
\end{lemma}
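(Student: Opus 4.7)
The plan is to prove the lemma by induction on $|\mu|$, using the reduction by $2$-hooks guaranteed by Definition~\ref{def:balanced}.  The base case $\mu = \emptyset$ is trivial.  For the inductive step, assume $\mu$ is balanced and nonempty, and remove a single $2$-hook $H$ to obtain a smaller balanced partition $\mu'$.  By Fact~\ref{fact:hook-pebbles}, removing $H$ is implemented on $\mu$'s Maya diagram as a swap $(\text{white},\ast,\text{black}) \to (\text{black},\ast,\text{white})$ at three consecutive pebble positions, and corresponds to the removal of a single box $\square^*$ from exactly one of the $2$-quotient partitions $\mu^{(b)}$, $b \in \{0,1\}$.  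Thus $(\mu')^{(b)} = \mu^{(b)} \setminus \{\square^*\}$ and $(\mu')^{(1-b)} = \mu^{(1-b)}$, while on the other side $\evencontents{\mu'} = \evencontents{\mu} \setminus \{\square_{\mathrm{ev}}\}$, where $\square_{\mathrm{ev}}$ is the unique cell of $H$ of even content.  Assuming the lemma for $\mu'$, the lemma for $\mu$ reduces to the single content identity
\[
c(\square_{\mathrm{ev}}) \;=\; 2\, c(\square^*).
\]

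To establish this identity, I will unpack the Maya-diagram descriptions directly.  Suppose $H$'s swap takes place at Maya positions $q, q+1, q+2$ with $q \in \Z + \tfrac12$.  From the standard correspondence between Maya-diagram pebble swaps and addable/removable corners in Russian notation, one checks that the two cells of $H$ have contents $q+\tfrac12$ and $q+\tfrac32$, two consecutive integers of opposite parity.  On the other hand, the natural affine bijection between the parity-$b$ sub-Maya-diagram of $\mu$ (the pebbles at positions $\{2z + (-1)^b\tfrac12 : z \in \Z\}$) and the full Maya diagram of $\mu^{(b)}$ is given by $q \mapsto q/2 + (-1)^b/4$; this bijection preserves the origin mark by Remark~\ref{rem:centered-maya}, because $\mu$ is balanced.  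Under this map the outer pebble positions $q$ and $q+2$ go to two adjacent half-integer positions $r - \tfrac12, r+\tfrac12$ in $\mu^{(b)}$'s Maya, so $c(\square^*) = r$; a short case check in each of $b = 0,1$ shows that $2r$ is exactly the even integer among $\{q+\tfrac12,\, q+\tfrac32\}$, i.e., $c(\square_{\mathrm{ev}}) = 2r$, as desired.

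The work in this proof is essentially bookkeeping.  The main obstacle is just unpacking the parity-$b$ Maya bijection consistently with the origin convention, and verifying the content identity $c(\square_{\mathrm{ev}}) = 2c(\square^*)$ in each of the two sign cases $b \in \{0,1\}$; the hypothesis that $\mu$ is balanced is precisely what ensures that no origin-shift correction is needed and that the two cases work out symmetrically.
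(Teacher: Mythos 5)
Your proposal is correct and follows essentially the same route as the paper's proof: induction on the removal of $2$-hooks, using Fact~\ref{fact:hook-pebbles} and Remark~\ref{rem:centered-maya} to reduce to the single identity that the even-content cell of the removed domino has twice the content of the corresponding removed square in the $2$-quotient. The only difference is presentational—you verify that content identity by an explicit affine bijection of Maya positions (and your case check for $b\in\{0,1\}$ is correct), whereas the paper simply reads it off the combined Russian/Maya picture.
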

\begin{proof}
    The statement is proved by induction on the deconstruction of~$\mu$ from $2$-hooks, with the base case being $\mu = \emptyset$.  We rely on the fact that since~$\mu$ is balanced, the Maya diagrams of~$\mu^{(0)}$ and~$\mu^{(1)}$ can be seen alternating within the Maya diagram for~$\mu$, with all three origin markers ``lining up'' (see Remark~\ref{rem:centered-maya}). By way of induction, suppose we consider the removal of some $2$-hook $D$ from~$[\mu]$.  This corresponds (see Fact~\ref{fact:hook-pebbles}) to removing a $1$-hook (square)~$s$ from $\mu^{(b)}$, for some $b \in \{0,1\}$. Exactly one of $D$'s two squares is in~$\evencontents{\mu}$;  call that square~$d$.  (See Figures~\ref{fig:partition-diagram-example},~\ref{fig:partition-diagram-example-quotient1} for illustration.) By induction, it suffices to show that $\frac12 c(d) = c(s)$.  But this is easily seen from the combination of the Russian and Maya diagrams, as the content of a square is simply the horizontal displacement of its center.
\end{proof}

We are now ready to establish a formula for $s_\mu(+1, -1, +1, -1, \dots)$.

\begin{theorem}\label{thm:plus-minus-one}
    Let $\mu \vdash k$ and let $d$ be even.  Then
    \begin{equation*}
        s_\mu(\underbrace{+1, -1, +1, -1,\dots}_{d \textnormal{ entries}}) =
            \begin{cases}
                0\vphantom{{\displaystyle \underbrace{\frac{1}{k}}}_{2}} & \textnormal{if $\mu$ is not balanced,} \\
                \chi_\mu(\underbrace{2,2,\dots,2}_{k/2 \textnormal{ entries}})
                        \cdot {\displaystyle \frac{1}{\df{k}}}
                        \cdot (\rising{d}{\evencontents{\mu}})
                    & \textnormal{if $\mu$ is balanced.}
            \end{cases}
    \end{equation*}
\end{theorem}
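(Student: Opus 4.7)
The plan is to use Theorem~\ref{thm:power-schur-relation} to expand $s_\mu(x) = \E_{\bpi \sim \symm{k}}[\chi_\mu(\bpi)\,p_{\bpi}(x)]$ at $x = (+1,-1,\ldots,+1,-1)$. A direct computation gives $p_j(x) = d\cdot[j\text{ even}]$, so $p_\pi(x) = 0$ unless every cycle of $\pi$ has even length, in which case $p_\pi(x) = d^{\ell(\pi)}$. Grouping permutations by cycle type (which necessarily has the form $2\alpha$ for $\alpha \vdash k/2$) and using $z_{2\alpha} = 2^{\ell(\alpha)}\, z_\alpha$, the expression collapses to
\[
    s_\mu(+1,-1,\ldots,+1,-1) = \sum_{\alpha \vdash k/2} \frac{\chi_\mu(2\alpha)\,(d/2)^{\ell(\alpha)}}{z_\alpha}.
\]

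If $\mu$ is not balanced, the Murnaghan--Nakayama rule writes $\chi_\mu(2\alpha)$ as a signed count of sequences of $2$-hook strip removals reducing $[\mu]$ to $\emptyset$. By Definition~\ref{def:balanced} no such sequence exists, so $\chi_\mu(2\alpha) = 0$ for every $\alpha$ and the sum vanishes, establishing the first case of the theorem.

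If $\mu$ is balanced with $2$-quotient $(\mu^{(0)}, \mu^{(1)})$, iterating Murnaghan--Nakayama while using Fact~\ref{fact:hook-pebbles} to pair each $2$-hook removal in $[\mu]$ with a $1$-hook removal in $\mu^{(0)}$ or $\mu^{(1)}$ yields a classical Littlewood-type identity
\[
    \chi_\mu(2\alpha) = \epsilon_\mu\!\sum_{\beta\,\cup\,\gamma\,=\,\alpha}\!\frac{z_\alpha}{z_\beta\, z_\gamma}\,\chi_{\mu^{(0)}}(\beta)\,\chi_{\mu^{(1)}}(\gamma),
\]
summed over ordered pairs of partitions whose multiset union is $\alpha$, where $\epsilon_\mu \in \{\pm 1\}$ depends only on $\mu$. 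Substituting this into the sum above, the $z_\alpha$'s cancel, and the resulting double sum factors by Theorem~\ref{thm:power-schur-relation} (applied in reverse at $x = (1,\ldots,1)$ with $d/2$ entries) into
\[
    s_\mu(+1,-1,\ldots,+1,-1) = \epsilon_\mu \cdot s_{\mu^{(0)}}(\underbrace{1,\ldots,1}_{d/2\text{ entries}}) \cdot s_{\mu^{(1)}}(\underbrace{1,\ldots,1}_{d/2\text{ entries}}).
\]

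To finish, Proposition~\ref{prop:s111} converts each factor to $\dim(\mu^{(b)})\,\rising{d/2}{\mu^{(b)}}/|\mu^{(b)}|!$, and Lemma~\ref{lem:quotient-content} combines the rising factorials:
\[
    \rising{d/2}{\mu^{(0)}}\,\rising{d/2}{\mu^{(1)}} = \prod_{\square \in \evencontents{\mu}}\bigl(\tfrac{d}{2} + \tfrac{c(\square)}{2}\bigr) = 2^{-k/2}\,\rising{d}{\evencontents{\mu}}.
\]
Specializing the Littlewood identity at $\alpha = (1^{k/2})$ yields $\chi_\mu(2^{k/2}) = \epsilon_\mu \binom{k/2}{|\mu^{(0)}|}\dim(\mu^{(0)})\dim(\mu^{(1)})$, which collapses the remaining combinatorial prefactor into $\chi_\mu(2^{k/2})/\df{k}$ (using $\df{k} = 2^{k/2}(k/2)!$), matching the theorem's formula. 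The main technical obstacle is establishing the Littlewood-type identity above, including the sign~$\epsilon_\mu$; this requires a careful bookkeeping of signs in the iterated Murnaghan--Nakayama argument, but reduces via Fact~\ref{fact:hook-pebbles} to a direct bijective argument at the level of the $2$-quotient. All remaining steps are routine algebra.
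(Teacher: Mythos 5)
Your proposal is correct in outline, and it reaches the theorem by a genuinely different route for the key step. The paper's proof black-boxes two citations: the specialization formula of Reiner--Stanton--White ($t=2$ case), which directly yields the vanishing for unbalanced $\mu$ and the factorization $s_\mu(+1,-1,\dots) = \pm\, s_{\mu^{(0)}}(1,\dots,1)\,s_{\mu^{(1)}}(1,\dots,1)$ in $d/2$ variables, together with the formula $\chi_\mu(2,\dots,2) = \sigma_\mu \binom{k/2}{|\mu^{(0)}|,|\mu^{(1)}|}\dim\mu^{(0)}\dim\mu^{(1)}$; it then finishes exactly as you do, via Proposition~\ref{prop:s111} and Lemma~\ref{lem:quotient-content} to prove $(\rising{\frac{d}{2}}{\mu^{(0)}})(\rising{\frac{d}{2}}{\mu^{(1)}}) = 2^{-k/2}\,\rising{d}{\evencontents{\mu}}$. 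You instead derive everything from first principles: the expansion $s_\mu(x) = \E_{\bpi}[\chi_\mu(\bpi)p_{\bpi}(x)]$ with $p_j(\pm1,\dots)=d\cdot 1_{\{j \text{ even}\}}$ correctly collapses the sum to $\sum_{\alpha \vdash k/2} z_\alpha^{-1}\chi_\mu(2\alpha)(d/2)^{\ell(\alpha)}$, and a single Littlewood-type identity for $\chi_\mu(2\alpha)$ in terms of the $2$-quotient then simultaneously gives the factorization, the unbalanced vanishing, and (at $\alpha = (1^{k/2})$) the Rioux-type dimension formula the paper cites separately. This buys a self-contained argument in which both external inputs are unified into one classical statement; the cost is that you must actually establish that statement, which is precisely where the content lies.

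Two cautions. First, the Littlewood identity with a sign $\epsilon_\mu$ depending only on $\mu$ (not on $\alpha$ or on the order of strip removals) is exactly the nontrivial part; your sketch via iterated Murnaghan--Nakayama and Fact~\ref{fact:hook-pebbles} is plausible but the sign well-definedness needs a genuine argument (or a citation to the classical result on characters at cycle types with all parts even, e.g.\ in James--Kerber). Second, in the unbalanced case your phrasing is slightly off: Murnaghan--Nakayama at cycle type $2\alpha$ removes border strips of lengths $2\alpha_i$, not $2$-hooks; the correct deduction is that removing an even-length border strip preserves the $2$-core, so if $[\mu]$ can be emptied by such removals its $2$-core is empty, i.e.\ $\mu$ is balanced in the sense of Definition~\ref{def:balanced}. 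With those repairs the argument goes through, and the remaining algebra (the $z_\alpha$ cancellation, Proposition~\ref{prop:s111}, Lemma~\ref{lem:quotient-content}, and $\df{k} = 2^{k/2}(k/2)!$) checks out against the theorem's formula.
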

\begin{proof}
    The first part of the proof relies on a formula from~\cite[Theorem 4.3]{RSW04}, specialized to the case of $\text{``}t\text{''} = 2$:
    \begin{multline*}
        s_\mu(\underbrace{+1, -1, +1, -1,\dots}_{d \textnormal{ entries}}) \\
         =  \begin{cases}
                0 & \textnormal{if $\mu$ is not balanced,} \\
                \sgn(\chi_\mu(\underbrace{2,2,\dots,2}_{k/2 \textnormal{ entries}})) \cdot s_{\mu^{(0)}}(\underbrace{1, 1, \dots, 1}_{d/2 \textnormal{ entries}}) \cdot s_{\mu^{(1)}}(\underbrace{1, 1, \dots, 1}_{d/2 \textnormal{ entries}})
                    & \textnormal{if $\mu$ is balanced,}
            \end{cases}
    \end{multline*}
    where $(\mu^{(0)},\mu^{(1)})$ is the $2$-quotient of~$\mu$.    Thus it suffices to show
    \begin{equation}                                \label{eqn:222}
        s_{\mu^{(0)}}(1, 1, \dots, 1) \cdot s_{\mu^{(1)}}(1, 1, \dots, 1) = \frac{|\chi_\mu(2,2,\dots,2)| \cdot (\rising{d}{\evencontents{\mu}})}{(k/2)! \cdot 2^{k/2}}
    \end{equation}
    assuming $\mu$ is balanced.  Applying Proposition~\ref{prop:s111}, the left-hand side of~\eqref{eqn:222} is
    \[
        \frac{(\rising{\frac{d}{2}}{\mu^{(0)}})\cdot (\rising{\frac{d}{2}}{\mu^{(1)}}) \cdot \dim \mu^{(0)} \cdot \dim \mu^{(1)}}{|\mu^{(0)}|! \cdot |\mu^{(1)}|!}.
    \]
    Next, we appeal to~\cite[formula~(2.2)]{Rio12}, which states
    \[
        \chi_\mu(2, 2, \dots, 2) = \sigma_\mu \cdot \binom{|\mu|/2}{|\mu^{(0)}|,|\mu^{(1)}|} \cdot \dim \mu^{(0)} \cdot \dim \mu^{(1)},
    \]
    where $\sigma_\mu \in \{\pm 1\}$ is a certain sign.    Thus to verify~\eqref{eqn:222} it remains to show
    \begin{equation}                                \label{eqn:222222}
        (\rising{\tfrac{d}{2}}{\mu^{(0)}})\cdot (\rising{\tfrac{d}{2}}{\mu^{(1)}}) = \frac{\rising{d}{\evencontents{\mu}}}{2^{k/2}}.
    \end{equation}
    But this follows immediately from Lemma~\ref{lem:quotient-content}.
\end{proof}

\subsection{Wrapping up the lower bound}
In this section we complete the proof of Theorem~\ref{thm:main-pan-lower}.  We begin by applying Corollary~\ref{cor:paninski-chi-squared} with $x = (+2\eps, -2\eps, +2\eps, -2\eps, \dots)$.  Using Theorem~\ref{thm:plus-minus-one} and the homogeneity of Schur polynomials, we obtain the following after a few manipulations:
\begin{theorem}                                     \label{thm:paninski-pre-estimation}
    For $d$ even and $0 \leq \eps \leq \frac12$,
    \begin{equation}                        \label{eqn:pan0}
          \dchi{\SWdens{n}{\pan{\eps}{d}}}{\SWunif{n}{d}}
        = \sum_{k = 2, 4, 6, \dots} \falling{n}{k} (2\eps)^{2k}d^{-k} \cdot \Paren{\frac{1}{\df{k}^2} \sum_{\substack{\mu \vdash k \textnormal{ balanced}\\ 0 < \ell(\mu) \leq d}} \chi_\mu(2, \dots, 2)^2\cdot\frac{\rising{d}{\evencontents{\mu}}}{\rising{d}{\oddcontents{\mu}}}}.
    \end{equation}
\end{theorem}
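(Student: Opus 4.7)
The plan is to obtain this identity as a direct substitution into Corollary~\ref{cor:paninski-chi-squared}, with no new analytic input beyond what the excerpt has already assembled. First I would take $x = (+2\eps, -2\eps, +2\eps, -2\eps, \dots, +2\eps, -2\eps) \in \R^d$. Since $d$ is even the coordinates sum to~$0$, and since $0 \leq \eps \leq \tfrac12$ we have $x_i \geq -1$; moreover the probability distribution $\calQ_x$ from Corollary~\ref{cor:paninski-chi-squared} is exactly $\pan{\eps}{d}$ by construction. The corollary then reads
\[
\dchi{\SWdens{n}{\pan{\eps}{d}}}{\SWunif{n}{d}}
= \sum_{\substack{\mu \in \allpartitions \\ 0 < \ell(\mu) \leq d}}
\frac{s_\mu(x)^2}{\rising{d}{\mu} \cdot d^{|\mu|}} \cdot \falling{n}{|\mu|}.
\]

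Next I would evaluate $s_\mu(x)$. Since each Schur polynomial is homogeneous of degree $|\mu|$, we can pull out the scalar $2\eps$: $s_\mu(x) = (2\eps)^{|\mu|} \cdot s_\mu(+1,-1,\dots,+1,-1)$. Now I apply Theorem~\ref{thm:plus-minus-one}. When $\mu$ is not balanced, the Schur evaluation vanishes and so does the summand, which lets us restrict the outer sum to balanced $\mu$; in particular $|\mu|$ must then be even (every balanced partition can be tiled by dominos, so its size is even). For balanced $\mu \vdash k$,
\[
s_\mu(x)^2 = (2\eps)^{2k} \cdot \chi_\mu(\underbrace{2,\dots,2}_{k/2})^2 \cdot \frac{1}{\df{k}^2} \cdot \bigl(\rising{d}{\evencontents{\mu}}\bigr)^2.
\]

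To finish I use the factorization $\rising{d}{\mu} = \prod_{\square \in [\mu]}(d + c(\square)) = \rising{d}{\evencontents{\mu}} \cdot \rising{d}{\oddcontents{\mu}}$, which follows immediately from the definition of $\evencontents{\mu}, \oddcontents{\mu}$ as the partition of $[\mu]$ by parity of content. One factor of $\rising{d}{\evencontents{\mu}}$ then cancels, producing the ratio $\rising{d}{\evencontents{\mu}}/\rising{d}{\oddcontents{\mu}}$ that appears in the claim. Grouping summands by $k = |\mu|$ and pulling the $k$-dependent prefactors $\falling{n}{k} (2\eps)^{2k} d^{-k}$ outside the inner sum over balanced $\mu \vdash k$ with $\ell(\mu) \leq d$ yields exactly~\eqref{eqn:pan0}.

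There is no real obstacle here; this step is a bookkeeping reorganization that packages Corollary~\ref{cor:paninski-chi-squared} with Theorem~\ref{thm:plus-minus-one}. All of the substantive work (the binomial-formula identity, the vanishing for unbalanced~$\mu$, the $2$-quotient formula for $s_\mu(\pm 1,\dots)$, and the chi-squared identity for $\SWdens{n}{\calQ_x}$) has already been carried out in the preceding sections, so the only care required is checking the hypotheses of the two inputs (parity of~$d$, the range of~$\eps$, and the restriction $\ell(\mu) \leq d$) and tracking the factor of $(2\eps)^{2|\mu|}$ through the homogeneity argument.
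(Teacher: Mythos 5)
Your proposal is correct and is exactly the route the paper takes: it obtains \eqref{eqn:pan0} by substituting $x = (+2\eps,-2\eps,\dots,+2\eps,-2\eps)$ into Corollary~\ref{cor:paninski-chi-squared}, invoking homogeneity of $s_\mu$ and Theorem~\ref{thm:plus-minus-one}, splitting $\rising{d}{\mu} = \rising{d}{\evencontents{\mu}}\cdot\rising{d}{\oddcontents{\mu}}$, and regrouping by $k=|\mu|$. The bookkeeping (vanishing for unbalanced $\mu$, evenness of $k$, and the hypothesis checks on $d$ and $\eps$) is handled correctly, so nothing is missing.
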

To estimate this quantity we will use the following very crude bound:
\begin{proposition}                                     \label{prop:even-over-odd}
    Let $d \in \Z^+$ and let $\mu \vdash k$ be balanced, with $0 < \ell(\mu) \leq d$.  Then
    \begin{equation}            \label{eqn:estim-pan1}
        \frac{\rising{d}{\evencontents{\mu}}}{\rising{d}{\oddcontents{\mu}}} \leq 2^{k/2}.
    \end{equation}
\end{proposition}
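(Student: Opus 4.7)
\medskip

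\noindent\textbf{Proof proposal.} The plan is to exploit the structure given by Definition~\ref{def:balanced}: since $\mu$ is balanced, we may deconstruct $[\mu]$ by successive removal of $2$-hooks, which yields a tiling of $[\mu]$ by $k/2$ dominoes. Each such domino consists of two adjacent cells whose contents differ by~$1$; in particular, one of the two cells lies in $\evencontents{\mu}$ and the other in $\oddcontents{\mu}$. Hence this tiling induces a bijection between $\evencontents{\mu}$ and $\oddcontents{\mu}$, and we can factor the left-hand side of~\eqref{eqn:estim-pan1} as a product of $k/2$ ratios, one per domino.

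Next I would note that for every cell $\square \in [\mu]$ we have $d + c(\square) \geq 1$. Indeed, $c(\square) = j - i$ with $j \geq 1$ and $i \leq \ell(\mu) \leq d$, so $c(\square) \geq 1 - d$, i.e.\ $d + c(\square) \geq 1$. This guarantees that every factor in the rising-factorial products is a positive integer.

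The main (and only) estimate is now a per-domino bound. Fix a domino in the tiling and let its cells have contents $c$ and $c+1$; write $c_e$ for whichever of these is even and $c_o$ for the other. The ratio contributed by this domino to~\eqref{eqn:estim-pan1} is $(d + c_e)/(d + c_o)$. If $c_e = c_o - 1$ then this ratio is $(d + c_o - 1)/(d + c_o) \leq 1 \leq 2$. If $c_e = c_o + 1$ then the ratio is $(d + c_o + 1)/(d + c_o)$, which, using $d + c_o \geq 1$, is at most~$2$. In either case each domino contributes a factor at most~$2$, and multiplying over the $k/2$ dominoes gives the claimed bound $2^{k/2}$. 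I do not anticipate any serious obstacle here; the only subtlety worth flagging explicitly is the positivity $d + c(\square) \geq 1$ that uses the hypothesis $\ell(\mu) \leq d$, without which the ratio $(d+c_o+1)/(d+c_o)$ would be unbounded (or even undefined).
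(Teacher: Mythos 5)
Your proposal is correct and follows essentially the same route as the paper: fix a domino tiling of the balanced $\mu$, pair each even-content cell with its odd-content neighbor, and bound each per-domino ratio $(d+c_e)/(d+c_o)$ by~$2$ using $d + c(\square) \geq 1$. Your explicit justification of the positivity via $\ell(\mu) \leq d$ is a nice touch that the paper leaves implicit.
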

\begin{proof}
    Fix any domino-tiling for~$\mu$.  Each of the $k/2$ dominoes contains one cell of even content~$c_e$ and one cell of odd content~$c_o$, with $|c_e - c_o| = 1$.  Thus each contributes a factor of $\frac{d+c_e}{d+c_o} \leq \frac21 = 2$ to $(\rising{d}{\evencontents{\mu}})/(\rising{d}{\oddcontents{\mu}})$.
\end{proof}
By character orthogonality relations we also have
\begin{equation}                \label{eqn:estim-pan2}
    \sum_{\substack{\mu \vdash k \textnormal{ balanced}\\ 0 < \ell(\mu) \leq d}} \chi_\mu(2, \dots, 2)^2 \leq \sum_{\mu \vdash k} \chi_\mu(2, \dots, 2)^2 = z_{(2, \dots, 2)} = \df{k}.
\end{equation}
Combining~\eqref{eqn:estim-pan1},~\eqref{eqn:estim-pan2}, we get that the parenthesized expression in~\eqref{eqn:pan0} is at most $2^{k/2}/\df{k} = 1/(k/2)!$.  Using also $\falling{n}{k} \leq n^k$, the right-hand side of~\eqref{eqn:pan0} is thus bounded by
\[
    \sum_{k = 2, 4, 6, \dots} n^k (2\eps)^{2k}d^{-k}/(k/2)! = \exp((4n\eps^2/d)^2) - 1,
\]
completing the proof of Theorem~\ref{thm:main-pan-lower}.\\

We end by indicating how to obtain the testing lower bound in the case when $d \geq 3$ is odd.  In this case we define $\pan{\eps}{d}$ to be $(\frac{1+2\eps}{d}, \frac{1-2\eps}{d}, \dots, \frac{1+2\eps}{d}, \frac{1-2\eps}{d}, \frac{1}{d})$.  This distribution has $\dtvsymm{\pan{\eps}{d}}{\unif{d}} = \frac{d-1}{d}\eps \geq \frac23 \eps$; since this differs from~$\eps$ only by a constant factor, the lower bound of $\Omega(d/\eps^2)$ is not affected.  Now Corollary~\ref{cor:paninski-chi-squared} is applied with $x = (+2\eps, -2\eps, \dots, +2\eps, -2\eps, 0)$.  By stability of the shifted Schur polynomials we have $s_\mu(+1, -1, \dots, +1, -1, 0) = s_\mu(+1, -1, \dots, +1, -1)$, where there are $d-1$ entries in the latter. Now we get $\chi_\mu(2,2,\dots,2)\cdot {\frac{1}{\df{k}}} \cdot \rising{(d-1)}{\evencontents{\mu}}$ out of Theorem~\ref{thm:plus-minus-one}, and we can simply upper-bound $(d-1)$ by~$d$ and proceed with the remainder of the proof.

\section{Hardness of distinguishing uniform distributions}\label{sec:uniform}

In this section, we prove Theorem~\ref{thm:unif-dist}, namely that $O(r^2/\Delta)$ copies are sufficient to distinguish between the cases when $\rho$'s spectrum is uniform on either $r$ or $r+\Delta$ eigenvalues ($1 \leq \Delta \leq r$), and that $\wt{\Omega}(r^2/\Delta)$ copies are necessary.  To be more precise, our lower bound on the number of copies~$n$ will be
\begin{equation}    \label{eqn:precise-rank-lower-bound}
    n \geq r^{2 - O(1/\log^{.33} \r)}/\Delta.
\end{equation}

\subsection{The upper bound}

The proof of the upper bound is quite similar to that of Theorem~\ref{thm:mixedness-upper} for the Mixedness Tester.  We employ the following tester:
\begin{nameddef}{Uniform Distribution Distinguisher}
Given $\rho^{\otimes n}$:
\begin{enumerate}
\item Sample $\blambda \sim \SWdens{n}{\rho}$.
\item Accept if $\psharp{2}(\blambda) \leq e \coloneqq n(n-1) \cdot \frac12\left(\frac{1}{r} + \frac{1}{r+\Delta}\right)$.  Reject otherwise.
\end{enumerate}
\end{nameddef}

\noindent As for the analysis, from Equations~\eqref{eq:thanks-maple-avg} and~\eqref{eq:thanks-maple-var}:
\begin{equation*}
\E_{\blambda \sim \SWunif{n}{m}}\left[\psharp{2}(\blambda)\right] = \frac{n(n-1)}{m},
\quad\text{and}\quad
\Var_{\blambda \sim \SWunif{n}{m}}\left[\psharp{2}(\blambda)\right]
\leq 2n(n-1).
\end{equation*}
We see that the variance is the same whether $m = r$ or $m = r+\Delta$; only the expectation is different, and the tester's acceptance cutoff~$e$ is precisely the midway point between the two expectations.  If $m = r$, then Chebyshev's inequality implies
\begin{equation*}
    \Pr_{\blambda \sim \SWunif{n}{m}}\left[\psharp{2}(\blambda) \geq e\right] \leq \frac{8 r^2 (r+\Delta)^2}{n (n-1)\Delta^2} \leq \frac{32 r^4}{(n-1)^2 \Delta^2},
\end{equation*}
and we have the same upper bound by Chebyshev for $\Pr_{\blambda \sim \SWunif{n}{m}}\left[\psharp{2}(\blambda) \leq e\right]$ when $m = r+\Delta$.  This upper bound is at most $1/3$ provided $n \geq 4\sqrt{6} \cdot \frac{r^2}{\Delta} + 1$, completing the proof of the upper bound in Theorem~\ref{thm:unif-dist}.

The end of Section~\ref{sec:one-sided} gives a different $O(r^2)$-copy tester (the ``Rank Tester'') for the $r$-versus-$(r+1)$ case. In this case it's superior to the Uniform Distribution Distinguisher in that it has one-sided error (i.e., it never rejects in the rank-$r$ case).

\subsection{The lower bound}

\renewcommand{\r}{r}
\newcommand{\rp}{r_{\!\scriptscriptstyle{+}\!}}

The bulk our work for the lower bound will be devoted to the case of $\Delta = 1$.  The extension to larger~$\Delta$ is very tedious and will be dealt with in Section~\ref{sec:tv-improvement}. So let $\r \in \Z^+$ be a parameter which we think of as tending to infinity, and for brevity let $\rp = \r+1$. Our task is to show that the distributions $\SWunif{n}{\r}$ and $\SWunif{n}{\rp}$ are very close in total variation distance unless $n \geq \wt{\Omega}(r^2)$. For notational convenience we will write
\[
    n = \frac{\r^2}{\omega^2}
\]
and seek to show that $\SWunif{n}{\r}$ and $\SWunif{n}{\rp}$ are close once~$\omega$ is sufficiently large as a function of~$\r$. Ultimately we will select $\omega = \exp(\Theta(\log^{.67} \r))$. For now, though, let's keep~$\omega$ general, subjecting it only to the following assumption:
\begin{equation}                                            \label{eqn:omega-assumption}
    200 \leq \omega \leq \sqrt{\r}.
\end{equation}

\subsubsection{Initial approximations}
It proves more convenient to study the Kullback--Leibler divergence between $\SWunif{n}{\r}$ and $\SWunif{n}{\rp}$:
\begin{align}
    \dkl{\SWunif{n}{\r}}{\SWunif{n}{\rp}}
        &= \E_{\blambda \sim \SWunif{n}{\r}}
                \left[\ln\left(\frac{\SWunif{n}{\r}[\blambda]}{\SWunif{n}{\rp}[\blambda]}\right)\right] \nonumber\\
        &= \E_{\blambda \sim \SWunif{n}{\r}}
                \left[\ln\left(\frac{\rp^{n}}{\r^{n}}\cdot\frac{\rising{\r}{\blambda}}{\rising{\rp}{\blambda}}\right)\right] \nonumber\\
        &= n \ln\left(\frac{\rp}{\r}\right) + \E_{\blambda \sim \SWunif{n}{\r}}
                \left[\ln\left(\frac{\prod_{\square \in [\blambda]} (\r + c(\square))}{\prod_{\square \in [\blambda]} (\rp + c(\square))}\right)\right],     \label{eqn:dkl}
\end{align}
where the second equality used Proposition~\ref{prop:SW-dist}.  (We remark that the logarithms above are always finite since $\supp(\SWunif{n}{\r}) \subseteq \supp(\SWunif{n}{\rp})$.)

Recalling that $\rp = r+1$, it is very easy to verify (cf.~\cite[Exercise I.1.11]{Mac95}, \cite[Section~2.5]{CGS04}) that the large fraction inside the inner logarithm of~\eqref{eqn:dkl} is equal to
\[
    \prod_{i=1}^{\ell(\blambda)} \frac{\r - (i-1)}{\r - (i-1-\blambda_i)} = \Phi(-(\r+\tfrac12); \blambda),
\]
where $\Phi$ denotes a generating function for the modified Frobenius coordinates, defined in~\cite{IO02} and similar to the ``Frobenius function'' from~\cite{Las08,CSST10}.  Proposition~1.2 in~\cite{IO02} observes that
\[
    \Phi(z;\lambda) = \prod_{i}\frac{z+b^*_i}{z-a^*_i},
\]
where the $a^*_i$'s and $b^*_i$'s are the modified Frobenius coordinates of~$\lambda$; as a consequence, Proposition~1.4 in~\cite{IO02} states that
\begin{equation}                                \label{eqn:io-formal}
    \ln \Phi(z; \lambda) = \sum_{k=1}^\infty \frac{\pstar{k}(\lambda)}{k}z^{-k}.
\end{equation}
However we cannot immediately take $z = -(\r+\tfrac12)$ and conclude
\begin{equation}                                            \label{eqn:dkl2}
    \eqref{eqn:dkl} \mathrel{\overset{\scriptstyle ?}{=}} n \ln\left(1+\frac{1}{\r}\right) + \E_{\blambda \sim \SWunif{n}{\r}}
                        \left[\sum_{k=1}^\infty \frac{(-1)^k \pstar{k}(\blambda)}{k(\r+\tfrac12)^k}\right]
\end{equation}
because~\eqref{eqn:io-formal} is merely a formal identity of generating functions and does not hold for all real~$z$. More specifically, it's necessary that the Taylor series for $\ln(1+b_i/z)$ and $\ln(1-a_i/z)$ converge, which happens provided $|b_i/(\r+\tfrac12)|, |a_i/(\r+\tfrac12)| \leq 1$.  These conditions are equivalent to $\ell(\blambda) = \blambda_1' \leq \r+1$ and $\blambda_1 \leq \r+1$.  The first condition is automatic, since $\blambda \sim \SWunif{n}{\r}$.  The second condition does not always hold; however, we will show (see Lemma~\ref{lem:lambda-usual} below) that it holds with overwhelming probability when $n \ll \r^2$.  Indeed the ``central limit theorems'' for the Schur--Weyl distributions suggest that both $\blambda_1$~and~$\blambda'_1$ will almost always be~$O(\sqrt{n}) = O(\frac{\r}{\omega})$.  Let us therefore make a definition:
\begin{definition}
    We say that $\lambda \vdash n$ is \emph{usual} if $\lambda_1, \lambda'_1 \leq \frac{10}{\omega}\r$.  Since we are assuming $\omega \geq 200$, usual $\lambda$'s satisfy $\lambda_1, \lambda_1' \leq \frac{1}{20}\r \leq \r+1$.
\end{definition}
Thus when $\lambda$ is usual we may apply~\eqref{eqn:dkl2}.  Since the quantity inside the expectation in~\eqref{eqn:dkl} is clearly always negative, we may write
\begin{align}
    \dkl{\SWunif{n}{\r}}{\SWunif{n}{\rp}} = \eqref{eqn:dkl} &\leq n \ln\left(1+\frac{1}{\r}\right) +
            \E_{\blambda \sim \SWunif{n}{\r}}
                \left[1_{\{\blambda \text{ usual}\}} \cdot \ln\left(\frac{\prod_{\square \in [\blambda]} (\r + c(\square))}{\prod_{\square \in [\blambda]} (\rp + c(\square))}\right)\right] \nonumber\\
        &= n \ln\left(1+\frac{1}{\r}\right) +
            \E_{\blambda \sim \SWunif{n}{\r}}
                \left[1_{\{\blambda \text{ usual}\}} \cdot \sum_{k=1}^\infty \frac{(-1)^k \pstar{k}(\blambda)}{k(\r+\tfrac12)^k}\right] \nonumber\\
        &= n \ln\left(1+\frac{1}{\r}\right) - \frac{1}{\r+\tfrac12}\cdot\E_{\blambda \sim \SWunif{n}{\r}}\left[1_{\{\blambda \text{ usual}\}} \cdot \pstar{1}(\blambda)\right]             \label{eqn:dkl-p1}\\
        & \hspace{1.05in} +
             \E_{\blambda \sim \SWunif{n}{\r}}
                \left[1_{\{\blambda \text{ usual}\}} \cdot \sum_{k=2}^\infty\frac{(-1)^k \pstar{k}(\blambda)}{k(\r+\tfrac12)^k}\right]. \label{eqn:dkl-prest}
\end{align}
Recall  that $\pstar{1}(\lambda)$ is simply $|\lambda|$; thus the expectation in~\eqref{eqn:dkl-p1} is simply $n \Pr[\blambda \text{ usual}]$.  As Lemma~\ref{lem:lambda-usual} below shows, $\Pr[\blambda \text{ usual}] = 1 - \delta$ for $\delta \lll \frac{1}{60\r^2}$.  Thus:
\begin{equation}                                    \label{eqn:dkl-cancel}
    \eqref{eqn:dkl-p1} = n\left(\ln\left(1+\frac{1}{\r}\right) - \frac{1}{\r+\tfrac12} + \frac{\delta}{r+\tfrac12}\right) \leq n\left(\frac{1}{12\r^3} + \frac{1/(60\r^2)}{\r+\tfrac12}\right) \leq \frac{n}{10\r^3} = \frac{1}{10 \omega^2 \r}.
\end{equation}
\begin{lemma}                                       \label{lem:lambda-usual}
    Let $\blambda \sim \SWunif{n}{\r}$.  Then
    $
        \Pr[\blambda \textnormal{ unusual}] \leq 2^{-20 \r/\omega}.
    $
\end{lemma}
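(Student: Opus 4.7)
\medskip

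\textbf{Proof plan for Lemma~\ref{lem:lambda-usual}.} The plan is to directly invoke the concentration bound from Proposition~\ref{prop:increasing-seq} twice: once for $\blambda_1$ and once (by the self-conjugacy of the Schur--Weyl distribution, or equivalently by the identical bound stated in Proposition~\ref{prop:increasing-seq}) for $\blambda_1'$. Since ``$\blambda$ unusual'' means $\blambda_1 > \tfrac{10}{\omega}\r$ or $\blambda_1' > \tfrac{10}{\omega}\r$, a union bound will reduce the lemma to a single tail estimate with $B = \lceil \tfrac{10}{\omega}\r\rceil$, $d = \r$, and $n = \r^2/\omega^2$.

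With these choices, the key observation is that the base $\frac{(1+B/d)e^2 n}{B^2}$ appearing in Proposition~\ref{prop:increasing-seq} becomes a small absolute constant. Indeed, under our assumption~\eqref{eqn:omega-assumption} we have $\omega \geq 200$ and $\r \geq \omega^2 \geq 40000$, so $B/d \leq 10/\omega + 1/\r < 0.06$, while $B^2 \geq 100\r^2/\omega^2 = 100 n$. Hence
\[
    \frac{(1+B/d)e^2 n}{B^2} \;\leq\; \frac{1.06\, e^2}{100} \;<\; \frac{1}{12}.
\]
Plugging back into Proposition~\ref{prop:increasing-seq},
\[
    \Pr[\blambda_1 \geq B] \;\leq\; (1/12)^{B} \;\leq\; (1/12)^{10\r/\omega} \;=\; 2^{-(10\log_2 12)\,\r/\omega} \;\leq\; 2^{-35\r/\omega},
\]
and the same bound holds for $\Pr[\blambda'_1 \geq B]$.

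Union bounding then gives $\Pr[\blambda \text{ unusual}] \leq 2 \cdot 2^{-35\r/\omega} = 2^{1-35\r/\omega}$, and since $\r/\omega \geq \sqrt{\r} \geq \sqrt{200} \gg 1$, the leading constant is trivially absorbed to yield the desired $2^{-20\r/\omega}$ bound. There is no substantial obstacle here: the only thing one needs to check with care is that the constant factor $(1+B/d)e^2$ stays comfortably below the $B^2/n$ budget of~$100$, which is where the generous constant~``$10$'' in the definition of ``usual'' is being spent.
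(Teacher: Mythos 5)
Your proposal is correct and follows essentially the same route as the paper: apply Proposition~\ref{prop:increasing-seq} with $B = \lceil \tfrac{10}{\omega}\r\rceil$, note the base is a small constant since $B^2 \geq 100n$, and union bound over $\blambda_1$ and $\blambda_1'$. The only (immaterial) difference is that you bound $1+B/d$ by $1.06$ where the paper simply uses $B \leq \r$ to bound it by $2$, so your constants come out slightly stronger than needed.
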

\begin{proof}
    Write $B = \lceil \frac{10}{\omega}\r \rceil$.  By Proposition~\ref{prop:increasing-seq} and the fact that $B \leq r$,
    \[
        \Pr[\blambda_1 \geq B],\Pr[\blambda_1' \geq B] \leq  \left(\frac{2e^2 n}{B^2}\right)^B \leq \left(\frac{2e^2}{100}\right)^{10\r/\omega} \leq 2^{-1 -20 \r/\omega}.
    \]
    The lemma now follows from the union bound.
\end{proof}


\renewcommand{\L}{L}
\renewcommand{\c}{C}
Turning to~\eqref{eqn:dkl-prest}, let's write
\[
    \L^*_\c(\lambda) \coloneqq \sum_{k=2}^C\frac{(-1)^k \pstar{k}(\lambda)}{k(\r+\tfrac12)^k},
\]
recalling that $\L^*_\infty(\lambda)$ is definitely convergent if $\lambda$ is usual. The infinite sum in~\eqref{eqn:dkl-prest} is inconvenient, as is the $+\frac12$ in the denominator.  We clean these issues up with the following lemma:
\begin{lemma}                                       \label{lem:down-with-infinity}
    Assuming $\lambda \vdash n$ is usual, if
    \[
        \c \geq \frac{3\log(10\r)}{\log(\omega/10)},
    \]
    it follows that
    \[
        \ABS{\L^*_\infty(\lambda) - \L_\c(\lambda)} \leq \frac{201}{\omega^3},
    \]
    where $\L_\c(\lambda)$ denotes the same quantity as $\L^*_\c(\lambda)$ except with no $+\frac12$ in the denominator.
\end{lemma}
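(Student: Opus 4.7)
The plan is to split via the triangle inequality
\[
|L^*_\infty(\lambda) - L_C(\lambda)| \leq |L^*_\infty(\lambda) - L^*_C(\lambda)| + |L^*_C(\lambda) - L_C(\lambda)|,
\]
and bound each piece separately: the first is a tail bound for the series defining $L^*_\infty$, while the second quantifies the error from replacing $(\r+\tfrac12)^k$ by $\r^k$ in the denominator. Both reductions will rest on bounds for $|\pstar{k}(\lambda)|$ that exploit the assumption that $\lambda$ is usual.

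First I would record that for a usual $\lambda$, each $a^*_i$ and $b^*_i$ has absolute value at most $10\r/\omega$, and there are $d(\lambda) \leq 10\r/\omega$ of each, yielding the crude bound $|\pstar{k}(\lambda)| \leq 2 d(\lambda)(10\r/\omega)^k \leq 2(10\r/\omega)^{k+1}$. Plugging this into the tail gives a geometric series in $10/\omega$ starting from $k = C+1$, bounded by $O((\r/\omega)(10/\omega)^{C+1})$. The hypothesized lower bound on $C$ is exactly what is needed to force $(10/\omega)^{C} \leq (10\r)^{-3}$, so the tail contribution is $O(1/(\omega \r^2))$, utterly negligible compared to $201/\omega^3$.

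For the shift error, I would use $|(\r+\tfrac12)^{-k} - \r^{-k}| \leq k/(2\r^{k+1})$ (by the mean value theorem applied to $x \mapsto x^{-k}$); this cancels the $1/k$ in the denominator and reduces the shift error to $\sum_{k=2}^{C} |\pstar{k}(\lambda)|/(2\r^{k+1})$. Using only the crude bound on $|\pstar{k}|$ yields $\sim 2000/\omega^3$, which is too loose. So I would sharpen matters by invoking the tighter area-based estimate $|\pstar{k}(\lambda)| = O(1) \cdot n \cdot (10\r/\omega)^{k-1}$ for $k \geq 2$, which follows from the close relationship between $\pstar{k}$ and $\widetilde{p}_k = k(k-1)\int x^{k-2} \sigma(x)\,dx$ (using $\int \sigma = n$ together with the fact that $\sigma$ is supported on $[-10\r/\omega, 10\r/\omega]$). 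For the dominant $k=2$ term one can also appeal directly to the explicit identity $\pstar{2}(\lambda) = 2c_1(\lambda)$, giving $|\pstar{2}| \leq 2n(\lambda_1 \vee \lambda_1')$. With this sharper bound, each shift-error summand becomes $O(1) \cdot (10/\omega)^{k-1}/\omega^2$ (using $n/\r^2 = 1/\omega^2$), and the geometric sum evaluates to $O(1/\omega^3)$ with a small constant since $\omega \geq 200$.

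The main obstacle is the constant-tracking: the naive $|\pstar{k}|$ bound overshoots by roughly an order of magnitude, so landing beneath $201/\omega^3$ forces one to use the tighter area-based estimate (or to peel off the $k=2$ term via the content-sum identity and apply the crude bound only on the tiny higher-$k$ corrections). Once those constants are carefully tracked and combined with the negligible tail, the triangle inequality closes out the claimed bound.
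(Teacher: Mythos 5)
Your decomposition is exactly the paper's: split by the triangle inequality into the tail $|L^*_\infty(\lambda)-L^*_C(\lambda)|$ and the shift error $|L^*_C(\lambda)-L_C(\lambda)|$, bound the shift difference by $k/(2r^{k+1})$, and use the hypothesis on $C$ to make the tail negligible; your tail estimate is correct. The one soft spot is the constant-recovery step for the shift error. You bound the number of modified Frobenius coordinates by $10r/\omega$, obtain $|\pstar{k}(\lambda)| \le 2(10r/\omega)^{k+1}$, rightly note that this overshoots $201/\omega^3$, and then invoke an ``area-based'' bound $|\pstar{k}(\lambda)| = O(1)\cdot n\cdot(10r/\omega)^{k-1}$ via the relation with $\widetilde{p}_{k+1}$. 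As stated this is not justified: $\pstar{k}$ equals $\frac{1}{k+1}\widetilde{p}_{k+1}$ only up to a linear combination of lower $\widetilde{p}_j$'s whose coefficients you do not control here, and even the leading term only gives $k\int x^{k-1}\sigma(x)\,dx \le k\,n\,(10r/\omega)^{k-1}$, i.e.\ an extra factor of $k$ (harmless numerically, but your sketch does not address it or the correction terms).

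Two clean fixes. First, the paper's route: there are at most $\sqrt{n}=r/\omega$ coordinates $a^*_i$ and at most $\sqrt{n}$ coordinates $b^*_i$ (since $d(\lambda)^2\le|\lambda|$), so $|\pstar{k}(\lambda)| \le 2(r/\omega)(10r/\omega)^{k}$ for usual $\lambda$; plugged into your shift-error sum this gives $\frac{1}{\omega}\sum_{k\ge 2}(10/\omega)^k \le 200/\omega^3$ with no sharpening needed. Second, if you prefer your sharper bound, it follows in one line from Definition~\ref{def:mod-frob-coords}: $|\pstar{k}(\lambda)| \le \sum_i \bigl((a^*_i)^k+(b^*_i)^k\bigr) \le B^{k-1}\sum_i (a^*_i+b^*_i) = nB^{k-1}$ with $B=10r/\omega$, with no detour through $\widetilde{p}$ and no need for the special treatment of $\pstar{2}=2c_1$. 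With either fix your argument closes exactly as the paper's does.
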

\begin{proof}
    For any $\lambda \vdash n$ (not necessarily usual), we have the crude bound $|\pstar{k}(\lambda)| \leq 2\sqrt{n} B^k$ whenever $\lambda_1, \lambda_1' \leq B$.  This is because each modified Frobenius coordinate $a^*_i$ or $b^*_i$ (of which there are at most $\sqrt{n}$ each) is at most~$B$.
    For \emph{usual}~$\lambda$ we may take $B = \frac{10}{\omega}\r$.  Thus we have
    \[
        \ABS{\L^*_\infty(\lambda) - \L^*_\c(\lambda)} \leq \sum_{k = \c+1}^\infty \frac{|\pstar{k}(\lambda)|}{k(\r+\tfrac12)^k} \leq \sum_{k = \c+1}^\infty \frac{2\frac{\r}{\omega} (10\frac{\r}{\omega})^k}{k\r^k} \leq 2\r\sum_{k=\c+1}^\infty \left(\frac{10}{\omega}\right)^k \leq 4r\left(\frac{10}{\omega}\right)^\c \leq \frac{1}{250\r^2},
    \]
    where the last inequality used the assumption about~$\c$ (and the second-to-last inequality used $\omega \geq 200$ in a crude way). Further,
    \[
        \ABS{\L^*_\c(\lambda) - \L_\c(\lambda)} \leq \sum_{k=2}^\c \frac{|\pstar{k}(\lambda)|}{k}\left(\frac{1}{\r^k} - \frac{1}{(\r+\tfrac12)^k}\right) \leq  \sum_{k=2}^\c \frac{2\frac{\r}{\omega} (10\frac{\r}{\omega})^k}{k}\left(\frac{k}{2\r^{k+1}}\right) = \frac{1}{\omega}\sum_{k=2}^\c\left(\frac{10}{\omega}\right)^k \leq \frac{200}{\omega^3}.
    \]
    Finally, $\frac{200}{\omega^3} + \frac{1}{250\r^2} \leq \frac{201}{\omega^3}$ by our assumption~\eqref{eqn:omega-assumption} that $\omega \leq \sqrt{\r}$.
\end{proof}
Let us use this lemma in~\eqref{eqn:dkl-prest}, and also apply~\eqref{eqn:dkl-cancel} in~\eqref{eqn:dkl-p1}.  Assuming  the lemma's hypotheses, we obtain
\begin{align*}
    \dkl{\SWunif{n}{\r}}{\SWunif{n}{\rp}} &\leq \E_{\blambda \sim \SWunif{n}{\r}}\left[1_{\{\blambda \textnormal{ usual}\}}\cdot \L_\c(\blambda)\right] + \tfrac{1}{10\omega^2\r} + \tfrac{201}{\omega^3}\\
    &\leq \E_{\blambda \sim \SWunif{n}{\r}}\left[\L_\c(\blambda)\right] - \E_{\blambda \sim \SWunif{n}{\r}}\left[1_{\{\blambda \text{ unusual}\}}\cdot \L_\c(\blambda)\right] +  \tfrac{202}{\omega^3}.
\end{align*}
We can use Cauchy--Schwarz to bound
\begin{equation}                                        \label{eqn:cruel-and-unusual}
    \ABS{\E_{\blambda \sim \SWunif{n}{\r}}\left[1_{\{\blambda \text{ unusual}\}}\cdot \L_\c(\blambda)\right]} \leq \sqrt{\E[1_{\{\blambda \text{ unusual}\}}^2]}\sqrt{\E[\L_\c(\blambda)^2]} \leq 2^{-10\r/\omega} \sqrt{\E[\L_\c(\blambda)^2]},
\end{equation}
where the last inequality used Lemma~\ref{lem:lambda-usual}.  Finally, we can afford to use an extraordinarily crude bound on~$\E[\L_\c(\blambda)^2]$:
\[
    \E[\L_\c(\blambda)^2] \leq \c \sum_{k=2}^\c\E[\pstar{k}(\blambda)^2] \leq \c \sum_{k=2}^\c (2\sqrt{n} n^k)^2 \leq n^{3\c} \leq \r^{6\c},
\]
where the second inequality used the crude bound on~$|\pstar{k}(\lambda)|$ from the proof of Lemma~\ref{lem:down-with-infinity}. (In fact, in Section~\ref{sec:tv-improvement} we will actually show that this quantity is quite tiny.) If we now make the very weak assumption that $\c \leq \frac{3\r}{\omega \log \r}$,  we may conclude $\eqref{eqn:cruel-and-unusual} \leq 2^{-\r/\omega} \ll \frac{1}{\omega^3}$.

Now we can summarize all of the preparatory work we have done so far:
\begin{proposition}                                     \label{prop:rank-warmup}
    Assuming $\frac{3\log(10\r)}{\log(\omega/10)} \leq \c \leq \frac{3\r}{\omega \log \r}$, for $\blambda \sim \SWunif{n}{\r}$ we have
    \[
        \dkl{\SWunif{n}{\r}}{\SWunif{n}{\rp}} \leq \E\left[\L_\c(\blambda)\right] + \tfrac{203}{\omega^3},
    \]
    where
    \begin{equation}                            \label{eqn:LC}
        \L_\c(\lambda) \coloneqq \sum_{k=2}^\c\frac{(-1)^k \pstar{k}(\lambda)}{k\r^k}.  
    \end{equation}
\end{proposition}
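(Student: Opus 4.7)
The plan is simply to chain together all the intermediate estimates already derived in this subsection; Proposition~\ref{prop:rank-warmup} is essentially a clean bookkeeping restatement of the bounds~\eqref{eqn:dkl-cancel}, \eqref{eqn:cruel-and-unusual} and Lemma~\ref{lem:down-with-infinity}. To begin, I would start from expression~\eqref{eqn:dkl} for $\dkl{\SWunif{n}{\r}}{\SWunif{n}{\rp}}$. The integrand inside the expectation is manifestly $\leq 0$ (each factor $(\r+c(\square))/(\rp+c(\square))$ lies in $(0,1]$), so I can legitimately insert the indicator $1_{\{\blambda\text{ usual}\}}$ from above without loss, confining attention to partitions on which the Ivanov--Olshanski formal identity~\eqref{eqn:io-formal} converges pointwise (this is the reason for requiring usualness, which guarantees $\blambda_1,\blambda'_1 \leq r+1$).

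Next I would separate the $k=1$ term of the resulting series from the $k \geq 2$ tail, obtaining lines~\eqref{eqn:dkl-p1} and~\eqref{eqn:dkl-prest}. Since $\pstar{1}(\lambda) = |\lambda| = n$, the first expectation is just $n\Pr[\blambda\text{ usual}]$, and combining it with the leading $n\ln(1+1/\r)$ term produces the cancellation already carried out in~\eqref{eqn:dkl-cancel}, leaving a residual of at most $\frac{1}{10\omega^2 \r}$ (this uses the tail bound from Lemma~\ref{lem:lambda-usual} together with the elementary estimate $\ln(1+1/\r) - 1/(\r+\tfrac12) \leq 1/(12\r^3)$).

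For the tail~\eqref{eqn:dkl-prest}, I would apply Lemma~\ref{lem:down-with-infinity} to truncate the infinite series at $k=\c$ and to replace the denominators $(\r+\tfrac12)^k$ by the cleaner $\r^k$, incurring an additional pointwise error of $\tfrac{201}{\omega^3}$ on usual $\lambda$; this is where the lower bound hypothesis $\c \geq 3\log(10\r)/\log(\omega/10)$ is used. After this truncation the integrand has become $1_{\{\blambda\text{ usual}\}}\cdot \L_\c(\blambda)$, and all that remains is to remove the usual-indicator. For that I would write
\[
\E[\L_\c(\blambda)] - \E[1_{\{\blambda\text{ usual}\}}\L_\c(\blambda)] = \E[1_{\{\blambda\text{ unusual}\}}\L_\c(\blambda)]
\]
and apply Cauchy--Schwarz as in~\eqref{eqn:cruel-and-unusual}, using the Lemma~\ref{lem:lambda-usual} tail bound $\Pr[\blambda\text{ unusual}] \leq 2^{-20\r/\omega}$ together with the very crude deterministic bound $|\pstar{k}(\lambda)| \leq 2\sqrt{n}\,n^k \leq n^{k+1}$ to obtain $\E[\L_\c(\blambda)^2] \leq \r^{6\c}$. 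The upper bound hypothesis $\c \leq 3\r/(\omega \log \r)$ ensures $2^{-10\r/\omega}\cdot \r^{3\c} \ll 1/\omega^3$, so this last error term is absorbed. Summing the three error contributions $\tfrac{1}{10\omega^2\r}$, $\tfrac{201}{\omega^3}$, and one $o(1/\omega^3)$ term yields the claimed bound $\tfrac{203}{\omega^3}$.

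There is no real obstacle here beyond careful case analysis — the hard work has been front-loaded into Lemmas~\ref{lem:lambda-usual} and~\ref{lem:down-with-infinity}. The only place where I would take some care is ensuring the two directions of the Cauchy--Schwarz step: the KL divergence is being upper-bounded, so the potentially negative $\L_\c$ contributions on unusual $\blambda$ must be bounded in absolute value; this is exactly what~\eqref{eqn:cruel-and-unusual} delivers and why the proposition's statement pays a small additive constant rather than equating with $\E[\L_\c(\blambda)]$.
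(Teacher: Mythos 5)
Your proposal is correct and follows essentially the same route as the paper: the proposition is indeed just the bookkeeping summary of~\eqref{eqn:dkl}--\eqref{eqn:dkl-prest}, the cancelation~\eqref{eqn:dkl-cancel} via Lemma~\ref{lem:lambda-usual}, the truncation Lemma~\ref{lem:down-with-infinity}, and the Cauchy--Schwarz removal of the usual-indicator~\eqref{eqn:cruel-and-unusual} with the crude bound $\E[\L_\c(\blambda)^2] \leq \r^{6\c}$. Your error accounting ($\tfrac{1}{10\omega^2\r} + \tfrac{201}{\omega^3} + 2^{-\r/\omega} \leq \tfrac{203}{\omega^3}$) matches the paper's.
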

\noindent (It is straightforward to check using~\eqref{eqn:omega-assumption} that the range of values for~$\c$ is nonempty.)\\

We now come to the main task: showing that $\E[\L_\c(\blambda)]$ is small.

\subsubsection{Passing to the $\psharp{}$ polynomials}                      \label{sec:L-psharp}
\newcommand{\partfact}[1]{\mathrm{fact}(#1)}
In this section and the following one, we will use the notation
\[
    \partfact{\mu} = \prod_{w \geq 1} m_w(\mu)!
\]
where, recall, $m_w(\mu)$ is the number of parts of~$\mu$ equal to~$w$.

The following proposition is essentially immediate from known formulas:
\begin{proposition}                                     \label{prop:p-to-psharp}
    For any $k \in \Z^+$, we have the following identity on observables:
    \[
        \pstar{k} = \sum_{\mu~:~\weight(\mu) = k+1} \frac{\falling{k}{(\ell(\mu)-1)}}{\partfact{\mu}} \psharp{\mu} + \calO_{k},
    \]
    where $\calO_{k}$ is an observable with $\weight(\calO_{k}) \leq k$.  More precisely,
    \[
        \calO_{k} = \sum_{\mu~:~\weight(\mu) \leq k} c_{k,\mu} \psharp{\mu}
    \]
    for some rational coefficients $c_{k,\mu}$.
\end{proposition}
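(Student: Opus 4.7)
I plan to prove Proposition~\ref{prop:p-to-psharp} by induction on $k$, combining~\eqref{eqn:inverted} with repeated applications of Proposition~\ref{prop:should-be-in-a-structure-constants-discussion-or-theorem}. The base case $k = 1$ is immediate from~\eqref{eqn:first-few-pstars}, as $\pstar{1} = \psharp{1}$ matches the RHS: the only partition with $\weight = 2$ is $\mu = (1)$, contributing $\frac{\falling{1}{0}}{1!}\psharp{1} = \psharp{1}$ and $\calO_1 = 0$. The cases $k = 2,3,4$ can be verified directly from~\eqref{eqn:low-moments}.

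For the inductive step, I use~\eqref{eqn:inverted} to write $\pstar{k} = \psharp{k} + R_k$, where $R_k$ is a polynomial in $\psharp{1},\ldots,\psharp{k-1}$ of gradation at most $k-1$. A typical monomial $r_\alpha \prod_j \psharp{j}^{\alpha_j}$ in $R_k$, indexed by a finitely supported sequence $\alpha = (\alpha_j)$, has gradation $\sum_j j \alpha_j \leq k-1$. Iterating Proposition~\ref{prop:should-be-in-a-structure-constants-discussion-or-theorem}, its expansion in the $\psharp{\mu}$ basis is $\psharp{\mu_\alpha}$ plus terms of weight at most $\sum_j(j+1)\alpha_j - 2$, where $\mu_\alpha$ is the partition with $m_j(\mu_\alpha) = \alpha_j$, and $\weight(\mu_\alpha) = \sum_j(j+1)\alpha_j = \sum_j j\alpha_j + \sum_j \alpha_j$.

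A crucial sub-claim is that every monomial appearing in $R_k$ satisfies $\sum_j(j+1)\alpha_j \leq k+1$; equivalently, $\pstar{k}$ lies in the weight-$(k+1)$ part of Kerov's filtration on~$\Lambda^*$. This can be proved by induction on $k$ (using~\eqref{eqn:inverted} for the lower $\pstar{j}$'s together with multiplicativity of the filtration under Proposition~\ref{prop:should-be-in-a-structure-constants-discussion-or-theorem}), or extracted from the generating function~\eqref{eqn:crazy-gen}--\eqref{eqn:crazy-gen2}. Granting this, the weight-$(k+1)$ part of $\pstar{k}$ is exactly $\psharp{k}$ (from $\mu = (k)$ with coefficient $1 = \falling{k}{0}/1!$) plus the sum of $r_\alpha \psharp{\mu_\alpha}$ over those monomials in $R_k$ with $\sum_j(j+1)\alpha_j = k+1$; everything else collapses into $\calO_k$.

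The main obstacle is the final step: identifying each $r_\alpha$ explicitly and verifying $r_\alpha = \falling{k}{\ell(\mu_\alpha)-1}/\partfact{\mu_\alpha}$. I would approach this via coefficient extraction from the generating function~\eqref{eqn:crazy-gen}--\eqref{eqn:crazy-gen2} after inverting to express $\pstar{k}$ in the $\psharp{j}$ generators, or via an inductive recursion using Corollary~\ref{cor:structure-coefficients-rewrite} to track the top-weight coefficients as $k$ grows. The pattern visible in~\eqref{eqn:low-moments}---e.g., the coefficient $\frac32 = \falling{3}{1}/2!$ of $\psharp{(1,1)}$ in $\pstar{3}$, and $4 = \falling{4}{1}/(1!\,1!)$ of $\psharp{(2,1)}$ in $\pstar{4}$---serves as a reassuring consistency check, but the general identification of the coefficient requires the bulk of the combinatorial work.
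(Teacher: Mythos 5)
There is a genuine gap. Your reduction is sensible---isolating the weight-$(k{+}1)$ component and using Proposition~\ref{prop:should-be-in-a-structure-constants-discussion-or-theorem} to convert monomials $\prod_j (\psharp{j})^{\alpha_j}$ into $\psharp{\mu_\alpha}$ plus lower-weight terms is exactly the right mechanism---but the entire content of the proposition is the \emph{explicit} top-weight coefficient $\falling{k}{(\ell(\mu)-1)}/\partfact{\mu}$, and your argument stops precisely there: you verify $k\leq 4$ from~\eqref{eqn:low-moments} and then propose two strategies (coefficient extraction from~\eqref{eqn:crazy-gen}--\eqref{eqn:crazy-gen2}, or a recursion via Corollary~\ref{cor:structure-coefficients-rewrite}) without carrying out either. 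Note also that~\eqref{eqn:crazy-gen}--\eqref{eqn:crazy-gen2} run in the wrong direction (they express $\psharp{k}$ in the $\pstar{j}$'s), so inverting them and tracking the top-weight part is real combinatorial work, not a routine extraction. In addition, your ``crucial sub-claim'' that $\pstar{k}$ has weight at most $k+1$ is itself only sketched: the gradation bound from~\eqref{eqn:inverted} only gives $\sum_j j\alpha_j \leq k-1$ for the monomials of $R_k$, which permits weights as large as $2k-2$, so the weight-$(k{+}1)$ bound needs its own proof and cannot be read off from gradation.

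The paper sidesteps both issues by routing through the $\widetilde{p}$ polynomials: by~\cite[Corollary~2.8]{IO02}, $\pstar{k} = \frac{1}{k+1}\widetilde{p}_{k+1} + \{\text{linear combination of } \widetilde{p}_2,\dots,\widetilde{p}_k\}$, and by~\cite[Corollary~3.7]{IO02},
\[
    \widetilde{p}_{k+1} = \sum_{\mu\,:\,\weight(\mu) = k+1} \frac{\falling{(k+1)}{\ell(\mu)}}{\partfact{\mu}} \prod_{i \geq 1} (\psharp{i})^{m_i(\mu)},
\]
so the desired coefficient appears immediately upon dividing by $k+1$ (since $\falling{(k+1)}{\ell(\mu)}/(k+1) = \falling{k}{(\ell(\mu)-1)}$), the lower $\widetilde{p}_j$'s contribute only weight $\leq k$, and Proposition~\ref{prop:should-be-in-a-structure-constants-discussion-or-theorem} collapses the products into $\psharp{\mu}$ plus terms absorbed into $\calO_k$. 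If you wish to avoid citing those two facts, you would need to prove an equivalent closed-form expansion yourself, which is the bulk of the work your proposal defers.
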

\begin{proof}
    From~\cite[Corollary~2.8]{IO02} we have
    \[
        \pstar{k} = \frac{1}{k+1}\cdot\wt{p}_{k+1} + \Bigl\{\text{a linear combination of $\wt{p}_k, \dots, \wt{p}_2$}\Bigr\}.
    \]
    From~\cite[Corollary~3.7]{IO02} (cf.~\cite[Lemma 10.10]{Mel10b}) we have
    \[
        \wt{p}_{k+1} = \sum_{\mu~:~\weight(\mu) = k+1} \frac{\falling{(k+1)}{\ell(\mu)}}{\partfact{\mu}} \prod_{i \geq 1} (\psharp{i})^{m_i(\mu)}.
    \]
    The result is now easily deduced from Proposition~\ref{prop:should-be-in-a-structure-constants-discussion-or-theorem}.
\end{proof}

Substituting the above result into~\eqref{eqn:LC} yields:
\begin{equation}                                    \label{eqn:LC-approx}
    \L_\c(\lambda) =  \sum_{k=2}^\c \frac{(-1)^k}{k\r^k} \cdot \sum_{\weight(\mu) = k+1} \frac{\falling{k}{(\ell(\mu)-1)}}{\partfact{\mu}} \psharp{\mu}(\lambda) + \sum_{k=2}^\c \frac{(-1)^k \calO_{k}(\lambda)}{k\r^k}.
\end{equation}
Taking the expectation over $\blambda \sim \SWunif{n}{\r}$, and using Corollary~\ref{cor:sw-expect} to evaluate the expectation of $\psharp{\mu}$, we obtain:
\begin{align}
    \E_{\blambda \sim \SWunif{n}{\r}}[\L_\c(\blambda)]
        &=  \sum_{k=2}^\c \frac{(-1)^k}{k\r^k} \cdot \sum_{\weight(\mu) = k+1} \frac{\falling{k}{(\ell(\mu)-1)}}{\partfact{\mu}} \falling{n}{|\mu|} \r^{\ell(\mu) - |\mu|} \label{eqn:EL} \\
        & {}+ \sum_{k=2}^\c \frac{(-1)^k\E_{\blambda \sim \SWunif{n}{\r}}[\calO_{k}(\blambda)]}{k\r^k}. \label{eqn:errL}
\end{align}
We will show in Lemmas~\ref{lem:errL},~\ref{lem:errL2} below that the ``error term''~\eqref{eqn:errL} is small assuming $n \ll \r^2$.  Thus we focus on~\eqref{eqn:EL}.

\subsubsection{Showing the ``main term'' is small: some intuition}
Before diving into manipulations, let's take a high-level look at the contributions to~\eqref{eqn:EL} from $k = 2, 3, 4, 5, \dots$, focusing on the powers of~$n$ and~$\r$.  First consider the case of $k = 2$.  Here there is only one $\mu$ with $\weight(\mu) = 3$, namely $\mu = (2)$, which has $|\mu| = 2$ and $\ell(\mu) = 1$.  Thus from $k = 2$ we pick up a factor on the order of $\frac{n^2}{\r^3}$; more precisely, $\frac{\falling{n}{2}}{2\r^3}$.  This looks rather bad from the point of view of proving a quadratic lower bound for~$n$: the term $\frac{\falling{n}{2}}{2\r^3}$ is not small unless $n \ll r^{3/2}$.  The main surprise in our proof is that this term will be exactly canceled by ``lower-degree'' contributions from larger~$k$.

To see an example of this, consider the $k = 3$ contribution in~\eqref{eqn:EL}.  Here there are two $\mu$'s with $\weight(\mu) = 4$, namely $\mu = (3)$ and $\mu = (1,1)$.  The first gives a contribution on the order of $\frac{n^3}{\r^5}$; more precisely, $-\frac{\falling{n}{3}}{3\r^5}$.  The second gives a contribution of $-\frac{\falling{n}{2}}{2\r^3}$, thereby precisely canceling the $k = 2$ term.  Thus we are left (so far) with $-\frac{\falling{n}{3}}{3\r^5}$, which is small if $n \ll \r^{5/3}$.  This is still far from a quadratic bound, but it's better than the~$\r^{3/2}$ bound we were faced with previously.

In turn, the $-\frac{\falling{n}{3}}{3\r^5}$ contribution will be canceled by a certain $k = 3$ term, namely $\frac{\falling{n}{3}}{\r^5}$ from $\mu = (2,1)$, together with a certain $k = 4$ term, namely $\frac{2\falling{n}{3}}{3\r^5}$ from $\mu = (1,1,1)$.  Indeed, if we sum up through $k = 6$, the total contribution is $-\frac{5\falling{n}{4}}{\r^7} - \frac{\falling{n}{5}}{5\r^9}$, which is small if $n \ll \r^{7/4}$.  This gets us still closer to a quadratic bound.

In fact, looking carefully at small partitions suggests that perfect cancelation is achieved if we group contributions according to~$|\mu|$.  This proves to be the case, as we will show below.  In the end \eqref{eqn:EL} does not precisely vanish because for $m > \c/2$, not all~$\mu$'s with $|\mu| = m$ appear in~\eqref{eqn:EL}.  However the ``leftover contributions'' are of the shape $\r (\frac{n}{\r^2})^{k}$ for $k > \c/2$, a quantity we can ensure is small by taking $\omega$ and $\c$ large enough.   (There is a tradeoff involved preventing us from taking~$\c$ too large: our ``error bound''~\eqref{eqn:errL} increases with~$\c$.)

\subsubsection{Proof that the ``main term'' is small}
Although~\eqref{eqn:EL} has a double summation, the summed quantity is simply counted exactly once for each $\mu$ with $3 \leq \weight(\mu) \leq \c+1$.  As suggested above, let us rearrange the summation according to~$|\mu|$.  We will use the notation $s = |\mu| - 1$ and $h = \ell(\mu) - 1$, so that $\weight(\mu) = s + h + 2$ (i.e., $k = s+h+1$) and $\weight(\mu) \leq \c+1 \iff h \leq \c-1-s$:
\begin{align*}
    \eqref{eqn:EL} &= \sum_{s = 1}^{\c-1} \sum_{h = 0}^{\min(s, \c-1-s)} \sum_{\substack{\mu \vdash s+1 \\ \ell(\mu) = h+1}}
                        \frac{(-1)^{s+h+1}}{(s+h+1)\r^{s+h+1}} \frac{\falling{(s+h+1)}{h}}{\partfact{\mu}} \falling{n}{(s+1)}\r^{h-s} \\
                   &= \sum_{s = 1}^{\c-1} (-1)^{s+1} \cdot \frac{\falling{n}{(s+1)}}{\r^{2s+1}}
                      \sum_{h = 0}^{\min(s, \c-1-s)} (-1)^{h}\falling{(s+h)}{(h-1)}
                      \sum_{\substack{\mu \vdash s+1 \\ \ell(\mu) = h+1}} \frac{1}{\partfact{\mu}}.
\end{align*}
(We remark that we switched from $\r+\frac12$ to $\r$ in Lemma~\ref{lem:down-with-infinity} so as to obtain nice cancelations on~$\r$ here. We also recall the convention $\falling{m}{(-1)} = \frac{1}{m+1}$.)  It is not hard to show (see, e.g.,~\cite[Lemma~11]{Mel10a}) that
\[
    \sum_{\substack{\mu \vdash s+1 \\ \ell(\mu) = h+1}} \frac{1}{\partfact{\mu}} = \frac{1}{(h+1)!}\binom{s}{h}.
\]
Substituting this into the above, and also using $\falling{(s+h)}{(h-1)} = \frac{(s+h)!}{(s+1)!}$, we get
\begin{align*}
    \eqref{eqn:EL} &=  \sum_{s = 1}^{\c-1} (-1)^{s+1}\cdot \frac{\falling{n}{(s+1)}}{\r^{2s+1}}
                       \sum_{h = 0}^{\min(s, \c-1-s)} (-1)^{h}
                                                            \frac{(s+h)!}{(s+1)!(h+1)!}\binom{s}{h} \\
                   &=  \sum_{s = 1}^{\c-1} \frac{(-1)^{s+1}}{s+1} \cdot \frac{\falling{n}{(s+1)}}{\r^{2s+1}}
                       \sum_{h = 0}^{\min(s, \c-1-s)} \frac{(-1)^{h}}{h+1} \binom{s+h}{h}\binom{s}{h}.
\end{align*}
We now obtain the promised cancelation.  Specifically, it is a known combinatorial identity (see, e.g.,~\cite[page 182]{GKP94}) that for all $s \in \Z^+$, the inner summation equals~$0$ provided~$h$ ranges all the way up to~$s$.  In other words, all contributions from $s \leq \frac{\c-1}{2}$ vanish.  For larger~$s$, it's not hard to bound the inner ``partial sum'' crudely by, say,~$9^s$ in absolute value.  We therefore finally conclude:
\begin{equation}                                    \label{eqn:EL-bound}
    |\eqref{eqn:EL}| \leq \sum_{\frac{\c}{2} \leq s \leq \c-1} \frac{1}{s+1} \cdot \frac{\falling{n}{(s+1)}}{\r^{2s+1}} \cdot 9^s \leq \frac{n}{\r}   \sum_{s \geq \frac{\c}{2}}\left(\frac{9n}{\r^2}\right)^s = \frac{\r}{\omega^2}\sum_{s \geq \frac{\c}{2}}\left(\frac{9}{\omega^2}\right)^s \leq \r \left(\frac{3}{\omega}\right)^\c.
\end{equation}

\subsubsection{Bounding the ``error term''}

In this section we bound the ``error term''~\eqref{eqn:errL}, using the following lemma:
\begin{lemma}                                       \label{lem:p-sharp-expect-bound}
    Suppose $n = \frac{r^2}{\omega^2}$.  Then $\displaystyle 0 \leq \E_{\blambda \sim \SWunif{n}{\r}}[\psharp{\mu}(\blambda)] \leq \r^{\weight(\mu)} \cdot (1/\omega^2)^{|\mu|}$.
\end{lemma}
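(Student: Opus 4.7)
The plan is to simply invoke Corollary~\ref{cor:sw-expect}, which already gives an exact formula for this expectation, and then apply two elementary bounds. Specifically, Corollary~\ref{cor:sw-expect} tells us that
\[
    \E_{\blambda \sim \SWunif{n}{\r}}[\psharp{\mu}(\blambda)] = \falling{n}{|\mu|} \cdot \r^{\ell(\mu) - |\mu|}.
\]
For the lower bound, I would observe that $\falling{n}{|\mu|} = n(n-1)(n-2)\cdots(n-|\mu|+1)$ is nonnegative: if $|\mu| \leq n$, every factor is nonnegative, and if $|\mu| > n$, the product contains a zero factor. Thus the whole expectation is $\geq 0$.

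For the upper bound, I would use the trivial estimate $\falling{n}{|\mu|} \leq n^{|\mu|}$ together with $n = \r^2/\omega^2$, yielding
\[
    \E_{\blambda \sim \SWunif{n}{\r}}[\psharp{\mu}(\blambda)] \leq n^{|\mu|} \cdot \r^{\ell(\mu)-|\mu|} = \left(\frac{\r^2}{\omega^2}\right)^{|\mu|} \cdot \r^{\ell(\mu)-|\mu|} = \r^{|\mu|+\ell(\mu)} \cdot (1/\omega^2)^{|\mu|} = \r^{\weight(\mu)} \cdot (1/\omega^2)^{|\mu|},
\]
using the definition $\weight(\mu) = |\mu| + \ell(\mu)$.

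There is essentially no obstacle here; the lemma is a direct corollary of Corollary~\ref{cor:sw-expect}, repackaged in a form convenient for bounding the ``error term''~\eqref{eqn:errL}. The only mild subtlety is confirming nonnegativity when $|\mu| > n$, which follows because the falling factorial picks up a factor of zero in that regime (equivalently, $\psharp{\mu}(\lambda) = 0$ identically whenever $|\lambda| < |\mu|$, by the very definition of the central characters).
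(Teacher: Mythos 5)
Your proof is correct and is essentially the same as the paper's: both invoke Corollary~\ref{cor:sw-expect} and then bound $\falling{n}{|\mu|} \leq n^{|\mu|}$ with $n = \r^2/\omega^2$. Your extra remark justifying nonnegativity (the falling factorial vanishes when $|\mu| > n$) is a fine, if implicit in the paper, observation.
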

\begin{proof}
    By Corollary~\ref{cor:sw-expect},
    $\displaystyle
        \E_{\blambda \sim \SWunif{n}{\r}}\left[\psharp{\mu}(\blambda)\right] = \falling{n}{|\mu|} \r^{\ell(\mu)-|\mu|} \leq n^{|\mu|}\r^{\weight(\mu)-2|\mu|} = \r^{\weight(\mu)} \cdot (1/\omega^2)^{|\mu|}.
    $
\end{proof}
We will first use this lemma to bound~\eqref{eqn:errL} in a ``soft'' way, thinking of~$\c$ as an absolute universal constant.  This is enough to get a testing lower bound like $n \geq \Omega_{\delta}(\r^{2-\delta})$ for every $\delta > 0$.  Subsequently we do some technical work (which the uninterested reader may skip) to get a more explicit lower bound.
\begin{lemma}                                       \label{lem:errL}
    For all $\c \geq 2$ there is a constant $A_\c$ such that $|\eqref{eqn:errL}| \leq A_\c \cdot \frac{1}{\omega^2}$.
\end{lemma}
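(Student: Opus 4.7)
The plan is to expand $\calO_k$ in the basis of central characters via Proposition~\ref{prop:p-to-psharp}, then bound each $\E[\psharp{\mu}(\blambda)]/\r^k$ uniformly by $1/\omega^2$ using Lemma~\ref{lem:p-sharp-expect-bound}, with the upshot that the total bound is $A_\c / \omega^2$ for some constant $A_\c$ depending only on~$\c$.

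First I would write $\calO_k = \sum_{\mu\, :\, \weight(\mu) \leq k} c_{k,\mu} \psharp{\mu}$ as guaranteed by Proposition~\ref{prop:p-to-psharp}. A small but important observation is that the empty partition $\mu = \emptyset$ does not actually appear in this expansion: on the one hand $\pstar{k}(\emptyset) = 0$ for $k \geq 1$, and on the other hand every $\psharp{\mu}$ with $\mu \neq \emptyset$ vanishes at the empty Young diagram (since $\falling{0}{|\mu|} = 0$), so the ``main term'' of Proposition~\ref{prop:p-to-psharp} also vanishes at $\emptyset$. Using the linear independence of the $\psharp{\mu}$'s, this forces $c_{k,\emptyset} = 0$, so every $\mu$ contributing to $\calO_k$ satisfies $|\mu| \geq 1$.

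Next I would apply Lemma~\ref{lem:p-sharp-expect-bound}, which says $0 \leq \E_{\blambda \sim \SWunif{n}{\r}}[\psharp{\mu}(\blambda)] \leq \r^{\weight(\mu)} \cdot \omega^{-2|\mu|}$. Since $\weight(\mu) \leq k$ and $|\mu| \geq 1$, this gives
\[
\frac{\bigl|\E_{\blambda \sim \SWunif{n}{\r}}[\psharp{\mu}(\blambda)]\bigr|}{\r^k} \;\leq\; \frac{\r^{\weight(\mu) - k}}{\omega^{2|\mu|}} \;\leq\; \frac{1}{\omega^2}.
\]
Substituting into~\eqref{eqn:errL} and applying the triangle inequality,
\[
|\eqref{eqn:errL}| \;\leq\; \sum_{k=2}^{\c} \frac{1}{k} \sum_{\mu\,:\,1 \leq \weight(\mu) \leq k} |c_{k,\mu}| \cdot \frac{1}{\omega^2} \;=\; \frac{A_\c}{\omega^2},
\]
where $A_\c \coloneqq \sum_{k=2}^{\c}\frac{1}{k}\sum_{\mu\,:\,1 \leq \weight(\mu) \leq k} |c_{k,\mu}|$. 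This is a finite constant depending only on $\c$, since for each fixed $k$ there are only finitely many partitions $\mu$ with $\weight(\mu) \leq k$, and the coefficients $c_{k,\mu}$ are concrete rationals determined by the identities in~\cite{IO02} and Proposition~\ref{prop:should-be-in-a-structure-constants-discussion-or-theorem}.

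There is no real obstacle here, as the lemma is a bookkeeping step; the only mild subtlety is verifying that the empty partition does not sneak into $\calO_k$, since it is the unique $\mu$ for which the factor $\omega^{-2|\mu|}$ would fail to provide any decay. Everything else is an application of the weight bound $\weight(\mu) \leq k$ built into Proposition~\ref{prop:p-to-psharp} together with the uniform estimate of Lemma~\ref{lem:p-sharp-expect-bound}.
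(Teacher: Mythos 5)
Your proof is correct and follows essentially the same route as the paper's: expand $\calO_k$ in the $\psharp{\mu}$ basis via Proposition~\ref{prop:p-to-psharp}, bound each $\E[\psharp{\mu}(\blambda)]/\r^k$ by $1/\omega^2$ using Lemma~\ref{lem:p-sharp-expect-bound}, and absorb the finitely many coefficients $|c_{k,\mu}|$ into $A_\c$. Your explicit check that $c_{k,\emptyset}=0$ is a nice touch the paper leaves implicit (and is not strictly needed, since even for $\mu=\emptyset$ one has $1/\r^k \leq 1/\omega^2$ by the assumption $\omega \leq \sqrt{\r}$), but it does not change the argument.
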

\begin{proof}
    It suffices to show that for all $k \geq 2$ there is a constant $A'_k$ such that
    \[
        \frac{\E_{\blambda \sim \SWunif{n}{\r}}[\calO_k(\blambda)]}{\r^k} \leq A'_k \cdot \frac{1}{\omega^2}.
    \]
    But recalling Proposition~\ref{prop:p-to-psharp}, the left-hand side is
    \[
        \sum_{\mu~:~\weight(\mu) \leq k} c_{k,\mu} \E_{\blambda \sim \SWunif{n}{\r}}\left[\frac{\psharp{\mu}(\blambda)}{\r^k}\right],
    \]
    and each expectation here is at most $(\frac{1}{\omega^2})^{|\mu|} \leq \frac{1}{\omega^2}$ by Lemma~\ref{lem:p-sharp-expect-bound}. This completes the proof.
\end{proof}

\begin{lemma}                                       \label{lem:errL2}
    In fact, the constants~$A_\c$ from Lemma~\ref{lem:errL} satisfy $A_\c \leq 2^{O(\c^2 \log \c)}$.
\end{lemma}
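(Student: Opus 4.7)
The plan is to quantify the three-step reduction in Proposition~\ref{prop:p-to-psharp} and track the sum of absolute values of coefficients throughout. The proof of Lemma~\ref{lem:errL} actually establishes
\[
A_\c \leq \sum_{k=2}^\c \frac{1}{k} \sum_{\substack{\mu:\,\weight(\mu) \leq k \\ |\mu| \geq 1}} |c_{k,\mu}|
\]
(the empty-partition term vanishes: $\pstar{k}(\emptyset)=0$ and every $\psharp{\mu}$ with $|\mu| \geq 1$ vanishes at $\emptyset$, so $c_{k,\emptyset}=0$). Thus it suffices to prove $\sum_\mu |c_{k,\mu}| \leq 2^{O(k^2 \log k)}$ for each $k \leq \c$.

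I will bound $|c_{k,\mu}|$ by tracking three sources of coefficients. First, by~\cite[Corollary~2.8]{IO02}, $\pstar{k} = \tfrac{1}{k+1}\widetilde{p}_{k+1} + \sum_{j=2}^k \alpha_{k,j}\,\widetilde{p}_j$ for explicit rationals $\alpha_{k,j}$, which a very crude accounting bounds by $|\alpha_{k,j}| \leq k!$. Second, \cite[Corollary~3.7]{IO02} gives $\widetilde{p}_j = \sum_{\weight(\mu)=j} \beta_\mu \prod_i (\psharp{i})^{m_i(\mu)}$ with $|\beta_\mu| = \falling{j}{\ell(\mu)}/\partfact{\mu} \leq k!$. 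Third, each product $\prod_i (\psharp{i})^{m_i(\mu)}$ needs to be expanded via the structure constants as $\sum_\tau \gamma_{\mu,\tau} \psharp{\tau}$, where by Proposition~\ref{prop:should-be-in-a-structure-constants-discussion-or-theorem} only $\tau$ with $\weight(\tau) \leq \weight(\mu) \leq k$ appear.

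The key step is bounding $\sum_\tau |\gamma_{\mu,\tau}|$, which arises from iterating the structure-constant multiplication $\ell(\mu) \leq k$ times. At each stage, Corollary~\ref{cor:structure-coefficients-rewrite} shows that multiplying an intermediate $\psharp{\tau'}$ (with $|\tau'| \leq k$) by a factor $\psharp{(a)}$ (with $a \leq k$) yields nonnegative structure constants whose sum is
\[
\sum_\tau f^\tau_{\tau',(a)} = \sum_{t = \max(|\tau'|,a)}^{|\tau'|+a} C^t_{|\tau'|,a} \leq (k+1)\cdot (k!)^2,
\]
using the trivial bound $C^t_{r_1,r_2} = \frac{r_1!r_2!}{(t-r_1)!(t-r_2)!(r_1+r_2-t)!} \leq r_1!\,r_2!$. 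Because structure constants are nonnegative, the $L^1$-norms of the coefficients multiply across iterations, so repeating this $\ell(\mu) \leq k$ times gives
\[
\sum_\tau |\gamma_{\mu,\tau}| \leq \bigl((k+1)(k!)^2\bigr)^k = 2^{O(k^2 \log k)}.
\]

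Multiplying the three bounds and summing over $j \leq k$ and partitions $\mu$ with $\weight(\mu) \leq k$ (of which there are at most $2^{O(\sqrt{k})}$) yields $\sum_\mu |c_{k,\mu}| \leq 2^{O(k^2 \log k)}$, whence $A_\c \leq 2^{O(\c^2 \log \c)}$ after summing over $k \leq \c$. The main potential obstacle is the iterated structure-constant bound: a priori one might worry that the $C^t_{r_1,r_2}$ factors compound badly at each of the $\Theta(k)$ multiplication steps, but the target bound $2^{O(\c^2\log\c)}$ is loose enough that the naive estimate $C^t_{r_1,r_2} \leq r_1!r_2!$ furnishes plenty of room; no delicate cancellation is needed.
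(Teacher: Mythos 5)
Your proof is correct, and it reaches the bound by a genuinely different route from the paper's. The paper does not re-enter the $\wt{p}$-chain of Proposition~\ref{prop:p-to-psharp} at all: it starts from Wassermann's generating function~\eqref{eqn:crazy-gen}, which writes each $\psharp{k}$ as a polynomial in $\pstar{1},\dots,\pstar{k}$ whose coefficients have $L^1$-norm $\exp(O(k\log k))$, inverts this relation inductively (the inversion is where the $\exp(O(k^2\log k))$ first appears), and only then linearizes monomials in the $\psharp{j}$'s via Lemma~\ref{lem:structure-bound}. You instead quantify the two steps of Proposition~\ref{prop:p-to-psharp} directly ($\pstar{k}\to\wt{p}_j\to$ monomials in the $\psharp{i}$'s) and put all of the $k^2\log k$ growth into the iterated structure-constant expansion, tracking $L^1$-norms via $\sum_\tau f^\tau_{\tau',(a)} = \sum_t C^t_{|\tau'|,a} \le (k+1)\,(k!)^2$ from Corollary~\ref{cor:structure-coefficients-rewrite}; this is the same mechanism as Lemma~\ref{lem:structure-bound}, but run multiplicatively on coefficient sums rather than on individual coefficients, and your bookkeeping ($|\mu|\le k$ for all intermediate partitions, at most $k$ multiplications, submultiplicativity of the $L^1$-norm) is sound. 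Your observation that $c_{k,\emptyset}=0$, by evaluating both sides at the empty diagram, is a nice touch, though the paper's estimate would tolerate that term anyway since $1/\r^k\le 1/\omega^2$. What your route buys is that no inversion of the $\psharp{}$-to-$\pstar{}$ relation is needed; what it costs is quantitative control on the coefficients $\alpha_{k,j}$ expressing $\pstar{k}$ over $\wt{p}_{k+1},\dots,\wt{p}_2$, which the paper cites from \cite[Corollary~2.8]{IO02} only qualitatively and whose magnitude its own proof was engineered not to need. Your asserted bound $|\alpha_{k,j}|\le k!$ does hold --- the explicit formula behind that corollary has coefficients built from binomials $\binom{k+1}{j}$ times Bernoulli-type factors of magnitude at most~$1$, so even a $2^{O(k)}$ bound is available --- but this is the one place where your argument leans on quantitative content of the citation, so you should display or precisely cite that formula rather than appeal to ``crude accounting.''
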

\begin{proof}
    The proof involves some tedious analysis using the results of Section~\ref{sec:work-it}.  It suffices to show that
    \begin{equation}                                \label{eqn:tedious}
        \sum_{\mu : \weight(\mu) \leq k} |c_{k,\mu}| \leq 2^{O(k^2 \log k)},
    \end{equation}
    where, recall, the coefficients $c_{k,\mu}$ are defined by
    \begin{equation}                                    \label{eqn:pull-the-trigger}
        \pstar{k} = \sum_{\mu~:~\weight(\mu) = k+1} \frac{\falling{k}{(\ell(\mu)-1)}}{\partfact{\mu}} \psharp{\mu} + \sum_{\mu~:~\weight(\mu) \leq k} c_{k,\mu} \psharp{\mu}.
    \end{equation}
    Let us return to the relationship between the $\pstar{}$ and $\psharp{}$ polynomials described in Section~\ref{sec:work-it}. Specifically, we'll need identities~\eqref{eqn:crazy-gen},~\eqref{eqn:crazy-gen2}, which express each $\psharp{k}$ as a polynomial in $\pstar{1}, \dots, \pstar{k}$ via the power series $Q_k(t)$.

    Given any polynomial~$R$ in indeterminates $p_1, \dots, p_k$ (either $\pstar{}$'s or $\psharp{}$'s), write $\|R\|$ for the sum of the absolute values of~$R$'s coefficients.  This is a submultiplicative norm.  Observe from~\eqref{eqn:crazy-gen2} that $\|Q_{k,m}\| \leq (k+1)^{m+1}$ (indeed, one may show it's precisely $\frac{(k+1)^{m+1}-k^{m+1}-1}{m+1}$).  Thus the coefficient on~$t^s$ in $Q_k(t)^i$ is a polynomial in $\pstar{1}, \dots, \pstar{k}$ of norm at most~$O(k)^s$.  Hence the same is true for the coefficient on~$t^s$ in the expression $\sum_{i=0}^\infty \frac{(-1)^i}{i!}Q_k(t)^i$ from~\eqref{eqn:crazy-gen2}.  As the coefficient on each power of~$t$ in $\prod_{j=1}^k (1-(j-\frac12)t)$ is a number of magnitude at most $(k-\frac12)^k$, we finally deduce that the relationship~\eqref{eqn:invertible} can be expressed more quantitatively as
    \[
        \psharp{k} = \pstar{k} + R_k(\pstar{1}, \dots, \pstar{k-1}), \quad \text{where } 1+\|R_k\| \leq \exp(b k \log k), \quad b \text{ a universal constant}.
    \]
    We inductively invert this relationship as in~\eqref{eqn:inverted}, writing
    \begin{equation}                    \label{eqn:banana}
        \pstar{k} = S_k(\psharp{1}, \dots, \psharp{k}), \quad \text{where } S_k = \psharp{k} + \Bigl\{\text{polynomial in } \psharp{1}, \dots, \psharp{k-1}\text{ of gradation at most $k-1$}\Bigr\}.
    \end{equation}
    If we let $s(k) = \|S_k\|$, using convexity of $\exp(b k \log k)$ we get the inductive bound
    \[
        s(k) \leq \exp(b k \log k) s(k-1),
    \]
    leading to the bound $s(k) \leq \exp(O(k^2 \log k))$.  This is nearly enough to complete the proof; the only issue is that in~\eqref{eqn:banana} we have a polynomial in the $\psharp{j}$'s, whereas in~\eqref{eqn:pull-the-trigger} we have the products of $\psharp{j}$'s expanded out into linear combinations of $\psharp{\mu}$'s.  However Lemma~\ref{lem:structure-bound} below, which crudely bounds the magnitude of the structure constants for the $\psharp{}$'s, shows that each monomial $\prod_{i} \psharp{\lambda_i}$ with gradation $|\lambda| = w$ can be replaced by a linear polynomial in  $\psharp{\mu}$'s (with $|\mu| \leq w$) wherein each coefficient has magnitude at most~$4^{w^2 \log w}$.  Since $w$ is always bounded by~$k-1$ and since there are at most $2^{O(\sqrt{k})} \ll \exp(O(k^2 \log k))$ partitions $\mu$ with $|\mu| \leq k$, we conclude that each of these linear polynomials has norm at most $\exp(O(k^2 \log k))$.  Thus making these replacements in~$S_k$ only increases its norm by another multiplicative factor of $\exp(O(k^2 \log k))$.  The proof is complete.
\end{proof}

\begin{lemma}       \label{lem:structure-bound}
    Let $\lambda \vdash w$, and suppose
    $\displaystyle
        \prod_{i=1}^{\ell(\lambda)} \psharp{\lambda_i} = \sum_{\mu} c_\mu \psharp{\mu}
    $
    within $\Lambda^*$.  Then $|c_\mu| \leq 4^{w^2 \log w}$ for all $\mu$.
\end{lemma}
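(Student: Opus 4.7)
The plan is to expand the product $\prod_{i=1}^{\ell(\lambda)} \psharp{\lambda_i}$ one factor at a time from the left, using Corollary~\ref{cor:structure-coefficients-rewrite} to control the growth of coefficients at each step. The first ingredient is a uniform bound on the individual structure constants: from Corollary~\ref{cor:structure-coefficients-rewrite}, whenever $\mu \vdash r_1$, $\nu \vdash r_2$, $\tau \vdash t$ with $r_1 \vee r_2 \leq t \leq r_1 + r_2$, the structure constant $f^\tau_{\mu\nu}$ equals $C^{t}_{r_1 r_2}$ times a probability, hence $|f^\tau_{\mu\nu}| \leq C^{t}_{r_1 r_2} \leq r_1!\,r_2!$. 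Since every partition that can possibly arise in the expansion has size at most~$w$, this gives the crude but uniform bound $|f^\tau_{\mu\nu}| \leq (w!)^2 \leq w^{2w}$.

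Next, I will iterate. Write $P_j \coloneqq \prod_{i=1}^{j} \psharp{\lambda_i} = \sum_\nu a^{(j)}_\nu \psharp{\nu}$. The support condition $|\tau| \leq |\nu| + |\lambda_{j+1}|$ built into the structure constants ensures inductively that every $\nu$ appearing in $P_j$ satisfies $|\nu| \leq \lambda_1 + \cdots + \lambda_j \leq w$. Multiplying $P_j$ by $\psharp{\lambda_{j+1}}$ yields
\[
    a^{(j+1)}_\tau \;=\; \sum_\nu a^{(j)}_\nu \cdot f^\tau_{\nu,(\lambda_{j+1})}.
\]
Since, by the fact recorded in Section~\ref{sec:partitions}, the number of partitions of integers $\leq w$ is at most $N(w) = 2^{O(\sqrt{w})}$, we obtain the recursion
\[
    \max_\tau |a^{(j+1)}_\tau| \;\leq\; N(w) \cdot w^{2w} \cdot \max_\nu |a^{(j)}_\nu|.
\]
Starting from $|a^{(1)}_\nu| \leq 1$ and iterating $\ell(\lambda) - 1 \leq w - 1$ times gives
\[
    \max_\mu |c_\mu| \;\leq\; \bigl(N(w) \cdot w^{2w}\bigr)^{w} \;\leq\; w^{2w^2} \cdot 2^{O(w^{3/2})}.
\]

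Finally, I convert to the target form. Taking logarithms base~$4$, the right-hand side above has $\log_4$ equal to $w^2 \log_2 w + O(w^{3/2})$, which is at most $w^2 \log w$ once $w$ exceeds some absolute constant; finitely many small values of~$w$ can be handled either by direct inspection or by inflating the implicit constant in the bound (which only changes the statement by an absolute factor, easily absorbed into the $w^2 \log w$ exponent). There is no real obstacle in the proof beyond the bookkeeping: the two crude estimates $|f^\tau_{\mu\nu}| \leq (w!)^2$ and $\#\{\nu : |\nu|\leq w\} \leq 2^{O(\sqrt{w})}$ are already strong enough, given how generous the target bound $4^{w^2 \log w}$ is, so no delicate cancellation or structural fact about the $\psharp{\mu}$-basis is needed.
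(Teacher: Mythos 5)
Your overall route is the same as the paper's: expand the product one $\psharp{\lambda_i}$ at a time, bound each structure constant via Corollary~\ref{cor:structure-coefficients-rewrite} by $C^t_{r_1r_2}\leq r_1!\,r_2!$, and multiply by the number of partitions that can appear. The problem is the final bookkeeping step, which as written is false. Your uniform per-step factor $N(w)\cdot w^{2w}$, iterated $w-1$ times, gives $|c_\mu|\leq 4^{\,w^2\log_2 w+O(w^{3/2})}$, and you then assert that $w^2\log_2 w+O(w^{3/2})\leq w^2\log w$ for large $w$. That inequality fails for every standard reading of $\log$: with $\log=\log_2$ the main terms are equal and the positive $O(w^{3/2})$ term cannot be absorbed without any slack, and with natural (or larger-base) $\log$ the target exponent is even smaller than $w^2\log_2 w$. ``Inflating an implicit constant'' is not available either, since the lemma's bound $4^{w^2\log w}$ has no implicit constant. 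So as it stands you have proved a bound of the form $\exp(O(w^2\log w))$ --- which, to be fair, is all that the downstream application in Lemma~\ref{lem:errL2} actually needs --- but not the statement of the lemma.

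The paper sidesteps exactly this issue by running a genuine induction on $\ell(\lambda)$: when the last factor is $\psharp{k}$, the inductive hypothesis is invoked at size $w-k$, giving $|d_\mu|\leq 4^{(w-k)^2\log(w-k)}\leq 4^{(w-1)^2\log w}$, so each step has slack $4^{(w^2-(w-1)^2)\log w}=4^{(2w-1)\log w}$, which comfortably absorbs the per-step factor $w^{2w}$ (structure constants times partition count). Your iteration instead charges every step against an evenly divided budget of $w\log w$, which is exactly borderline, and the partition-count factor $2^{O(\sqrt w)}$ pushes it over. Two easy repairs: (i) restate your argument as the paper's induction, with the hypothesis applied at the reduced size; or (ii) keep your sharper bound $(w!)^2$ instead of crudifying it to $w^{2w}$, so that $\log_2\bigl((w!)^2\bigr)\leq w\log_2 w-(w-1)\log_2 e$ generates $\Theta(w^2)$ of slack per full product, dominating the $O(w^{3/2})$ loss from the partition counts (with the finitely many small $w$ then genuinely checked by hand, since for them there is no asymptotic slack to invoke).
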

\begin{proof}
    The proof is an induction on $\ell = \ell(\lambda)$, the base case of $\ell = 1$ being trivial.  
    Now for general $\lambda$ with $\lambda_\ell = k$ we have
    \begin{equation}                        \label{eqn:apple-pie}
    \prod_{i=1}^{\ell} \psharp{\lambda_i}
    = \left(\prod_{i = 1}^{\ell-1} \psharp{\lambda_i}\right)\cdot \psharp{k}
    = \left(\sum_{\mu} d_\mu \psharp{\mu}\right)\cdot \psharp{k}
    = \sum_\mu d_\mu \sum_\tau f^\tau_{\mu k}\psharp{\tau}
    = \sum_\tau \psharp{\tau} \sum_\mu d_\mu f^\tau_{\mu k},
    \end{equation}
    where each $|d_\mu|$ is at most $4^{(w-k)^2 \log(w - k)} \leq 4^{(w-1)^2 \log(w)}$ by induction.
    By Corollary~\ref{cor:structure-coefficients-rewrite}, the structure constants $f^{\tau}_{\mu k}$ satisfy $|f^\tau_{\mu k}| \leq |C^{|\tau|}_{|\mu| k}| \leq |\mu|! k!\leq w^w$.  Since the number of partitions of $(w-k)$ is trivially at most $w^w$,
    the coefficient on $\psharp{\tau}$ in~\eqref{eqn:apple-pie} has magnitude at most
    \begin{equation*}
    \sum_\mu |d_\mu f^\tau_{\mu k}|
    \leq w^{2w} \cdot \max_{\mu} |d_\mu|
    \leq w^{2w} \cdot 4^{(w-1)^2 \log (w)} \leq 4^{w^2 \log w},
    \end{equation*}
    completing the induction.
\end{proof}

\subsubsection{Combining the bounds}
Combining \eqref{eqn:EL-bound}, and Lemmas~\ref{lem:errL},~\ref{lem:errL2}, we get that under the hypotheses of Proposition~\ref{prop:rank-warmup},
\begin{equation}                                        \label{eqn:final-kl-bound}
    \dkl{\SWunif{n}{\r}}{\SWunif{n}{\rp}} \leq \r\left(\frac{3}{\omega}\right)^\c + \exp(O(\c^2 \log \c))\cdot \frac{1}{\omega^2} + \tfrac{203}{\omega^3} \leq \exp(O(\c^{2.01})) \cdot \frac{1}{\omega^2}.
\end{equation}
In the above we used $\r\left(\frac{3}{\omega}\right)^\c \leq \r\left(\frac{10}{\omega}\right)^\c \leq \r\left(\frac{1}{10\r}\right)^3 \leq \frac{1}{\omega^3}$, the second inequality here following from the assumed lower bound on~$\c$. It's now evident that we should take~$\c$ as small as we can; in particular, to equal $\lceil 3\frac{\log(10\r)}{\log(\omega/10)} \rceil$.  We conclude:
\begin{theorem}                                     \label{thm:r-vs-r+1}
    For any $200 \leq \omega \leq \sqrt{\r}$, if $n = \frac{\r^2}{\omega^2}$ then
    \[
        \dkl{\SWunif{n}{\r}}{\SWunif{n}{\r+1}}  \leq \exp(O((\log \r) / (\log \omega))^{2.01}) \cdot \omega^{-2}.
    \]
    In particular, for $\omega = \exp(O(\log^{.67} \r))$ and hence $n = \r^{2 - O(1/\log^{.33} \r)}$, the above bound is $o_{\r}(1)$.
\end{theorem}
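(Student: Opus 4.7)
The plan is to bound the KL divergence by combining all the preparatory work done earlier in the section. By Proposition~\ref{prop:rank-warmup}, it suffices to show that $\E_{\blambda \sim \SWunif{n}{\r}}[\L_\c(\blambda)]$ is small, provided $\c$ is chosen to lie in the valid range $\tfrac{3\log(10\r)}{\log(\omega/10)} \leq \c \leq \tfrac{3\r}{\omega \log \r}$. To estimate this expectation, I would first apply Proposition~\ref{prop:p-to-psharp} to rewrite each $\pstar{k}$ as a linear combination of central characters $\psharp{\mu}$, splitting the result into the ``top-weight'' contribution (from $\mu$ with $\weight(\mu) = k+1$) and a ``lower-weight'' remainder $\calO_{k}$. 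Taking expectations using Corollary~\ref{cor:sw-expect}, I obtain the decomposition of $\E[\L_\c(\blambda)]$ into the main term~\eqref{eqn:EL} and an error term~\eqref{eqn:errL}.

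The heart of the argument is bounding the main term. I would reparametrize the double sum using $s = |\mu| - 1$ and $h = \ell(\mu) - 1$, use the standard identity $\sum_{\mu \vdash s+1,\ \ell(\mu) = h+1} \frac{1}{\partfact{\mu}} = \frac{1}{(h+1)!}\binom{s}{h}$, and then extract the following inner sum as the crux of the cancellation:
\[
    \sum_{h=0}^{s} \frac{(-1)^h}{h+1} \binom{s+h}{h}\binom{s}{h} = 0 \quad \text{for every $s \geq 1$}.
\]
This classical identity (e.g.\ from~\cite{GKP94}) makes the contributions from $s \leq \tfrac{\c-1}{2}$ vanish outright, leaving only the ``tail'' $s \geq \c/2$, which can be crudely bounded term-by-term to give $|\eqref{eqn:EL}| \leq \r(3/\omega)^\c$ as in~\eqref{eqn:EL-bound}.

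For the error term, I would invoke Lemmas~\ref{lem:errL} and~\ref{lem:errL2}, which (using Lemma~\ref{lem:p-sharp-expect-bound} and the quantitative control on the coefficients $c_{k,\mu}$ obtained from the $\pstar{}$-to-$\psharp{}$ conversion formulas in Section~\ref{sec:work-it}) yield $|\eqref{eqn:errL}| \leq \exp(O(\c^2 \log \c)) \cdot \omega^{-2}$. Combining the two bounds with the additive $O(\omega^{-3})$ from Proposition~\ref{prop:rank-warmup} produces
\[
    \dkl{\SWunif{n}{\r}}{\SWunif{n}{\rp}} \leq \r\mleft(\tfrac{3}{\omega}\mright)^{\!\c} + \exp(O(\c^2 \log \c))\cdot\tfrac{1}{\omega^2} + \tfrac{203}{\omega^3}.
\]
To optimize, I would set $\c = \lceil 3\log(10\r)/\log(\omega/10) \rceil$, which is the smallest admissible value: this makes $\r(3/\omega)^\c \leq 1/\omega^3$ while keeping the error term as small as possible, giving the bound $\exp(O(\c^{2.01}))\cdot\omega^{-2}$ claimed in the theorem. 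The ``in particular'' follows by plugging in $\omega = \exp(\Theta(\log^{.67}\r))$: then $(\log\r)/(\log\omega) = \Theta(\log^{.33} \r)$, the exponential factor is $\exp(\log^{.67 \cdot 2.01} \r) = \exp(o(\log \r)) = \r^{o(1)}$, and $\omega^{-2}$ dominates to give $o_\r(1)$. The only real obstacle is verifying the combinatorial identity driving the cancellation and correctly tracking the polynomial coefficients; both are handled by the lemmas already in place, so the proof is essentially a bookkeeping assembly of results from the preceding subsections.
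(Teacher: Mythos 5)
Your proposal is correct and follows essentially the same route as the paper: reduce via Proposition~\ref{prop:rank-warmup}, convert to $\psharp{}$'s with Proposition~\ref{prop:p-to-psharp}, kill the main term through the alternating-sum identity and bound the tail by $\r(3/\omega)^{\c}$, control the error via Lemmas~\ref{lem:errL} and~\ref{lem:errL2}, and then take $\c = \lceil 3\log(10\r)/\log(\omega/10)\rceil$. One tiny slip in your final check: with $\omega = \exp(\Theta(\log^{.67}\r))$ the exponential factor is $\exp(O((\log^{.33}\r)^{2.01})) = \exp(O(\log^{.66}\r))$, not $\exp(\log^{.67\cdot 2.01}\r)$; the conclusion $o_\r(1)$ is unaffected.
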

By Pinsker's inequality we may conclude also that $\dtv{\SWunif{n}{\r}}{\SWunif{n}{\rp}} \leq o_r(1)$ unless $n = \r^{2 - O(1/\log^{.33} \r)} = \wt{\Omega}(r^2)$.  This completes the proof of the rank-$r$ versus rank-$(r+1)$ testing lower bound; in particular, the more precise bound~\eqref{eqn:precise-rank-lower-bound} in the case $\Delta = 1$.

\subsection{Extension to $\Delta > 1$}         \label{sec:tv-improvement}
Let us henceforth fix the parameter $\c = \lceil 3\frac{\log(10\r)}{\log(\omega/10)} \rceil$.  To recap the preceding section we saw that
\begin{equation}        \label{eqn:remind-me}
    \abs{\E[\L_\c(\blambda)]} \leq \exp(O(\c^{2.01})) \cdot \frac{1}{\omega^2}, \quad
    \text{and hence}\quad \dkl{\SWunif{n}{\r}}{\SWunif{n}{\r+1}} \leq \exp(O(\c^{2.01})) \cdot \frac{1}{\omega^2}.
\end{equation}
If we apply Pinsker's inequality to the latter bound we obtain
\[
    \dtv{\SWunif{n}{\r}}{\SWunif{n}{\r + 1}} \leq \exp(O(\c^{2.01})) \cdot \frac{1}{\omega}.
\]
The key to getting a good lower bound when $\Delta > 1$ is to show that Pinsker's inequality is not sharp in our setting, and in fact the following is true:
\begin{theorem}                                     \label{thm:pinsker-not-sharp}
    For any $200 \leq \omega \leq \sqrt{\r}$, if $n = \frac{\r^2}{\omega^2}$ then
    \[
        \dtv{\SWunif{n}{\r}}{\SWunif{n}{\r+1}}  \leq \exp(O(\c^{2.01})) \cdot \frac{1}{\omega^2}.
    \]
\end{theorem}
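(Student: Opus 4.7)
The plan is to mirror the KL analysis but using the \emph{second moment} $\E[\L_\c^2]$ rather than the mean, exploiting the sharper concentration of the log-likelihood ratio than Pinsker's inequality predicts. Specifically, I will establish $\E_{\blambda \sim \SWunif{n}{\r}}[\L_\c(\blambda)^2] \leq \exp(O(\c^{2.01}))/\omega^4$ by extending the combinatorial cancellation argument of Section~\ref{sec:L-psharp}, and then apply Cauchy--Schwarz to obtain the $d_{TV}$ bound.

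The first step reduces the $d_{TV}$ bound to a second-moment estimate on $\L_\c$. Set $f = P_{\rp}/P_\r$ where $P_\r = \SWunif{n}{\r}$ and $P_{\rp} = \SWunif{n}{\rp}$, and let $X = \ln(P_\r/P_{\rp})$. The derivation preceding Proposition~\ref{prop:rank-warmup} in fact gives $X(\lambda) = \L_\c(\lambda) + O(1/\omega^3)$ for usual $\lambda$ (plus a deterministic shift of order $n/\r^3$), and since $|\pstar{k}|$ is bounded on usual $\lambda$ one has $|X(\lambda)| = O(1/\omega^2)$. Using $|e^{-X} - 1| \leq |X|(1 + O(|X|))$ then gives $|f(\lambda) - 1| \leq (1+o(1))|\L_\c(\lambda)| + O(1/\omega^3)$ on usual~$\lambda$. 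For unusual $\lambda$, we bound the contribution crudely via $\E_{P_\r}[|f-1| \cdot 1_{\text{unusual}}] \leq P_\r(\text{unusual}) + P_{\rp}(\text{unusual}) \leq 2 \cdot 2^{-20\r/\omega}$ by applying Lemma~\ref{lem:lambda-usual} to both measures. Combining with $d_{TV} = \tfrac12 \E_{P_\r}[|f - 1|]$ and the Cauchy--Schwarz bound $\E_{P_\r}[|\L_\c|] \leq \sqrt{\E_{P_\r}[\L_\c^2]}$, one concludes
\[
    \dtv{\SWunif{n}{\r}}{\SWunif{n}{\rp}} \leq \sqrt{\E[\L_\c(\blambda)^2]} + O(1/\omega^3),
\]
so it suffices to show $\E[\L_\c^2] \leq \exp(O(\c^{2.01}))/\omega^4$.

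The second step computes $\E[\L_\c^2]$ by mirroring Section~\ref{sec:L-psharp}. Using Proposition~\ref{prop:p-to-psharp}, write $\L_\c = \sum_\mu A_\mu \psharp{\mu}$ for appropriate coefficients $A_\mu$ (obtained from the combination of $\pstar{k} \to \psharp{\mu}$ expansions weighted by $(-1)^k/(k\r^k)$), so that
\[
    \L_\c^2 = \sum_{\mu_1, \mu_2} A_{\mu_1} A_{\mu_2} \psharp{\mu_1} \psharp{\mu_2} = \sum_\tau \psharp{\tau} \cdot \Biggl(\sum_{\mu_1, \mu_2} A_{\mu_1} A_{\mu_2} f^\tau_{\mu_1 \mu_2}\Biggr),
\]
where the $f^\tau_{\mu_1 \mu_2}$ are the structure constants of Corollary~\ref{cor:structure-coefficients-rewrite}. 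Taking expectations via Corollary~\ref{cor:sw-expect} and regrouping by $|\tau|$, I expect the same combinatorial vanishing identity $\sum_{h=0}^{s} \tfrac{(-1)^h}{h+1}\binom{s+h}{h}\binom{s}{h} = 0$ used in the proof of~\eqref{eqn:EL-bound} to force the leading-order terms (in powers of $1/\omega$) to cancel, but at a shifted weight level that delivers the required extra $1/\omega^2$ factor compared with the $\E[\L_\c]$ analysis. The error-term bounds from Lemmas~\ref{lem:errL},~\ref{lem:errL2} contribute the $\exp(O(\c^{2.01}))$ prefactor, giving $\E[\L_\c^2] \leq \exp(O(\c^{2.01}))/\omega^4$.

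The main obstacle is carrying out the combinatorial cancellation for the double-sum $\sum_{\mu_1,\mu_2} A_{\mu_1} A_{\mu_2} f^\tau_{\mu_1 \mu_2}$. Unlike the single-sum case in Section~\ref{sec:L-psharp}, where the index $\mu$ partitions naturally by $|\mu|$, the structure constants $f^\tau_{\mu_1 \mu_2}$ contribute at multiple weight levels depending on how cycles of $\mu_1$ and $\mu_2$ overlap in forming $\tau$: the dominant ``non-overlapping'' case $\tau = \mu_1 \cup \mu_2$ must be supplemented by carefully tracked ``overlap'' corrections from smaller $|\tau|$ (where $\weight(\tau) < \weight(\mu_1) + \weight(\mu_2)$, via Proposition~\ref{prop:should-be-in-a-structure-constants-discussion-or-theorem}). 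Developing a two-variable analog of the combinatorial identity to manage all these contributions is algebraically parallel to the single-sum case but tedious to write out; once completed, combining with the first step yields the theorem.
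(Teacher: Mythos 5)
Your first step contains a false claim: you assert that $|X(\blambda)| = O(1/\omega^2)$ pointwise on usual $\lambda$ ``since $|\pstar{k}|$ is bounded on usual $\lambda$.'' On usual $\lambda$ the available estimate is only $|\pstar{k}(\lambda)| \leq 2\sqrt{n}\,(10\r/\omega)^k$, which gives $|\L_\c(\lambda)| \lesssim \r/\omega^3$ --- a quantity that grows with $\r$ (take $\omega = 200$ fixed), not $O(1/\omega^2)$. Consequently the linearization $|e^{-X}-1| \leq |X|(1+O(|X|))$ cannot be invoked on all usual $\lambda$. The paper's proof closes exactly this hole by introducing the event ``usual$^+$'' (usual and $|\L^*_\infty(\lambda)| \leq 2$), showing via Markov's inequality applied to the second-moment bounds that its complement has probability $\exp(O(\c^{2.01}))/\omega^4$ under both distributions (Lemma~\ref{lem:usual+}), and only then using $|1-e^{u}| \leq 4|u|$ on the truncated event. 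Your reduction can be repaired in the same way (your framing with $f = \SWunif{n}{\r+1}/\SWunif{n}{\r}$ and expectations under $\SWunif{n}{\r}$ even spares you the $(\r+1)$-analogue of the second moment that the paper needs), but as written the step is a gap.

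The more serious gap is your second step: the estimate $\E[\L_\c(\blambda)^2] \leq \exp(O(\c^{2.01}))/\omega^4$ is the heart of the theorem, and you leave it as a conjecture that ``the same combinatorial vanishing identity \dots at a shifted weight level'' will deliver it. In the double expansion you propose, the diagonal contributions are individually enormous --- e.g.\ the $\mu_1=\mu_2=(2)$ term contributes roughly $\falling{n}{4}/(4\r^6) \approx \r^2/(4\omega^8)$ --- so the needed cancellation is not a two-variable analogue of $\sum_{h}\frac{(-1)^h}{h+1}\binom{s+h}{h}\binom{s}{h}=0$; it is the square of the first-moment cancellation together with a separate variance estimate, and it is far from clear your regrouping by $|\tau|$ would reproduce it. The paper obtains the bound with no new identity at all: write $\E[\L_\c^2] = \Var[\L_\c] + \E[\L_\c]^2$, bound the second term by squaring the already-established mean bound~\eqref{eqn:remind-me}, and bound $\Var[\L_\c]$ term-by-term (Theorem~\ref{thm:Delta-bound-variance} via Lemma~\ref{lem:psharp-var}) using two facts: $\E[\psharp{\mu\cup\mu}] - \E[\psharp{\mu}]^2 = \r^{2\ell(\mu)-2|\mu|}\bigl(\falling{n}{(2|\mu|)} - (\falling{n}{|\mu|})^2\bigr) \leq 0$ by Corollary~\ref{cor:sw-expect}, and the remaining structure-constant corrections have weight at most $2\weight(\mu)-2$ by Proposition~\ref{prop:should-be-in-a-structure-constants-discussion-or-theorem}, so Lemma~\ref{lem:p-sharp-expect-bound} supplies the extra $1/\omega^4$ (with Lemma~\ref{lem:structure-bound} controlling the coefficients). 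Until you either carry out your proposed double-sum cancellation in full or switch to this variance route, the key second-moment estimate --- and hence the theorem --- remains unproven.
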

From this we can obtain the testing bound~\eqref{eqn:precise-rank-lower-bound} for rank-$r$ versus rank-$(r + \Delta)$ (where $1 \leq \Delta \leq r$) simply by using the triangle inequality. Specifically, given $r \leq r' \leq 2r$ and~$n$, define $\omega_{r'}$ by $n = \frac{(r')^2}{\omega_{r'}^2}$.  Applying Theorem~\ref{thm:pinsker-not-sharp} for each $r'$, we get
\[
    \dtv{\SWunif{n}{\r'}}{\SWunif{n}{\r'+1}}  \leq \exp(O((\log \r') / (\log \omega_{r'}))^{2.01}) \cdot \frac{1}{\omega_{r'}^2} \quad \text{for all $r \leq r' < 2r$.}
\]
But $\omega_{r'}$ is within a factor of~$2$ of $\omega_r$ for all $r \leq r' \leq 2r$; thus by adjusting the constant in the $O(\cdot)$, the above holds with $\omega_r$ in place of $\omega_{r'}$.  Applying the triangle inequality, we get
\[
    \dtv{\SWunif{n}{\r}}{\SWunif{n}{\r'+\Delta}}  \leq \exp(O((\log \r) / (\log \omega_r))^{2.01}) \cdot \frac{1}{\omega_r^2} \cdot \Delta.
\]
Again, taking $\omega_r = \exp(O(\log^{.67} \r))$, we get
\[
    \dtv{\SWunif{n}{\r}}{\SWunif{n}{\r'+\Delta}}  \leq \frac{n}{r^{2-O(1/\log^{.33} r)}} \cdot \Delta,
\]
and this completes the proof of the rank-testing lower bound~\eqref{eqn:precise-rank-lower-bound}.\\

Thus it remains to prove Theorem~\ref{thm:pinsker-not-sharp}.  The main result we need for this is the following:
\begin{theorem}                     \label{thm:Delta-bound-variance}
    $\displaystyle \Var_{\blambda \sim \SWunif{n}{\r}}[\L_\c(\blambda)] \leq \exp(O(\c^{2.01})) \cdot \frac{1}{\omega^4}.$
\end{theorem}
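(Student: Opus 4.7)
The plan is to compute $\Var[\L_\c(\blambda)] = \E[\L_\c(\blambda)^2] - (\E[\L_\c(\blambda)])^2$ by expanding $\L_\c^2$ in the $\psharp{}$-basis and running an analysis parallel to Section~\ref{sec:L-psharp}, with the aim of extracting an extra factor of $1/n = \omega^2/\r^2$ from the variance operation on top of the $\omega^{-2}$ bound already established for $|\E[\L_\c]|$. First I would expand $\L_\c = \sum_{\mu:\weight(\mu)\leq\c+1} a_\mu\psharp{\mu}$ via Proposition~\ref{prop:p-to-psharp}; Proposition~\ref{prop:should-be-in-a-structure-constants-discussion-or-theorem} then gives
\[
\psharp{\mu}\psharp{\nu} = \psharp{\mu\cup\nu} + \sum_{\tau:\weight(\tau)\leq\weight(\mu)+\weight(\nu)-2} f^\tau_{\mu\nu}\psharp{\tau},
\]
so $\Var[\L_\c] = A + B$ where
\[
A = \sum_{\mu,\nu}a_\mu a_\nu\bigl(\E[\psharp{\mu\cup\nu}(\blambda)] - \E[\psharp{\mu}(\blambda)]\E[\psharp{\nu}(\blambda)]\bigr), \quad B = \sum_{\mu,\nu,\tau}a_\mu a_\nu f^\tau_{\mu\nu}\E[\psharp{\tau}(\blambda)].
\]

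For the ``top-weight'' contribution $A$, Corollary~\ref{cor:sw-expect} gives $\E[\psharp{\mu\cup\nu}] - \E[\psharp{\mu}]\E[\psharp{\nu}] = \r^{\ell(\mu)+\ell(\nu)-|\mu|-|\nu|}(\falling{n}{|\mu|+|\nu|} - \falling{n}{|\mu|}\falling{n}{|\nu|})$. A telescoping argument using $\falling{m}{b} - \falling{(m-1)}{b} = b\,\falling{(m-1)}{(b-1)}$ yields $\falling{n}{a+b} - \falling{n}{a}\falling{n}{b} = -ab\,n^{a+b-1} + O(n^{a+b-2})$, so each summand of $A$ acquires a factor $|\mu||\nu|/n \leq \c^2\omega^2/\r^2$ relative to $\E[\psharp{\mu}]\E[\psharp{\nu}]$. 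Crucially, since the leading coefficient $-|\mu||\nu|$ factorizes across $(\mu,\nu)$ and $\weight(\mu\cup\nu) = \weight(\mu)+\weight(\nu)$, the leading part of $A$ decouples as essentially the product of two single-$\mu$ sums of the type handled in Section~\ref{sec:L-psharp}, so the same telescoping/cancellation mechanism (including the vanishing of $\sum_{h=0}^{s}(-1)^h\binom{s+h}{h}\binom{s}{h}/(h+1)$) applies, yielding $|A|\leq\exp(O(\c^{2.01}))/\omega^4$.

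For the ``lower-weight'' contribution $B$, each $\tau$ appearing has $\weight(\tau)\leq\weight(\mu)+\weight(\nu)-2$, which combined with Lemma~\ref{lem:p-sharp-expect-bound} gives $|\E[\psharp{\tau}(\blambda)]|\leq\r^{\weight(\mu)+\weight(\nu)-2}/\omega^{2|\tau|}$; this is smaller by at least $\r^{-2}$ than the corresponding top-weight expectation $\E[\psharp{\mu\cup\nu}]$ (one must be careful with the $\omega$-exponent since $|\tau|$ can be smaller than $|\mu|+|\nu|$, but the constraint $\ell(\tau)\leq|\tau|$ together with $\weight(\tau)\leq\weight(\mu)+\weight(\nu)-2$ limits the potential excess $\omega$-factor). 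Bounding $|f^\tau_{\mu\nu}|$ via Corollary~\ref{cor:structure-coefficients-rewrite} and $|a_\mu|$ as in Lemma~\ref{lem:errL2}, the $\r^{-2}$ converts to an $\omega^{-2}/n$ extra saving on top of the $\omega^{-2}$ estimate already available from the expectation analysis, and the combinatorial constants are absorbed into $\exp(O(\c^{2.01}))$ to give $|B|\leq\exp(O(\c^{2.01}))/\omega^4$.

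The main obstacle will be the top-weight step: the naive bound $|A|\leq(\c^2\omega^2/\r^2)(\sum_\mu|a_\mu|\E[\psharp{\mu}])^2$ is far too weak, because without cancellations $\sum_\mu|a_\mu|\E[\psharp{\mu}]$ is much larger than the cancelled bound $\exp(O(\c^{2.01}))/\omega^2$ on $|\E[\L_\c]|$. The resolution is the observation that the factorization of the leading error $-|\mu||\nu|n^{|\mu|+|\nu|-1}$ across $(\mu,\nu)$ makes $A$'s leading part literally a bilinear form to which the Section~\ref{sec:L-psharp} cancellation identity applies in both indices independently; the subleading $O(n^{|\mu|+|\nu|-2})$ corrections in the $\falling{}$-difference give strictly smaller contributions that can be bounded crudely by $\exp(O(\c^{2.01}))/\omega^4$ as in Lemmas~\ref{lem:errL}~and~\ref{lem:errL2}.
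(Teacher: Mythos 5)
Your plan is workable, but it takes a genuinely different---and heavier---route than the paper's. The paper never expands $\Var[L_C(\blambda)]$ globally and never re-invokes the cancellation identity from the main-term analysis. Instead it proves a per-summand bound (Lemma~\ref{lem:psharp-var}): for a single $\psharp{\mu}$ of weight $k$, Proposition~\ref{prop:should-be-in-a-structure-constants-discussion-or-theorem} gives $\Var[\psharp{\mu}(\blambda)] = \E[\psharp{\mu\cup\mu}(\blambda)]-\E[\psharp{\mu}(\blambda)]^2+\E[q_\mu(\blambda)]$ with every term of $q_\mu$ of weight at most $2k-2$; the top-weight difference equals $\r^{2\ell(\mu)-2|\mu|}\bigl(\falling{n}{(2|\mu|)}-(\falling{n}{|\mu|})^2\bigr)$, which is \emph{nonpositive} and is simply discarded, and the lower-weight remainder is bounded crudely via Lemmas~\ref{lem:structure-bound} and~\ref{lem:p-sharp-expect-bound}. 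Then $\Var[\bX_1+\cdots+\bX_m]\leq m(\Var[\bX_1]+\cdots+\Var[\bX_m])$ with $m=\exp(O(\sqrt{C}))$ summands finishes the theorem: no cancellation across different $\mu$'s is needed, because each normalized summand $\psharp{\mu}(\blambda)/\r^k$ already has variance at most $\exp(O(k^2\log k))/(\r^2\omega^4)$. Your approach buys a global identity that in principle tracks the leading-order size of the variance, but at the price of redoing the Section~\ref{sec:L-psharp} bookkeeping in two indices; the paper's sign observation (equivalently your $\falling{n}{a+b}\leq\falling{n}{a}\,\falling{n}{b}$) makes all of that unnecessary.

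Two specific points. First, the ``main obstacle'' you identify is not actually an obstacle: once every summand of $A$ carries a factor of order $|\mu||\nu|/n$ (with $\exp(O(C\log C))/n^2$ corrections), the crude bound already suffices. The largest value of $|a_\mu|\,\E[\psharp{\mu}(\blambda)]$ is of order $\r/\omega^4$ (e.g.\ $\mu=(2)$, $k=2$), so
\[
    \frac{C^2}{n}\Bigl(\sum_\mu|a_\mu|\,\E[\psharp{\mu}(\blambda)]\Bigr)^2
    \leq \exp(O(C^{2.01}))\cdot\frac{\omega^2}{\r^2}\cdot\frac{\r^2}{\omega^8}
    = \exp(O(C^{2.01}))\cdot\frac{1}{\omega^6},
\]
comfortably within the target; your decoupling-plus-cancellation step for $A$ is correct (the extra factors $|\mu|$ and $n^{|\mu|}$ are constant on each group of fixed $|\mu|$, so the identity transfers) but superfluous. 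Second, your treatment of $B$ has a genuine soft spot at $\tau=(1)$: by Corollary~\ref{cor:structure-coefficients-rewrite} this $\tau$ arises only from $\mu=\nu=(1)$, where $\E[\psharp{(1)}(\blambda)]=n=\r^2/\omega^2$, and the constraints you cite ($\ell(\tau)\leq|\tau|$, $\weight(\tau)\leq\weight(\mu)+\weight(\nu)-2$) together with the generic coefficient bound $|a_\mu|\lesssim \r^{\,1-\weight(\mu)}$ only give $a_{(1)}^2\,n\lesssim 1/\omega^2$, which misses the target. What saves you is that $(1)$ enters $L_C$ only through the error observables $\calO_k/(k\r^k)$ with $k\geq 2$, so in fact $|a_{(1)}|\leq\exp(O(C^{2.01}))\r^{-2}$ and the term is $O(\exp(O(C^{2.01}))/(\r^2\omega^2))\leq \exp(O(C^{2.01}))/\omega^4$. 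This is exactly the corner the paper handles explicitly inside Lemma~\ref{lem:psharp-var} (the case $\nu=(1)$, using $\omega\leq\sqrt{\r}$ and $k\geq 3$), so you should make the analogous case analysis explicit in your write-up.
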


To prove Theorem~\ref{thm:Delta-bound-variance} we will employ the following lemma:
\begin{lemma}                                       \label{lem:psharp-var}
    Let $\mu$ be a partition with $\weight(\mu) = k \geq 2$.  Then
     \[
        \Var_{\blambda \sim \SWunif{n}{\r}}[\psharp{\mu}(\blambda)] \leq \exp(O(k^2 \log k)) \cdot \r^{2k-2} \cdot (1/\omega^4).
     \]
\end{lemma}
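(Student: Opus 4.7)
My plan is to expand $\psharp{\mu}^2$ via the structure constants of $\Lambda^*$ in the $\{\psharp{\tau}\}$ basis, take the expectation using Corollary~\ref{cor:sw-expect}, and show that the leading contribution to the variance ``almost'' cancels. First, a trivial base case: if $k=2$ then the only option is $\mu=(1)$, and $\psharp{1}(\lambda)=|\lambda|=n$ is deterministic, so the variance is $0$; thus assume $k \geq 3$, which forces $m \coloneqq |\mu| \geq \lceil k/2 \rceil \geq 2$. By Proposition~\ref{prop:should-be-in-a-structure-constants-discussion-or-theorem},
\[
    \psharp{\mu}^2 \;=\; \psharp{\mu \cup \mu} \;+\; \sum_{\weight(\tau) \leq 2k-2} f^{\tau}_{\mu\mu}\,\psharp{\tau}.
\]
Using Corollary~\ref{cor:sw-expect} together with $\ell(\mu \cup \mu) = 2\ell(\mu) = 2(k-m)$ and $|\mu \cup \mu| = 2m$, and subtracting $(\E \psharp{\mu}(\blambda))^2 = (\falling{n}{m})^2\,\r^{2(k-2m)}$, I obtain
\[
    \Var_{\blambda \sim \SWunif{n}{\r}}[\psharp{\mu}(\blambda)] \;=\; \r^{2k-4m}\bigl[\falling{n}{2m} - (\falling{n}{m})^2\bigr] \;+\; \sum_{\weight(\tau) \leq 2k-2} f^{\tau}_{\mu\mu}\,\E_{\blambda \sim \SWunif{n}{\r}}[\psharp{\tau}(\blambda)].
\]

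The key observation is that the two ``leading'' contributions almost cancel: since $\falling{n}{2m} = \falling{n}{m}\cdot\falling{(n-m)}{m}$, a telescoping computation (summing $m$ terms, each a difference of $m$) yields
\[
    0 \;\leq\; (\falling{n}{m})^2 - \falling{n}{2m} \;=\; \falling{n}{m}\bigl(\falling{n}{m} - \falling{(n-m)}{m}\bigr) \;\leq\; m^2 n^{2m-1}.
\]
Substituting $n = \r^2/\omega^2$, the first term of the variance is therefore bounded in absolute value by
\[
    \r^{2k-4m} \cdot m^2 n^{2m-1} \;=\; \frac{m^2 \,\r^{2k-2}}{\omega^{4m-2}} \;\leq\; \frac{k^2\,\r^{2k-2}}{\omega^4},
\]
where the last step uses $m \geq 2$ and $\omega \geq 1$. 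This is the essential place where the hypothesis $m \geq 2$ saves us: had we only $m=1$, we would get $1/\omega^2$ instead of $1/\omega^4$, which would be fatal.

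For the second ``lower-order'' sum, I bound each piece separately. By Corollary~\ref{cor:structure-coefficients-rewrite}, the structure constant satisfies $|f^{\tau}_{\mu\mu}| \leq C^{|\tau|}_{m,m} \leq (m!)^2 \leq \exp(O(k\log k))$. By Lemma~\ref{lem:p-sharp-expect-bound},
\[
    \E[\psharp{\tau}(\blambda)] \;\leq\; \r^{\weight(\tau)} \cdot (1/\omega^2)^{|\tau|} \;\leq\; \r^{2k-2}/\omega^4,
\]
where the last inequality uses $\weight(\tau) \leq 2k-2$ together with the fact that the structure constant formula forces $|\tau| \geq \max(|\mu|,|\mu|) = m \geq 2$. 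The number of $\tau$'s contributing is at most the number of partitions of size~$\leq 2m$, which is $\exp(O(\sqrt{k}))$. Collecting, the second term is bounded by $\exp(O(k\log k))\cdot \r^{2k-2}/\omega^4$. Adding the two contributions gives the claimed bound (with room to spare: the true exponent is $O(k\log k)$, not $O(k^2\log k)$). The main subtlety throughout is simply to chase through the hypothesis $m \geq 2$ carefully---once that is in hand, the calculation is routine.
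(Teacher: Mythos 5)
Your proof is correct, and it follows the same basic strategy as the paper: expand $\psharp{\mu}^2$ via Proposition~\ref{prop:should-be-in-a-structure-constants-discussion-or-theorem}, evaluate expectations with Corollary~\ref{cor:sw-expect}/Lemma~\ref{lem:p-sharp-expect-bound}, and treat the $\psharp{\mu\cup\mu}$ piece and the weight-$\leq 2k-2$ remainder separately. The local differences are worth noting, and they mostly work in your favor. For the leading term, the paper simply observes that $\r^{2k-4|\mu|}\bigl(\falling{n}{2|\mu|}-(\falling{n}{|\mu|})^2\bigr)$ is nonpositive and discards it (an upper bound on the variance is all that is needed), whereas you bound its absolute value by a telescoping estimate $m^2 n^{2m-1}$; your route is finer but strictly unnecessary. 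For the remainder, the paper bounds the coefficients via the general Lemma~\ref{lem:structure-bound}, which costs $\exp(O(k^2\log k))$, while you use Corollary~\ref{cor:structure-coefficients-rewrite} directly to get $|f^{\tau}_{\mu\mu}|\leq C^{|\tau|}_{m,m}\leq (m!)^2=\exp(O(k\log k))$ -- sharper, and legitimate since here one only multiplies two basis elements rather than an arbitrary monomial in $\psharp{}$'s. Finally, the paper must handle the partition $\nu=(1)$ separately (its expectation bound $\r^2/\omega^2$ only becomes $\leq \r^{2k-2}/\omega^4$ after invoking $\omega\leq\sqrt{\r}$ and $k\geq 3$), whereas you observe that the structure constants vanish unless $|\tau|\geq|\mu|=m\geq 2$, so $\tau=(1)$ never appears; this is a genuine small simplification that removes the dependence on the hypothesis $\omega\leq\sqrt{\r}$ at this point in the argument.
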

\begin{proof}
    If $|\mu| = 1$ then $\psharp{\mu}(\lambda) = n$ which has variance~$0$.  Thus we may assume $|\mu| \geq 2$ and hence $k \geq 3$.  Using Proposition~\ref{prop:should-be-in-a-structure-constants-discussion-or-theorem},
    \begin{equation}                                        \label{eqn:var-psharp}
        \Var[\psharp{\mu}(\blambda)] = \E[\psharp{\mu}(\blambda)^2] - \E[\psharp{\mu}(\blambda)]^2 = \E[\psharp{\mu \cup \mu}(\blambda)] - \E[\psharp{\mu}(\blambda)]^2 + \E[q_\mu(\blambda)]
    \end{equation}
    where $q_\mu(\lambda)$ is a certain linear combination of $\psharp{\nu}$ polynomials, each of weight at most $2k-2$.
    Regarding the first two quantities here, Corollary~\ref{cor:sw-expect} tells us that
    \[
        \E[\psharp{\mu \cup \mu}(\blambda)] - \E[\psharp{\mu}(\blambda)]^2 =\falling{n}{(2|\mu|)} \r^{2\ell(\mu)-2|\mu|} - (\falling{n}{|\mu|} \r^{\ell(\mu)-|\mu|})^2 = \r^{2\ell(\mu)-2|\mu|}(\falling{n}{(2|\mu|)} - (\falling{n}{|\mu|})^2),
    \]
    which is evidently nonpositive.  Thus it suffices to prove the upper bound
    \begin{equation}            \label{eqn:verif-me}
        \left|\E[q_\mu(\blambda)]\right| \leq \exp(O(k^2 \log k)) \cdot \r^{2k-2} \cdot (1/\omega^4).
    \end{equation}
    By Lemma~\ref{lem:structure-bound}, the coefficients on the $\psharp{\nu}$'s in the linear combination $q_\mu(\lambda)$ each have magnitude at most $\exp(O(k^2 \log k))$, and there are at most $2^{O(\sqrt{k})}$ of them.  Thus~\eqref{eqn:verif-me} follows provided we can show $\E[\psharp{\nu}(\blambda)] \leq \r^{2k-2}/\omega^4$ for all $\nu$ of weight at most $2k-2$.  This is immediate from Lemma~\ref{lem:p-sharp-expect-bound} for all $\nu \neq (1)$, and when $\nu = (1)$ it still holds: Lemma~\ref{lem:p-sharp-expect-bound} gives us the bound $\r^2/\omega^2 \leq \r^3/\omega^4 \leq \r^{2k-2}/\omega^4$, the first inequality using $\omega \leq \sqrt{\r}$ and the second using $k \geq 3$.
\end{proof}

We can now prove Theorem~\ref{thm:Delta-bound-variance}.
\begin{proof}[Proof of Theorem~\ref{thm:Delta-bound-variance}.]
    Recall identity~\eqref{eqn:LC-approx}:
    \[
        \L_\c(\lambda) =  \sum_{k=2}^\c \frac{(-1)^k}{k\r^k} \cdot \sum_{\weight(\mu) = k+1} \frac{\falling{k}{(\ell(\mu)-1)}}{\partfact{\mu}} \psharp{\mu}(\lambda) + \sum_{k=2}^\c \frac{(-1)^k \calO_{k}(\lambda)}{k\r^k}.
    \]
    We claim that for each $2 \leq k \leq \c$,
    \begin{equation}    \label{eqn:oh-me}
        \Var\left[\frac{(-1)^k \calO_{k}(\blambda)}{k\r^k}\right] \leq \exp(O(\c^{2.01})) \cdot \frac{1}{\omega^4},
    \end{equation}
    and that furthermore for each $\mu$ of weight $k+1$ we have
    \begin{equation}    \label{eqn:oh-my}
        \Var\left[ \frac{(-1)^k}{k\r^k}\cdot\frac{\falling{k}{(\ell(\mu)-1)}}{\partfact{\mu}} \psharp{\mu}(\blambda)\right] \leq \exp(O(\c^{2.01})) \cdot \frac{1}{\omega^4}.
    \end{equation}
    This is sufficient to complete the proof, as in general
    \begin{equation}        \label{eqn:var-is-cool}
        \Var[\bX_1 + \cdots + \bX_m] \leq m (\Var[\bX_1] + \cdots + \Var[\bX_m]);
    \end{equation}
    in our particular case we have only $m = \exp(O(\sqrt{\c}))$ summands, and this factor can be absorbed into the target variance bound of $\exp(O(\c^{2.01})) \cdot (1/\omega^4)$.  To verify~\eqref{eqn:oh-me}, first recall that each $\calO_k(\blambda)$ is a linear combination of $\psharp{\nu}(\lambda)$'s for $\nu$ of weight at most~$k \leq \c$; further, the sum of the absolute value of the coefficients is at most $\exp(O(\c^{2.01}))$ (see~\eqref{eqn:tedious}).  Using~\eqref{eqn:var-is-cool} again, it therefore suffices to check that
    \[
        \Var\left[\frac{\psharp{\nu}(\blambda)}{\r^k}\right] \leq \exp(O(\c^{2.01})) \cdot \frac{1}{\omega^4}
    \]
    when $\weight(\nu) \leq k \leq \c$.  By Lemma~\ref{lem:psharp-var} this is true, with a factor of $\r^{-2}$ to spare.

    To verify~\eqref{eqn:oh-my}, we may ignore the factor $\frac{(-1)^k}{k \cdot \partfact{\mu}}$, and also ignore the factor $\falling{k}{(\ell(\mu)-1)}$ as it contributes at most a multiplicative $\c^\c \ll \exp(O(\c^{2.01}))$.  Thus it suffices to show $\Var[\psharp{\mu}(\blambda)/\r^k] \leq \exp(O(\c^{2.01}))/\omega^4$ for $\mu$ of weight $k+1$ (and $k \leq \c$).  But this is immediate from Lemma~\ref{lem:psharp-var}.
\end{proof}

We now work towards the proof of Theorem~\ref{thm:pinsker-not-sharp}.  Adding Theorem~\ref{thm:Delta-bound-variance} and the square of~\eqref{eqn:remind-me} we obtain
\begin{equation} \label{eqn:monkey}
    \E_{\blambda \sim \SWunif{n}{\r}}[\L_\c(\blambda)^2] \leq \exp(O(\c^{2.01})) \cdot \frac{1}{\omega^4}.
\end{equation}
We would now like to similarly claim that
\newcommand{\Lp}{L^{\!\scriptscriptstyle{+}\!}}
\begin{equation} \label{eqn:monkey2}
    \E_{\blambda \sim \SWunif{n}{\rp}}[\Lp_\c(\blambda)^2] \leq \exp(O(\c^{2.01})) \cdot \frac{1}{\omega^4},
\end{equation}
where we are writing
\[
    \Lp_\c(\lambda) \coloneqq \sum_{k=2}^\c\frac{(-1)^k \pstar{k}(\lambda)}{k(\r+1)^k}.
\]
To obtain this, it suffices to repeat all of the arguments beginning with Section~\ref{sec:L-psharp} until this point; the only thing that really changes is that $\omega = \omega_r$ needs to be replaced with $\omega_{r+1}$, but this has a negligible effect on the bounds (and indeed usually very slightly improves them).

Next, we claim that Lemma~\ref{lem:down-with-infinity} continues to hold if we replace $\L_\c(\lambda)$ with the analogous $\Lp_\c(\lambda)$.  The key change to the proof comes in the last main inequality, where we need to observe that the 
\[
    \left(\frac{1}{\r^k} - \frac{1}{(\r+\frac12)^k}\right) \leq \frac{k}{2\r^{k+1}}
\]
continues to hold if the left-hand side is replaced with
\[
    \left(\frac{1}{(\r+\frac12)^k} - \frac{1}{(\r+1)^k}\right).
\]

We need one more definition for the proof of Theorem~\ref{thm:pinsker-not-sharp}.
\begin{definition}
    Say that $\lambda \vdash n$ is \emph{usual$^+$} if it is usual and if furthermore $|\L^*_\infty(\lambda)| \leq 2$.
\end{definition}
\begin{lemma}                                       \label{lem:usual+}
    Both for $\blambda \sim \SWunif{n}{\r}$ and $\blambda \sim \SWunif{n}{\rp}$ it holds that
    \[
        \Pr[\blambda \textnormal{ not usual$^+$}] \leq \exp(O(\c^{2.01})) \cdot \frac{1}{\omega^4}.
    \]
\end{lemma}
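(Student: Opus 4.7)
The plan is to decompose the ``not usual$^+$'' event into two pieces: ``not usual'' and ``usual but $|\L^*_\infty(\blambda)| > 2$''. The first piece is handled directly by Lemma~\ref{lem:lambda-usual}, which gives probability at most $2^{-20\r/\omega}$, and this is negligible relative to the claimed bound (since $\omega \leq \sqrt{\r}$ implies $2^{-20\r/\omega} \leq 2^{-20\sqrt{\r}}$, which is much smaller than $1/\omega^4$, let alone $\exp(O(\c^{2.01}))/\omega^4$).

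For the second piece, the key step is to transfer the tail bound from $\L^*_\infty$ to the ``truncated, un-shifted'' proxy $\L_\c$, for which we have a usable second moment bound. Concretely, when $\blambda$ is usual, Lemma~\ref{lem:down-with-infinity} (whose hypothesis on $\c$ is satisfied by our choice $\c = \lceil 3 \log(10\r)/\log(\omega/10)\rceil$) gives
\[
    \bigl|\L^*_\infty(\blambda) - \L_\c(\blambda)\bigr| \leq \frac{201}{\omega^3} \leq 1,
\]
where we used the standing assumption $\omega \geq 200$. Hence for usual $\blambda$, the event $|\L^*_\infty(\blambda)| > 2$ implies $|\L_\c(\blambda)| > 1$. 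Applying Markov's inequality to $\L_\c(\blambda)^2$ together with the bound \eqref{eqn:monkey}, namely $\E_{\blambda\sim\SWunif{n}{\r}}[\L_\c(\blambda)^2] \leq \exp(O(\c^{2.01}))\cdot\omega^{-4}$, yields
\[
    \Pr_{\blambda \sim \SWunif{n}{\r}}\bigl[|\L_\c(\blambda)| > 1\bigr] \leq \exp(O(\c^{2.01}))\cdot\frac{1}{\omega^4}.
\]
Combining the two pieces via a union bound gives the claim for $\blambda \sim \SWunif{n}{\r}$.

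For $\blambda \sim \SWunif{n}{\rp}$, the argument is identical, using instead the $\Lp_\c$ version of Lemma~\ref{lem:down-with-infinity} (which as remarked above goes through with a trivial modification to the telescoping step, since $\frac{1}{(\r+\frac12)^k} - \frac{1}{(\r+1)^k} \leq \frac{k}{2\r^{k+1}}$ still holds), together with the second moment bound \eqref{eqn:monkey2}. The only point worth checking is that Lemma~\ref{lem:lambda-usual} remains valid under $\SWunif{n}{\rp}$ with an essentially unchanged constant; but Proposition~\ref{prop:increasing-seq} applied with dimension $\rp = \r+1$ gives $\Pr[\blambda_1 \geq B], \Pr[\blambda_1'\geq B] \leq 2^{-1-20\r/\omega}$ up to constants of no importance here, so the event ``not usual'' still has probability $\ll 1/\omega^4$. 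The dominant contribution in both cases is the $\exp(O(\c^{2.01}))/\omega^4$ term coming from the Markov argument, which matches the stated bound.
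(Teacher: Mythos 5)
Your proof is correct and follows essentially the same route as the paper: split ``not usual$^+$'' into ``not usual'' (killed by Lemma~\ref{lem:lambda-usual}, with the same easy check for $\SWunif{n}{\rp}$) and ``usual with $|\L^*_\infty(\blambda)|>2$'', then use Lemma~\ref{lem:down-with-infinity} (and its $\rp$-analogue) to pass to $\L_\c$ resp.\ $\Lp_\c$ and apply Markov's inequality with the second-moment bounds~\eqref{eqn:monkey},~\eqref{eqn:monkey2}. The paper states the last step contrapositively, but the argument is identical.
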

\begin{proof}
    For $\blambda \sim \SWunif{n}{\r}$, Lemma~\ref{lem:lambda-usual} tells us that 
    \[
        \Pr[\blambda \text{ not usual}] \leq 2^{-20\r/\omega} \leq 2^{-\Omega(\sqrt{\r})} \ll\exp(O(\c^{2.01})) \cdot \frac{1}{\omega^4}
    \]
    and it's easy to check that this is also true with plenty of room to spare for $\blambda \sim \SWunif{n}{\rp}$.  Thus it suffices to verify for both distributions on~$\blambda$ that the probability of $|\L^*_\infty(\lambda)| \leq 2$ satisfies the same upper bound. By applying Markov's inequality to~\eqref{eqn:monkey},~\eqref{eqn:monkey2} we get
    \[
        \Pr_{\blambda \sim \SWunif{n}{\r}}[\L_\c(\blambda)^2 \geq 1], \Pr_{\blambda \sim \SWunif{n}{\rp}}[\Lp_\c(\blambda)^2 \geq 1] \leq \exp(O(\c^{2.01})) \cdot \frac{1}{\omega^4}.
    \]
    Finally, when $\blambda$ is usual and $|\L_\c(\blambda)^2| \not \geq 1$, it follows that necessarily $|\L^*_\infty(\blambda)| \leq 2$, in light of Lemma~\ref{lem:down-with-infinity} and the fact that~$\frac{201}{\omega^3} \leq 1$.  As noted earlier, the $\rp$-analogue of Lemma~\ref{lem:down-with-infinity} holds, and hence we may draw the same conclusion concerning $\Lp_\c(\blambda)^2$.
\end{proof}

Finally we are ready to complete the proof of Theorem~\ref{thm:pinsker-not-sharp}. We begin with
\begin{multline*}
    \dtv{\SWunif{n}{\r}}{\SWunif{n}{\rp}}
    \leq \frac12 \Pr_{\blambda \sim \SWunif{n}{\r}}[\blambda \text{ not usual$^+$}] + \frac12\Pr_{\blambda \sim \SWunif{n}{\rp}}[\blambda \text{ not usual$^+$}] \\
        + \frac12\sum_{\text{usual$^+$~$\lambda$}} \ABS{\SWunif{n}{\rp}[\lambda] - \SWunif{n}{\r}[\lambda]}.
\end{multline*}
We can bound the first two terms above using Lemma~\ref{lem:usual+}. Indeed there is room to spare, as the bound we get is the square of what we can tolerate.  Thus it remains to bound the third term by $\exp(O(\c^{2.01})) \cdot \frac{1}{\omega^2}$.  For it we use
\begin{align}                                    
    \sum_{\text{usual$^+$~$\lambda$}} \ABS{\SWunif{n}{\rp}[\lambda] - \SWunif{n}{\r}[\lambda]}
    &= \E_{\blambda \sim \SWunif{n}{\rp}}\left[1_{\{\blambda\text{ usual$^+$}\}} \cdot \ABS{1-\frac{\SWunif{n}{\r}[\blambda]}{\SWunif{n}{\rp}[\blambda]}}\right] \nonumber\\
    &= \E_{\blambda \sim \SWunif{n}{\rp}}\left[1_{\{\blambda\text{ usual$^+$}\}} \cdot \ABS{1-\exp(u(\blambda))}\right] \label{eqn:im-tired}
\end{align}
where
\begin{equation}                            \label{eqn:hey}
    u(\blambda) = \ln\left(\frac{\SWunif{n}{\r}[\lambda]}{\SWunif{n}{\rp}[\lambda]}\right)
                = n \ln\left(1+\frac{1}{\r}\right) -\frac{n}{\r+\frac12} + \L^*_\infty(\blambda),
\end{equation}
the last equality holding from~\eqref{eqn:dkl2} (see also the sentence after~\eqref{eqn:dkl-prest}) under the assumption that $\blambda$ is usual (which we can indeed assume, since we're multiplying against $1_{\{\blambda\text{ usual$^+$}\}}$).  As we noted after~\eqref{eqn:dkl-prest}, the first two quantities in~\eqref{eqn:hey} sum to a positive quantity not exceeding $\frac{n}{12\r^3} \leq \frac{1}{\omega^2}$.  Furthermore, because of the presence of the usual$^+$-indicator in~\eqref{eqn:im-tired} we may assume in analyzing~\eqref{eqn:hey} that $|\L^*_\infty(\blambda)| \leq 2$.  Thus we may use the bound $u(\blambda) \leq 2 + \frac{1}{\omega^2} \leq 2.01$.  Since $|1-\exp(u)| \leq 4|u|$ for $u \in [-2.01,2.01]$, we may conclude that
\[
    \eqref{eqn:im-tired} \leq 4\E_{\blambda \sim \SWunif{n}{\rp}}\left[1_{\{\blambda\text{ usual$^+$}\}} \cdot \left(\frac{1}{\omega^2} + \abs{\L^*_\infty(\blambda)}\right)\right].
\]
Thus to complete the proof of Theorem~\ref{thm:pinsker-not-sharp} it remains to show
\[
    \E_{\blambda \sim \SWunif{n}{\rp}}\left[\abs{\L^*_\infty(\blambda)}\right]\leq  \exp(O(\c^{2.01})) \cdot \frac{1}{\omega^2}.
\]
By the $\rp$-analogue of Lemma~\ref{lem:down-with-infinity}, it suffices to prove this with $\Lp_\c(\blambda)$ in place of $\L^*_\infty(\blambda)$, because $201/\omega^3 \ll \exp(O(\c^{2.01}))/\omega^2$.  But finally
\[
    \E_{\blambda \sim \SWunif{n}{\rp}}\left[\abs{\Lp_\c(\blambda)}\right] \leq \sqrt{\E_{\blambda \sim \SWunif{n}{\rp}}\left[\Lp_\c(\blambda)^2\right]} \leq \exp(O(\c^{2.01})) \cdot \frac{1}{\omega^2},
\]
using Cauchy--Schwarz and~\eqref{eqn:monkey2}.  The proof of Theorem~\ref{thm:pinsker-not-sharp}---and hence also the testing lower bound~\eqref{eqn:precise-rank-lower-bound}---is therefore complete.

\section{Quantum rank testing}
\subsection{Testers with one-sided error}\label{sec:one-sided}

In this section, we prove the first part of Theorem~\ref{thm:rank-intro}, that $\Theta(r^2/\eps)$ copies are necessary and sufficient to test whether or not a state has rank~$r$ with one-sided error.
We will show this by analyzing the following algorithm.
\begin{nameddef}{Rank Tester}
Given $\rho^{\otimes n}$,
\begin{enumerate}
\item Sample $\blambda \sim \SWdens{n}{\rho}$.
\item Accept if $\ell(\blambda) \leq r$.  Reject otherwise.
\end{enumerate}
\end{nameddef}
\noindent
Our primary tool in analyzing this tester will be the RSK correspondence.
Suppose $\rho$'s nonzero eigenvalues are $\eta = \{\eta_1, \ldots, \eta_d\}$, and let~$\calD$ be the distribution over $[d]$ induced by~$\eta$.
By Remark~\ref{rem:quantum-free}, $\SWdens{n}{\rho}$ has the same distribution as the process which first samples $\bw \sim \calD^{\otimes n}$ and outputs $\blambda = \rsk(\bw)$.  Write $\lds(\bw)$ for the length of the longest strongly decreasing subsequence in~$\bw$.
By Theorem~\ref{thm:greene}, $\ell(\blambda) = \lds(\bw)$.

The key property we will need of the Rank Tester is the following:

\begin{proposition}\label{prop:rank-tester-is-optimal}
The Rank Tester is the optimal algorithm for testing whether or not a state has rank~$r$ with one-sided error.
\end{proposition}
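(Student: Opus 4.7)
The plan is to invoke Lemma~\ref{lem:wolog} so that we may assume without loss of generality that the tester operates by sampling $\blambda \sim \SWdens{n}{\rho}$ and then applying some deterministic (or randomized) acceptance rule $T$. For concreteness, we may take $T$ to be a deterministic function on partitions of~$n$, since any randomized one-sided tester can be derandomized by accepting whenever \emph{any} random choice would accept; this only enlarges the acceptance region and preserves one-sided error. The Rank Tester corresponds to the function $T_{\mathrm{RT}}(\lambda) = 1$ iff $\ell(\lambda) \leq r$, and optimality will mean: for every one-sided tester~$T$ and every state~$\rho$, $\Pr[T_{\mathrm{RT}}\text{ rejects } \rho] \geq \Pr[T\text{ rejects }\rho]$.

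The heart of the argument is to identify exactly which partitions $\lambda$ can be output with positive probability by $\SWdens{n}{\rho}$ for \emph{some} rank-$\leq r$ state~$\rho$. By Proposition~\ref{prop:schur-probability}, $\SWdensProb{n}{\rho}{\lambda} = \dim(\lambda) \cdot s_\lambda(\eta_1, \dots, \eta_d)$, and if $\rho$ has rank at most~$r$, then only $r$ of the $\eta_i$'s are nonzero; by the stability of Schur polynomials and Proposition~\ref{prop:identically-zero}, this vanishes whenever $\ell(\lambda) > r$. Conversely, taking $\rho$ to be the maximally mixed state on an $r$-dimensional subspace, Proposition~\ref{prop:SW-dist} gives $\SWdensProb{n}{\rho}{\lambda} = \frac{(\dim \lambda)^2}{n!}\cdot \frac{\rising{r}{\lambda}}{r^n}$, which is strictly positive for every $\lambda \vdash n$ with $\ell(\lambda) \leq r$ (since $\rising{r}{\lambda} = \prod_{\square \in [\lambda]}(r + c(\square))$ has all factors positive precisely when $\ell(\lambda) \leq r$). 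Thus the set of ``achievable'' partitions is exactly $\{\lambda \vdash n : \ell(\lambda) \leq r\}$.

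Now, any tester $T$ with one-sided error must satisfy $\Pr[T(\blambda) = 1] = 1$ whenever $\rho$ has rank at most~$r$. In particular, $T(\lambda) = 1$ for every $\lambda$ in the achievable set above. Hence the acceptance region of~$T$ contains $\{\ell(\lambda) \leq r\}$, which is exactly the acceptance region of the Rank Tester. Equivalently, the rejection region of $T_{\mathrm{RT}}$ is contained in, and in fact equals the \emph{largest possible} rejection region for a one-sided tester. Therefore for every state~$\rho$,
\[
    \Pr_{\blambda \sim \SWdens{n}{\rho}}[T\text{ rejects}] \leq \Pr_{\blambda \sim \SWdens{n}{\rho}}[T_{\mathrm{RT}} \text{ rejects}],
\]
which is the claimed optimality. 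The main ``obstacle'' is really just the bookkeeping step of identifying the achievable set; everything else is immediate from Lemma~\ref{lem:wolog}, Proposition~\ref{prop:schur-probability}, and Proposition~\ref{prop:identically-zero}. No numerical estimates are required.
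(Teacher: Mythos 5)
Your proof is correct, and its logical skeleton is the same as the paper's—characterize the set of partitions achievable under $\SWdens{n}{\rho}$ for rank-$\leq r$ states as exactly $\{\lambda \vdash n : \ell(\lambda) \leq r\}$, then note that one-sided error forces every tester to accept this entire set, so the Rank Tester's rejection region dominates any other—but you reach the support characterization by a genuinely different route. The paper argues combinatorially via the RSK correspondence (Remark~\ref{rem:quantum-free}): for the vanishing direction, a word with at most $r$ distinct letters has no strictly decreasing subsequence of length $r+1$, so $\ell(\rsk(\bw)) \leq r$ by Greene's theorem; for the positivity direction, it exhibits an explicit word (blocks of the letters $r, r-1, \dots, 1$ of sizes $\lambda_r, \dots, \lambda_1$) whose RSK shape is $\lambda$. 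You instead argue algebraically: vanishing from $\SWdensProb{n}{\rho}{\lambda} = \dim(\lambda)\, s_\lambda(\eta)$ together with symmetry/stability of Schur polynomials and Proposition~\ref{prop:identically-zero}, and positivity from Proposition~\ref{prop:SW-dist}, since every content satisfies $c(\square) \geq 1 - \ell(\lambda) \geq 1-r$, so all factors of $\rising{r}{\lambda}$ are strictly positive. Both are valid; your algebraic route is self-contained given the formulas already established, while the paper's RSK route is the one it reuses in the subsequent copy-complexity analysis, and you are more explicit than the paper about the reduction via Lemma~\ref{lem:wolog} and the domination statement that makes ``optimal'' precise. One small caveat: your derandomization step (accept whenever any random choice would accept) points in the wrong direction—it yields a tester that rejects \emph{less} often, so dominating it would not dominate the original—but it is also unnecessary: for any randomized one-sided tester the acceptance probability $q(\lambda)$ is forced to equal $1$ on every achievable $\lambda$ (take $\rho$ maximally mixed on an $r$-dimensional subspace, whose distribution has full support on $\{\ell(\lambda)\leq r\}$), and then the rejection probability is at most $\Pr[\ell(\blambda) > r]$ directly.
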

\begin{proof}
To show this, we need to show (i) that every $\lambda$ satisfying $\ell(\lambda) \leq r$ occurs with nonzero probability in $\SWdens{n}{\rho}$ for some $\rho$ of rank $r$ and (ii) that no $\lambda$ satisfying $\ell(\lambda) > r$ occurs in $\SWdens{n}{\rho}$ for any $\rho$ of rank $r$.  The first follows because if $\rho$ has $r$ nonzero eigenvalues, then the word
\begin{equation*}
w\coloneqq
\underbrace{r, \dots, r}_{\lambda_r\ \text{letters}},
\underbrace{(r-1), \dots, (r-1)}_{\lambda_{r-1}\ \text{letters}},
\ldots,
\underbrace{1, \dots, 1}_{\lambda_1\ \text{letters}}
\end{equation*}
occurs in $\calD^{\otimes n}$ with nonzero probability.  It is easy to check that $\lambda = \rsk(w)$.

To show that (ii) holds, if $\rho$ is rank~$r$, then $\eta$ has at most $r$ nonzero entries.
Thus, any word~$w$ in the support of $\calD^{\otimes n}$ will always satisfy $\lds(w) \leq r$ because $w$ will contain at most $r$ distinct letters.  As $\ell(\lambda) = \lds(w)$, we are done.
\end{proof}

As a result of Proposition~\ref{prop:rank-tester-is-optimal}, Theorem~\ref{thm:rank-intro} follows from the following lemma.

\begin{lemma}
The Rank Tester tests whether or not a state has rank $r$ with $\Theta(r^2/\eps)$ copies.
\end{lemma}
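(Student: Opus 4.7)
The lemma requires both an upper bound ($n = O(r^2/\eps)$ copies suffice) and a matching lower bound.  By Proposition~\ref{prop:rank-tester-is-optimal} the Rank Tester is optimal among one-sided testers, so both parts reduce to analyzing its behavior on adversarial~$\rho$.

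For the upper bound, I would invoke the RSK correspondence (Remark~\ref{rem:quantum-free}) to reformulate the acceptance probability as $\Pr_{\bw \sim \calD^{\otimes n}}[\lds(\bw) \leq r]$, where $\calD$ is $\rho$'s spectrum viewed as a distribution on $[d]$.  A monotonicity reduction applies first: merging all the excess eigenvalues $\eta_{r+1}, \eta_{r+2}, \ldots$ (whose sum is $\geq \eps$) into a single eigenvalue of size~$\eps$ can only reduce the number of distinct labels that appear in a random word, and hence only reduce the chance of finding a strongly decreasing subsequence of length~$r+1$; so it suffices to treat the rank-$(r+1)$ extremal state with spectrum $((1-\eps)/r, \ldots, (1-\eps)/r, \eps)$.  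I would then run a block argument: partition the $n = Cr^2/\eps$ samples into $r+1$ contiguous blocks of appropriate sizes so that block~$1$ (of length $\Theta(1/\eps)$) catches the rare letter~$r+1$ and block~$i$ for $i = 2, \ldots, r+1$ (of length $\Theta(r)$) catches the bulk letter of label $r+2-i$.  Each block's failure probability can be made $o(1/r)$, so a union bound yields all the required letters simultaneously with probability $\geq 2/3$, giving a strongly decreasing subsequence of length $r+1$ and hence $\ell(\blambda) > r$ by Greene's theorem (Theorem~\ref{thm:greene}).

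For the lower bound, I would exhibit the same extremal rank-$(r+1)$ state as a ``hard'' instance.  Since it has only $r+1$ distinct labels, any strongly decreasing subsequence of length $r+1$ in $\bw$ must consist of exactly one occurrence each of labels $r+1, r, \ldots, 1$ in decreasing positional order; in particular it must include the rare letter~$r+1$ followed by a strongly decreasing subsequence $(r, r-1, \ldots, 1)$ among the bulk letters.  I would bound $\Pr[\lds(\bw) > r]$ by conditioning on the position~$p$ of the earliest occurrence of letter~$r+1$ (a geometric random variable of mean $1/\eps$) and applying the transposed form of Proposition~\ref{prop:increasing-seq} to the length-$(n-p)$ suffix to bound the probability of a length-$r$ decreasing subsequence among the bulk letters by $(2 e^2 (n-p)/r^2)^r$.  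Summing over~$p$ gives an estimate of the form $\Pr[\lds > r] \lesssim n \eps \cdot (2 e^2 n/r^2)^r$, which is at most $1/3$ when $n$ is below the claimed threshold.

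The main obstacle is squeezing the tight multiplicative factor $r^2/\eps$ out of the lower-bound analysis.  A naive combination of the ``letter~$r+1$ appears'' constraint ($n \gtrsim 1/\eps$) with the ``length-$r$ decreasing subsequence in the bulk'' constraint ($n \gtrsim r^2$) only gives the additive bound $\Omega(r^2 + 1/\eps)$; obtaining the multiplicative $\Omega(r^2/\eps)$ requires a more careful argument, likely a direct estimate of $\Pr_\rho[\ell(\blambda) > r]$ using the Schur polynomial formula (Proposition~\ref{prop:schur-probability}) together with the shifted-Schur and central-character machinery from Section~\ref{sec:polynomial-algebras}, in analogy with the Paninski-style lower-bound argument used for the mixedness problem.
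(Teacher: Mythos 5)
Your proposal has genuine gaps in both directions. For the upper bound, the reduction to the single extremal spectrum $((1-\eps)/r,\dots,(1-\eps)/r,\eps)$ does not work. Merging all tail eigenvalues into one letter is indeed monotone (a simple coupling shows it can only decrease $\lds$), but it is too lossy: after the merge there are only $r+1$ letters, so a length-$(r{+}1)$ strongly decreasing subsequence must use \emph{every} one of them, and if some of the top $r$ eigenvalues are tiny (say $\eta_2=\cdots=\eta_r=\eps/r^{10}$, perfectly consistent with being $\eps$-far from rank $r$ when the $\geq\eps$ of tail mass is spread over many even smaller eigenvalues), the merged instance needs far more than $r^2/\eps$ copies to be rejected, so no block/union-bound analysis can establish $O(r^2/\eps)$ for it. The step that makes your target instance easy --- flattening the top $r$ to uniform --- is not monotone in the direction you need (it only makes rejection easier), so it does not cover the skewed case. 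The missing idea is the paper's case split on the sorted probabilities: if $p_{r+1}\geq \eps/4r$ then the top $r+1$ letters each have mass $\geq \eps/4r$ and a sum-of-geometrics concentration bound finishes with $n=O(r^2/\eps)$; otherwise all tail letters have mass $<\eps/4r$, and one buckets them into at least $r+1$ consecutive groups each of mass $\geq \eps/4r$, then waits for one letter from each bucket in decreasing bucket order --- a decreasing subsequence of length $r+1$ that never relies on the possibly tiny top-$r$ letters.

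For the lower bound, your hard instance is simply not hard: on $((1-\eps)/r,\dots,(1-\eps)/r,\eps)$ the Rank Tester already rejects with high probability at $n=O(r^2+1/\eps)$ copies (wait $O(1/\eps)$ for the rare letter, then $O(r^2)$ for the decreasing run through the $r$ bulk letters), so no sharper analysis of this instance --- representation-theoretic or otherwise --- can yield $\Omega(r^2/\eps)$; your own estimate $(2e^2(n-p)/r^2)^r$ stops being small exactly around $n\approx r^2$, which is the true threshold for that distribution. The fix is to change the instance, not the machinery: the paper takes $\calP=(1-2\eps,\tfrac{2\eps}{d-1},\dots,\tfrac{2\eps}{d-1})$ with $d\gg r^2/\eps$. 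Then only about $2\eps n$ samples land on light letters; deleting the heavy letter and conditioning on the remaining length $m$, the light-letter word is uniform on $[d-1]^m$, so by Theorem~\ref{thm:greene} and Proposition~\ref{prop:increasing-seq} its longest strongly decreasing subsequence is $O(\sqrt{m})=O(\sqrt{\eps n})$ with good probability, and $\lds(\bw)$ exceeds this by at most $1$. Hence the tester fails unless $\eps n=\Omega(r^2)$, i.e.\ $n=\Omega(r^2/\eps)$ --- an elementary RSK argument with no need for the shifted-Schur or central-character apparatus you propose.
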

\begin{proof}
If $\rho$ is $\eps$-far from having rank~$r$, then $\eta$ is $\eps$-far in TV distance from having support size~$r$.
Thus, we can show the lemma by showing the following two facts about probability distributions.
\begin{enumerate}[label=(\roman*)]
\item For every probability distribution $\calD= (p_1, \ldots, p_d)$ which is $\eps$-far from having support size~$r$, a random word $\bw \sim \calD^{\otimes n}$ satisfies $\lds(\bw) \geq r+1$ with probability at least~$2/3$ for some $n = O(r^2/\eps)$.\label{item:rank-upper}
\item There exists an integer~$d$ and a probability distribution $\calD = (p_1, \ldots, p_d)$ which is $\eps$-far from having support size~$r$ such that, for a random word $\bw \sim \calD^{\otimes n}$, $\lds(\bw) \leq r$ with probability greater than~$1/3$ whenever $n = o(r^2/\eps)$.\label{item:rank-lower}
\end{enumerate}
\paragraph{Proof of statement~\ref{item:rank-upper}:}
We will need the following concentration bound for sums of geometric random variables.
\begin{proposition}[\cite{Bro}]\label{prop:geom-conc}
Write $X = X_1 + \ldots + X_n$, where the $X_i$'s are iid geometric random variables with expectation~$\mu$.
For any $k >1$, $\Pr[X > k n \mu] \leq \exp\left(-\frac{1}{2} k n (1-1/k)^2\right)$.
\end{proposition}
\noindent
We note that Proposition~\ref{prop:geom-conc} also holds with the weaker hypothesis that the $X_i$'s are independent (and not necessarily identically distributed), each with expectation at most~$\mu$.

We may assume that $p_1 \geq \ldots \geq p_d$.
We will split into two cases, handled below: \ref{item:case-big} $p_{r+1} \geq \eps/4r$ and \ref{item:case-small} $p_{r+1} < \eps/4r$.
\begin{enumerate}[label=(\arabic*)]
\item Because the probabilities are sorted, $p_1, \ldots, p_{r+1} \geq \eps/4r$.
For the infinite random word $\bw \sim \calD^{\otimes \infty}$,
consider the number of letters one has to traverse through before finding $(r+1),r, \ldots, 1$ as a subsequence.
This number is distributed as $\bX = \bX_{r+1} + \ldots + \bX_1$, where $\bX_i$ is a geometric random variable with success probability $p_i$.

By assumption, $p_i \geq \eps/4r$, and therefore $\E \bX_i \leq 4r/\eps$, for each $i \in [r+1]$.
By Proposition~\ref{prop:geom-conc}, $\bX$ is at most $24r^2/\eps$ with probability at least~$2/3$.
Thus, if $n = 24r^2/\eps$, then $\bw \sim \calD^{\otimes n}$ has a strictly decreasing subsequence of size $r+1$ with high probability.\label{item:case-big}

\item
Because the probabilities are sorted, $p_{r+1},\ldots, p_d < \eps/4r$.  Place the letters from $\{r+1, \ldots, d\}$ into buckets as follows:
starting from letter $(r+1)$ and proceeding in order, add each letter to the current bucket until it contains at least $\eps/4r$ weight.
At this point, move to the next bucket and repeat this process starting with the current letter until all letters have been bucketed.

Because these letters have weight $\leq \eps/4r$, each bucket has total weight in the interval $[\eps/4r, \eps/2r)$ (except possibly the final bucket).
There must be at least $2r+1$ buckets with nonzero total weight, as otherwise $p_{r+1} + \ldots + p_d < \eps$, contradicting the fact that $\calP$ is $\eps$-far from having support size~$r$.  This gives us at least $2r\geq r + 1$ buckets each of which contains at least $\eps/4r$ total weight.

Now  we can use an argument similar to case~\ref{item:case-big} to show that when $n = 24r^2/\eps$, a random $\bw \sim \calD^{\otimes n}$ will with probability $\geq 2/3$ have a strictly decreasing subsequence in which the first letter comes from bucket $r+1$, the second letter comes from bucket~$r$, and so on (ending in a letter from the first bucket).  This is a strictly decreasing subsequence of size $r+1$.\label{item:case-small}
\end{enumerate}

\paragraph{Proof of statement~\ref{item:rank-lower}:}
For $d \gg r$, define the probability distribution
\begin{equation*}
\calP = \left(1-2\eps, \frac{2\eps}{d - 1}, \ldots , \frac{2\eps}{d-1}\right).
\end{equation*}
Because $d \gg r$, $\calP$ is $\eps$-far from having support size~$r$.
For a string $w \in [d]^n$, let $\widetilde{w}$ be the substring of $w$ formed by deleting all occurrences of the letter ``1'' from~$w$.
It is easy to see that $\lds(\widetilde{w})\leq\lds(w)\leq\lds(\widetilde{w})+1$.

For a randomly drawn $\bw \sim \calP^{\otimes n}$, let us condition on $\widetilde{\bw}$  having a certain fixed length $m$.
The value of $\lds(\widetilde{\bw})$ is distributed as the length of the longest decreasing subsequence in a uniformly random word drawn from $[d-1]^m$.  By Theorem~\ref{thm:greene}, this is distributed as $\blambda_1'$ for $\blambda \sim \SWunif{m}{d-1}$.
Setting $B = \left\lceil100\sqrt{m}\right\rceil$,
let us show that $\Pr[\blambda_1' \geq B]$ is small.
If $B \geq d$, then surely $\blambda_1' < B$ always, as $\blambda \sim \SWunif{m}{d-1}$ will always have height at most~$d-1$.
On the other hand, if $B < d$, then by Proposition~\ref{prop:increasing-seq},
\begin{equation*}
\Pr[\blambda_1' \geq B] \leq \left(\frac{2e^2 m}{B^2}\right)^B \leq \frac{2e^2}{10000}.
\end{equation*}
In summary, conditioned on $\widetilde{\bw}$ having a certain fixed length~$m$, $\lds(\widetilde{\bw}) \leq O(\sqrt{m})$
with all but the above probability.

In expectation, for a random $\bw \sim \calP^{\otimes n}$, $\widetilde{\bw}$ has length $2\eps d$.
By Markov's inequality, the probability that the length of $\widetilde{\bw}$ is greater than $200 \eps d$ is at most $1/100$.
Conditioned on the length of $\widetilde{\bw}$ being at most $200 \eps d$,
the above paragraph tells us that $\lds(\widetilde{\bw})\leq O(\sqrt{\eps d})$  with probability $1-2e^2/10000$.
Thus, when $\bw \sim \calP^{\otimes n}$, we have with probability greater than~$1/3$ that $\lds(\bw)\leq O(\sqrt{\eps d})$, which is $o(r)$ unless $d = \Omega(r^2/\eps)$.
\end{proof}

For our last result of this section, we will show that the copy complexity of the Rank Tester can be improved in certain interesting cases.
In particular, the Rank Tester matches the upper bound of the Uniform Distribution Distinguisher from Section~\ref{sec:uniform} for the case of $r$ v. $r+1$, and does so with one-sided error.
\begin{proposition}
The Rank Tester can distinguish between the case when $\rho$'s spectrum is uniform on either $r$ or $r+1$ eigenvalues with $O(r^2)$ copies of $\rho$.
\end{proposition}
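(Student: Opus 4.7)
Completeness is immediate from Proposition~\ref{prop:rank-tester-is-optimal}(ii): if $\rho$ is uniform on exactly $r$ eigenvalues, then $\rho$ has rank $r$, so $\ell(\blambda) \leq r$ with probability~$1$, and the Rank Tester accepts. Hence the entire task is to establish soundness, namely: if $\rho$ is uniform on $r+1$ eigenvalues, then for $n = Cr^2$ copies (for some absolute constant $C$) we have $\Pr[\ell(\blambda) \geq r+1] \geq 2/3$.

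The plan is to reduce immediately to the classical picture. By Remark~\ref{rem:quantum-free}, drawing $\blambda \sim \SWdens{n}{\rho}$ is equivalent to drawing a word $\bw \sim \calD^{\otimes n}$ where $\calD$ is the uniform distribution on $[r+1]$, and setting $\blambda = \rsk(\bw)$. By Theorem~\ref{thm:greene}, $\ell(\blambda) = \lds(\bw)$, the length of the longest strongly decreasing subsequence of $\bw$. So we need only show that a uniformly random word $\bw \in [r+1]^n$ contains the strictly decreasing sequence $(r+1, r, \dots, 1)$ as a subsequence with probability $\geq 2/3$ once $n = Cr^2$.

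The argument mirrors case~(1) in the proof of the preceding lemma, but specialized to the uniform distribution (where the bound is tighter than the generic $O(r^2/\eps)$). Traverse $\bw$ left to right and let $\bX_i$ denote the number of letters consumed while waiting for an occurrence of letter $(r+2-i)$, for $i = 1, 2, \dots, r+1$. Each $\bX_i$ is geometric with success probability $\tfrac{1}{r+1}$, so $\E[\bX_i] = r+1$, and the total waiting time $\bX = \bX_1 + \cdots + \bX_{r+1}$ has mean $(r+1)^2$. If $\bX \leq n$, then $\bw$ contains the required decreasing subsequence, forcing $\lds(\bw) \geq r+1$.

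Applying Proposition~\ref{prop:geom-conc} with, e.g., $k = 3$, we obtain $\Pr[\bX > 3(r+1)^2] \leq \exp\!\bigl(-\tfrac{2(r+1)}{3}\bigr) \leq 1/3$ for all sufficiently large $r$ (and the small-$r$ cases may be absorbed into the constant). Taking $n = 3(r+1)^2 = O(r^2)$ therefore yields $\Pr[\lds(\bw) \geq r+1] \geq 2/3$, as required. There is no real obstacle here; the only point worth noting is that this specialized analysis shaves an $\eps$-factor relative to the generic bound (since the naive application of Theorem~\ref{thm:rank-intro} with $\eps = \Theta(1/r)$ would give the inferior $O(r^3)$), and this improvement comes for free because uniformity of $\calD$ makes every waiting-time $\bX_i$ tightly concentrated around $r+1$.
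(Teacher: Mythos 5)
Your proposal is correct and matches the paper's own argument essentially step for step: completeness from the rank-$r$ case never exceeding height $r$, and soundness by reducing via RSK/Greene to the waiting time for the subsequence $(r+1,r,\dots,1)$, which is a sum of $r+1$ i.i.d.\ geometric variables with mean $r+1$, concentrated by Proposition~\ref{prop:geom-conc}. The only difference is the choice of constant ($3(r+1)^2$ versus the paper's $6r^2$), which is immaterial.
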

\begin{proof}
If $\rho$'s spectrum is uniform on~$r$ eigenvalues, then it is rank~$r$ and so the Rank Tester never rejects.
Thus, we need only show that the Rank Tester rejects with probability $\geq 2/3$ when $\rho$'s spectrum is uniform on $r+1$ eigenvalues for some $n = O(r^2)$.
We will follow the analysis in the proof of statement~\ref{item:rank-upper} above and show that a random word $\bw \sim \calD^{\otimes n}$ has $\lds(\bw) = r+1$ with high probability.
The gain will come from the fact that $\eta = (1/(r+1), \ldots, 1/(r+1))$.

For the infinite random word $\bw \sim \calD^{\otimes \infty}$,
consider the number of letters one has to traverse through before finding $(r+1),r, \ldots, 1$ as a subsequence.
This number is distributed as $\bX = \bX_{r+1} + \ldots + \bX_1$, where $\bX_i$ is a geometric random variable with success probability $1/(r+1)$ and expectation~$r+1$.
By Proposition~\ref{prop:geom-conc}, $\bX$ is at most $6r^2$ with probability at least~$2/3$.
Thus, if $n = 6r^2$, then $\bw \sim \calD^{\otimes n}$ has a strictly decreasing subsequence of size $r+1$ with high probability.
\end{proof}

\subsection{A lower bound for testers with two-sided error}

In this section, we prove the second part of Theorem~\ref{thm:rank-intro}, that $\Omega(r/\eps)$ copies are necessary to test whether or not a state has rank $r$ with two-sided error.
\begin{proof}
Let $d \gg r$.
In this proof, we will take the viewpoint of a density matrix as a probability distribution over pure states.
Let $\rho$ and $\sigma$ be maximally mixed on subspaces of dimension $(r-1)$ and $(d-1)$, respectively.
Consider the following process for generating a product state $|\Psi\rangle = |\Psi_1\rangle \otimes \cdots \otimes |\Psi_n \rangle$:
\begin{enumerate}
\item Let $x \in \{0, 1\}^n_{2\eps}$ be a uniformly random $2\eps$-biased string, meaning each coordinate is selected independently according to $\Pr[x_i = 1] = 2\eps$.
\item For each $i \in [n]$ such that $x_i = 0$, set $|\Psi_1\rangle \coloneqq  |d\rangle$.
\item Let $b$ be an arbitrary $\{0,1\}$-bit.  For each $i \in [n]$ such that $x_i = 1$,
	\begin{enumerate}
	\item if $b = 0$, then set $|\Psi_i \rangle$ to be a state vector sampled from $\rho$.
	\item if $b = 1$, then set $|\Psi_i \rangle$ to be a state vector sampled from $\sigma$.
	\end{enumerate}
\end{enumerate}
If~$b$ is~$0$, then the mixed state output by this procedure has spectrum $(1-2\eps, \frac{2\eps}{r-1}, \ldots, \frac{2\eps}{r-1})$, which is rank~$r$.
On the other hand, if~$b$ is~$1$, then the mixed state output by this procedure has spectrum $(1-2\eps, \frac{2\eps}{d-1}, \ldots, \frac{2\eps}{d-1})$, which because $d \gg r$ is $\eps$-far from having rank~$r$.

Let us consider the choice of $x$ in the first step, and set $\weight(\bx)$ to be the number of $1$'s in $x$.
In expectation, $\weight(\bx)$ will be $2\eps n$, and so by Markov's inequality $\weight(\bx)$ will be at most $200\eps n$ with probability at least $99/100$.
There must exist an $x$ with $\weight(x) \leq 200\eps n$ conditioned on which the algorithm succeeds with probability at least $3/5$, as otherwise it will succeed in total with probability at most $1/100 + 99/100 \cdot 3/5 < 2/3$.

Fix any such $x$.
The job of the algorithm is reduced to distinguishing between the cases when those $|\Psi_i\rangle$'s for which $x_i = 1$ came from $\rho$ which is maximally mixed on a subspace of dimension $(r-1)$ (when $b = 0$) or from $\sigma$ which is maximally mixed on a subspace of dimension $(d-1)$ (when $b = 1$).  Because $d \gg r$, we have by Theorem~\ref{thm:q-bday} that this requires at least $\Omega(r)$ copies to succeed with probability at least $3/5$.
Thus, we must have $200 \eps n \geq \Omega(r)$, in which case $n = \Omega(r/\eps)$.
\end{proof}

\section{The EYD lower bound (continued)}\label{sec:eyd-cont}

\newcommand{\unifpart}[1]{\mathsf{unif}_{#1}}

In this section, we prove Theorem~\ref{thm:small-n}.
\begin{named}{Theorem~\ref{thm:small-n} restated}
For every constant $C > 0$, there are constants $\delta, \eps >0$ such that
\begin{equation*}
\Pr_{\blambda \sim \SWunif{n}{d}}[\dtv{\underline{\blambda}}{\unif{d}} > \eps] \geq \delta
\end{equation*}
when $n \leq C d^2$ and $d$ is sufficiently large.
\end{named}

\begin{proof}
To prove Theorem~\ref{thm:small-n}, we show, at a high level, that when $n \leq C d^2$,
Biane's law of large numbers kicks in and $\overline{\blambda}$ approaches the limiting curve $\Omega_\theta$, for $\theta \coloneqq \frac{\sqrt{n}}{d}$.
Each of these curves is constantly far from the curve produced by the uniform partition, and the lower bound follows.
However, carrying out this proof involves some subtle argumentation and splitting of hairs which we will go into.

There is one regime where $\overline{\blambda}$ certainly does not approach $\Omega_\theta$: when $n$ is a fixed value independent of the value of~$d$, then $\overline{\blambda}$ will be always be constantly far from $\Omega_\theta$.
However, we can rule this case out by noting that when $n$ is too small as a function of~$d$, then any $\lambda = (\lambda_1, \ldots, \lambda_d)$ with $n$ boxes will have most of its~$\lambda_i$'s zero, and so $\underline{\lambda}$ will be far from uniform.
In particular, when $n = o(d)$, then we have that $\dtv{\underline{\blambda}}{\unif{d}} \rightarrow 1$ as $d \rightarrow \infty$.
As a result, for sufficiently large~$d$ we can immediately assume that $n \geq f(d)$, where $f(d)$ is any function which is both $\omega_d(1)$ and $o(d)$.
For concreteness, we will take $f(d) \coloneqq  \sqrt{d}$.

We are now in the regime where Biane's law of large numbers holds.
Theorem~\ref{thm:biane} tells us that if $\frac{\sqrt{n}}{d} \sim c$ for~$c$ some absolute constant, then there is some constant $d(c)>0$ such that for a random $\blambda \sim \SWunif{n}{d}$, $\overline{\blambda}$ is $\eps$-close (in $L^\infty$ distance) to $\Omega_{c}$ whenever $d \geq d(c)$.
The main difficulty we have in applying Biane's law of large numbers directly is that the function $d(c)$ is left unspecified and, for example, could be wildly different even for two close values of $c$.
This is problematic in our case, because for each value of~$d$, the ratio $\theta = \frac{\sqrt{n}}{d}$ may be any real number in the interval $[\sqrt{f(d)}/d, \sqrt{C}]$, and so $\theta$ may jump around and never converge to a fixed value~$c$.
In particular, an adversary could potentially choose~$n$ (and therefore~$\theta$) as a function of~$d$ cleverly so that for each~$d$, we have that $d < d(\theta)$, and so Biane's law of large numbers never applies.
Though seemingly unlikely, this possibility is not ruled out by the statements of known theorems.

Our goal now is to show that the convergence to the limiting shapes guaranteed by Biane's theorem happens at roughly the same rate for all values of $\theta$ in our interval.
First we will need a definition.

\begin{definition}
Given continual diagrams $f, g:\R\rightarrow \R$, the \emph{$L^1$ distance} between them is
\begin{equation*}
\dcurve{f}{g}:=\int_{\R} \left|f(x) - g(x)\right| \mathrm{d}x.
\end{equation*}
This defines a metric on the set of continual diagrams, and it is well-defined because $f(x)-g(x) = 0$ whenever $|x|$ is sufficiently large.
If $\lambda, \mu$ are both partitions of $n$, then $\dcurve{\overline{\lambda}}{\overline{\mu}} = 4\cdot \dtv{\underline{\lambda},\underline{\mu}}$.
\end{definition}

We will prove the following result:

\begin{theorem}\label{thm:all-converge}
Let $C > 0$ be an absolute constant, and let $f(d):\N\rightarrow \N$ be  $\omega_d(1)$.
Then for any constant $0 < \delta < 1$, if $f(d) \leq n \leq C d^2$, then
\begin{equation*}
\Pr_{\blambda \sim \SWunif{n}{d}}\left[\dcurve{\overline{\blambda}}{\Omega_\theta} \geq \delta\right] \leq \delta,
\end{equation*}
for sufficiently large~$d$, where $ \theta= \frac{\sqrt{n}}{d}$.
\end{theorem}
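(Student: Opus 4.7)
The plan is to prove the theorem by contradiction, leveraging compactness of the parameter $\theta = \sqrt{n}/d$ on the interval $[0, \sqrt{C}]$. Suppose the conclusion fails for some $\delta > 0$: there is a sequence $(d_k, n_k)$ with $d_k \to \infty$, $f(d_k) \leq n_k \leq Cd_k^2$, and $\Pr_{\blambda_k \sim \SWunif{n_k}{d_k}}[\dcurve{\overline{\blambda_k}}{\Omega_{\theta_k}} \geq \delta] > \delta$ for $\theta_k = \sqrt{n_k}/d_k$. Since $\theta_k \in [0, \sqrt{C}]$, Bolzano--Weierstrass lets us pass to a subsequence along which $\theta_{k_j} \to c$ for some $c \in [0, \sqrt{C}]$. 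The theorem then reduces to showing $\dcurve{\overline{\blambda_{k_j}}}{\Omega_{\theta_{k_j}}} \to 0$ in probability along this subsequence, which will contradict the standing assumption.

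The main case is $c > 0$, where Biane's Theorem~\ref{thm:biane} applies directly to the subsequence and gives $\|\overline{\blambda_{k_j}} - \Omega_c\|_\infty \to 0$ in probability. To upgrade from $L^\infty$ to $L^1$ distance, invoke Proposition~\ref{prop:increasing-seq}: with probability tending to $1$, both $\blambda_{k_j,1}$ and $\blambda'_{k_j,1}$ are $O(\sqrt{n_{k_j}})$, so $\overline{\blambda_{k_j}}(x) = |x|$ outside some bounded interval $[-M, M]$ where $M$ depends only on $C$. Since $\Omega_c$ also agrees with $|x|$ outside a comparable interval, $\dcurve{\overline{\blambda_{k_j}}}{\Omega_c} \leq 2M \|\overline{\blambda_{k_j}} - \Omega_c\|_\infty \to 0$ in probability. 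Continuity of the map $\theta \mapsto \Omega_\theta$ in the $L^1$ metric on $[0, \sqrt{C}]$, which can be verified from the explicit formulas in Theorem~\ref{thm:biane}, gives $\dcurve{\Omega_{\theta_{k_j}}}{\Omega_c} \to 0$. Triangle inequality closes the argument. Care is needed at $\theta = 1$, where the formula changes but matches at the boundary, and at $\theta = 0$, where the middle term $\tfrac{1}{c}\arccos(\cdot)$ has an indeterminate form whose limit must be evaluated by expanding $\arccos(1 - \eps) \approx \sqrt{2\eps}$.

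The main obstacle is the boundary case $c = 0$, where Biane's theorem as stated (requiring $c > 0$) does not apply. Here one must separately establish that $\overline{\blambda_{k_j}} \to \Omega_0 = \Omega$ in probability when $\sqrt{n_{k_j}}/d_{k_j} \to 0$. The natural approach, mirroring Section~\ref{sec:L-psharp}, is the method of moments via Kerov's algebra: for each fixed $k \geq 2$, expand $\widetilde{p}_k$ as a linear combination of $\psharp{\mu}$'s with $\weight(\mu) \leq k$ using Proposition~\ref{prop:p-to-psharp} together with the $\pstar{k}$-to-$\widetilde{p}_k$ relation from~\cite{IO02}, then apply Corollary~\ref{cor:sw-expect} to compute $\E[\widetilde{p}_k(\blambda_{k_j})/n_{k_j}^{k/2}]$ exactly. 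A term-by-term analysis shows that the contribution of $\psharp{\mu}$ scales as $n^{(\weight(\mu)-k)/2} \theta^{|\mu|-\ell(\mu)}$, so only $\weight(\mu) = k$ terms survive in the limit $n \to \infty$, and in the further limit $\theta \to 0$ only $\mu = (1,1,\ldots,1)$ contributes; these surviving terms match the moments of the Plancherel curve $\Omega$. Variance bounds analogous to Lemma~\ref{lem:psharp-var} give concentration for each fixed $k$, and moment convergence combined with the uniform support bound from Proposition~\ref{prop:increasing-seq} yields $L^1$ convergence by a standard Weierstrass-type approximation argument.
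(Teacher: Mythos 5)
Your proposal is correct in outline, but it takes a genuinely different route from the paper. You use a compactness/contradiction argument: extract a subsequence along which $\theta_{k_j} \to c$, invoke Biane's Theorem~\ref{thm:biane} as a black box when $c>0$ (its sequential form is exactly what the subsequence supplies, which neatly sidesteps the paper's worry that the threshold $d(c)$ is unspecified and that $\theta$ may never converge), upgrade $L^\infty$ to $L^1$ via the support bound from Proposition~\ref{prop:increasing-seq} (in fact the equal-areas identity $\int(\overline{\lambda}(x)-|x|)\,dx = 2 = \int(\Omega_\theta(x)-|x|)\,dx$ would let you skip even that), use $L^1$-continuity of $\theta \mapsto \Omega_\theta$, and reserve the Kerov-algebra moment method for the boundary case $c=0$ that Biane's statement does not cover. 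The paper instead never uses Biane's theorem in the proof: it reruns the Ivanov--Olshanski/Meliot law-of-large-numbers argument for \emph{all} $\theta \in [\sqrt{f(d)}/d,\sqrt{C}]$ simultaneously, with bookkeeping (Proposition~\ref{prop:meliot-integral}, Lemma~\ref{lem:not-very-big}, Proposition~\ref{prop:pretty-small}) showing every error term is $o_d(1)$ uniformly in $n$, so no subsequence extraction is needed. What each buys: your route is shorter for the bulk of the parameter range and isolates the only regime where new work is truly required, but it is intrinsically non-quantitative (the contradiction/compactness step yields no effective $d_0$), it needs the continuity of the limit curves at $c=1$ and $c\to 0$ checked from the explicit formulas, and its $c=0$ branch still requires essentially the same $\psharp{}$-expansion, expectation, and variance estimates as the paper's proof (your scaling $n^{(\weight(\mu)-k)/2}\theta^{|\mu|-\ell(\mu)}$ is the right one), followed by the same bump-function/Weierstrass step to pass from moments to $L^1$ shape convergence, so the net savings in technical work is moderate; the paper's uniform-rate argument is longer but self-contained and in principle effective.
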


Let us now complete the argument assuming Theorem~\ref{thm:all-converge}.
For $\kappa >0$,
define the following continual diagram:
\begin{equation}\label{eq:unif}
\overline{\unifpart{\kappa}}(x)
	:=\left\{
		\begin{array}{cl}
			x+\frac{2}{\kappa}&\text{if $x \in (-\frac{1}{\kappa},\kappa-\frac{1}{\kappa}]$}\\
			-x + 2\kappa & \text{if $x \in (\kappa-\frac{1}{\kappa}, \kappa)$},\\
			|x| & \text{otherwise}.
		\end{array}\right.
\end{equation}
To see how such a function arises, consider the uniform ``partition'' $\left(\frac{n}{d}, \ldots, \frac{n}{d}\right)$ (``partition'' being in quotation marks because $\frac{n}{d}$ may not be integral).
Drawing this in the French notation gives a rectangle of width $\frac{n}{d}$ and height $d$ whose bottom-left corner is the origin.
Drawing this in the Russian notation and dilating by a factor of $1/\sqrt{n}$ therefore gives the curve $\overline{\unifpart{\theta}}(x)$.
One consequence of this is that if $\lambda$ is a partition of~$n$, then
$\dcurve{\overline{\lambda}}{\overline{\unifpart{\theta}}} = 4\cdot \dtv{\underline{\lambda},\unif{d}}$.

Define the function $\Delta:(0, \sqrt{C}]\rightarrow \R^{\geq 0}$
by $\Delta(\kappa) := \dcurve{\overline{\unifpart{\kappa}}}{\Omega_\kappa}$.
When $\kappa < .3$, $\Delta(\kappa) > .5$ for all $c$.
This is because $\Omega_\kappa(x) = -x$ for all $x \leq -2$ regardless of $\kappa$,
whereas $\overline{\unifpart{\kappa}}(x)=-x + 2\kappa$ in $(\kappa-\frac{1}{\kappa}, -2]$.
Because $\kappa < .3$,
\begin{equation*}
\dcurve{\overline{\unifpart{\kappa}}}{\Omega_\kappa}
= \int_{\R} \left|\overline{\unifpart{\kappa}}(x) - \Omega_\kappa(x)\right| \mathrm{d}x
\geq 2\kappa \cdot \left(\tfrac{1}{\kappa} - 2 - \kappa\right) \geq 0.5.
\end{equation*}
Now, let us lower-bound $\Delta(\kappa)$ when $\kappa \geq .3$.
Write $I$ for the interval $[.3, \sqrt{C}]$.  (If $.3 > \sqrt{C}$ then this step can be skipped.)
To begin, we note that $\Delta(\kappa)$ is continuous on $I$.
By comparing~\eqref{eq:unif} with Theorem~\ref{thm:biane}, it is easy to see that $\Delta(\kappa) > 0$ for all $\kappa> 0$.
We can now apply the extreme value theorem, which implies that $\Delta$ achieves its minimum on $I$ at some fixed point $\kappa^* \in I$.
We therefore have that $\Delta(\kappa) \geq \Delta(\kappa^*) > 0$ for all $\kappa \in I$.

Combining the last two paragraphs, we now know that there is some value
\begin{equation*}
\delta := \min\{0.5, \Delta(\kappa^*)\} > 0
\end{equation*}
such that $\Delta(\kappa) > \delta$ for all $\kappa \in (0, \sqrt{C}]$.
Crucially, $\delta$ is an absolute constant which depends only on the constant $C$ and is independent of $n$ and~$d$.
Now, let us apply Theorem~\ref{thm:all-converge} with the values $f(d) = \sqrt{d}$, $C$, and $\frac{\delta}{2}$.
Then with probability at least $1 - \frac{\delta}{2}$,
$\dcurve{\overline{\blambda}}{\Omega_\theta} < \frac{\delta}{2}$.  When this occurs,
\begin{equation*}
\dtv{\underline{\blambda}}{\unif{d}}
= \frac{1}{4} \dcurve{\overline{\blambda}}{\overline{\unifpart{\theta}}}
\geq \frac{1}{4} \left(\dcurve{\Omega_\theta}{\overline{\unifpart{\theta}}} - \dcurve{\overline{\blambda}}{\Omega_\theta}\right)
\geq \frac{\delta}{8},
\end{equation*}
where the second step follows from the triangle inequality,
and the third step uses the fact that $\dcurve{\Omega_\theta}{\overline{\unifpart{\theta}}} = \Delta(\theta) \geq \delta$.
This proves the theorem with the parameters $1-\frac{\delta}{2}$ and $\frac{\delta}{8}$.
\end{proof}

It remains to prove Theorem~\ref{thm:all-converge}, and this is done in the next subsection.
\subsection{Proof of Theorem~\ref{thm:all-converge}}
Our goal is to give a rate of convergence of $\overline{\blambda}$ to $\Omega_\theta$ which depends only on~$d$ and is independent of~$n$.
To do this, we will show that standard law of large numbers arguments give convergence rates of this form.
Biane's~\cite{Bia01} proof of the law of large numbers for the Schur-Weyl distribution does not use Kerov's algebra of observables.
Instead, we will follow the proof of the law of large numbers (second form) for the Plancherel distribution in~\cite[Theorem~$5.5$]{IO02}
and use results from~\cite{Mel10a} to extend this proof to the Schur--Weyl distribution.
We emphasize that our proof contains no ideas not already found in~\cite{IO02,Mel10a},
and that our goal is just to show that proper bookkeeping of their arguments yields our Theorem~\ref{thm:all-converge}.
(Finally, we note that Meliot~\cite{Mel10a} also sketches a proof the law of large numbers for the Schur--Weyl distribution using Kerov's algebra of observables at the beginning of his Section~$3$.)

Write $\Delta_\blambda(x) \coloneqq \overline{\blambda}(x) - \Omega_\theta(x)$.
Because $\overline{\blambda}$ and $\Omega_\theta$ are both continual diagrams,
we know that $\Delta_\blambda$ is supported (i.e., nonzero) on a finite interval.
We will need a stronger property, which is that the width of this interval does not grow with~$d$ (or, equivalently, with~$n$).
To show this, note that $\Delta_\blambda(x)$ is zero when both $\Omega_\theta(x) = |x|$ and $\overline{\blambda}(x) = |x|$.
For the first of these, we can consult Theorem~\ref{thm:biane} and see that $\Omega_\theta(x) = |x|$ outside the interval $[-2, \theta+2]$.
On the other hand, $\overline{\blambda}(x)$ does not equal $|x|$ outside a constant-width interval
for all $\blambda \sim \SWunif{n}{d}$.
(For example, with nonzero probability $\blambda = (n)$, in which case $\overline{\blambda}(x) = |x|$ only outside the interval $(-1/\sqrt{n},\sqrt{n})$.)
However, the next proposition shows that our desired property occurs with high probability.

\begin{proposition}\label{prop:constant-width-interval}
With probability $1-\frac{\delta}{2}$,  $\overline{\blambda}(x) \neq |x|$ only on an interval of width $w = O_\delta(1)$.
\end{proposition}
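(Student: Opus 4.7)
The plan is to reduce the claim to tail bounds on $\blambda_1$ and $\blambda'_1$, both of which are controlled by Proposition~\ref{prop:increasing-seq}. Recall from Definition~\ref{def:russian-function} that the Russian-notation curve $\blambda(x)$ agrees with $|x|$ outside the interval $[-\blambda'_1,\blambda_1]$, so after rescaling, $\overline{\blambda}(x)-|x|$ is supported inside $[-\blambda'_1/\sqrt{n},\blambda_1/\sqrt{n}]$, an interval of width $(\blambda_1+\blambda'_1)/\sqrt{n}$. It therefore suffices to show that with probability at least $1-\delta/2$, both $\blambda_1$ and $\blambda'_1$ are bounded by $K\sqrt{n}$ for some constant $K=K(\delta,C)$; then $w\leq 2K$.

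To bound $\blambda_1$, I would apply Proposition~\ref{prop:increasing-seq} with $B=\lceil K\sqrt{n}\rceil$. Since we are in the regime $n\leq Cd^2$, we have $B/d\leq K\sqrt{C}+o(1)$ and $n/B^2\leq 1/K^2$, so
\[
\Pr[\blambda_1\geq K\sqrt{n}]\ \leq\ \left(\frac{(1+K\sqrt{C}+o(1))\,e^2}{K^2}\right)^{\!K\sqrt{n}}.
\]
Since the base tends to $0$ as $K\to\infty$ (with $C$ fixed), choose $K$ large enough (depending only on $C$) so that the base is at most, say, $1/2$. Then $\Pr[\blambda_1\geq K\sqrt{n}]\leq 2^{-K\sqrt{n}}$. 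Because $n\geq f(d)\to\infty$ by the hypothesis $f(d)=\omega_d(1)$, for all sufficiently large $d$ this probability is at most $\delta/4$. Enlarging $K$ if necessary, the same bound holds for $\blambda'_1$ by the identical argument (Proposition~\ref{prop:increasing-seq} gives the same tail for $\blambda'_1$). A union bound then yields $\Pr[\blambda_1>K\sqrt{n}\text{ or }\blambda'_1>K\sqrt{n}]\leq\delta/2$, and on the complementary event the support width is at most $2K=O_\delta(1)$, as required.

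There is no real obstacle here beyond bookkeeping: the main point is that the constant $K$ can be chosen as a function of $C$ (and then $\delta$) alone, independent of $n$ and $d$, so that the bound $w=O_\delta(1)$ is uniform across the whole regime $f(d)\leq n\leq Cd^2$. The only thing to double-check is that $\sqrt{n}$ is large enough to drive $2^{-K\sqrt{n}}\leq\delta/4$, which holds for $d$ large because $n\geq f(d)\to\infty$ — this is precisely why the hypothesis $f(d)=\omega_d(1)$ was imposed when invoking the proposition.
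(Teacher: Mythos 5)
Your proposal is correct and follows essentially the same route as the paper: reduce to tail bounds on $\blambda_1$ and $\blambda_1'$ at scale $\beta\sqrt{n}$ via Proposition~\ref{prop:increasing-seq}, then union bound. The only (immaterial) difference is in the final bookkeeping—the paper bounds the power by its base and picks $\beta=\beta(C,\delta)$ directly, whereas you make the base at most $1/2$ and invoke $n\geq f(d)\to\infty$; both are fine in this regime.
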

\begin{proof}
We will show that $\blambda_1$ and $\blambda_1' \leq \beta\sqrt{n}$, each with probability $1-\delta/4$, for some constant $\beta$ which depends only on $\delta$ (and $C$).
The proposition will then follow from the union bound,
as $\overline{\blambda} = |x|$ outside the interval $[-\blambda_1/\sqrt{n}, \blambda_1/\sqrt{n}]$.
By Proposition~\ref{prop:increasing-seq},
\begin{equation*}
\Pr[\blambda_1 \geq \beta\sqrt{n}], \Pr[\blambda_1' \geq \beta\sqrt{n}] \leq
\left(\frac{(1+\beta\theta) e^2}{\beta^2}\right)^{\beta\sqrt{n}}
\leq \frac{(1+\beta\theta) e^2}{\beta^2}
\leq \frac{(1+\beta\sqrt{C}) e^2}{\beta^2}.
\end{equation*}
This can be made less than $\delta/4$ by choosing $\beta$ to be a sufficiently large function of $C$ and $\delta$.
\end{proof}

Let $I'$ be the constant-width interval guaranteed by Proposition~\ref{prop:constant-width-interval}.
Clearly, $I'$ contains the point zero.  Thus, if we define
\begin{equation*}
I := [-2, \theta+2] \cup I'
\end{equation*}
then this is a single interval of width $w = O_\delta(1)$.
This motivates the following definition:
\begin{definition}
We say that $\lambda$ is \emph{usual} if $\Delta_\lambda$ is supported on $I$.
By the previous discussion, a random $\blambda$ is usual with probability $1-\delta/2$.
\end{definition}

Let us condition $\blambda$ on it being usual,
and let us suppose that $\dcurve{\overline{\blambda}}{\Omega_\theta} \geq \delta$.
Then there is some point $x \in I$ such that $\left|\Delta_\blambda(x)\right| \geq \frac{\delta}{w}$.
Now we will use the fact that $\Omega_\theta$ and $\overline{\blambda}$ are continual diagrams,
which implies that they are both $1$-Lipschitz, and therefore $\Delta_\blambda$ is $2$-Lipschitz.
Then if we consider the subinterval $I_x \subseteq I$ defined as $I_x := [x-\frac{\delta}{4w}, x + \frac{\delta}{4w}]$,
this Lipschitz property implies that $\left|\Delta_\blambda(y) \right|\geq \frac{\delta}{2w}$ for all $y \in I_x$.
(That $I_x$ is contained in $I$ follows from the fact that $\Delta_\blambda$ is nonzero on $I_x$ and $\blambda$ is usual.)
We note that the width of $I_x$ is $\frac{\delta}{2w}$.

Let $\calJ$ be a set of $\lceil\frac{4w^2}{\delta}\rceil$ closed intervals of width $\frac{\delta}{4w}$ which cover $I$.
\ignore{(Concretely, $\calJ$ may be taken to be XXX.)}
These intervals are chosen to have half the width of $I_x$,
the result being that there is some interval $J^* \in \calJ$ which is completely contained in $I_x$.
For each interval $J \in \calJ$, let $\Psi_J:\R\rightarrow\R^{\geq 0}$ be a continuous function supported on $J$
which satisfies $\int \Psi_J(y) dy = 1$ (such functions are known to exist; e.g., bump functions).
Then
\begin{equation*}
\left|\int_{-\infty}^\infty \Delta_{\blambda}(y) \Psi_{J^*}(y) dy\right|
\geq \min_{y \in I_x} \left|\Delta_{\blambda}(y) \right|\cdot \int_{-\infty}^\infty \Psi_{J^*}(y) dy
\geq \frac{\delta}{2w}.
\end{equation*}

By the Weierstrass approximation theorem, we can approximate each $\Psi_J$
with a polynomial function $\widetilde{\Psi}_J$ such that for each $x \in I$, $|\Psi_J(x) - \widetilde{\Psi}_J(x)| \leq \frac{\delta}{8w^3}$.
(Outside of $I$, $\widetilde{\Psi}_J$ can---and will---be an arbitrarily bad approximator for $\Psi_J$.)
Because $\Delta_\blambda$ is $2$-Lipschitz and~$\blambda$ is usual,
$|\Delta_\blambda(x)|\leq 2w$ for all $x \in I$ and is zero everywhere else.  As a result, for the interval $J^*$,
\begin{equation*}
\left|\int_{-\infty}^\infty \Delta_{\blambda}(y) \widetilde{\Psi}_{J^*}(y) dy\right|
\geq
\left|\int_{-\infty}^\infty \Delta_{\blambda}(y) \Psi_{J^*}(y) dy\right|
-
\left|\int_{-\infty}^\infty \Delta_{\blambda}(y) \left(\Psi_{J^*}(y)-\widetilde{\Psi}_{J^*}(y) \right) dy\right|
\geq
\frac{\delta}{4w}.
\end{equation*}
The first inequality uses the triangle inequality, and the second inequality uses crucially the fact that $\Delta_{\blambda}$ is zero outside $I$.

In summary, we have
\begin{equation}\label{eq:dunno-what-to-call-this}
\Pr_{\blambda \sim \SWunif{n}{d}}\left[\dcurve{\overline{\blambda}}{\Omega_\theta} \geq \delta\right]
\leq
\Pr_{\blambda \sim \SWunif{n}{d}}\left[\exists J \in \calJ : \left|\int_{-\infty}^\infty \Delta_{\blambda}(y) \widetilde{\Psi}_J(y) dy\right| \geq \frac{\delta}{4w}\right] + \frac{\delta}{2},
\end{equation}
where the $\delta/2$ comes from the event that $\blambda$ is not usual.
We will therefore show that $\left|\int \Delta_\blambda(y) \widetilde{\Psi}_J(y) dy\right|$ is at most $\frac{\delta}{4w}$ for all $J \in \calJ$ with probability at least $1-\frac{\delta}{2}$.
By the union bound, it suffices to show that for each $J \in \calJ$,  $\left|\int \Delta_\blambda(y) \widetilde{\Psi}_J(y) dy\right| \leq \frac{\delta}{4w}$ with probability at least $1-\frac{\delta}{2\cdot |\calJ|}$.

Let $m$ be the maximum degree of the $\widetilde{\Psi}_J$ functions, for all $J \in \calJ$.
Fix an interval $J \in \calJ$.
Then we can write
\begin{equation}\label{eq:sum-of-moments}
\widetilde{\Psi}_J(x) = \sum_{k=0}^m a_J^{(k)} x^k
\quad
\text{and}
\quad
\int_{-\infty}^\infty \Delta_{\lambda}(y) \widetilde{\Psi}_J(y) dy = \sum_{k=0}^m a_J^{(k)} \int_{-\infty}^\infty x^k \Delta_{\lambda}(x) dx,
\end{equation}
where the $a_J^{(k)}$'s are constants.
The following proposition, found in~\cite[Lemma 7]{Mel10a}, gives a nice expression for the integrals on the right-hand side.
\begin{proposition}\label{prop:meliot-integral}
Let $k \geq 1$.  Then
\begin{equation*}
\int_{-\infty}^{\infty} x^k \Delta_{\lambda}(x) dx = \frac{2 \cdot\widetilde{q}_{k+1}(\lambda)}{(k+1)\sqrt{n}},
\end{equation*}
where $\widetilde{q}_k(\lambda)$ is the quantity defined as
\begin{equation*}
\widetilde{q}_k(\lambda)
\coloneqq  \frac{\widetilde{p}_{k+1}(\lambda)}{(k+1) n^{k/2}}
- \sum_{\ell = 1}^{\lfloor \frac{k+1}{2} \rfloor} \frac{k^{\downarrow 2\ell - 1}}{(k+1-\ell) \ell! (\ell-1)!}\cdot \frac{n^{k/2 + 1 - \ell}}{d^{k+1 - 2\ell}}.
\end{equation*}
\end{proposition}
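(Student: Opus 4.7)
The idea is to split the integral into a ``random'' part depending only on $\lambda$ and a ``deterministic'' part depending only on the parameter $\theta = \sqrt{n}/d$:
\[
\int_{-\infty}^{\infty} x^k \Delta_\lambda(x)\,dx = \underbrace{\int_{-\infty}^{\infty} x^k \bigl(\overline{\lambda}(x) - |x|\bigr)\,dx}_{A(\lambda)} \;-\; \underbrace{\int_{-\infty}^{\infty} x^k \bigl(\Omega_\theta(x) - |x|\bigr)\,dx}_{B(\theta)}.
\]
Each integrand has compact support (for $A$, because $\blambda$ is a finite partition and thus $\overline{\lambda}(x) = |x|$ outside a bounded interval; for $B$, by Theorem~\ref{thm:biane}), so both integrals converge absolutely.

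\textbf{Computing $A(\lambda)$.} Using $\overline{\lambda}(x) = \lambda(\sqrt{n} x)/\sqrt{n}$ and the substitution $u = \sqrt{n} x$,
\[
A(\lambda) \;=\; \frac{1}{n^{(k+2)/2}} \int_{-\infty}^{\infty} u^k\bigl(\lambda(u)-|u|\bigr)\,du \;=\; \frac{2}{n^{(k+2)/2}} \int_{-\infty}^{\infty} u^k \sigma(u)\,du \;=\; \frac{2\,\widetilde{p}_{k+2}(\lambda)}{(k+1)(k+2)\,n^{(k+2)/2}},
\]
by the very definition of $\widetilde{p}_{k+2}$. Comparing to the claimed formula $\frac{2\widetilde{q}_{k+1}(\lambda)}{(k+1)\sqrt{n}}$, this exactly supplies the first ``$\widetilde{p}$''-term of $\widetilde{q}_{k+1}$.

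\textbf{Computing $B(\theta)$.} After the cancellation with the $A$-piece, all that remains is to show
\[
B(\theta) \;=\; \sum_{\ell=1}^{\lfloor(k+2)/2\rfloor} \frac{2\,(k+1)^{\downarrow\,2\ell-1}}{(k+1)(k+2-\ell)\,\ell!\,(\ell-1)!}\,\theta^{\,k+2-2\ell},
\]
using the identity $n^{k/2+1-\ell}/d^{k+2-2\ell}=\theta^{k+2-2\ell}$. This is the main technical step. I would approach it by invoking the Markov (Rayleigh) transform of $\Omega_\theta$: the distribution $\tfrac12 \Omega_\theta''(x)$ is a probability measure $\mu_\theta$ on $\R$, and two integrations by parts convert
\[
B(\theta) \;=\; \frac{2}{(k+1)(k+2)}\int_{-\infty}^{\infty} x^{k+2}\,d\mu_\theta(x) \;-\; (\text{boundary corrections from }|x|).
\]
For the Biane curve, $\mu_\theta$ is a known shifted semicircle measure on $[\theta-2,\theta+2]$ (plus an atom at $-1/\theta$ when $\theta>1$), whose moment generating function is classical. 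Extracting the coefficient of $\theta^{k+2-2\ell}$ from the resulting closed-form expression yields exactly the sum on the right, matching the ``correction'' terms in $\widetilde{q}_{k+1}$.

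\textbf{Main obstacle.} The substantive work is evaluating $B(\theta)$ in closed form. A direct approach using the piecewise formulas in Theorem~\ref{thm:biane} is painful, since one must integrate expressions involving $\arcsin\bigl(\tfrac{x+\theta}{2\sqrt{1+\theta x}}\bigr)$ against $x^k$ and handle the three regimes $\theta<1$, $\theta=1$, $\theta>1$ separately. The Markov-transform route is cleaner but requires knowing the explicit form of $\mu_\theta$ and its moments (which specialize to Catalan/Narayana-type numbers at $\theta=1$). Once this computation is in hand, combining it with the expression for $A(\lambda)$ and dividing through by $(k+1)\sqrt{n}/2$ gives precisely $\frac{2\widetilde{q}_{k+1}(\lambda)}{(k+1)\sqrt{n}}$, completing the proof. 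Alternatively---and this may be the shortest route in practice---one can avoid an independent calculation of $B(\theta)$ by noting that, on both sides of the desired identity, taking expectations over $\blambda \sim \SWunif{n}{d}$ reduces via Corollary~\ref{cor:sw-expect} and the $\psharp{}$-expansion of $\widetilde{p}_{k+2}$ to an equality of rational functions in $n$ and~$d$; verifying this rational identity pins down $B(\theta)$ without ever having to integrate against the Biane curve directly.
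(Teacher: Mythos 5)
Note first that the paper does not prove this proposition at all: it is imported verbatim from M\'eliot~\cite[Lemma~7]{Mel10a}, so there is no in-paper argument to match and your attempt is judged on its own. The first half of your proposal is correct, and it is the easy half: writing $\int x^k\Delta_\lambda\,dx=A(\lambda)-B(\theta)$ with $A(\lambda)=\int x^k(\overline{\lambda}(x)-|x|)\,dx$, the substitution $u=\sqrt{n}\,x$ together with the definition $\widetilde{p}_{k+2}(\lambda)=(k+2)(k+1)\int u^{k}\sigma(u)\,du$ gives $A(\lambda)=\tfrac{2\widetilde{p}_{k+2}(\lambda)}{(k+1)(k+2)\,n^{(k+2)/2}}$, which is exactly the $\widetilde{p}$-term of $\tfrac{2\widetilde{q}_{k+1}(\lambda)}{(k+1)\sqrt{n}}$; the proposition is thus equivalent to the $\lambda$-free identity $B(\theta)=\sum_{\ell}\tfrac{2\,(k+1)^{\downarrow 2\ell-1}}{(k+1)(k+2-\ell)\,\ell!\,(\ell-1)!}\,\theta^{k+2-2\ell}$, as you say.

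The gap is that you never establish this identity, and both routes you sketch are flawed as stated. The transform route conflates two different measures: $\tfrac12\Omega_\theta''$ is the Rayleigh measure of the curve, not its transition measure. Already at $\theta=0$ one computes $\Omega'(x)=\tfrac{2}{\pi}\arcsin(x/2)$, so $\tfrac12\Omega''$ is the arcsine density $\tfrac{1}{\pi\sqrt{4-x^2}}$ on $[-2,2]$, not a semicircle; and for $\theta>1$ the slope of $\Omega_\theta$ equals $+1$ at both ends of the arc $[\theta-2,\theta+2]$, so $\tfrac12\Omega_\theta''$ is a \emph{signed} measure there (net mass zero on the arc, plus a unit atom at the corner $-1/\theta$) and is not a probability measure at all. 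The law you describe (semicircle-type numerator, atom at $-1/\theta$) is the transition measure of $\Omega_\theta$, a Marchenko--Pastur-type law with density proportional to $\sqrt{4-(x-\theta)^2}/(1+\theta x)$, related to $\tfrac12\Omega_\theta''$ by the Markov transform rather than equal to it; either way, producing the coefficients $\tfrac{(k+1)^{\downarrow 2\ell-1}}{(k+2-\ell)\,\ell!\,(\ell-1)!}$ is precisely the computation you defer. The proposed shortcut of averaging over $\blambda\sim\SWunif{n}{d}$ is circular: after your computation of $A$, the difference of the two sides of the proposition is the deterministic constant $B(\theta)$ minus the claimed sum, so taking expectations just reproduces the identity to be proved; $\E\int x^k\Delta_{\blambda}(x)\,dx$ equals $\E[A(\blambda)]-B(\theta)$ and cannot be evaluated via Corollary~\ref{cor:sw-expect} without already knowing $B(\theta)$. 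A repairable version of this idea does exist: fix $\theta$, let $n,d\to\infty$ with $\sqrt{n}/d=\theta$, show $\E\int x^k\Delta_{\blambda}\,dx\to 0$ using Biane's Theorem~\ref{thm:biane} together with Proposition~\ref{prop:increasing-seq} for uniform integrability, and then identify $B(\theta)$ as the limit of $\E[\widetilde{p}_{k+2}(\blambda)]/((k+1)(k+2)n^{(k+2)/2})$, which Proposition~\ref{prop:p-to-psharp} and Corollary~\ref{cor:sw-expect} evaluate (essentially the computation inside Lemma~\ref{lem:not-very-big}) and whose limit is the claimed sum. But that argument is not in your write-up, so as it stands the main technical step is missing and the proof is incomplete.
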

The key fact we will use is that we can upper bound the right-hand side of Equation~\eqref{eq:sum-of-moments} by a quantity which decays with~$d$, independent of the value of~$n$.  This is the subject of the following lemma.
\begin{lemma}\label{lem:not-very-big}
The random variable $\left|\frac{\widetilde{q}_k(\blambda)}{\sqrt{n}}\right|$, for $\blambda \sim \SWunif{n}{d}$, has mean $o_d(1)$, for all $f(d) \leq n \leq C d^2$.
\end{lemma}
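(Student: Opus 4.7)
The plan is to apply Cauchy--Schwarz in the form $\E[|\widetilde{q}_k(\blambda)|/\sqrt{n}] \leq |\E[\widetilde{q}_k(\blambda)]|/\sqrt{n} + \sqrt{\Var[\widetilde{q}_k(\blambda)]/n}$, and then bound each of the two summands by $o_d(1)$ uniformly for $f(d) \leq n \leq Cd^2$. The entire argument hinges on a single identity: the one from the proof of Proposition~\ref{prop:p-to-psharp}, namely
\[
    \widetilde{p}_{k+1} = \sum_{\weight(\mu) = k+1} \frac{\falling{(k+1)}{\ell(\mu)}}{\partfact{\mu}} \prod_{i \geq 1}(\psharp{i})^{m_i(\mu)}.
\]
Expanding each $\prod_i (\psharp{i})^{m_i(\mu)}$ via Proposition~\ref{prop:should-be-in-a-structure-constants-discussion-or-theorem} writes $\widetilde{p}_{k+1}$ as $\sum_{\weight(\mu)=k+1} c_\mu \psharp{\mu}$ plus a $\C$-linear combination of $\psharp{\tau}$'s with $\weight(\tau) \leq k-1$, where $c_\mu = \falling{(k+1)}{\ell(\mu)}/\partfact{\mu}$.

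For the mean, I would use Corollary~\ref{cor:sw-expect} term-by-term: the weight-$(k+1)$ part contributes
\[
    \sum_{\weight(\mu)=k+1} c_\mu \falling{n}{|\mu|} d^{\ell(\mu) - |\mu|} = \sum_{m=\lceil(k+1)/2\rceil}^{k} \falling{n}{m} d^{k+1-2m} \cdot \left(\sum_{\substack{\mu \vdash m \\ \ell(\mu) = k+1-m}} \frac{\falling{(k+1)}{k+1-m}}{\partfact{\mu}}\right),
\]
while the weight-$\leq(k-1)$ correction contributes at most $O(n^a d^{b})$ with $a+b \leq k-1$, and therefore at most $O(d^{k-1})$ under $n \leq Cd^2$. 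Reindexing by $\ell = k+1-m$ and using the elementary identity $\sum_{\mu \vdash m, \ell(\mu) = \ell} 1/\partfact{\mu} = \binom{m-1}{\ell-1}/\ell!$ (cf.~\cite[Lemma~11]{Mel10a}), a routine algebraic simplification should identify the leading part with exactly $(k+1)n^{k/2} \cdot \sum_{\ell} \frac{k^{\downarrow 2\ell-1}}{(k+1-\ell)\ell!(\ell-1)!} \cdot \frac{n^{k/2+1-\ell}}{d^{k+1-2\ell}}$; this is precisely the deterministic sum that $\widetilde{q}_k$ subtracts off. The only leftover discrepancy comes from replacing $\falling{n}{m}$ by $n^m$ (a relative error $O(m^2/n)$) and from the lower-weight correction; both contribute $O(\sqrt{n}/n) + O(d^{k-1}/n^{k/2}\sqrt{n})$ to $\E[\widetilde{q}_k(\blambda)]/\sqrt{n}$, which is $o_d(1)$ uniformly in the allowed range of $n$.

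For the variance, I would expand $\Var[\widetilde{p}_{k+1}(\blambda)] = \E[\widetilde{p}_{k+1}(\blambda)^2] - \E[\widetilde{p}_{k+1}(\blambda)]^2$ by bilinearity in the $\psharp{\mu}$ expansion above, and use Proposition~\ref{prop:should-be-in-a-structure-constants-discussion-or-theorem} to write $\psharp{\mu}\psharp{\nu} = \psharp{\mu \cup \nu} + (\textnormal{linear combination of $\psharp{\tau}$'s with $\weight(\tau) \leq 2k$})$. The leading terms contribute $\falling{n}{|\mu|+|\nu|} d^{\ell(\mu)+\ell(\nu)-|\mu|-|\nu|}$ which differs from $\E[\psharp{\mu}]\E[\psharp{\nu}] = \falling{n}{|\mu|}\falling{n}{|\nu|} d^{\ell(\mu)+\ell(\nu)-|\mu|-|\nu|}$ only by $O(n^{|\mu|+|\nu|-1})$ in the first factor; combined with the $d^{2k+2-4m}$-type prefactors, this is $O(n^{-1}) \cdot (\E[\widetilde{p}_{k+1}(\blambda)])^2 = O(n^{k-1} d^2)$. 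The subleading $\psharp{\tau}$'s (of weight $\leq 2k$) contribute at most $O(n^a d^b)$ with $a+b \leq 2k$, hence $O(d^{2k})$ under $n \leq Cd^2$. Both are $O(n^k)$, so $\Var[\widetilde{q}_k(\blambda)/\sqrt{n}] = O(1/n) = o_d(1)$.

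The main obstacle I anticipate is the bookkeeping in the mean calculation: to match the exact constants on the right-hand side of the definition of $\widetilde{q}_k$, one needs to carefully track the interplay between the coefficients $\falling{(k+1)}{\ell(\mu)}/\partfact{\mu}$, the partition-counting identity for $\sum 1/\partfact{\mu}$, and the reindexing from $m$ to $\ell$. This is essentially the same bookkeeping carried out by Meliot~\cite[Lemma~7 and Section~3]{Mel10a} in his derivation of Proposition~\ref{prop:meliot-integral}, so one could alternatively derive the mean identity by simply taking expectations in Proposition~\ref{prop:meliot-integral} itself and then noting that the analogous calculation for $\sum_i \lambda_i^k$ at the ``uniform'' partition agrees with the subtracted sum; the variance argument above stands independently.
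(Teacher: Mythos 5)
Your overall plan is the same as the paper's: split off the weight-$(k+1)$ part of $\widetilde{p}_{k+1}$ as a linear combination of $\psharp{\mu}$'s (via the identity behind Proposition~\ref{prop:p-to-psharp} and Proposition~\ref{prop:should-be-in-a-structure-constants-discussion-or-theorem}), compute its mean exactly with Corollary~\ref{cor:sw-expect} and the identity $\sum_{\mu \vdash m,\, \ell(\mu)=\ell} 1/\partfact{\mu} = \binom{m-1}{\ell-1}/\ell!$ so that it matches the deterministic sum subtracted in $\widetilde{q}_k$ up to lower-order terms, and control the fluctuation by a variance bound using $\psharp{\mu}\psharp{\nu} = \psharp{\mu\cup\nu} + \{\text{lower weight}\}$ and Cauchy--Schwarz. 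That is exactly the paper's route.

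However, there is a genuine gap in how you estimate the lower-weight and subleading contributions. You repeatedly bound a term $\falling{n}{|\mu|}\, d^{\ell(\mu)-|\mu|}$ by a single power of $d$ using $n \leq Cd^2$ (e.g.\ ``at most $O(d^{k-1})$'', ``hence $O(d^{2k})$'') and then divide by the normalizing power of $n$, concluding $o_d(1)$ ``uniformly in the allowed range of $n$''. This fails in the small-$n$ part of the range: $f(d)$ is only assumed $\omega_d(1)$ (in the application $f(d)=\sqrt d$), so for instance $d^{k-1}/n^{(k+1)/2}$ and $d^{2k}/n^{k+1}$ are \emph{not} $o_d(1)$ when $n \approx \sqrt d$ and $k \geq 2$; and the inference ``$O(d^{2k})$, hence $O(n^k)$'' uses $n \leq Cd^2$ in the wrong direction (it gives $d^{2k} = \Omega(n^k)$, not $O(n^k)$). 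The same issue affects your ``$O(n^{k-1}d^2)$, hence $O(n^k)$'' step for the leading part of the variance, since $d^2/n$ need not be bounded. The conclusion is nonetheless true, but it requires the per-term, two-case analysis that the paper isolates in Proposition~\ref{prop:pretty-small}: for each $\psharp{\mu}$ normalized by $n^{a}$, if $|\mu| < a$ use $d^{\ell(\mu)-|\mu|} \leq 1$ so the term is at most $n^{|\mu|-a} \leq 1/n = o_d(1)$ (here $n \to \infty$ is what is used); if $|\mu| \geq a$ use $n \leq Cd^2$ together with $\weight(\mu) < 2a$ to get a negative power of $d$. With your blanket bounds replaced by this case split (applied to the weight-$\leq(k-1)$ correction divided by $n^{(k+1)/2}$, to the weight-$\leq 2k$ terms divided by $n^{k+1}$, and to the $\falling{n}{m}$-versus-$n^m$ discrepancy terms), your argument closes and coincides with the paper's proof.
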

Applying Proposition~\ref{prop:meliot-integral} and Lemma~\ref{lem:not-very-big} to Equation~\eqref{eq:sum-of-moments}, we see that
$\E_{\blambda \sim \SWunif{n}{d}}\left|\int \Delta_{\blambda}(x) \widetilde{\Psi}_J(x) dx\right|$ is $o_d(1)$.
We may take $d$ large enough to make this quantity arbitrarily small.
Thus, select $d_J$ so that for all $d \geq d_J$, this expectation is at most $\frac{\delta^2}{8w\cdot|\calJ|}$.
Then by Markov's inequality, $\left|\int \Delta_{\blambda}(x) \widetilde{\Psi}_J(x) dx\right| \leq \frac{\delta}{4w}$
with probability at least $1 - \frac{\delta}{2\cdot|\calJ|}$.
If we set $d_0$ to be the max of $d_J$ over all $J \in \calJ$,
then by Equation~\eqref{eq:dunno-what-to-call-this},
$\Pr_{\blambda \sim \SWunif{n}{d}}\left[\dcurve{\overline{\blambda}}{\Omega_\theta} \geq \delta\right] \leq \delta$
so long as $d\geq d_0$, and we are done.

Now we turn to the proof of Lemma~\ref{lem:not-very-big}.
\begin{proof}[Proof of Lemma~\ref{lem:not-very-big}]
Define
\begin{equation*}
X_k(\lambda) \coloneqq  \sum_{\mu : \weight(\mu)=k} \frac{k^{\downarrow \ell(\mu)}}{m(\mu)} \cdot \psharp{\mu}(\lambda)
\end{equation*}
and
\begin{equation}\label{eq:what-the-heck}
\qsharp{k}(\lambda)
\coloneqq  \frac{X_{k+1}(\lambda)}{(k+1) n^{k/2}}
- \sum_{\ell = 1}^{\lfloor \frac{k+1}{2} \rfloor} \frac{k^{\downarrow 2\ell - 1}}{(k+1-\ell) \ell! (\ell-1)!}\cdot \frac{n^{k/2 + 1 - \ell}}{d^{k+1 - 2\ell}}.
\end{equation}
Then by Proposition~\ref{prop:p-to-psharp}, $\widetilde{q}_k(\lambda)$ and $\qsharp{k}(\lambda)$ differ from each other by
$n^{-k/2}$ times an observable  $\calO(\lambda)$ of weight~$k$.  Thus,
\begin{equation*}
\E_{\blambda \sim \SWunif{n}{d}} \left|\frac{\widetilde{q}_k(\blambda)}{\sqrt{n}}\right|
\leq
\E_{\blambda \sim \SWunif{n}{d}} \left|\frac{\qsharp{k}(\blambda)}{\sqrt{n}}\right|
+
\E_{\blambda \sim \SWunif{n}{d}} \left|\frac{\calO(\blambda)}{n^{(k+1)/2}}\right|.
\end{equation*}
By Cauchy--Schwarz, $\E | \calO(\blambda)/n^{(k+1)/2}| \leq \sqrt{\E\calO(\blambda)^2/n^{k+1}}$.
Because $\calO$ has weight~$k$, $\calO^2$ has weight~$2k$.
As a result, we can use the next proposition to bound the contribution from this term by $o_d(1)$.
\begin{proposition}\label{prop:pretty-small}
Let $\calO(\lambda)$ be an observable of weight at most~$2k$.  Then
\begin{equation*}
\E_{\blambda \sim \SWunif{n}{d}} \left[\frac{\calO(\blambda)}{n^{k+1}}\right] = o_d(1).
\end{equation*}
\end{proposition}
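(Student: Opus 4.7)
The plan is to expand $\calO$ in the central character basis and then bound the contribution of each basis element using Corollary~\ref{cor:sw-expect}. Since the $\psharp{\mu}$ polynomials form a linear basis of the algebra of observables~$\Lambda^*$, I may write $\calO = \sum_{\mu} c_\mu \psharp{\mu}$, where the sum is over the finitely many partitions $\mu$ with $\weight(\mu) \leq 2k$. By linearity of expectation and the triangle inequality, the claim reduces to showing that for each such~$\mu$,
\[
    \frac{\E_{\blambda \sim \SWunif{n}{d}}\bigl[\psharp{\mu}(\blambda)\bigr]}{n^{k+1}} = o_d(1),
\]
and then summing the finitely many resulting bounds. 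Corollary~\ref{cor:sw-expect} evaluates the numerator as $\falling{n}{|\mu|} \cdot d^{\ell(\mu)-|\mu|}$, so writing $s = |\mu|$ and $h = \ell(\mu)$, I have to bound $\falling{n}{s}\cdot d^{h-s}/n^{k+1}$ subject to the constraints $s + h \leq 2k$ (from the weight bound) and $h \leq s$ (since each part of $\mu$ is at least~$1$).

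The key observation is that the appropriate bound on~$n$ to use depends on the size of~$s$ relative to $k+1$. When $s \leq k$, I bound crudely using $\falling{n}{s} \leq n^s$ and $d^{h-s} \leq 1$ (since $h \leq s$), obtaining at most $1/n^{k+1-s} \leq 1/n$; this is $o_d(1)$ because the standing assumption $n \geq f(d) = \omega_d(1)$ forces $n \to \infty$ with~$d$. When $s \geq k+1$, I invoke the upper bound $n \leq Cd^2$ instead:
\[
    n^{s-k-1}\cdot d^{h-s} \leq C^{s-k-1}\cdot d^{2(s-k-1)+h-s} = C^{s-k-1}\cdot d^{s+h-2k-2} \leq C^{s-k-1}\cdot d^{-2},
\]
which is $o_d(1)$ as $d \to \infty$ (with $s$, $h$, $k$ all uniformly bounded since there are only finitely many $\mu$ under consideration).

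The ``main obstacle'' here is more conceptual than technical: one must notice that neither of the two bounds on~$n$ suffices on its own, and that the correct regime for each is dictated by the sign of the exponent $s-k-1$ in the reduced expression $n^{s-k-1} d^{h-s}$. Once this split is made, both estimates are elementary, and summing the finitely many $o_d(1)$ contributions (with coefficients $c_\mu$ that are absolute constants depending only on $\calO$) completes the argument. No further combinatorial identities or character estimates are needed beyond Corollary~\ref{cor:sw-expect} and the two-sided bounds on~$n$ that have already been established in Section~\ref{sec:eyd-cont}.
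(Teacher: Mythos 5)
Your proposal is correct and follows essentially the same route as the paper: expand $\calO$ in the $\psharp{\mu}$ basis (finitely many $\mu$ of weight at most $2k$), evaluate each expectation via Corollary~\ref{cor:sw-expect}, and split into the cases $|\mu| \leq k$ (using $n \geq f(d) = \omega_d(1)$ together with $\ell(\mu) \leq |\mu|$) and $|\mu| \geq k+1$ (using $n \leq Cd^2$ and $\weight(\mu) \leq 2k$). The case analysis and the resulting bounds $1/n$ and $C^{|\mu|-(k+1)}d^{-2}$ match the paper's argument exactly.
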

\begin{proof}
\ignore{
Because $\calO$ has weight at most~$2k$, we can write it as
\begin{equation*}
\calO(\lambda) = \sum_{\mu : |\mu| + \ell(\mu)\leq 2k} \alpha_\mu \cdot \psharp{\mu}(\lambda),
\end{equation*}
for some coefficients $\alpha_\mu$.
This follows from the fact that the $\psharp{}$ polynomials of weight~$2k$ span the set of observables of weight~$2k$, i.e.\ XXX.
Thus, the expectation we care about is
\begin{equation}\label{eq:expectation-expansion}
\E_{\blambda \sim \SWunif{n}{d}} \left[\frac{\calO(\blambda)}{n^{k+1}}\right]
= \sum_{\mu : |\mu| + \ell(\mu)\leq 2k} \alpha_\mu \cdot \E_{\blambda \sim \SWunif{n}{d}} \left[\frac{\psharp{\mu}(\blambda)}{n^{k+1}}\right].
\end{equation}
Fix a partition $\mu$ satisfying $|\mu| +\ell(\mu) \leq 2k$.
By XXX we can bound its contribution to this sum by
\begin{equation*}
0 \leq \E_{\blambda \sim \SWunif{n}{d}} \left[\frac{\psharp{\mu}(\blambda)}{n^{k+1}}\right]
\leq \frac{1}{n^{k+1}}\cdot n^{|\mu|} d^{\ell(\mu) - |\mu|}
= \frac{n^{|\mu|}}{n^{k+1}}\cdot\frac{d^{\weight(\mu)}}{d^{2|\mu|}}.
\end{equation*}
If $|\mu|<k+1$, then this expression is at most $1/n$, which is $o_d(1)$ because $n \geq f(d) = \omega_d(1)$.
On the other hand, if $|\mu|\geq k+1$, then for all $n \leq Cd^2$ this expression is at most
\begin{equation*}
\frac{(Cd^2)^{|\mu|}}{(Cd^2)^{k+1}}\cdot\frac{d^{\weight(\mu)}}{d^{2|\mu|}}
= C^{|\mu| - (k+1)}\cdot\frac{d^{\weight(\mu)}}{d^{2(k+1)}},
\end{equation*}
which is $o_d(1)$ as $\weight(\mu) \leq 2k$.  As a result, the sum in Equation~\eqref{eq:expectation-expansion} is~$o_d(1)$, and we are done.
}
As in the proof of Lemma~\ref{lem:errL}, this reduces to showing that
$\E_{\blambda \sim \SWunif{n}{d}} \left[\psharp{\mu}(\blambda)/n^{k+1}\right] = o_d(1)$,
where $\mu$ is a partition of weight~$2k$, i.e.\ $|\mu| + \ell(\mu) \leq 2k$.
By Corollary~\ref{cor:sw-expect},
\begin{equation*}
\E_{\blambda \sim \SWunif{n}{d}} \left[\frac{\psharp{\mu}(\blambda)}{n^{k+1}}\right]
=  \frac{\falling{n}{|\mu|}}{n^{k+1}}\cdot  \frac{d^{\ell(\mu)}}{ d^{|\mu|}}
\leq \frac{n^{|\mu|}}{n^{k+1}}\cdot  \frac{d^{\ell(\mu)}}{ d^{|\mu|}}
= \frac{n^{|\mu|}}{n^{k+1}}\cdot\frac{d^{\weight(\mu)}}{d^{2|\mu|}}.
\end{equation*}
If $|\mu|<k+1$, then this expression is at most $1/n$, which is $o_d(1)$ because $n \geq f(d) = \omega_d(1)$.
On the other hand, if $|\mu|\geq k+1$, then for all $n \leq Cd^2$ this expression is at most
\begin{equation*}
\frac{(Cd^2)^{|\mu|}}{(Cd^2)^{k+1}}\cdot\frac{d^{\weight(\mu)}}{d^{2|\mu|}}
\leq C^{|\mu| - (k+1)}\cdot\frac{d^{\weight(\mu)}}{d^{2(k+1)}},
\end{equation*}
which is $o_d(1)$ as $\weight(\mu) \leq 2k$.
\end{proof}

It remains to bound $\E|\qsharp{k}(\blambda)/\sqrt{n}|$ by $o_d(1)$.
First, we will show that $\qsharp{k}(\blambda)$ can be viewed as (approximately) computing the deviation of a certain random variable from its mean.
To do this, let us compute the mean of the first term on the right-hand side of Equation~\eqref{eq:what-the-heck}.
\begin{align*}
\E_{\blambda \sim \SWunif{n}{d}} \frac{X_{k+1}(\blambda)}{(k+1)n^{k/2}}
&= \frac{1}{(k+1)n^{k/2}}\cdot
	\sum_{\mu:\weight(\mu) = k+1}\frac{(k+1)^{\downarrow \ell(\mu)}}{m(\mu)}\cdot \frac{n^{\downarrow |\mu|}}{d^{|\mu|-\ell(\mu)}}\\
&= \frac{1}{(k+1)n^{k/2}}\cdot
	\sum_{\ell=1}^{\lfloor \frac{k+1}{2}\rfloor} \frac{(k+1)^{\downarrow \ell} n^{\downarrow k+1 - \ell}}{d^{k+1-2\ell}}
	\sum_{\mu:\weight(\mu) = k+1}\frac{1}{m(\mu)}\\
&= \frac{1}{(k+1)n^{k/2}}\cdot
	\sum_{\ell=1}^{\lfloor \frac{k+1}{2}\rfloor} \frac{(k+1)^{\downarrow \ell} n^{\downarrow k+1 - \ell}}{d^{k+1-2\ell}}
 	\cdot \frac{1}{\ell!}\binom{k-\ell}{\ell-1}\\
&= \sum_{\ell = 1}^{\lfloor \frac{k+1}{2} \rfloor} \frac{k^{\downarrow 2\ell - 1}}{(k+1-\ell) \ell! (\ell-1)!}
		\cdot \frac{n^{\downarrow k + 1 - \ell}}{n^{k/2}\cdot d^{k+1 - 2\ell}},
\end{align*}
where the third equality follows from~\cite[Lemma~$11$]{Mel10a}.
As a result, the difference
\begin{equation*}
\E_{\blambda \sim \SWunif{n}{d}} \frac{X_{k+1}(\blambda)}{(k+1)n^{k/2}}
-
\sum_{\ell = 1}^{\lfloor \frac{k+1}{2} \rfloor} \frac{k^{\downarrow 2\ell - 1}}{(k+1-\ell) \ell! (\ell-1)!}\cdot \frac{n^{k/2 + 1 - \ell}}{d^{k+1 - 2\ell}}
\end{equation*}
can be written as a sum over terms of the form $a\cdot n^b/d^{k+1-2\ell}$, where $a$ is a constant coefficient, $1\leq b \leq k/2 - \ell$, and $1 \leq \ell \leq \lfloor \frac{k+1}{2}\rfloor$.  Given that $n \leq C d^2$, each of these terms if $\pm o_d(1)$.
Thus, if we set
\begin{equation*}
q_k(\lambda) \coloneqq  \frac{X_{k+1}(\lambda)}{(k+1)n^{k/2}} - \E_{\blambda \sim \SWunif{n}{d}}  \frac{X_{k+1}(\blambda)}{(k+1)n^{k/2}},
\end{equation*}
then
\begin{equation*}
\E_{\blambda \sim \SWunif{n}{d}} \left|\frac{\qsharp{k}(\blambda)}{\sqrt{n}}\right|
\leq
\E_{\blambda \sim \SWunif{n}{d}} \left| \frac{q_k(\blambda)}{\sqrt{n}}\right| + o_1(d).
\end{equation*}

Finally, we show that $\E \left|q_k(\blambda)/\sqrt{n}\right| = o_d(1)$.
By Cauchy--Schwarz,
\begin{equation*}
\E \left|\frac{q_k(\blambda)}{\sqrt{n}}\right| \leq \sqrt{\E \left(\frac{q_k(\blambda)}{\sqrt{n}}\right)^2},
\end{equation*}
so it suffices to show that $\E \left(q_k(\blambda)/\sqrt{n}\right)^2 = o_d(1)$.
This expectation is simply the variance of the random variable $X_{k+1}(\blambda)/(k+1)n^{(k+1)/2}$,
which itself is a weighted sum of a constant number of random variables of the form $\psharp{\mu}(\blambda)/n^{(k+1)/2}$,
where $\weight(\mu) = k+1$.
An easy application of Cauchy--Schwarz shows that the variance of a weighted sum of a constant number of random variables
is $o_d(1)$ if the variance of each random variables is $o_d(1)$.  Thus, we will show that
$\Var[\psharp{\mu}(\blambda)/n^{(k+1)/2}] = o_d(1)$ for all $\weight(\mu) = k+1$.

Fix a partition $\mu$ of weight $k+1$.  Then
\begin{equation*}
\Var\left[\frac{\psharp{\mu}(\blambda)}{n^{(k+1)/2}}\right]
 = \E_{\blambda \sim \SWunif{n}{d}}\left[\frac{1}{n^{(k+1)/2}}\left(\psharp{\mu}(\blambda) \psharp{\mu}(\blambda) - \E[\psharp{\mu}]^2 \right)\right]
\end{equation*}
By Proposition~\ref{prop:should-be-in-a-structure-constants-discussion-or-theorem}, $\psharp{\mu}(\lambda) \cdot \psharp{\mu}(\lambda) = \psharp{\mu \cup \mu}(\lambda) + \calO(\lambda)$,
where $\calO(\lambda)$ is an observable of weight at most $2\cdot \weight(\psharp{\mu}) - 2 = 2k$.
Then
\begin{equation*}
\Var\left[\frac{\psharp{\mu}(\blambda)}{n^{(k+1)/2}}\right]
= \E_{\blambda \sim \SWunif{n}{d}}\left[\frac{1}{n^{k+1}} \cdot \left(\psharp{\mu \cup \mu}(\blambda)-\E[\psharp{\mu}]^2\right)\right] + \E_{\blambda \sim \SWunif{n}{d}}\left[\frac{1}{n^{k+1}} \cdot \calO(\blambda)\right].
\end{equation*}
The second term is $\pm o_d(1)$ by Proposition~\ref{prop:pretty-small}.  As for the first term, Corollary~\ref{cor:sw-expect}, shows that it equals
\begin{equation}\label{eq:almost-there}
\frac{1}{n^{k+1}} \cdot \left(n^{\downarrow 2 |\mu|} d^{2\ell(\mu)-2|\mu|} - n^{\downarrow |\mu|} n^{\downarrow |\mu|}d^{2\ell(\mu)-2|\mu|}\right)
= \frac{1}{d^{4|\mu| - 2(k+1)}} \cdot \left(\frac{n^{\downarrow 2|\mu|} - (n^{\downarrow |\mu|})^2}{n^{k+1}}\right),
\end{equation}
where we used the fact that $\ell(\mu) = \weight(\mu) - |\mu| = k+1 - |\mu|$.
The highest-degree term of both $n^{\downarrow 2|\mu|}$ and $(n^{\downarrow |\mu|})^2$ is $n^{2|\mu|}$, so we can write
\begin{equation*}
\eqref{eq:almost-there}= \frac{1}{d^{4|\mu| - 2(k+1)}} \cdot \sum_{b = -(k+1)}^{2|\mu| - (k + 2)} \alpha_b \cdot n^b
\end{equation*}
for some constants $\alpha_b$.
When $b < 0$, $n^b/d^{4|\mu| - 2k - 2}\leq 1/n$, which is $o_d(1)$ because $n \geq f(d) = \omega_d(1)$.
On the other hand, when $b \geq 0$, then this term is $o_d(1)$ because $n \leq C d^2$.
\end{proof}

\ignore{
\subsection{Some algebra calculations}\label{sec:tedious-sec}

In this section, we prove Fact~\ref{fact:tedious-fact}.
\begin{named}{Fact~\ref{fact:tedious-fact} restated}
Let $c \in \R$.  Then
\begin{itemize}
\item $\pstarc{2}{c} = (-2c-1)\psharp{(1)} + \psharp{(2)}$, and
\item $\pstarc{4}{c} = (-4c^3-6c^2-4c-1)\psharp{(1)}+(6c^2+6c+4)\psharp{(2)}+(-6c-3)\psharp{(1, 1)}+(-4c-2)\psharp{(3)}+4\psharp{(2,1)}+\psharp{(4)}$.
\end{itemize}
\end{named}
\begin{proof}
Let $k\geq 1$.  We begin by expanding $\pstarc{k}{c}$ as a linear combination of $\pstar{j}$'s.
\begin{align*}
\pstarc{k}{c}(\lambda)
&= \sum_{i=1}^\infty (\lambda_i - i - c)^k - (-i-c)^k\\
&= \sum_{i=1}^\infty (\lambda_i + \tfrac{1}{2}+(-c-\tfrac{1}{2}))^k - (-i +\tfrac{1}{2}+(-c-\tfrac{1}{2}))^k\\
&= \sum_{i=1}^\infty \sum_{j=1}^k \binom{k}{j}\left((\lambda_i-i+\tfrac{1}{2})^j(-c-\tfrac{1}{2})^{k-j} - (-i+\tfrac{1}{2})^j(-c-\tfrac{1}{2})^{k-j}\right)\\
& = \sum_{j=1}^k \binom{k}{j}(-c-\tfrac{1}{2})^{k-j} \sum_{i=1}^\infty (\lambda_i-i+\tfrac{1}{2})^j - (-i + \tfrac{1}{2})^j\\
& = \sum_{j=1}^k \binom{k}{j}(-c - \tfrac{1}{2})^{k-j} \pstar{j}(\lambda),
\end{align*}
where in the third step the sum begins at $j =1$ rather than $j= 0$ because the summand is zero when $j = 0$.
In our case, we care about $k = 2,4$.
\begin{align*}
\pstarc{2}{c} &= 2 (-c-\tfrac{1}{2}) \pstar{1} + \pstar{2},\\
\pstarc{4}{c} &= 4 (-c-\tfrac{1}{2})^3 \pstar{1} + 6 (-c-\tfrac{1}{2})^2 \pstar{2}
				+ 4 (-c-\tfrac{1}{2}) \pstar{3} + \pstar{4}.
\end{align*}
\rnote{XXXThis can be rewritten, because we already have a section about pstars to psharps.  The only annoyance is going from products of psharps to single psharps; i.e., the stupid structure constant stuff}Next, we appeal to \cite[Proposition~$2.7$]{IO02}, which shows how to express the $\widetilde{p}_k$ polynomials as linear combinations of the $\pstar{j}$ polynomials.  For the values of $k = 2, 3, 4, 5$, their proposition gives
\begin{align*}
\widetilde{p}_2 = 2 \pstar{1},\quad
\widetilde{p}_3  = 3\pstar{2},\quad
\widetilde{p}_4  = 4 \pstar{3} + \pstar{1},\quad
\widetilde{p}_5  = 5\pstar{4} + \frac{5}{2}\pstar{2}.
\end{align*}
(The expansion given in their Proposition~$2.7$ includes $\pstar{0}$ terms, which are identically zero.  Thus, we can omit these.)
Rearranging these yields
\begin{equation*}
\pstar{1} = \frac{1}{2} \widetilde{p}_2,\quad
\pstar{2} = \frac{1}{3} \widetilde{p}_3,\quad
\pstar{3} = \frac{1}{4} \widetilde{p}_4 - \frac{1}{8} \widetilde{p}_2,\quad
\pstar{4} = \frac{1}{5} \widetilde{p}_5 - \frac{1}{6} \widetilde{p}_3.
\end{equation*}
Finally, the end of Section~$2$ in~\cite{Mel10a} gives the following explicit formulas.
\begin{equation*}
\widetilde{p}_2 = 2 \psharp{(1)},\quad
\widetilde{p}_3 = 3 \psharp{(2)},\quad
\widetilde{p}_4 = 4 \psharp{(3)} + 6 \psharp{(1,1)} + 2 \psharp{(1)},\quad
\widetilde{p}_5 = 5 \psharp{(4)} + 20 \psharp{(2,1)} + 15 \psharp{(2)}.
\end{equation*}
Fact~\ref{fact:tedious-fact} now follows by performing appropriate substitutions.
\end{proof}

}

\ignore{

\newpage \textbf{XXXXXXXXXX old stuff}
\subsection{Preliminaries}

We begin with some background material.  The \emph{shifted Schur functions} are defined as
\begin{definition}
Let $\lambda = (\lambda_1, \ldots, \lambda_n)$ be a partition.  Then
\begin{equation*}
s^*_\lambda(x_1, \ldots, x_n) \coloneqq  \frac{\det[(x_i + n - i)^{\downarrow \lambda_j + n -j}]}{\det[(x_i+n-i)^{\downarrow n-j}]}.
\end{equation*}
\end{definition}
\noindent
The shifted Schur polynomials were introduced and studied in the work of Okounkov and Olshanski~\cite{OO98b}.
They are a shifted analogue of the so-called \emph{factorial Schur functions} of~[XXX] (see also~[XXX]).
For us, they are important mainly because they appear in the following formula for ratios of Schur functions.
\begin{fact}\label{fact:binomial-formula}
Let $\lambda = (\lambda_1, \ldots, \lambda_n)$ be a partition.  Then
\begin{equation*}
\frac{s_\lambda(1 + x_1,\ldots, 1 + x_n)}{s_\lambda(1, \ldots, 1)}
= \sum_{\mu} \frac{s^*_\mu(\lambda) s_\mu(x)}{n^{\downarrow\mu}}.
\end{equation*}
\end{fact}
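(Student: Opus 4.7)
The identity is the classical Binomial Formula for Schur polynomials, due (in its shifted-Schur form) to Okounkov and Olshanski, with antecedents in Lascoux's work on factorial Schur functions. My plan is a characterization-style argument: for each partition~$\mu$, isolate the claimed coefficient in a Schur-basis expansion of the LHS in the variable~$x$, and then identify it with $s^*_\mu(\lambda)/\rising{n}{\mu}$ using the fundamental vanishing/interpolation property of the shifted Schur polynomials.

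First, for fixed $\lambda$, expand the LHS in the Schur basis for symmetric polynomials in~$x$: since it is a symmetric polynomial in $x$ of total degree at most~$|\lambda|$, we may uniquely write
\[
\frac{s_\lambda(1+x_1, \ldots, 1+x_n)}{s_\lambda(1, \ldots, 1)} = \sum_{\mu \,:\, \ell(\mu) \leq n} F_\mu(\lambda) \cdot s_\mu(x),
\]
with $F_\mu(\lambda) = 0$ whenever $|\mu| > |\lambda|$. The task becomes identifying $F_\mu(\lambda)$ as a function of~$\lambda$. The key structural step is showing that $\lambda \mapsto F_\mu(\lambda)$ is a shifted symmetric polynomial on Young diagrams of gradation exactly~$|\mu|$, i.e., an element of $\Lambda^*$. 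I would verify this by unpacking the Jacobi--Trudi (or Weyl) determinantal expression for $s_\lambda(1+x)$, isolating the Schur-$\mu$ coefficient, and checking that, as a function of the shifted coordinates $\lambda_i - i$, it is symmetric and polynomial.

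Having established that $F_\mu \in \Lambda^*$ of gradation~$|\mu|$, I would apply the interpolation characterization of shifted Schur polynomials: $s^*_\mu$ is the unique element of $\Lambda^*$ of gradation $|\mu|$ satisfying $s^*_\mu(\nu) = 0$ for all partitions $\nu \neq \mu$ with $|\nu| \leq |\mu|$, and $s^*_\mu(\mu) = H(\mu) \coloneqq \prod_{\square \in [\mu]} h(\square)$. I would verify $\rising{n}{\mu} \cdot F_\mu$ satisfies this: specializing $\lambda = \nu$ with $|\nu| < |\mu|$ gives $F_\mu(\nu) = 0$, because the LHS is then a polynomial of degree $|\nu| < |\mu|$ in~$x$, whose Schur-basis expansion contains no $s_\rho$ with $|\rho| > |\nu|$. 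For $|\nu| = |\mu|$ with $\nu \neq \mu$, the top-degree-in-$x$ piece of $s_\nu(1+x)/s_\nu(1,\ldots,1)$ is $s_\nu(x)/s_\nu(1,\ldots,1)$, which contributes to $s_\nu$ but not to~$s_\mu$, while the lower-degree pieces involve only $s_\rho$ with $|\rho| < |\mu|$; hence $F_\mu(\nu) = 0$ here too. Finally, for $\lambda = \mu$, the coefficient of $s_\mu(x)$ in $s_\mu(1+x)/s_\mu(1,\ldots,1)$ is $1/s_\mu(1,\ldots,1) = H(\mu)/\rising{n}{\mu}$ by Proposition~\ref{prop:s111} combined with the Hook-Length Formula. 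Thus $\rising{n}{\mu} \cdot F_\mu$ matches $s^*_\mu$ at all test points, forcing $F_\mu(\lambda) = s^*_\mu(\lambda)/\rising{n}{\mu}$.

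The main obstacle is the structural verification that $F_\mu$ lies in $\Lambda^*$ with the correct gradation bound---this is the heart of the matter and requires careful manipulation of the determinantal Schur formula. Once this is in place, the coefficient identification via interpolation is essentially automatic. (A minor notational note: the statement uses $n^{\downarrow \mu}$ in the denominator; the proof outlined here naturally produces $\rising{n}{\mu}$, matching the form of Theorem~\ref{thm:binomial-formula}, and the two should be reconciled by convention.)
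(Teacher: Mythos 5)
Your approach is sound, but note that the paper itself offers no proof of this identity: in the live text it is Theorem~\ref{thm:binomial-formula}, quoted as known from Okounkov and Olshanski~\cite[Theorem~5.1]{OO98b}, with the factorial-Schur antecedent credited to Lascoux~\cite{Las78} (see~\cite[Example~I.3.10]{Mac95}); the denominator $n^{\downarrow\mu}$ in the statement you worked from is indeed a typo for $\rising{n}{\mu}$, as you correctly reconciled. So the comparison is between your sketch and the cited literature rather than an in-paper argument. Your route---expand the left-hand side in the Schur basis in~$x$, show each coefficient $F_\mu(\lambda)$ is shifted symmetric of degree at most~$|\mu|$ in~$\lambda$, then identify $\rising{n}{\mu} F_\mu$ with $s^*_\mu$ by the vanishing/normalization characterization---is the standard interpolation proof, and your evaluations are all correct: $F_\mu(\nu)=0$ for $|\nu|<|\mu|$ by degree, $F_\mu(\nu)=0$ for $|\nu|=|\mu|$, $\nu\neq\mu$ because the top homogeneous piece of $s_\nu(1+x)$ is $s_\nu(x)$, and $F_\mu(\mu)=1/s_\mu(1,\dots,1)=\prod_{\square\in[\mu]}h(\square)\big/\rising{n}{\mu}$ by Proposition~\ref{prop:s111} and the Hook-Length Formula. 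The one step you leave as a plan---membership in $\Lambda^*$ with the right degree bound---does go through from the bialternant: writing $l_j=\lambda_j+n-j$ and $m_i=\mu_i+n-i$, a binomial expansion of $\det\bigl[(1+x_i)^{l_j}\bigr]$ together with Cauchy--Binet shows that the coefficient of $s_\mu(x)$ in $s_\lambda(1+x_1,\dots,1+x_n)$ is $\det\bigl[\tbinom{l_j}{m_i}\bigr]_{i,j}$, which is alternating in the $l_j$, hence divisible by $\prod_{i<j}(l_i-l_j)$, with quotient a symmetric polynomial of degree $|\mu|$ in the $l_j$, i.e., an element of $\Lambda^*$ of gradation $|\mu|$. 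In fact, carrying this one line further makes the interpolation step unnecessary: dividing by $s_\lambda(1,\dots,1)=\prod_{i<j}(l_i-l_j)\big/\prod_i(n-i)!$ and comparing with the determinantal definition of $s^*_\mu$ (whose denominator is exactly $\prod_{i<j}(l_i-l_j)$) gives $\det\bigl[\tbinom{l_j}{m_i}\bigr]\big/s_\lambda(1,\dots,1)=s^*_\mu(\lambda)\big/\rising{n}{\mu}$ directly, since $\rising{n}{\mu}=\prod_i m_i!\big/\prod_i(n-i)!$; this closed form is precisely Lascoux's version of the binomial formula, so your interpolation argument is a valid but optional detour once the determinantal extraction is done.
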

\noindent
This formula appears as Theorem~$5.1$ in~\cite{OO98b} and is known as the \emph{binomial formula}.
It can be thought of as the Taylor expansion for Schur polynomials around the point $(1,\ldots, 1)$.
(See~\cite{OO98a} for a paper which uses the binomial formula extensively.)

From Section~$15$ of~\cite{OO98b}, there is a linear isomorphism $\psi:\Lambda\rightarrow\Lambda^*$
which satisfies $\psi(s_\mu) = s^*_\mu$ and $\psi(p_\mu) = \psharp{\mu}$.
Applying this to Fact~\ref{fact:looks-like-projector-formula}, we get the following formula for shifted Schur functions.
\begin{fact}\label{fact:shifted-formula}
Let $\lambda = (\lambda_1, \ldots, \lambda_n)$ be a partition of~$k$.  Then
\begin{equation*}
s^*_\lambda(x_1, \ldots, x_n) = \E_{\pi \in \symm{k}}\left[\chi_\lambda(\pi)\cdot \psharp{\pi}(x)\right].
\end{equation*}
\end{fact}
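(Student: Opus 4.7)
The plan is to recognize Fact~\ref{fact:shifted-formula} as essentially a bare-handed restatement of Theorem~\ref{thm:psharp-ssharp-relation}, obtained by simply unrolling the definition of the Fourier coefficient over $\symm{k}$. Concretely, for fixed $x = (x_1,\dots,x_n)$ define the function $f_x \co \symm{k} \to \C$ by $f_x(\pi) \coloneqq \psharp{\pi}(x)$, where, per our conventions, $\psharp{\pi}(x)$ means $\psharp{\rho}(x)$ for $\rho = \cyctype{\pi}$. Since $\psharp{\mu}$ depends only on the cycle type of its permutation argument, $f_x$ is a class function, so its Fourier expansion in the character basis is legitimate. Theorem~\ref{thm:psharp-ssharp-relation} asserts exactly that its Fourier coefficients are $\four{f_x}(\lambda) = s^*_\lambda(x)$. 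Unfolding the definition $\four{f_x}(\lambda) = \la f_x, \chi_\lambda\ra = \E_{\bpi \sim \symm{k}}[f_x(\bpi)\ol{\chi_\lambda(\bpi)}]$ and using that symmetric-group characters are rational-valued (so $\ol{\chi_\lambda} = \chi_\lambda$) gives
\[
    s^*_\lambda(x) \;=\; \E_{\bpi \in \symm{k}}\bigl[\chi_\lambda(\bpi)\cdot \psharp{\bpi}(x)\bigr],
\]
which is exactly Fact~\ref{fact:shifted-formula}.

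As a cross-check, one can also derive the fact from Theorem~\ref{thm:power-schur-relation} together with the ``un-hat'' linear isomorphism $\psi \co \Lambda \to \Lambda^*$ of Okounkov--Olshanski~\cite{OO98b}, which satisfies $\psi(s_\mu) = s^*_\mu$ and $\psi(p_\mu) = \psharp{\mu}$ (extended multiplicatively on the $p$-basis, linearly on everything else). Theorem~\ref{thm:power-schur-relation} states $s_\mu(x) = \E_{\bpi \sim \symm{k}}[\chi_\mu(\bpi)\, p_{\bpi}(x)]$; applying $\psi$ to both sides (with $x_1,\dots,x_n$ regarded as scalar parameters, not as the indeterminates on which $\psi$ acts) and using $\psi$'s action on the $p$-basis yields the claimed formula.

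Neither derivation has a genuinely hard step: both use the already-cited machinery essentially verbatim. The only point that deserves care is the identification $f_x(\pi) = \psharp{\pi}(x)$ as a class function, and the observation that ``$\psharp{\pi}(x)$'' in the statement of Fact~\ref{fact:shifted-formula} is to be read via the cycle-type convention from Section~\ref{sec:partitions} (otherwise the statement would not even typecheck, since $\psharp{}$ is indexed by partitions). Given these bookkeeping points, the proof is a two-line invocation of Theorem~\ref{thm:psharp-ssharp-relation} plus the definition of the Fourier transform, with no further calculation needed.
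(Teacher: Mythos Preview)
Your proposal is correct and essentially matches the paper. The paper derives this fact exactly via your ``cross-check'' route---applying the Okounkov--Olshanski isomorphism $\psi$ (with $\psi(s_\mu)=s^*_\mu$, $\psi(p_\mu)=\psharp{\mu}$) to the formula $s_\lambda(x)=\E_{\pi}[\chi_\lambda(\pi)p_\pi(x)]$---while your primary route simply observes that the fact is Theorem~\ref{thm:psharp-ssharp-relation} with the Fourier coefficient unrolled; both are one-line derivations from the same cited machinery.
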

\noindent
One interesting consequence of this formula is that it allows us to calculate an explicit value for the expectation of $s^*_\mu$ with respect to the Schur--Weyl distribution.
\begin{proposition}\label{prop:shifted-expectation}
Let $\mu$ be a partition of~$k$.  Then
\begin{equation*}
\E_{\lambda\sim\SWunif{n}{d}} s^*_\mu(\lambda) = \frac{\dim(\mu) n^{\downarrow k}d^{\downarrow\mu}}{d^k k!}.
\end{equation*}
\end{proposition}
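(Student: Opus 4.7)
The plan is to expand $s^*_\mu(\lambda)$ as a linear combination of $\psharp{\pi}(\lambda)$'s using Fact~\ref{fact:shifted-formula}, then use the known formula (Corollary~\ref{cor:sw-expect}) for the expectation of each $\psharp{\pi}$ under the Schur--Weyl distribution, and finally recognize the resulting sum over permutations as a character inner product that collapses to $s_\mu(1,\dots,1)$ via the Fourier-theoretic identity in Theorem~\ref{thm:power-schur-relation} (equivalently, Fact~\ref{fact:looks-like-projector-formula}).

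In more detail: by Fact~\ref{fact:shifted-formula} we have $s^*_\mu(\lambda) = \E_{\bpi \in \symm{k}}[\chi_\mu(\bpi)\cdot\psharp{\bpi}(\lambda)]$. Interchanging the two expectations (legal since the inner sum is finite) gives
\[
    \E_{\blambda \sim \SWunif{n}{d}}[s^*_\mu(\blambda)] = \E_{\bpi \in \symm{k}}\Bigl[\chi_\mu(\bpi)\cdot\E_{\blambda\sim\SWunif{n}{d}}[\psharp{\bpi}(\blambda)]\Bigr].
\]
By Corollary~\ref{cor:sw-expect}, the inner expectation equals $\falling{n}{k}\cdot d^{\ell(\bpi)-k}$, and the factor $\falling{n}{k}/d^k$ pulls out of the $\bpi$-expectation. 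What remains is
\[
    \E_{\bpi \in \symm{k}}[\chi_\mu(\bpi)\cdot d^{\ell(\bpi)}] = \E_{\bpi \in \symm{k}}[\chi_\mu(\bpi)\cdot p_{\bpi}(\underbrace{1,\dots,1}_{d})],
\]
using the observation that $d^{\ell(\pi)}$ is precisely $p_\pi$ evaluated at $d$ ones (cf.~Definition~\ref{def:power-sum}).

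By Theorem~\ref{thm:power-schur-relation} (applied with $x = (1,\dots,1) \in \C^d$), the right-hand side is exactly the Fourier coefficient $\widehat{p_{(\cdot)}(1,\dots,1)}(\mu) = s_\mu(1,\dots,1)$, using that $\chi_\mu$ is real so $\overline{\chi_\mu} = \chi_\mu$. Plugging in the closed form $s_\mu(1,\dots,1) = \dim(\mu)\cdot\rising{d}{\mu}/k!$ from Proposition~\ref{prop:s111} finishes the computation and yields the claimed product formula. No step is particularly difficult; the only thing to be careful about is keeping the two normalizations straight ($k!$ from the Fourier expansion of $s^*_\mu$, and $1/d^k$ from converting $d^{\ell(\bpi)-k}$ to $d^{\ell(\bpi)}/d^k$), and making sure Theorem~\ref{thm:power-schur-relation}/Fact~\ref{fact:looks-like-projector-formula} is being applied with the correct choice of class function.
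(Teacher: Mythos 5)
Your proposal is correct and is essentially the paper's own argument: expand $s^*_\mu(\lambda)=\E_{\bpi\sim\symm{k}}[\chi_\mu(\bpi)\,\psharp{\bpi}(\lambda)]$ (Theorem~\ref{thm:psharp-ssharp-relation}), swap the two expectations and apply Corollary~\ref{cor:sw-expect} to pull out $\falling{n}{k}/d^k$, and then evaluate $\E_{\bpi\sim\symm{k}}[\chi_\mu(\bpi)\,d^{\ell(\bpi)}]=s_\mu(1,\dots,1)=\dim(\mu)\,\rising{d}{\mu}/k!$ via Theorem~\ref{thm:power-schur-relation} and Proposition~\ref{prop:s111}, which is exactly the computation of Lemma~\ref{lem:beginning-lemma} in the case $u=\mathrm{id}$. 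The only point to note is notational: the quantity you obtain, $\rising{d}{\mu}=\prod_{\square\in[\mu]}(d+c(\square))$, is precisely what the symbol $d^{\downarrow\mu}$ in the statement denotes in that older notation (compare the denominator in the binomial formula, Theorem~\ref{thm:binomial-formula}), and \emph{not} the falling product $\falling{d}{\mu}=\prod_{\square\in[\mu]}(d-c(\square))$ of Section~\ref{sec:partitions}, so your final formula does agree with the claim.
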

\begin{proof}
By Facts~XXX and~\ref{fact:shifted-formula},
\begin{align*}
\E_{\lambda\sim\SWunif{n}{d}} s^*_\mu(\lambda)
&= \E_{\pi \in \symm{k}}\left[\chi_\mu(\pi)\cdot \E_{\lambda\sim\SWunif{n}{d}}\left[\psharp{\pi}(\lambda)\right]\right]\\
&=\frac{n^{\downarrow k}}{d^k}\E_{\pi \in \symm{k}}\left[\chi_\mu(\pi)\cdot d^{\ell(\pi)}\right].
\end{align*}
MORE PROOF HERE.
\end{proof}
\noindent
Though we won't use Proposition~\ref{prop:shifted-expectation} per se, we will use the following corollary.
\begin{corollary}\label{cor:pi-expectation}
Let $\mu$ be a partition of~$k$.  Then
\begin{equation*}
\E_{\pi \in \symm{k}}\left[\chi_\lambda(\pi)\cdot d^{\ell(\pi)}\right]= \frac{\dim(\mu) d^{\downarrow\mu}}{k!}.
\end{equation*}
\end{corollary}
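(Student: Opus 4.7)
The plan is to obtain this corollary as a one-line specialization of Theorem~\ref{thm:power-schur-relation}. That theorem says that, viewing $\pi \mapsto p_\pi(x)$ as a class function on $\symm{k}$, its $\mu$-Fourier coefficient is $s_\mu(x)$. Since $\chi_\mu$ is real-valued, unpacking the definition of the Fourier coefficient gives
\[
    s_\mu(x) = \E_{\bpi \sim \symm{k}}\bigl[\chi_\mu(\bpi) \cdot p_{\bpi}(x)\bigr].
\]
First I would specialize the indeterminates to $x = (\underbrace{1,1,\dots,1}_{d\text{ entries}})$. From Definition~\ref{def:power-sum} we have $p_m(1,\dots,1) = d$ for every $m \geq 1$, so $p_\pi(1,\dots,1) = d^{\ell(\pi)}$. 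Hence the left-hand side of the corollary is simply $s_\mu(1,\dots,1)$.

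To finish, I would evaluate $s_\mu(1,\dots,1)$ in closed form using Proposition~\ref{prop:s111}, which gives $s_\mu(1,\dots,1) = (\dim \mu)\rising{d}{\mu}/k!$. Stringing the two equalities together yields the claim. (A minor notational remark: the symbol ``$d^{\downarrow \mu}$'' appearing in the statement is not among the factorial-power notations fixed in Section~\ref{sec:partitions}; the argument above shows that the correct right-hand side is $(\dim \mu)\rising{d}{\mu}/k!$, so ``$d^{\downarrow \mu}$'' here should be read as $\rising{d}{\mu} = \prod_{\square \in [\mu]}(d + c(\square))$, and the corollary as stated should have $\chi_\mu$ in place of the typo ``$\chi_\lambda$''.)

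No step here poses a real obstacle: both ingredients are already established in the excerpt, and the argument is a chain of two equalities. The only mildly delicate point is the passage from the abstract Fourier-coefficient formulation of Theorem~\ref{thm:power-schur-relation} to the concrete expectation over $\symm{k}$, which is immediate once one records that the characters $\chi_\mu$ of the symmetric group are real. I would note in passing that this computation is precisely the ``$u = \mathrm{id}$'' case of the later Lemma~\ref{lem:beginning-lemma} (whose proof follows exactly the same route), so in principle the corollary could also be derived by quoting that lemma.
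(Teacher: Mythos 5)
Your proof is correct and follows essentially the same route as the paper: the identity is the $u=\mathrm{id}$ case of Lemma~\ref{lem:beginning-lemma}, whose proof evaluates the same Fourier coefficient $\four{e}(\mu)=s_\mu(1,\dots,1)$ via Theorem~\ref{thm:power-schur-relation} and then invokes Proposition~\ref{prop:s111}, exactly as you do. You are also right about the statement's legacy glitches: $\chi_\lambda$ should read $\chi_\mu$, and $d^{\downarrow\mu}$ is the draft's old notation for what the paper now writes as $\rising{d}{\mu}$, which is what your computation produces.
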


\subsection{In need of a name}

\begin{theorem}\label{thm:ratio}
Given $x = (x_1, \ldots, x_d)$ satisfying $x_1 + \ldots + x_d = 0$, \rnote{do we need all $x_i \geq -1$?}let
\begin{equation*}
\schurrat{x}(\lambda)\coloneqq \frac{s_\lambda(1+x_1, \ldots, 1+x_d)}{s_\lambda(1, \ldots, 1)}.
\end{equation*}
Then over a random $\blambda \sim \SWunif{n}{d}$, $\schurrat{x}(\blambda)$ is a random variable with mean~$1$ and variance
\begin{equation*}
\Var[\schurrat{x}]
= \sum_{|\mu| > 0}\frac{n^{\downarrow |\mu|}}{d^{|\mu|}} \cdot\frac{s_\mu(x)^2}{d^{\downarrow \mu}}.
\end{equation*}
\end{theorem}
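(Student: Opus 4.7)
My plan is to prove the two claims of Theorem~\ref{thm:ratio} separately: first the mean statement, which is quite quick, and then the variance formula, which essentially reduces to Theorem~\ref{thm:paninski-var} already proved in the excerpt. For the mean, I would observe that by Proposition~\ref{prop:SW-dist} and Proposition~\ref{prop:s111}, $\SWunifProb{n}{d}{\lambda}$ can be rewritten as $\dim(\lambda) \cdot s_\lambda(1,\dots,1)/d^n$ (using homogeneity of the Schur polynomials). Therefore
\[
\E_{\blambda \sim \SWunif{n}{d}}[\schurrat{x}(\blambda)] = \frac{1}{d^n}\sum_{\lambda \vdash n} \dim(\lambda)\, s_\lambda(1+x_1, \ldots, 1+x_d).
\]
Now recall the classical expansion $p_1^n = \sum_{\lambda \vdash n} \dim(\lambda)\, s_\lambda$ (which follows from Theorem~\ref{thm:power-schur-relation} with $\mu = (1^n)$ and $\chi_\lambda(1^n) = \dim(\lambda)$). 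Applied at the point $(1+x_1,\dots,1+x_d)$, this gives $\sum_i (1+x_i)^n \cdot \text{wait, no}$: we get $\bigl(\sum_i(1+x_i)\bigr)^n = d^n$ on the left (using $x_1 + \cdots + x_d = 0$), and hence the numerator above equals $d^n$. Therefore the mean is $1$.

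For the variance, since $\Var[\schurrat{x}] = \E[(\schurrat{x} - 1)^2]$, I would apply the Binomial Formula (Theorem~\ref{thm:binomial-formula}): its $\mu = \emptyset$ term contributes exactly $1$, so
\[
\schurrat{x}(\blambda) - 1 = \sum_{0 < \ell(\mu) \leq d} \frac{s_\mu(x)}{\rising{d}{\mu}} \cdot s^*_\mu(\blambda).
\]
Squaring and taking expectation, the variance becomes a double sum over partitions $\mu,\nu$ of $\E_{\blambda \sim \SWunif{n}{d}}[s^*_\mu(\blambda)\, s^*_\nu(\blambda)]$ weighted by $s_\mu(x)s_\nu(x)/(\rising{d}{\mu}\rising{d}{\nu})$. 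To evaluate this I would use Theorem~\ref{thm:paninski-internal2}, which under the assumption $x_1 + \cdots + x_d = 0$ shows that the inner sum over $\nu$ for fixed $\mu$ collapses to the single diagonal term $\nu = \mu$ with value $s_\mu(x)\,\falling{n}{|\mu|}/d^{|\mu|}$. Substituting in, the variance equals $\sum_{0 < \ell(\mu) \leq d} \frac{s_\mu(x)^2}{\rising{d}{\mu}\, d^{|\mu|}} \cdot \falling{n}{|\mu|}$, which matches the claimed formula (modulo the apparent notational discrepancy between $d^{\downarrow\mu}$ in the theorem statement and $\rising{d}{\mu}$, which I suspect is a relic of an earlier draft).

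The main obstacle is entirely encapsulated in Theorem~\ref{thm:paninski-internal2}: the naive expectation $\E[s^*_\mu \, s^*_\nu]$ scales like $n^{|\mu|+|\nu|}$, but the hypothesis $\sum x_i = 0$ produces a massive cancellation reducing the effective scale to $n^{|\mu|}$ once one sums against $s_\nu(x)/\rising{d}{\nu}$. That cancellation is exactly what the proof of Theorem~\ref{thm:paninski-internal2} engineers, via the expansion of $s^*_\mu$ in central characters $\psharp{\tau}$ (Theorem~\ref{thm:psharp-ssharp-relation}), Lemma~\ref{lem:expected-shifted-schur-prod} for the joint expectation under $\SWunif{n}{d}$, and Fourier-theoretic simplifications (Lemma~\ref{lem:beginning-lemma}) that leverage $p_1(x) = 0$ to kill all contributions except those where $\bv$ has no fixed points in the same conjugacy class as $\bw_1^\bot$. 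So the heart of the proof is already established; what Theorem~\ref{thm:ratio} adds over Theorem~\ref{thm:paninski-var} is only the (easy) mean calculation and the packaging as variance rather than raw second moment.
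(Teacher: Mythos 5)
Your proposal is correct, and the variance half is exactly the route the paper itself takes: the statement you were given is an older phrasing of Theorem~\ref{thm:paninski-var}, whose proof in the paper is precisely ``expand via the Binomial Formula (Theorem~\ref{thm:binomial-formula}), square, and invoke the cancellation Theorem~\ref{thm:paninski-internal2}'' --- so citing that machinery, as you do, is legitimate and leaves nothing further to verify there. You are also right that the $d^{\downarrow\mu}$ in the statement is stale notation: under the paper's current conventions the denominator must be $\rising{d}{\mu} = \prod_{\square\in[\mu]}(d+c(\square))$, exactly as in Theorem~\ref{thm:paninski-var}, and the identity $\Var[\schurrat{x}] = \E[(\schurrat{x}-1)^2]$ is justified by your mean computation. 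The one place you genuinely diverge from the paper is that mean computation, and your version is arguably better: the paper's own argument obtains $\E[\schurrat{x}(\blambda)]=1$ by noting that $\dim(\lambda)\,s_\lambda\!\left(\tfrac{1+x_1}{d},\dots,\tfrac{1+x_d}{d}\right)$ are the weak Schur sampling probabilities for the perturbed spectrum and hence sum to~$1$, which implicitly needs every $1+x_i\ge 0$ (hence the marginal query in the statement); your purely algebraic identity $\sum_{\lambda\vdash n}\dim(\lambda)\,s_\lambda(y_1,\dots,y_d) = (y_1+\cdots+y_d)^n$, obtained from Theorem~\ref{thm:power-schur-relation} at the identity permutation, gives mean~$1$ for every $x$ with $x_1+\cdots+x_d=0$, so it answers that query in the negative: the hypothesis $x_i\ge -1$ is needed only for probabilistic interpretations such as Corollary~\ref{cor:paninski-chi-squared}, not for the moment identities themselves.
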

\begin{proof}
By Proposition~\ref{prop:schur-probability}, $\dim(\lambda)\cdot s_\lambda(\eta_1, \ldots, \eta_d)$ is the probability that the weak Schur sampling algorithm outputs~$\lambda$ given a state with spectrum~$\eta$.  Since these probabilities sum to one, we have that
\begin{equation}
0 = \sum_{\lambda \vdash n} \dim(\lambda)\cdot s_\lambda((1 + x_1)/d, \ldots, (1+x_d)/d)
	- \dim(\lambda)\cdot s_\lambda(1/d, \ldots, 1/d).\label{eq:zero-diff}
\end{equation}
By definition, $\dim(\lambda)\cdot s_\lambda(1/d, \ldots, 1/d)$ gives the probability that the Schur--Weyl distribution outputs $\lambda$.  As a result,
\begin{align*}
& \sum_{\lambda \vdash n} \dim(\lambda)\cdot s_\lambda((1 + x_1)/d, \ldots, (1+x_d)/d) - \dim(\lambda)\cdot s_\lambda(1/d, \ldots, 1/d)\\
 =& \sum_{\lambda \vdash n} \dim(\lambda)\cdot s_\lambda(1/d, \ldots, 1/d)
		\cdot \left(\frac{s_\lambda((1 + x_1)/d, \ldots, (1+x_d)/d)}{s_\lambda(1/d, \ldots, 1/d)}-1\right)\\
 =& \sum_{\lambda \vdash n} \dim(\lambda)\cdot s_\lambda(1/d, \ldots, 1/d)
		\cdot \left(\frac{s_\lambda(1 + x_1, \ldots, 1+x_d)}{s_\lambda(1, \ldots, 1)}-1\right)\\
=& \E_{\blambda \sim \SWunif{n}{d}}[\schurrat{x}(\blambda) - 1].
\end{align*}
Here the second equality follows from the fact that $s_\lambda$ is a homogeneous polynomial.  By Equation~\eqref{eq:zero-diff}, this expectation is zero, meaning that the expectation of $\schurrat{x}$ is one.

Now we compute the variance of $\schurrat{x}$.  By Fact~\ref{fact:binomial-formula},
\begin{equation*}
\schurrat{x}(\lambda) = \sum_{\mu} \frac{s^*_\mu(\lambda) s_\mu(x)}{n^{\downarrow\mu}}.
\end{equation*}
The constant term in this expansion corresponds to the empty partition $\mu = ()$.
In this case, $s^*_\mu(\lambda) s_\mu(x)/n^{\downarrow\mu} = 1$ for all $\lambda$ and $x$.
Thus,
\begin{equation*}
\schurrat{x}(\lambda)-1 = \sum_{|\mu|>0} \frac{s^*_\mu(\lambda) s_\mu(x)}{d^{\downarrow\mu}}.
\end{equation*}
As a result,
\begin{equation}\label{eq:var-of-schurrat}
\Var[\schurrat{x}]
= \sum_{|\mu|,|\rho|>0} \frac{s_\mu(x) s_\rho(x)}{d^{\downarrow\mu}\cdot d^{\downarrow\rho}}
\E_{\blambda\sim \SWunif{n}{d}}\left[s^*_\mu(\blambda) s^*_\rho(\blambda)\right].
\end{equation}
Our proof strategy will be to show that certain appropriately chosen subexpressions within this summation have expectation zero, and what remains is the expression in the theorem statement.  Our proof structure will begin with the expectation of simple subexpressions and scale these upwards until at the end we have reached $\Var[\schurrat{x}]$.

We begin with the following lemma.
\begin{lemma}\label{lem:lemon-pi}
Let $\pi \in S_k$ and let $\lambda\vdash k$ be a partition.  Then
\begin{equation*}
\E_{\sigma \in S_k}\left[\chi_{\lambda}(\sigma) \cdot d^{\ell(\pi\cdot\sigma)}\right] = \frac{\chi_{\lambda}(\pi)\cdot d^{\downarrow \lambda}}{k!}.
\end{equation*}
\end{lemma}
\begin{proof}
The bulk of the work in proving Lemma~\ref{lem:lemon-pi} goes towards proving the following identity.
\begin{proposition}\label{prop:character-identity}
Let $\lambda,\mu \vdash k$ and $\alpha, \beta \in S_k$.  Then
\begin{equation*}
\sum_{\sigma \sim \mu}  \chi_\lambda(\alpha)\cdot \chi_\lambda(\beta\cdot \sigma)
=\sum_{\sigma \sim \mu} \chi_\lambda(\beta)\cdot \chi_\lambda(\alpha \cdot \sigma).
\end{equation*}
\end{proposition}
\begin{proof}
By definition, $\psharp{\rho}(\lambda)= |\lambda|! \cdot \chi_\lambda(\rho)/\dim \lambda$ when $|\rho| = |\lambda|$.
Thus, the statement of Proposition~\ref{prop:character-identity} is equivalent to
\begin{equation}
\sum_{\sigma \sim \mu}  \psharp{\alpha}(\lambda) \cdot \psharp{\beta \cdot \sigma}(\lambda)
=\sum_{\sigma \sim \mu} \psharp{\beta}(\lambda) \cdot \psharp{\alpha \cdot \sigma}(\lambda).\label{eq:left-right}
\end{equation}
Let us expand the left-hand side using the structure constants of the algebra.
\begin{equation*}
\sum_{\sigma \sim \mu}  \psharp{\alpha}(\lambda) \cdot \psharp{\beta \cdot \sigma}(\lambda)
=
\sum_{\sigma \sim \mu} \sum_{\rho} f^{\rho}_{\alpha(\beta\cdot\sigma)} \cdot \psharp{\rho}(\lambda).
\end{equation*}
From Proposition~\ref{prop:structure-coefficients}, $f^{\rho}_{\alpha(\beta\cdot\sigma)}$ is zero whenever $|\rho| < k$
(this follows from $|X| \geq |X_1| = |\alpha| = k$).
Furthermore, $\psharp{\rho}(\lambda)$ is by definition zero whenever $|\rho| > |\lambda| = k$.  Thus,
\begin{equation*}
\sum_{\sigma \sim \mu}  \psharp{\alpha}(\lambda) \cdot \psharp{\beta \cdot \sigma}(\lambda)
=
\sum_{\sigma \sim \mu} \sum_{\rho \vdash k} f^{\rho}_{\alpha(\beta\cdot\sigma)} \cdot \psharp{\rho}(\lambda).
\end{equation*}
By Corollary~\ref{cor:structure-coefficients-rewrite}, the coefficient on the polynomial $\psharp{\rho}$ in this expression is
\begin{align*}
\sum_{\sigma \sim \mu}f^{\rho}_{\alpha(\beta\cdot\sigma)}
&= \sum_{\sigma \sim \mu}k! \Pr_{\substack{s_1\sim \alpha\\ s_2\sim \beta \sigma}}[s_1 s_2 \sim \rho]\\
&= \frac{k!k!}{z_\mu} \cdot \Pr_{\substack{\sigma \sim \mu\\s_1\sim \alpha\\ s_2\sim \beta \sigma}}[s_1 s_2 \sim \rho].
\end{align*}
Here the second equality uses the fact that $k!/z_\mu$ counts the number of permutations of shape $\mu$.

Now, I'm going to make two claims.  First, I claim that the distribution on permutations given by drawing a random $\sigma \sim \mu$ and then outputting a random $s_2 \sim \beta \sigma$ is the same as the distribution given by drawing random $\sigma \sim \mu$ and $s_2 \sim \beta$ and outputting $s_2\cdot\sigma$.  This implies that
\begin{equation}
\sum_{\sigma \sim \mu}f^{\rho}_{\alpha(\beta\cdot\sigma)}
= \frac{k!k!}{z_\mu} \cdot \Pr_{\substack{\sigma \sim \mu\\s_1\sim \alpha\\ s_2\sim \beta}}[s_1 s_2\sigma \sim \rho].\label{eq:alpha-beta}
\end{equation}
Second, I claim that the distribution on permutations given by drawing random $s_1 \sim \alpha$ and $s_2 \sim \beta$ and outputting $s_1 s_2$ is the same as the distribution given by drawing random $s_1 \sim \alpha$ and $s_2 \sim \beta$ and outputting $s_2 s_1$.  This implies that
\begin{equation}
\sum_{\sigma \sim \mu}f^{\rho}_{\alpha(\beta\cdot\sigma)}
= \frac{k!k!}{z_\mu} \cdot \Pr_{\substack{\sigma \sim \mu\\s_1\sim \alpha\\ s_2\sim \beta}}[s_2 s_1\sigma \sim \rho].\label{eq:beta-alpha}
\end{equation}
Before defending these claims, I'll show how to finish the argument from here.  Equation~\eqref{eq:alpha-beta} gives the coefficient of $\psharp{\rho}$ in the left-hand side of Equation~\eqref{eq:left-right}.  Equation~\eqref{eq:beta-alpha} is just Equation~\eqref{eq:alpha-beta} with the roles of $\alpha$ and~$\beta$ swapped.  As a result, it gives the coefficient of $\psharp{\rho}$ in the right-hand side of Equation~\eqref{eq:left-right}.  Thus, the two sides are equal, proving the lemma.

I think the first claim is obvious, though I can't think of a simple proof of it off the top of my head.  As for the second claim, I believe this follows from the fact that the sums $S_1=\sum_{s_1\sim \alpha} s_1$ and $S_2=\sum_{s_2\sim \beta}s_2$ are both centers of the group algebra, meaning that $S_1 S_2 = S_2 S_1$.
\end{proof}
Now we can complete the proof of Lemma~\ref{lem:lemon-pi}.
\begin{align*}
\E_{\sigma \in S_k}\left[\chi_{\lambda}(\sigma) \cdot d^{\ell(\pi\cdot\sigma)}\right]
&= \E_{\sigma \in S_k}\left[\chi_{\lambda}(\pi^{-1}\sigma) \cdot d^{\ell(\sigma)}\right]\\
&= \frac{1}{k!}\sum_{\mu \vdash k} d^{\ell(\mu)}\sum_{\sigma \sim \mu} \chi_{\lambda}(\pi^{-1}\sigma).
\end{align*}
Now we apply Propositition~\ref{prop:character-identity} with $\beta = \pi^{-1}$ and $\alpha = e$, the identity permutation.
\begin{align*}
\frac{1}{k!}\sum_{\mu \vdash k} d^{\ell(\mu)}\sum_{\sigma \sim \mu} \chi_{\lambda}(\pi^{-1}\sigma)
& = \frac{1}{k!}\sum_{\mu \vdash k} d^{\ell(\mu)}\cdot\frac{\chi_{\lambda}(\pi^{-1})}{\chi_{\lambda}(e)}\sum_{\sigma \sim \mu} \chi_{\lambda}(\sigma)\\
& = \frac{\chi_{\lambda}(\pi^{-1})}{\chi_{\lambda}(e)}\cdot\E_{\sigma \in S_k}\left[\chi_{\lambda}(\sigma) \cdot d^{\ell(\sigma)}\right]\\
&=\frac{\chi_{\lambda}(\pi^{-1})}{\chi_{\lambda}(e)}\cdot\frac{\dim(\lambda)\cdot d^{\downarrow \lambda}}{k!},
\end{align*}
by Corollary~\ref{cor:pi-expectation}.
The lemma now follows because $\dim(\lambda) = \chi_{\lambda}(e)$ and $\chi_\lambda(\pi^{-1}) = \chi_\lambda(\pi)$.
\end{proof}

Next, we extend Lemma~\ref{lem:lemon-pi} to handle a more general type of permutations.
To do this, we will need some notation.
For a fixed $k \leq m \leq 2k$,
we will be considering permutations $\sigma$ on $[1,\ldots, k]$ and permutations $\pi$ on $[m-k+1, m]$.
We will write $\overline{\sigma}$ and $\overline{\pi}$ to extend these to permutations over $[1,\ldots, m]$, as in Proposition~\ref{prop:structure-coefficients}
(i.e., $\overline{\sigma}(i) = i$ for $i > k$ and $\overline{\pi}(i) = i$ for $i < m-k+1$).

Next, we will write $\pi_{\cup [1,k]}$ for the permutation on $[1,k]$ which is produced by writing $\overline{\pi}$ in cycle notation and deleting any element which is greater than $k$.  For example, if $\pi = (5,8) (4,6,7) (9)$ (so $\pi$ is a permutation on $[4, 9]$),
then $\overline{\pi} = (1)(2)(3)(5,8)(4,6,7)(9)$,
and so $\pi_{\cup [1,6]} = (1)(2)(3)(5) (4,6)$.
Finally, $\pi_{\cap [k+1, m]}$ is the permutation which is produced by writing $\pi$ in cycle notation and deleting any cycle which contains an element in $[1,k]$.  For example, using the previous example for $\pi$, we have that $\pi_{\cap [5,9]} = (5,8) (9)$.  In this case, we consider $\pi_{\cap[5,9]}$ to be a permutation only on the elements $\{5,8,9\}$ and more generally, $\pi_{\cap[k+1,m]}$ is a permutation only on those elements not deleted when producing it.

\begin{lemma}\label{lem:general-case}
For $k \leq m \leq 2k$, let $\pi$ be a permutation on $[m-k+1,m]$, and let $\lambda\vdash k$ be a partition.
Then
\begin{equation*}
\E_{\sigma \in S_k}\left[\chi_{\lambda}(\sigma) \cdot d^{\ell(\overline{\pi}\cdot\overline{\sigma})}\right]
= \frac{\chi_{\lambda}(\pi_{\cup [1,k]})\cdot d^{\downarrow \lambda}\cdot d^{\ell(\pi_{\cap[k+1,m]})}}{k!}.
\end{equation*}
\end{lemma}
\begin{proof}
We will first show that
$\ell(\overline{\pi}\cdot\overline{\sigma}) = \ell(\pi_{\cup[1,k]}\cdot \sigma)+\ell(\pi_{\cap[k+1,m]})$.
This follows because the permutation $\pi_{\cup[1,k]}\cdot \sigma$ can be formed by writing $\overline{\pi}\cdot\overline{\sigma}$ in cycle notation and deleting any element which is greater than $k$.  The only cycles which are fully deleted when forming $\pi_{\cup[1,k]}\cdot \sigma$ are those in $\overline{\pi}\cdot\overline{\sigma}$ which consist entirely of elements in $[k+1, m]$, of which there are $\ell(\pi_{\cap[k+1,m]})$ cycles.

This means that
\begin{align*}
\E_{\sigma \in S_k}\left[\chi_{\lambda}(\sigma) \cdot d^{\ell(\overline{\pi}\cdot\overline{\sigma})}\right]
& = \E_{\sigma \in S_k}\left[\chi_{\lambda}(\sigma) \cdot d^{\ell(\pi_{\cup[1,k]}\cdot \sigma)+\ell(\pi_{\cap[k+1,m]})}\right]\\
& = d^{\ell(\pi_{\cap[k+1,m]})}\cdot\E_{\sigma \in S_k}\left[\chi_{\lambda}(\sigma) \cdot d^{\ell(\pi_{\cup[1,k]}\cdot \sigma)}\right]\\
&= d^{\ell(\pi_{\cap[k+1,m]})}\cdot\frac{\chi_{\lambda}(\pi_{\cup[1,k]})\cdot d^{\downarrow \lambda}}{k!},
\end{align*}
where the last step follows from Lemma~\ref{lem:lemon-pi}.  This proves the lemma.
\end{proof}

\begin{proposition}\label{prop:cancellation}
Let $i \leq j$ and $j \leq k \leq i+j$.  Let $\pi \in S_j$ have no fixed points, and let $\alpha \in S_{[k-i+1, k]}$.  Then
\begin{equation*}
\sum_{|\rho|=j} \frac{\chi_\rho(\pi)}{d^{\downarrow \rho}} \E_{\beta \in S_j}\left[\chi_\rho(\beta) d^{\ell(\overline{\alpha}\cdot\overline{\beta})}\right]
=
\left\{
\begin{array}{cl}
z_\pi/k!&\text{if $i=j=k$ and $\alpha \sim \pi$},\\
0&\text{otherwise.}
\end{array}\right.
\end{equation*}
\end{proposition}
\begin{proof}
By Lemma~\ref{lem:general-case},
\begin{align*}
\sum_{|\rho|=j} \frac{\chi_\rho(\pi)}{d^{\downarrow \rho}} \E_{\beta \in S_j}\left[\chi_\rho(\beta) d^{\ell(\overline{\alpha}\cdot\overline{\beta})}\right]
& = \sum_{|\rho|=j} \frac{\chi_\rho(\pi)}{d^{\downarrow \rho}} \cdot
	\frac{\chi_{\rho}(\alpha_{\cup [1,j]})\cdot d^{\downarrow \rho}\cdot d^{\ell(\alpha_{\cap[j+1,k]})}}{j!}\\
&= \frac{d^{\ell(\alpha_{\cap[j+1,k]})}}{j!} \sum_{|\rho|=j} \chi_{\rho}(\pi)\cdot \chi_{\rho}(\alpha_{\cup[1,j]}).
\end{align*}
Now, if $k > j$, then $\alpha_{\cup[1,j]}$ has at least one fixed point.
The same holds if $k =j$ but $i < j$.
This means that in both of these cases,
$\alpha_{\cup[1,j]}$ has a different cycle structure than $\pi$,
and so by the Schur orthogonality relations (i.e., Fact~XXX),
the sum $\sum_{|\rho|=j} \chi_{\rho}(\pi)\cdot \chi_{\rho}(\alpha_{\cup[1,j]})$ is zero.

This leaves the case when $i = j = k$.
In this case, $\alpha_{\cup[1,j]} = \alpha$,
and so the sum $\sum_{|\rho|=j} \chi_{\rho}(\pi)\cdot \chi_{\rho}(\alpha)$
is nonzero only when $\alpha \sim \pi$, in which case it equals $z_\pi$.  This completes the lemma.
\end{proof}

Now we give an expression for the expected value of $s^*_\mu(\blambda) s^*_\rho(\blambda)$.
There are several ways one could imagine evaluating this expectation.
One is to use the Littlewood-Richardson rule for factorial Schur functions given in~\cite{MS99}
(in particular, their Proposition~$4.2$)
to write $s^*_\mu s^*_\rho$ as a linear combination of $s^*$ polynomials, and then compute the expectation of these polynomials using our Proposition~\ref{prop:shifted-expectation}.  Unfortunately,
the coefficients given by~\cite{MS99} are somewhat difficult to work with.
Instead, we expand both $s*_\mu$ and $s^*_\rho$
into linear combinations of $\psharp{}$ polynomials, and then multiply these using Proposition~\ref{prop:structure-coefficients}.
\begin{proposition}\label{prop:schur-squared}
Let $\mu \vdash i$ and $\rho \vdash j$, where $i \leq j$.  Then
\begin{equation*}
\E_{\blambda\sim \SWunif{n}{d}}\left[s^*_\mu(\blambda) s^*_\rho(\blambda)\right]
= \sum_{k=j}^{i+j} \left(C_{i j}^k \frac{i! j!}{k!}\right)\cdot \frac{n^{\downarrow k}}{d^k}\cdot
\E_{\alpha, \beta}\left[\chi_{\mu}(\alpha) \chi_\rho(\beta) d^{\ell(\overline{\alpha}\cdot\overline{\beta})}\right],
\end{equation*}
where $C_{i j}^k$ is as defined in Corollary~\ref{cor:structure-coefficients-rewrite}.
Here $\alpha$ is a uniformly random permutation on $[k-i+1, \ldots, k]$ and $\beta$ is a uniformly random permutation on $[1,j]$.
\end{proposition}
\begin{proof}
\end{proof}

\begin{proposition}\label{prop:almost-last-prop}
Let $i \leq j$, and let $\mu \vdash i$, $\rho\vdash j$.  Let $\pi \in S_j$ have no fixed points. Then
\begin{equation*}
\sum_{|\rho|=j} \frac{\chi_\rho(\pi)}{d^{\downarrow \rho}} \E_{\blambda\sim \SWunif{n}{d}}\left[s^*_\mu(\blambda) s^*_\rho(\blambda)\right]
=
\left\{
\begin{array}{cl}
\chi_\mu(\pi) \cdot \frac{n^{\downarrow i}}{d^i} & \text{if i = j},\\
0 & \text{otherwise.}
\end{array}
\right.
\end{equation*}
\end{proposition}
\begin{proof}
By Proposition~\ref{prop:schur-squared},
\begin{align*}
\sum_{|\rho|=j} \frac{\chi_\rho(\pi)}{d^{\downarrow \rho}} \E_{\blambda\sim \SWunif{n}{d}}\left[s^*_\mu(\blambda) s^*_\rho(\blambda)\right]
&= \sum_{|\rho|=j} \frac{\chi_\rho(\pi)}{d^{\downarrow \rho}}
	\sum_{k=j}^{i+j} \left(C_{i j}^k \frac{i! j!}{k!}\right)\cdot \frac{n^{\downarrow k}}{d^k}\cdot
		\E_{\alpha, \beta}\left[\chi_{\mu}(\alpha) \chi_\rho(\beta) d^{\ell(\overline{\alpha}\cdot\overline{\beta})}\right]\\
&= \sum_{k=j}^{i+j} \left(C_{i j}^k \frac{i! j!}{k!}\right)\cdot \frac{n^{\downarrow k}}{d^k}\cdot
	\E_\alpha \left[\chi_{\mu}(\alpha) \sum_{|\rho|=j} \frac{\chi_\rho(\pi)}{d^{\downarrow \rho}}
		\E_{\beta}\left[\chi_\rho(\beta) d^{\ell(\overline{\alpha}\cdot\overline{\beta})}\right]\right].
\end{align*}
By Proposition~\ref{prop:cancellation}, the inner summation is zero unless $i = j = k$, in which case it equals $z_\pi/i!$ when $\alpha \sim \pi$ and~$0$ otherwise.  Because $\alpha$ is a uniformly random permutation, it is conjugate with $\pi$ with probability $z_\pi^{-1}$, meaning that
this expression is $\chi_\mu(\pi) \cdot n^{\downarrow k}/d^k$, as guaranteed.
\end{proof}

\begin{proposition}
Let $i, j > 0$.  Then
\begin{equation*}
\sum_{\substack{|\mu|=i\\|\rho|=j}} \frac{s_\mu(x) s_\rho(x)}{d^{\downarrow\mu}\cdot d^{\downarrow\rho}}
\E_{\blambda\sim \SWunif{n}{d}}\left[s^*_\mu(\blambda) s^*_\rho(\blambda)\right]
=
\left\{
\begin{array}{cl}
\frac{n^{\downarrow i}}{d^i}
\sum_{|\mu|=i} s_\mu(x)^2/d^{\downarrow \mu} & \text{if $i = j$},\\
0 & \text{otherwise}.
\end{array}
\right.
\end{equation*}
\end{proposition}
\begin{proof}
Assume without loss of generality that $j \geq i$.  Then
\begin{align*}
\sum_{\substack{|\mu|=i\\|\rho|=j}} \frac{s_\mu(x) s_\rho(x)}{d^{\downarrow\mu}\cdot d^{\downarrow\rho}}
\E_{\blambda\sim \SWunif{n}{d}}\left[s^*_\mu(\blambda) s^*_\rho(\blambda)\right]
=&
\sum_{|\mu|=i} \frac{s_\mu(x)}{d^{\downarrow \mu}}
	\sum_{|\rho|=j} \frac{s_\rho(x)}{d^{\downarrow \rho}}\E_{\blambda\sim \SWunif{n}{d}}\left[s^*_\mu(\blambda) s^*_\rho(\blambda)\right]\\
=&
\sum_{|\mu|=i} \frac{s_\mu(x)}{d^{\downarrow \mu}}
	\E_{\pi \in S_j}\left[p_{\pi}(x)
	\sum_{|\rho|=j} \frac{\chi_{\rho}(\pi)}{d^{\downarrow \rho}}\E_{\blambda\sim \SWunif{n}{d}}\left[s^*_\mu(\blambda) s^*_\rho(\blambda)\right]\right].
\end{align*}
Now, we note that because $x_1 + \ldots + x_d = 0$, $p_\pi(x) = 0$ whenever $\pi$ has a fixed point.
On the other hand, when $\pi$ has no fixed point, the inner summation is zero by Proposition~\ref{prop:almost-last-prop} unless $i=j$.
Otherwise, when $i=j$, Proposition~\ref{prop:almost-last-prop} shows that this expression equals
\begin{equation*}
\sum_{|\mu|=i} \frac{s_\mu(x)}{d^{\downarrow \mu}}
	\E_{\pi \in S_i}\left[p_{\pi}(x)
	\chi_\mu(\pi) \cdot \frac{n^{\downarrow i}}{d^i}\right]
= \frac{n^{\downarrow i}}{d^i}
\sum_{|\mu|=i} \frac{s_\mu(x)^2}{d^{\downarrow \mu}},
\end{equation*}
where here we have used the formula for Schur functions given in Fact~XXX.
\end{proof}
This completes the theorem, as by Equation~\eqref{eq:var-of-schurrat},
\begin{align*}
\Var[\schurrat{x}]
&= \sum_{|\mu|,|\rho|>0} \frac{s_\mu(x) s_\rho(x)}{d^{\downarrow\mu}\cdot d^{\downarrow\rho}}
\E_{\blambda\sim \SWunif{n}{d}}\left[s^*_\mu(\blambda) s^*_\rho(\blambda)\right]\\
&= \sum_{i, j > 0} \sum_{\substack{|\mu| = i\\|\rho|=j}}\frac{s_\mu(x) s_\rho(x)}{d^{\downarrow\mu}\cdot d^{\downarrow\rho}}
\E_{\blambda\sim \SWunif{n}{d}}\left[s^*_\mu(\blambda) s^*_\rho(\blambda)\right]\\
& = \sum_{i > 0}\frac{n^{\downarrow i}}{d^i}
\sum_{|\mu|=i} \frac{s_\mu(x)^2}{d^{\downarrow \mu}}.\qedhere
\end{align*}
\end{proof}
}

\bibliographystyle{alpha}
\bibliography{wright}

\end{document}